\DeclareMathOperator{\Tr}{Tr}
\DeclareMathOperator{\ran}{Ran}
\renewcommand{\Im}{\mathrm{Im}}
\renewcommand{\Re}{\mathrm{Re}}
\theoremstyle{plain}
\newtheorem{thm}{Theorem}[section]
\newtheorem*{thm*}{Theorem}
\newtheorem{proposition}[thm]{Proposition}
\newtheorem{lemma}[thm]{Lemma}
\newtheorem{corollary}[thm]{Corollary}
\theoremstyle{definition}
\newtheorem{definition}[thm]{Definition}
\newtheorem*{definition*}{Definition}
\newenvironment{assumption}[1]{\innerassumption}{\endinnerassumption}
\theoremstyle{remark}
\newtheorem{remark}[thm]{Remark}
\newtheorem*{remark*}{Remark}
\renewcommand\thesection{\Roman{section}}
\renewcommand\p@subsection{\thesection .}
\renewcommand{\qedsymbol}{$\blacksquare$}
\renewcommand{\emptyset}{\O}
\definecolor{myblue}{rgb}{0,0,0.6}
\renewcommand\frontmatter@abstractwidth{\dimexpr 12.5cm \relax}
\begin{document}
\bibliographystyle{abbrvnat}
\title{Bohr’s correspondence principle in quantum field theory and
  classical renormalization scheme: the Nelson model.}
\date{\today} \author{Zied Ammari} \email{zied.ammari@univ-rennes1.fr}
\affiliation{IRMAR and Centre Henri Lebesgue; Université de Rennes I\\
  Campus de Beaulieu, 263 avenue du Général Leclerc\\ CS 74205, 35042
  RENNES Cedex} \author{Marco Falconi}\email{marco.falconi@mathematik.uni-stuttgart.de}
\thanks{the author has been partially supported by the Centre Henri
  Lebesgue (programme ``Investissements d'avenir'' ---
  ANR-11-LABX-0020-01).} \affiliation{Institut für Analysis, Dynamik
  und Modellierung\\ Universität Stuttgart; Pfaffenwaldring 57 D-70569
  Stuttgart}

\begin{abstract}
  \begin{description}
  \item[Abstract] In the mid Sixties Edward Nelson proved the
    existence of a consistent quantum field theory that describes the
    Yukawa-like interaction of a non-relativistic nucleon field with a
    relativistic meson field. Since then it is thought, despite the
    renormalization procedure involved in the construction, that the
    quantum dynamics should be governed in the classical limit by a
    Schr\"odinger-Klein-Gordon system with Yukawa coupling.  In the
    present paper we prove this fact in the form of a Bohr
    correspondence principle. Besides, our result enlighten the nature
    of the renormalization method employed in this model which we
    interpret as a strategy that allows to put the related classical
    Hamiltonian PDE in a normal form suitable for a canonical
    quantization.
  \end{description}
\end{abstract}
\keywords{Nelson model, Renormalization, Schrödinger-Klein-Gordon system, Yukawa interaction, Classical limit.\\
  \href{http://www.ams.org/msc/msc2010.html}{MSC2010}:
  \href{http://www.ams.org/msc/msc2010.html?t=&s=35Q40&btn=Search&ls=s}{35Q40},
  \href{http://www.ams.org/msc/msc2010.html?t=&s=81T16&btn=Search&ls=s}{81T16},
  \href{http://www.ams.org/msc/msc2010.html?t=&s=81Q20&btn=Search&ls=s}{81Q20}}
\maketitle
\tableofcontents
\hyphenation{S-KG}

\section{Introduction.}
\label{sec:introduction}

Modern theoretical physics explains how  matter interacts with radiation and proposes
phenomenological models of quantum field theory that in principle describe such fundamental interaction. Giving a firm mathematical ground to these models is known to be a difficult task related to renormalization theory \citep{MR2414874,MR1380265,Fol2,MR947959,1969LNP.....2.....H,GJ-book}. Since the fifties there were spectacular advances in these problems  culminating with the perturbative renormalization of quantum electrodynamics, the birth of the renormalization group method and the renormalizability of gauge field theories.
Nevertheless, conceptual mathematical difficulties remain as well as outstanding open problems,
see \cite{MR1773042,MR1768636}. The purpose of the present article is twofold: first to prove the quantum-classical correspondence for the renormalized Nelson model and second to point out
a conceptually different point of view on renormalization  through the analysis of an
elementary example of quantum field theory and the study of the relationship
with its classical  formulation.

The so-called Nelson model is a system of Quantum Field Theory that
has been widely studied from a mathematical standpoint
\citep[see e.g.][]{MR3201223,Ar01,GHPS,MR1943190,Mol,GGM1,AH12,BFS,BFS1,BFS2,BFLMS,DG1,GGM1,FGS3,Pi,Sp3}.
It consists of non-relativistic spin zero particles interacting
with a scalar boson field, and can be used to model various systems of
physical interest, such as nucleons interacting with a meson
field. In the mid sixties Edward Nelson rigorously constructed a quantum
dynamic for this model free of ultraviolet (high energy) cutoffs in
the particle-field coupling, see \citep{nelson:1190}.
This is done by means of a renormalization procedure: roughly speaking, we
need to subtract a divergent quantity from the Hamiltonian, so the
latter can be defined as a self-adjoint operator in the limit of the ultraviolet cutoff.
The quantum dynamics is rather singular in this case (renormalization is
necessary); and the resulting generator has no explicit form as an
operator though it is unitarily equivalent to an explicit one.
Since the work of Gross \citep{EugeneP1962219} and Nelson \citep{nelson:1190}
it is believed, but never proved, that the renormalized dynamics is generated by
a canonical quantization of the Schrödinger-Klein-Gordon (S-KG) system with Yukawa coupling.
In other words, the quantum fluctuations of the particle-field system are centered around
the classical trajectories of the Schrödinger-Klein-Gordon system at certain scale
and the renormalization procedure preserves the suitable quantum-classical correspondence
as well as being necessary to define the quantum dynamics. We give a mathematical formulation
of such result in Theorem \ref{thm:2} in the form of a Bohr
correspondence principle.
The proof of the above theorem is rather technical and constitutes a large part of this
article.

Recently, the authors of this paper have studied the classical limit
of the Nelson model, in its regularized
version~\citep{Ammari:2014aa,falconi:012303}. We have proved that the
quantum dynamic converges, when an effective semiclassical parameter
$\varepsilon \to 0$, towards a non-linear Hamiltonian flow on a
classical phase space. This flow is governed by a
Schrödinger-Klein-Gordon system, with a regularized Yukawa-type
coupling. To extend the classical-quantum correspondence to the system
without ultraviolet cutoff, we rely on the recent techniques
elaborated in the mean-field approximation of many-body Schrödinger
dynamics in
\citep{ammari:nier:2008,MR2513969,MR2802894,2011arXiv1111.5918A} as
well as the result with cutoff \citep{Ammari:2014aa}.  As a matter of
fact the renormalization procedure, implemented by a dressing
transform, generates a many-body Schrödinger dynamics in a mean-field
scaling.  So it was convenient that the mean-field approximation was
derived with the same general techniques that allow to prove equally
the classical approximation for QFT models.  The result is further
discussed in Subsection~\ref{sec:class-limit-renorm}, and all the
details and proofs are provided in
Section~\ref{sec:class-limit-renorm-1}.

As mentioned before, the study of the quantum-classical correspondence
led us to a reinterpretation of renormalization, for the Nelson model
and maybe in general, as a procedure that allows to put the related
classical Hamiltonian PDE in a normal form suitable for a canonical
quantization.  This normal form is implemented by a near identity
change of coordinates close in spirit to the work of Jalal Shatah on
nonlinear Klein-Gordon equation with quadratic terms, see
\cite{MR803256}. Usually renormalization in quantum electro-dynamics
and QFT is addressed through the renormalization group method, see
e.g. \cite{MR2414874,MR579493}.  In the eighties and nineties two deep
ideas have found fruitful applications beyond their birthplace: the
normal form techniques are extended to PDEs and the renormalization
group method is adapted to ODEs and PDEs after the influential works
of Shatah \cite{MR803256}, Goldenfeld and al.~\cite{MR1289625}
respectively (see also Bricmont and al.~\cite{MR1280993}).  These
ideas resulted in a strong research activity, see e.g.
\cite{MR1862435,MR1755506,MR2737489,MR1127050,MR1222944,MR1341412}. The connection between the normal form and the renormalization group
has been highlighted for ODEs in \cite{MR2450492}. So this suggests
that \emph{renormalization can be carried out through normal forms
  prior to quantization in a Hamiltonian framework} rather than using
a renormalization group method and dealing with Lagrangians,
actions and Feynman amplitudes. Actually such point of
view may be linked in some sense to long standing ideas of Paul Dirac and Hendrik
Kramers on "classical" renormalization of quantum electrodynamics, see
\cite{MR895358}.  Moreover, both normal form and group renormalization
methods seem to be related techniques that deal with the problem of
reparametrizing  the dependence of a physical theory with respect to
its high/low energy scales (or fast/slow motion).

The proposed approach (if successful) is in our opinion more
mathematically founded due to the long history of normal form theory
for ODEs and PDEs. It relies less in physical considerations and
intuitions that have surrounded the subject of renormalization with an
impenetrable air of mystery. Although our suggestion is based on a
very simple model (no mass or charge renormalization is needed) we
hope that it could be extended in the future to more realistic quantum
field theories and it will be of interest for other colleagues.  A
further discussion of this point is continued in
Subsection~\ref{sec:an-altern-renorm}, and more details are provided
in Section~\ref{sec:class-renorm-nels} for the Nelson model.

For the sake of presentation, we collected the
notations and basic definitions---used throughout the paper---in the
Subsection~\ref{sec:notat-defin} below.  In subsection \ref{sec:class-limit-renorm}
 we  present our main result on the classical-quantum correspondence principle and in Subsection \ref{sec:an-altern-renorm}  we discuss our alternative point of view on renormalization.
The rest of the paper is organized as follows: in Section~\ref{sec:nelson-hamiltonian} we review the basic properties of the quantum system and the usual procedure of
renormalization; in Section~\ref{sec:classical-system} we analyze the
classical S-KG dynamics and the classical dressing transformation; in
Section~\ref{sec:class-renorm-nels} we apply the renormalization via
classical dressing to the Nelson model; in
Section~\ref{sec:class-limit-renorm-1} we study in detail the
classical limit of the renormalized Nelson model, and prove our main
Theorem~\ref{thm:2}.

\subsection{Notations and general definitions.}
\label{sec:notat-defin}

\begin{itemize}[label=\color{myblue}*]
\item We fix once and for all
  $0<\bar{\varepsilon},m_0,M$. We also define the
  function $\omega(k)=\sqrt{k^2+m^2_0}$.
\item The effective (semiclassical) parameter will be denoted by $\varepsilon\in(0,\bar \varepsilon)$.
\item Let $\mathcal{Z }$ be a Hilbert space; then we denote by
  $\Gamma_s(\mathcal{Z })$ the symmetric Fock space over
  $\mathcal{Z}$. We have that
  \begin{equation*}
    \Gamma_s(\mathcal{Z})=\bigoplus_{n=0}^{\infty}\mathcal{Z}^{\otimes_s n}\; \text{ with }\; \mathcal{Z}^{\otimes_s 0}=\mathds{C}\; .
  \end{equation*}
\item Let $X$ be an operator on a Hilbert space $\mathcal{Z }$. We
  will usually denote by $D(X)\subset \mathcal{Z }$ its domain of
  definition, and by $Q(X)\subset \mathcal{Z }$ the domain of
  definition of the corresponding quadratic form.
\item Let $S:\mathcal{Z }\supseteq D(S)\to \mathcal{Z }$ be a densely
  defined self-adjoint operator on $\mathcal{Z }$. Its second
  quantization $d\Gamma(S)$ is the self-adjoint operator on
  $\Gamma_s(\mathcal{Z })$ defined by
  \begin{equation*}
    d\Gamma(S)\rvert_{D(S)^{\otimes_s n}}=\varepsilon\sum_{k=1}^n 1\otimes\dotsm\otimes \underbrace{S}_{k}\otimes \dotsm\otimes 1\; .
  \end{equation*}
   The operator $d\Gamma (1)$ is
  usually named the number operator and denoted by $N$.
  \item We denote by $\mathcal{C }^{\infty}_0(N)$ the subspace of finite
  particle vectors:
  \begin{equation*}
    \mathcal{C}^{\infty}_0(N)=\{\psi\in \Gamma_s(\mathcal{Z})\; ;\; \exists \bar{n}\in\mathds{N}, \psi\bigr\rvert_{\mathcal{Z}^{\otimes _s n}}=0 \;\forall n> \bar{n}\}\; .
  \end{equation*}
  \item Let $U$ be a unitary operator on $\mathcal{Z }$. We define
  $\Gamma(U)$ to be the unitary operator on $\Gamma_s(\mathcal{Z })$
  given by
  \begin{equation*}
    \Gamma(U)\rvert_{\mathcal{Z}^{\otimes_s n}}=\bigotimes_{k=1}^n U\; .
  \end{equation*}
  If $U=e^{it S}$ is a one parameter group of unitary operators on $\mathcal{Z }$,
  $\Gamma(e^{it S})=e^{i\frac{t}{\varepsilon} d\Gamma(S)}$.
\item On $\Gamma_s(\mathcal{Z })$, we define the annihilation/creation
  operators $a^{\#}(g)$, $g\in \mathcal{Z }$, by their action on
  $f^{\otimes n}\in \mathcal{Z }^{\otimes _s n}$ (with $a(g)f_0=0$ for
  any $f_0\in \mathcal{Z }^{\otimes_s 0}=\mathds{C}$):
  \begin{align*}
    a(g)f^{\otimes n}&=\sqrt{\varepsilon n} \; \langle g , f\rangle_{\mathcal{Z}} \; f^{\otimes (n-1)}\; ;\\
    a^{*}(g)f^{\otimes n}&=\sqrt{\varepsilon (n+1)} \; g\otimes_s f^{\otimes n}\; .
  \end{align*}
   They satisfy the Canonical Commutation Relations (CCR),
  $[a(f),a^{*}(g)]=\varepsilon \langle f , g \rangle_{\mathcal{Z }} $.

  If $\mathcal{Z }=L^2 (\mathds{R}^d)$ it is useful to introduce the
  operator valued distributions $a^{\#}(x)$ defined by
  \begin{equation*}
    a(g)=\int_{\mathds{R}^d}^{}\bar{g}(x)a(x)  dx\quad ,\quad a^{*}(g)=\int_{\mathds{R}^d}^{}g(x)a^{*}(x)  dx\; .
  \end{equation*}
\item
  $\mathcal{H}=\Gamma_s(L^2 (\mathds{R}^3)\oplus L^2
  (\mathds{R}^3))\simeq\Gamma_s(L^2(\mathds{R}^3))\otimes \Gamma_s(L^2
  (\mathds{R}^3))$.
  We denote by $\psi^{\#}(x)$ and $N_1$ the annihilation/creation and
  number operators corresponding to the nucleons (conventionally taken
  to be the first Fock space), by $a^{\#}(k)$ and $N_2$ the
  annihilation/creation and number operators corresponding to the
  meson scalar field (second Fock space). In particular, we will always use  the following
  $\varepsilon$-dependent representation of the CCR if not specified
  otherwise:
  \begin{equation*}
    [\psi(x),\psi^{*}(x')]=\varepsilon\delta(x-x')\quad ,\quad  [a(k),a^{*}(k')]=\varepsilon\delta(k-k')\; .
  \end{equation*}
 \item  We will sometimes use  the  following decomposition:
  \begin{equation*}
    \mathcal{H}=\bigoplus_{n=0}^{\infty}\mathcal{H}_n\text{, with } \mathcal{H}_n=\bigl(L^2(\mathds{R}^3)\bigr)^{\otimes_s n}\otimes \Gamma_s(L^2 (\mathds{R}^3))\; .
  \end{equation*}
  We denote by $T^{(n)}:=T\bigr\rvert _{\mathcal{H }_n}$ the
  restriction to $\mathcal{H }_n$ of any operator $T$ on $\mathcal{H}$
\item On $\mathcal{H}$, the Segal quantization of
  $L^2 (\mathds{R}^3 )\oplus L^2 (\mathds{R}^3 ) \ni \xi =\xi_1\oplus
  \xi _2 $
  is given by
  $R(\xi )=\bigl(\psi^{*} (\xi _1)+\psi (\xi _1)+a^{*}(\xi _2)+a(\xi
  _2)\bigr)/\sqrt{2}$,
  and therefore the Weyl operator becomes
  $W(\xi )=e^{\frac{i}{\sqrt{2}} \bigl(\psi^{*} (\xi _1)+\psi (\xi
    _1)\bigr)}\, e^{\frac{i}{\sqrt{2}} \bigl(a^{*}(\xi _2)+a(\xi
    _2)\bigr) }$.
\item Given a Hilbert space $\mathcal{Z }$, we denote by
  $\mathcal{L}(\mathcal{Z})$ the $C^{*}$-algebra of bounded operators;
  by $\mathcal{K}(\mathcal{Z })\subset \mathcal{L}(\mathcal{Z})$ the
  $C^{*}$-algebra of compact operators; and by
  $\mathcal{L}^1(\mathcal{Z})\subset \mathcal{K}(\mathcal{Z })$ the
  trace-class ideal.
\item We denote classical Hamiltonian flows by boldface capital letters
  (e.g. $\mathbf{E}(\cdot )$); their corresponding energy functional by
  script capital letters (e.g. $\mathscr{E}$).
\item Let $f\in \mathscr{S}'(\mathds{R}^d)$. We denote by
  $\mathcal{F }(f)(k)$ its Fourier transform
  \begin{equation*}
    \mathcal{F}(f)(k)=\frac{1}{(2\pi)^{d/2}} \int_{\mathds{R}^d}^{}f(x)e^{-ik\cdot x}  dx\; .
  \end{equation*}
\item We denote by $\mathcal{C}^{\infty }_0(\mathds{R}^d )$ the
  infinitely differentiable functions of compact support. We denote by
  $H^s(\mathds{R}^d)$ the non-homogeneous Sobolev space:
  \begin{equation*}
    H^s(\mathds{R}^d)=\Bigl\{f\in \mathscr{S}'(\mathds{R}^d)\; ,\; \int_{\mathds{R}^d}^{}(1+\lvert k  \rvert_{}^2)^s\lvert \mathcal{F}(f)(k)  \rvert_{}^2  dk<+\infty \Bigr\}\; ;
  \end{equation*}
  and its ``Fourier transform''
  \begin{equation*}
    \mathcal{F}H^s(\mathds{R}^d)=\Bigl\{f\,,\, \mathcal{F}^{-1}f\in H^s(\mathds{R}^d) \Bigr\}\; .
  \end{equation*}
  \item Let $\mathcal{Z}$ be a Hilbert space. We denote by
  $\mathfrak{P}(\mathcal{Z})$ the set of Borel probability measures on
  $\mathcal{Z}$.
\end{itemize}

\subsection{The classical limit of the renormalized Nelson model.}
\label{sec:class-limit-renorm}

The Schrödinger-Klein-Gordon equations  with
Yukawa-like coupling is a widely studied system of non-linear PDEs in three dimension~\citep[see
e.g.][]{MR0390555
  ,MR515899
  ,MR550215
  ,BaCha
  ,MR778979
  ,MR1915302
  ,MR2403699
  ,MR2906550
}. This system can be written as:
\begin{equation*}
  \left\{
    \begin{aligned}
      &i\partial_t u=-\frac{\Delta }{2M}u+ Vu+ A u\\
      &(\square+m^2_0)A=- \lvert u \rvert^2
    \end{aligned}
  \right .\; ;
\end{equation*}
where $m_0,M>0$ are real parameters and $V$ is a non-negative  potential
that is confining or equal to zero. Using the complex field $\alpha $
as a dynamical variable instead of $A$ (see Equation~\eqref{eq:71} of
Section~\ref{sec:classical-system}), the aforementioned dynamics can
be seen as a Hamilton equation generated by the following
energy functional, densely defined on\footnote{Sometimes
   the shorthand notation $L^2\oplus L^2$ is used instead of
  $L^2 (\mathds{R}^3 )\oplus L^2 (\mathds{R}^3 )$, if no confusion
  arises.}  $L^2 \oplus L^2 $:
\begin{equation*}
  \mathscr{E}(u, \alpha ):= \Bigl\langle u  , \Bigl(-\tfrac{\Delta}{2M}+V\Bigr)u \Bigr\rangle_2+\langle \alpha   , \omega \alpha  \rangle_2+\tfrac{1}{(2\pi )^{3/2}}\int_{\mathds{R}^6}^{}\tfrac{1}{\sqrt{2\omega (k)}}\Bigl(\bar{\alpha}(k)e^{-ik\cdot x}+ \alpha (k)e^{ik\cdot x}\Bigr)\lvert u(x)  \rvert_{}^2  dxdk\; .
\end{equation*}
With suitable assumptions on the external potential $V$, one
proves the global existence of the associated flow $\mathbf{E}(t)$;
a detailed discussion can be found in
Subsection~\ref{sec:glob-exist-results} where the precise condition on $V$
is given by Assumption~\eqref{ass:2}. So there is a Hilbert
space\footnote{
  If $V=0$, $\mathcal{D}$ could be the whole space
  $L^2 \oplus L^2 $.}  $\mathcal{D}=Q(-\Delta +V)\oplus
  \mathcal{F}H^{\frac{1}{2}}(\mathds{R}^3)$ densely imbedded in $L^2 \oplus L^2 $ such
that there exists a classical flow
$\mathbf{E}:\mathds{R}\times \mathcal{D}\to \mathcal{D}$ that solves the
Schr\"odinger-Klein-Gordon equation~\eqref{eq:72} written using the complex field $\alpha$.

A question of significant interest, both mathematically and physically, is
whether it is possible to quantize the Schr\"odinger-Klein-Gordon dynamics with Yukawa coupling as a consistent  theory that describes quantum mechanically the particle-field interaction. As mentioned previously, E.~Nelson rigorously constructed a self-adjoint operator satisfying in some sense the above requirement. Afterward the model is proved to satisfy some of the main properties that are familiar in the axiomatic approach to quantum fields, see \cite{MR0293448}. Furthermore,
asymptotic completeness is established in  \cite{MR1809881}. The problem of quantization of such infinite dimensional nonlinear dynamics is related to constructive quantum field theory. The general framework is as follows.\\
Let $\mathcal{Z}$ be a complex Hilbert space with inner
product $\langle \, \cdot\, ,\, \cdot\, \rangle$. We define the
associated symplectic structure $\Sigma (\mathcal{Z})$ as the pair
$\{\mathcal{Y},B(\,\cdot\, ,\,\cdot\, )\}$ where $\mathcal{Y}$ is
$\mathcal{Z}$ considered as a real Hilbert space with inner product
$\langle \,\cdot\, ,\, \cdot\, \rangle_r=\Re\langle \,\cdot\, ,\,
\cdot\, \rangle_{}$,
and $B(\,\cdot\, ,\,\cdot\, )$ is the symplectic form defined by
$B(\,\cdot\, ,\,\cdot\, )=\Im\langle \,\cdot\, , \,\cdot\,
\rangle_{}$. Following
\citet{MR0112626
}, we define a (bosonic) quantization of the structure
$\Sigma (\mathcal{Z})$ any linear map $R(\cdot)$ from $\mathcal{Y}$ to
self-adjoint operators on a complex Hilbert space such that:
\begin{itemize}[label=\color{myblue}*]
\item The Weyl operator $W(z)=e^{iR(z)}$ is weakly continuous when
  restricted to any finite dimensional subspace of $\mathcal{Y}$;
\item $W(z_1)W(z_2)=e^{-\frac{i}{2} B(z_1,z_2)}W(z_1+z_2)$ for any
  $z_1,z_2\in \mathcal{Y}$ (Weyl's relations).
\end{itemize}
When the dimension of $\mathcal{Z}$ is not finite, there are
uncountably many irreducible unitarily inequivalent  Segal
quantizations of $\Sigma (\mathcal{Z})$ (or representation of Weyl's relations). A representation of
particular relevance in physics is the so-called Fock
representation~\citep{fock
  ,MR0044378
}  on the symmetric Fock space $\Gamma _s(\mathcal{Z})$. Once this representation
is considered there is a natural way to quantize  polynomial
functionals on $\mathcal{Z}$ into quadratic forms on
$\Gamma _s(\mathcal{Z})$ according to the Wick or normal order (we briefly
outline the essential features of Wick quantization on Section~\ref{sec:comm-hath_r-i}, the
reader may refer to
\citep{ammari:nier:2008,Ber,MR3060648
} for a more detailed presentation).

Following this rules, the formal quantization of the classical energy  $\mathscr{E}$
yields a quadratic form $h$ on the Fock space $\Gamma _s(L^2 \oplus L^2)$ which plays the role of a quantum energy. The difficulty now lies on the fact that the quadratic form $h$ do not define straightforwardly  a dynamical system (i.e.~$h$ is not related to a self-adjoint operator).
Nevertheless, according to the work of Nelson it is possible in our case to define for any $\sigma _0\in \mathds{R}_+$, a renormalized self-adjoint operator
$H_{ren}(\sigma _0)$ associated in some specific sense to $h$ (see
Sections~\ref{sec:nelson-hamiltonian} and~\ref{sec:class-renorm-nels} for details).
However, the relationship between the classical and the quantum theory at hand is obscured by the renormalization procedure and it is unclear even formally if the quantum dynamics
generated by $H_{ren}(\sigma _0)$ are still related to the original Schr\"odinger-Klein-Gordon  equation. Therefore, we believe that  it is mathematically  interesting to verify Bohr's correspondence principle for this model.
\begin{quote}\emph{Bohr's principle}:
  The quantum system should reproduce, in the limit of large quantum
  numbers, the classical behavior.
\end{quote}
This principle may be reformulated as follows. We
make the quantization procedure dependent on a effective parameter
$\varepsilon $, that would converge to zero in the limit.
The physical interpretation of $\varepsilon $ is of a
quantity of the same order of magnitude as the Planck's constant, that
becomes negligible when large energies and orbits are considered. In
the Fock representation, we introduce the $\varepsilon $-dependence in
the annihilation and creation operator valued distributions
$\psi ^{\#}(x)$ and $a^{\#}(k)$, whose commutation relations then
become $[\psi (x),\psi ^{*}(x')]=\varepsilon \delta (x-x')$ and
$[a(k),a^{*}(k')]=\varepsilon \delta (k-k')$. If in the limit
$\varepsilon \to 0$ the quantum unitary dynamics converges towards the
Hamiltonian flow generated by the Schr\"odinger-Klein-Gordon equation with Yukawa interaction, Bohr's principle is satisfied.

If the phase space $\mathcal{Z}$ is finite dimensional, the
quantum-classical correspondence has been proved in the context of semiclassical or microlocal
analysis, with the aid of pseudo-differential calculus, Wigner measures or coherent states
\citep[see
e.g.][]{MR818831
  ,Hor
  ,GMMP
  ,Ger
  ,HMR
  ,Mar
  ,Rob
  ,AFFGP
  ,LiPa
  ,MR0332046
,CRR}.
If $\mathcal{Z}$ is infinite dimensional, the situation is more
complicated, and there are fewer results for systems with unconserved number of particles
\citep{MR2205462
  ,MR3282640
  ,ammari:nier:2008
  ,Ammari:2014aa
  ,Frank:2014aa
  ,Frank:2015aa
}. The approach we adopt here makes use of the infinite-dimensional
Wigner measures introduced by
\citet{2011arXiv1111.5918A
  ,ammari:nier:2008
  ,MR2513969
  ,MR2802894
}.
Given a family of normal quantum states
$(\varrho_{\varepsilon } )_{\varepsilon \in (0,\bar{\varepsilon} )}$
on the Fock space, we say that a Borel probability measure $\mu $
on $\mathcal{Z}$ is a Wigner measure associated to it if there exists
a sequence
$(\varepsilon _k)_{k\in \mathds{N}}\subset (0,\bar{\varepsilon} )$
such that $\varepsilon _k\to 0$ and\footnote{$W(\xi )$ is the $\varepsilon_k$-dependent  Weyl
   operator explicitly defined by \eqref{eq:104}.}
\begin{equation}
  \label{eq:83}
  \lim_{k\to \infty }\Tr[\varrho _{\varepsilon _k}W(\xi )]=\int_{\mathcal{Z}}^{}e^{i \sqrt{2}\Re\langle \xi   , z \rangle_{\mathcal{Z}}}  d\mu (z) \; ,\; \forall \xi \in \mathcal{Z}\; .
\end{equation}
Wigner measures are related to phase-space analysis and are in general
an effective tool for the study of the classical limit. We denote by
$\mathcal{M}\bigl(\varrho _{\varepsilon }, \varepsilon \in
(0,\bar{\varepsilon} )\bigr)$
the set of Wigner measures associated to
$(\varrho_{\varepsilon } )_{\varepsilon \in (0,\bar{\varepsilon} )}$.
Let $e^{-i \frac{t}{\varepsilon } H_{ren}(\sigma _0)}$ be the quantum
dynamics on $\Gamma _s(\mathcal{Z})$, $\mathcal{Z}=L^2\oplus L^2$,
then the time-evolved quantum states can be written as
$(e^{- i\frac{t}{\varepsilon }H_{ren}(\sigma _0)}\varrho _{\varepsilon
}e^{i\frac{t}{\varepsilon } H_{ren}(\sigma _0)})_{\varepsilon \in
  (0,\bar{\varepsilon} )}$.
Bohr's principle is satisfied if Wigner measures of time evolved
quantum states are exactly the pushed forward, by the classical flow
$\mathbf{E}(t)$, of the initial Wigner measures at time $t=0$; i.e.
\begin{equation}
  \label{eq:87}
  \mathcal{M}\Bigl(e^{- i\frac{t}{\varepsilon }H_{ren}(\sigma _0)}\varrho _{\varepsilon}e^{i\frac{t}{\varepsilon }H_{ren}(\sigma _0)}, \varepsilon \in(0,\bar{\varepsilon} )\Bigr)=\Bigl\{\mathbf{E}(t)_{\#}\mu \, ,\, \mu \in \mathcal{M}(\varrho _{\varepsilon },\varepsilon \in (0,\bar{\varepsilon} ))\Bigr\}\; .
\end{equation}

To ensure that
$\mathcal{M}\bigl(\varrho _{\varepsilon }, \varepsilon \in
(0,\bar{\varepsilon} )\bigr)\neq \emptyset $,
it is sufficient to assume that there exist $\delta >0$ and $C>0$ such
that, for any $\varepsilon \in (0,\bar{\varepsilon} )$,
$\Tr[\varrho _{\varepsilon }N^{\delta }]<C$; where $N$ is the number
operator of the Fock space $\Gamma _s(\mathcal{Z})$ with $\mathcal{Z}=L^2\oplus L^2$. Actually, we make the following more restrictive assumptions: Let
$(\varrho_{\varepsilon } )_{\varepsilon \in (0,\bar{\varepsilon} )}$
be a family of normal states on
$\Gamma _s(L^2 (\mathds{R}^3 )\oplus L^2 (\mathds{R}^3 ))$, then
\begin{gather}
  \label{eq:84}\tag{$A_{0}$}
  \exists \mathfrak{C}>0\, ,\, \forall \varepsilon \in (0,\bar{\varepsilon} )\, ,\, \forall k\in \mathds{N}\, ,\, \Tr[\varrho _{\varepsilon }N_1^k]\leq \mathfrak{C}^k\; ;\\
  \label{eq:86}\tag{$A_\rho$}
  \exists C>0  \, ,\, \forall \varepsilon \in (0,\bar{\varepsilon} )\, ,\, \Tr[\varrho _{\varepsilon }(N+U_{\infty }^{*}H_0U_{\infty })]\leq C\; ;
\end{gather}
where $N_1$ is the nucleonic number operator, $N=N_1+N_2$ the total
number operator, $H_0$ is the free Hamiltonian defined by
Equation~\eqref{eq:1} and $U_{\infty }$ is the unitary quantum
dressing defined in Lemma~\ref{lemma:1}. As a matter of fact, it is possible in principle to
remove Assumption~\eqref{eq:84}, but it has an important role in
connection with the parameter $\sigma _0$ related to the
renormalization procedure.
This condition restricts the considered states $\varrho _{\varepsilon }$ to be at most with
$[\mathfrak{C}/\varepsilon ]$ nucleons.

We are now in a position to state precisely our result: \emph{the
  Bohr's correspondence principle holds between the renormalized
  quantum dynamics of the Nelson model generated by $H_{ren}(\sigma _0)$ and the Schr\"odinger-Klein-Gordon classical flow generated by $\mathscr{E}$}. The operator
  $H_{ren}(\sigma _0)$  is constructed in Subsection \ref{sec:extension-mathcalh} according to Definition \ref{def:8}. Recall that $\mathcal{D}=Q(-\Delta +V)\oplus
  \mathcal{F}H^{\frac{1}{2}}(\mathds{R}^3)$.
\begin{thm}
  \label{thm:2}
  Let $\mathbf{E}:\mathds{R}\times \mathcal{D}\to \mathcal{D}$ be the
 Schr\"odinger-Klein-Gordon flow provided by Theorem \ref{prop:8} and solving the equation \eqref{eq:72} with a potential $V$   satisfying Assumption~\eqref{ass:2}. Let
  $(\varrho _{\varepsilon })_{\varepsilon \in (0,\bar{\varepsilon} )}$
  be a family of normal states in
  $\Gamma_s\bigl(L^2 (\mathds{R}^3 )\oplus L^2 (\mathds{R}^3 )\bigr)$
  that satisfies Assumptions~\eqref{eq:84} and~\eqref{eq:86}. Then:
  \begin{itemize}
    \item[(i)] there
  exists a $\sigma _0\in \mathds{R}_+$ such that the dynamics
  $e^{- i\frac{t}{\varepsilon }H_{ren}(\sigma _0)}$ is non-trivial on
  the states $\varrho _{\varepsilon }$.
    \item[(ii)] $\mathcal{M}\bigl(\varrho _{\varepsilon }, \varepsilon \in
  (0,\bar{\varepsilon} )\bigr)\neq \emptyset $
    \item[(iii)] for any $t\in \mathds{R}$,
  \begin{equation}
    \label{eq:88}
    \mathcal{M}\Bigl(e^{- i\frac{t}{\varepsilon } H_{ren}(\sigma _0)}\varrho _{\varepsilon}e^{i\frac{t}{\varepsilon }H_{ren}(\sigma _0)}, \varepsilon \in(0,\bar{\varepsilon} )\Bigr)=\Bigl\{\mathbf{E}(t)_{\#}\mu \, ,\, \mu \in \mathcal{M}\bigl(\varrho _{\varepsilon },\varepsilon \in (0,\bar{\varepsilon} )\bigr)\Bigr\}\; .
  \end{equation}
  \end{itemize}
  Furthermore, let
  $(\varepsilon _k)_{k\in \mathds{N}}\subset (0,\bar{\varepsilon} )$
  be a sequence such that $\lim_{k\to \infty }\varepsilon _k= 0$ and
  $\mathcal{M}\bigl(\varrho _{\varepsilon_k }, k \in
  \mathds{N}\bigr)=\{\mu\}$, i.e.: for any $\xi \in L^2 \oplus L^2 $,
  $$\lim_{k\to \infty }\Tr[\varrho _{\varepsilon _k}W(\xi )]= \int_{L^2\oplus L^2} e^{i
    \sqrt{2}\Re\langle \xi , z \rangle} d\mu (z).$$
  Then for any $t\in \mathds{R}$,
  $\mathcal{M}\bigl(e^{- i\frac{t}{\varepsilon_k } H_{ren}(\sigma
    _0)}\varrho _{\varepsilon}e^{i\frac{t}{\varepsilon_k
    }H_{ren}(\sigma _0)}, k \in
  \mathds{N}\bigr)=\{\mathbf{E}(t)_{\#}\mu\}$, i.e.:
  \begin{equation}
    \label{eq:92a}
    \lim_{k\to \infty }\Tr\Bigl[e^{- i\frac{t}{\varepsilon_k } H_{ren}(\sigma _0)}\varrho _{\varepsilon_k}e^{i\frac{t}{\varepsilon_k } H_{ren}(\sigma _0)}W(\xi )\Bigr]=\int_{L^2\oplus L^2} e^{i\sqrt{2}\Re\langle \xi , z \rangle} d (\mathbf{E}(t)_{\#}\mu)(z)\; ,\; \forall \xi \in L^2 \oplus L^2 \; .
  \end{equation}
 \end{thm}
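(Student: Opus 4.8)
The plan is to reduce the renormalized problem to the regularized (ultraviolet cutoff) case treated in \citep{Ammari:2014aa}, using the dressing transformation $U_\infty$ as the intertwiner. The key structural fact is the one recorded (earlier in the paper) in Lemma~\ref{lemma:1}: the renormalized Hamiltonian $H_{ren}(\sigma_0)$ is unitarily equivalent, via $U_\infty$, to an explicit operator of the form $\widetilde H = d\Gamma(\omega) \oplus (\text{Schr\"odinger part}) + (\text{Wick-quantized mean-field-type interaction})$, up to a purely numerical (divergent, $\varepsilon$-independent-after-rescaling) constant that cancels. First I would push the whole problem through $U_\infty$: defining $\tilde\varrho_\varepsilon = U_\infty \varrho_\varepsilon U_\infty^*$, one has $\Tr[\tilde\varrho_\varepsilon N^\delta]$ controlled by Assumption~\eqref{eq:86} (since $N + U_\infty^* H_0 U_\infty$ dominates $N$ and $U_\infty$ changes $N$ in a controlled way — this is exactly why the second summand in \eqref{eq:86} is written with the dressing conjugation), so item (ii) — nonemptiness of the set of Wigner measures — follows from the standard compactness result of \citet{ammari:nier:2008,2011arXiv1111.5918A}. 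Item (i), nontriviality of the dynamics, is essentially a statement that $\sigma_0$ can be chosen so that the rescaled dressing does not collapse the states; I would extract the appropriate $\sigma_0$ from the quantitative bounds in the construction of Section~\ref{sec:class-renorm-nels}, using Assumption~\eqref{eq:84} to keep the nucleon number below the threshold $[\mathfrak C/\varepsilon]$ where the $\sigma_0$-dependent cutoff is inert.

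The heart of the argument is (iii), and within it the convergence \eqref{eq:92a} for a fixed converging family. The strategy is the Wigner-measure/Duhamel scheme of \citet{ammari:nier:2008,2011arXiv1111.5918A,Ammari:2014aa}: (a) for the dressed states, establish that the time-evolved family $e^{-i\frac{t}{\varepsilon}\widetilde H}\tilde\varrho_\varepsilon e^{i\frac{t}{\varepsilon}\widetilde H}$ still has uniformly bounded low moments of $N$ (propagation of the a priori bound, using that $\widetilde H$ is of mean-field type so that $N$ and $\widetilde H$ have a Gronwall-type relation along the flow); (b) pass to Wigner measures along a subsequence $\varepsilon_{k_j}$; (c) show the limit measure at time $t$ is transported from the limit measure at time $0$ by the classical flow $\widetilde{\mathbf E}(t)$ attached to the dressed energy functional $\widetilde{\mathscr E}$ — this is done by testing against Weyl operators, writing a Duhamel formula for $\Tr[\tilde\varrho_\varepsilon(t) W(\xi)]$, and using the convergence of the Wick-quantized interaction to its symbol together with the already-known regularized classical limit; (d) undo the dressing on the classical side: by Section~\ref{sec:class-renorm-nels} the classical dressing $\mathbf T$ conjugates $\widetilde{\mathbf E}(t)$ into the genuine S-KG flow $\mathbf E(t)$ of Theorem~\ref{prop:8}, and on the quantum side $U_\infty$ converges (in the appropriate weak sense on states with bounded moments) to the classical dressing map, so that $\Tr[\varrho_\varepsilon W(\xi)] = \Tr[\tilde\varrho_\varepsilon W(U_\infty^{-1}\xi + \text{l.o.t.})]$ and the two pushforwards match; the uniqueness of the limit (the measure does not depend on the subsequence, given uniqueness at $t=0$) then upgrades subsequential convergence to \eqref{eq:92a}. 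The set equality \eqref{eq:88} in (iii) follows by running this for arbitrary subsequences and invoking the characterization of Wigner measures of $\varrho_\varepsilon$.

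The main obstacle I anticipate is step (a)–(c) for the \emph{dressed} dynamics, specifically controlling the dressing transformation uniformly in $\varepsilon$: $U_\infty$ is a genuinely unbounded Bogoliubov-type dressing (that is the whole point of renormalization), so the commutator estimates $[\widetilde H, N]$ and the convergence of $U_\infty$ to its classical counterpart require careful number-energy estimates that do not degenerate as the ultraviolet cutoff is removed — this is where the growth assumption \eqref{eq:84} on $\Tr[\varrho_\varepsilon N_1^k]$ and the combined bound \eqref{eq:86} are genuinely used, and where most of the technical work of Section~\ref{sec:class-limit-renorm-1} must go. A secondary difficulty is verifying that the Wick symbol of the dressed interaction is exactly $\widetilde{\mathscr E}$ and that its classical flow is complete on $\mathcal D$ with the needed continuity to justify the pushforward identity; this is handled by the global existence results of Subsection~\ref{sec:glob-exist-results} under Assumption~\eqref{ass:2}, but matching domains (the form domain $Q(-\Delta+V)\oplus \mathcal FH^{1/2}$) between the dressed and undressed pictures needs care.
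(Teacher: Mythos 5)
Your proposal follows essentially the same route as the paper: reduce to the dressed renormalized dynamics via the dressing $U_\infty$, prove its classical limit by the Duhamel/Wigner-measure/transport-equation scheme with a uniqueness argument (this is the paper's Theorem~\ref{thm:3}, built on Propositions~\ref{prop:10}--\ref{prop:16}), show that the quantum dressing itself has the classical dressing $\mathbf{D}_{g_\infty}(\theta)$ as semiclassical limit (Proposition~\ref{prop:15}, obtained because $T_\infty$ has the structure of the cutoff Nelson interaction with $g_\infty\in L^2$), and conclude via $\mathbf{E}(t)=\mathbf{D}_{g_\infty}(1)\circ\hat{\mathbf{E}}(t)\circ\mathbf{D}_{g_\infty}(-1)$. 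The technical obstacles you flag (uniform-in-$\varepsilon$ number--energy estimates for the dressed dynamics and the domain matching on $Q(-\Delta+V)\oplus\mathcal{F}H^{1/2}$) are exactly where the paper's work in Section~\ref{sec:class-limit-renorm-1} and the Appendix goes.
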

\begin{remark}
\begin{itemize}[label=\color{myblue}*]
\item  The choice of $\sigma_0$ is related to our Definition \ref{def:8} of the renormalized dynamics and the localization of states $(\varrho_\varepsilon)_{\varepsilon\in(0,\bar\varepsilon)}$ satisfying Assumption \eqref{eq:84} (see Lemma \ref{lemma:10}).  Actually, one can take any $\sigma_0\geq2 K (\mathfrak{C}+1+\bar\varepsilon)$ where $K>0$  is a constant given in Theorem \ref{thm:1}.
  \item We remark that every Wigner measure
  $\mu \in \mathcal{M}(\varrho _{\varepsilon },\varepsilon \in
  (0,\bar{\varepsilon} ))$,
  with
  $(\varrho _{\varepsilon })_{\varepsilon \in (0,\bar{\varepsilon} )}$
  satisfying Assumption~\eqref{eq:86} is a Borel probability measure on
  $\mathcal{D}$ equipped with its graph norm, hence the push-forward by means of the classical flow   $\mathbf{E}$ is well defined (see
  Section~\ref{sec:defin-wign-meas}).
  \item Adopting a shorthand notation, the last assertion of the above
  theorem can be written as:
  \begin{equation*}
    \varrho_{\varepsilon _k}\rightarrow \mu \Leftrightarrow\biggl(\forall t\in \mathds{R}\;,\; e^{- i\frac{t}{\varepsilon_k} H_{ren}(\sigma _0)}\varrho_{\varepsilon_k}e^{i\frac{t}{\varepsilon_k } H_{ren}(\sigma_0)}\rightarrow \mathbf{E}(t)_{\#}\mu \biggr)\; .
  \end{equation*}
\end{itemize}
\end{remark}

\subsection{An alternative renormalization procedure.}
\label{sec:an-altern-renorm}

The quantization of physically interesting classical Hamiltonian systems with infinite degrees of freedom often leads to nowhere defined energy operators, that at best make sense only as quadratic forms. In order to cure the ultraviolet and infinite volume divergencies typical of QFT, a renormalization procedure is needed \cite{MR0255186}. On the
non-perturbative level, this has been usually done by manipulating a regularization of the quadratic form which one obtains by a formal
quantization. The goal is to define another, related, form that is
closed and semi-bounded when the regularization is removed: hence one gets a unique self-adjoint operator  and eventually this will lead to a construction of an interacting QFT verifying some of the Wightman axioms \cite{MR0161603}. This
``Hamiltonian approach'' was considered in few situations
\citep[e.g.][]{MR635783,MR0234682,MR0215585,MR0218073,
glimm1968
  ,MR0266533
  ,nelson:1190
}. We briefly review it in the following paragraphs.

Starting  with a one particle Hilbert space $\mathcal{Z}$ that has an associated symplectic
structure $\Sigma(\mathcal{Z})$ and a Fock representation on
$\Gamma_s(\mathcal{Z})$, one easily shows that the free dynamic is unitarily
implemented. Actually, the free classical Hamiltonian is described by a  positive closed densely defined quadratic  form
$\mathscr{E}_0:D(\mathscr{E}_0)\subset \mathcal{Z}\to \mathds{R}_+$. In
addition, the Wick quantization procedure $\mathrm{Wick}(\cdot )$
maps $\mathscr{E}_0$ into a closed, positive, and densely defined quadratic form
$h_0=\mathrm{Wick}(\mathscr{E}_0)$ on $\Gamma_s(\mathcal{Z})$ which in turn is associated to a unique self-adjoint operator $H_0$.

The interacting classical system is described by a ``perturbation'' of
$\mathscr{E}_0$, namely
$\mathscr{E}=\mathscr{E}_0+\mathscr{E}_I:D(\mathscr{E})\subset
\mathcal{Z}\to \mathds{R}$,
where $D(\mathscr{E})\cap D(\mathscr{E}_0)$ is dense in $\mathcal{Z}$.
Usually, one follows the same rules and maps $\mathscr{E}$ by
$\mathrm{Wick}$ onto a quadratic form $h=\mathrm{Wick}(\mathscr{E})$
on $\Gamma_s(\mathcal{Z})$. The difficulty is that $h$ may not make
sense rigourously or at best is just symmetric but not bounded from
below. Therefore it is not possible, a priori, to associate a
self-adjoint operator to $h$, and hence to define a quantum dynamics.

The non-perturbative renormalization of the Hamiltonian operator is
usually done as follows. A family $(h_{\sigma})_{\sigma\geq 0}$ of
closed, densely defined, and semi-bounded quadratic forms is
introduced, such that for any $\sigma\geq 0$,
$Q(h)\subset Q(h_{\sigma})$, and for any $\Psi,\Phi\in Q(h)$,
$\lim_{\sigma\to \infty}h_{\sigma}(\Psi,\Phi)=h(\Psi,\Phi)$. The idea
is, roughly speaking, to manipulate the form $h_{\sigma}$, or the
associated self-adjoint operator $H_{\sigma}$, in a way such that in
the limit $\sigma\to \infty$ a self-adjoint operator is obtained. This
is done by means of a $\sigma$-dependent transformation $U_{\sigma}$
on $\Gamma_s(\mathcal{Z})$, that is called dressing. The purpose of
the dressing is two-fold: single out the counter term that is causing the
unboundedness from below of $h$, and yield the correct domain where
the renormalized (dressed) operator is densely defined and (hopefully)
self-adjoint. As explained in detail in
Section~\ref{sec:nelson-hamiltonian}, for the Nelson model the
dressing $U_{\sigma}$ is a unitary transformation, the divergent term
causing the unboundedness is a scalar function $E_{\sigma}$ (self-energy), and the
domain of definition of the renormalized dressed operator
$\hat{H}_{ren}$ is a subset of the free form domain $Q(h_0)$. In fact one shows that
that
$\hat{h}_{\sigma}(\,\cdot \,,\,\cdot
\,)=h_{\sigma}(U_{\sigma}\,\cdot\, ,U_{\sigma}\,\cdot\,
)-E_{\sigma}\langle \,\cdot\, , \,\cdot\, \rangle_{}$
is a densely defined quadratic form that is closed and bounded from
below uniformly with respect to $\sigma$, and therefore defines a
unique self-adjoint operator $\hat{H}_{ren}$ in the limit. More
precisely, $\hat{H}_{\sigma}$ converges to $\hat{H}_{ren}$ in the
norm resolvent sense, see \cite{MR0293448}.

There are more complicated situations in which the above procedure is
not sufficient, and a so-called wavefunction renormalization is
required. This yields a change of Hilbert space for the renormalized
dynamics. The idea is that we can define a new Hilbert space
$\mathcal{R}$ as the completion of the pre-Hilbert space defined by
$\langle \hat{\Psi} , \hat{\Phi} \rangle_{\mathcal{R}}=\lim_{\sigma\to
  \infty}\frac{\langle U_{\sigma}\Psi , U_{\sigma}\Phi
  \rangle_{\Gamma_s(\mathcal{Z})}}{\lVert U_{\sigma}\Omega
  \rVert_{\Gamma_s(\mathcal{Z})}^2}$;
where $\Psi,\Phi\in D$, the latter being dense in
$\Gamma_s(\mathcal{Z})$, and $\Omega\in D$ is the Fock vacuum
(obviously in this case $U_{\sigma}$ is not unitary). If
$U_{\sigma}[D]\subset D(H_{\sigma})$, the renormalized Hamiltonian
$\hat{H}_{ren}$ on $\mathcal{R}$ is defined by
$\langle \hat{\Psi} , \hat{H}_{ren}\hat{\Phi}
\rangle_{\mathcal{R}}=\lim_{\sigma\to
  \infty}\frac{h_{\sigma}(U_{\sigma}\Psi ,U_{\sigma}\Phi
  )-s_{\sigma}(U_{\sigma}\Psi ,U_{\sigma}\Phi )}{\lVert
  U_{\sigma}\Omega \rVert_{\Gamma_s(\mathcal{Z})}^2}$,
where $s_{\sigma}(U_{\sigma}\Psi ,U_{\sigma}\Phi )$ is the singular
part of $h_\sigma$. Now suppose that: to the singular part
$s_{\sigma}$ is associated an operator $S_{\sigma}$ defined on
$D(H_{\sigma})$; and for any $\Psi\in D$,
$\lVert (H_{\sigma}-S_{\sigma})U_{\sigma}\Psi
\rVert_{\Gamma_s(\mathcal{Z})}^{}/\lVert U_{\sigma}\Omega
\rVert_{\Gamma_s(\mathcal{Z})}^{}$
is uniformly bounded with respect to $\sigma$. Then it follows from
Riesz's representation theorem that $\hat{H}_{ren}$ is a densely
defined (symmetric) operator on $\mathcal{R}$. The existence of
eventual self-adjoint extensions of $\hat{H}_{ren}$ has to be then
proved by other means.

In this paper, we would like to introduce a different point of view on
the non-perturbative renormalization. It is founded not on
manipulations at the quantized level, \emph{but on  suitable
  transformations of the classical energy before quantization}. These transformations are related to \emph{normal form techniques for Hamiltonian
  PDEs} \citep{MR3203027,MR1857574}.  In particular for the Nelson model, we exploit a classical counterpart of the dressing transformation that puts the S-KG equation into a normal form suitable for quantization, see Subsection~\ref{sec:classical-dressing}. We remark that we obtain the
same renormalized Nelson Hamiltonian $H_{ren}$ as the one obtained in
Section~\ref{sec:nelson-hamiltonian} by standard techniques in \cite{nelson:1190}.
The  idea of using the properties of the classical field equations to construct
a quantum field theory dates to a series of works by I.~E.~Segal
\citep{MR0135093
  ,MR0177670
  ,MR0192369
  ,MR0394772
} and \citet{MR0220490
}. Some years later, ``Segal's program'' has been continued by Balaban
\citep{MR0386522
  ,MR0416335
}. Local relativistic interacting quantum fields can be constructed,
but not the interacting quantum dynamics. Our approach to
non-perturbative renormalization by means of normal forms of Hamiltonian PDEs
may be seen as a further continuation of Segal's program, though we
adopt a different point of view. While the cited works concentrated on
the construction of the relativistic self-interacting fields, we focus on
particle-field interacting Hamiltonians.

The procedure is described in detail in
Section~\ref{sec:class-renorm-nels}, but we will briefly outline the
strategy here, for the idea is very simple. Instead of quantizing
directly the energy functional $\mathscr{E}$, we make a ``change of
coordinates'' in the classical phase space $\mathcal{Z}$. Let
$\mathbf{D}:\mathcal{Z}\to \mathcal{Z}$ be a symplectomorphism, such
that there exists a dense subset $\mathcal{D}\subset D(\mathscr{E})$
of $\mathcal{Z}$ such that
$\mathbf{D}[\mathcal{D}]\subset \mathcal{D}$. Then we can calculate
the energy in the new coordinates, i.e.  $\mathscr{E} (\mathbf{D}(z))$,
for any $z\in \mathcal{D}$. It turns out that not only the
quantization $\mathrm{Wick}(\mathscr{E}\circ\mathbf{D})$ differs from
$\mathrm{Wick}(\mathscr{E})$, but it has also different properties as a
quadratic form. In particular, for a specific choice of $\mathbf{D}$,
\emph{$\mathrm{Wick}(\mathscr{E}\circ\mathbf{D})$ is closed and bounded
  from below} and the associated self-adjoint operator is the dressed
  renormalized Hamiltonian $\hat{H}_{ren}$ (while
$\mathrm{Wick}(\mathscr{E})$ is not even bounded from below). It follows
that, for the Nelson model, a suitable classical near identity change of coordinates
is sufficient to renormalize the quantum dynamics, without any
additional manipulation (only a self-energy renormalization is needed).

In addition, $\mathbf{D}$ gives informations on the relation between
quantum dressed dynamics $\hat{H}_{ren}$, and the undressed one
$H_{ren}$. The canonical map $\mathbf{D}$ can be associated to a group
of symplectomorphisms
$\bigl(\mathbf{D}(\theta)\bigr)_{\theta\in \mathds{R}}$ with generator
$\mathscr{D}$, by the relation $\mathbf{D}=\mathbf{D}(1)$. It turns
out that $\mathrm{Wick}(\mathscr{D})$ is a well-defined
self-adjoint operator $T_{\infty}$, and therefore it generates a strongly
continuous one-parameter unitary  group $e^{-i \frac{\theta}{\epsilon}T_{\infty}}$
that is the quantum analogous of $\mathbf{D}(\theta)$. If we now
denote $\hat{\mathscr{E}}=\mathscr{E}\circ \mathbf{D}$, we define the
corresponding classical evolution group by $\hat{\mathbf{E}}(t)$;
also, we denote by $\mathbf{E}(t)$ the classical evolution group
associated to $\mathscr{E}$. By definition, these groups are related
by
$\mathbf{E}(t)=\mathbf{D}(1)\circ \hat{\mathbf{E}}\circ
\mathbf{D}(-1)$.
At the quantum level, it is not possible to define directly the
evolution corresponding to $\mathbf{E}(t)$, for
$\mathrm{Wick}(\mathscr{E})$ is not associated to a self-adjoint
operator. Nevertheless, the right hand side of the previous relation
has a quantum correspondent in
$e^{-\frac{i}{\epsilon}T_{\infty}}e^{-i
  \frac{t}{\epsilon}\hat{H}_{ren}}e^{\frac{i}{\epsilon}T_{\infty}}$,
and that provides an indirect definition of the undressed renormalized
dynamics $H_{ren}$:
\begin{equation*}
  e^{-i \frac{t}{\epsilon}H_{ren}}:=  e^{-\frac{i}{\varepsilon}T_{\infty}}e^{-i\frac{t}{\varepsilon}\hat{H}_{ren}}e^{\frac{i}{\varepsilon}T_{\infty}}\; .
\end{equation*}

To sum up, we think that this new point of view has some advantages
over the usual Hamiltonian approach, since the manipulations are done
at the classical level where observables are commuting (in our case,
an explicit near identity change of coordinates is sufficient and there is no
need of introducing cut-offs, and  divergent quantities like
the quantum self-energy do not show up explicitly). More importantly, the approach draws the
link with the topic of normal forms for Hamiltonian PDEs where substantial advances
have been accomplished. Therefore, we believe that this point of view deserves further study and we hope that it will be of interest in clarifying some of the mathematical methods and techniques of "constructive"  QFT.

\section{The quantum system: Nelson Hamiltonian.}
\label{sec:nelson-hamiltonian}

In this section we define the quantum system of "nucleons" interacting
with a  meson field, and review the standard renormalization procedure due to
\citet{nelson:1190
}. Since we are interested in the classical limit and our
original and  dressed  Hamiltonians depend in an effective parameter
$\varepsilon\in(0,\bar \varepsilon)$, we need  to derive several  estimates
that are uniform with respect to $\varepsilon$. So this explains  why
we go through the technical details related to the renormalization procedure
of the Nelson model.

On
$\mathcal{H}=\Gamma_s(L^2(\mathds{R}^3))\otimes \Gamma_s(L^2
(\mathds{R}^3))$
we define the following free
Hamiltonian as a positive self-adjoint operator given by:
\begin{equation}
  \label{eq:1}
  \begin{split}
    H_0=\int_{\mathds{R}^3}^{}\psi^{*}(x)\Bigl(-\tfrac{\Delta}{2M}+V(x)\Bigr)\psi(x)  dx+\int_{\mathds{R}^3}^{}a^{*}(k)\omega(k)a(k)  dk=d\Gamma (-\tfrac{\Delta }{2M}+V)+d\Gamma (\omega )\; ,
  \end{split}
\end{equation}
where $V\in L^2_{loc} (\mathds{R}^3,\mathds{R}_+)$. We denote its
domain of self-adjointness by $D(H_0)$. We denote by $d\Gamma $ the
second quantization acting either on the first or second Fock space,
when no confusion arises.

Now let $\chi\in \mathcal{C}_0^{\infty}(\mathds{R}^3)$;
$0\leq \chi\leq 1$ and $\chi\equiv 1$ if $\lvert k\rvert_{}^{}\leq 1$,
$\chi\equiv 0$ if $\lvert k\rvert_{}^{}\geq 2$. Then, for all
$\sigma>0$ define $\chi_{\sigma}(k)=\chi(k/\sigma)$; it will play the
role of an ultraviolet cutoff in the interaction. The Nelson
Hamiltonian with cutoff has thus the form:
\begin{equation}
  \label{eq:2}
  H_{\sigma}=H_0+\tfrac{1}{(2\pi)^{3/2}}\int_{\mathds{R}^3}^{}\psi^{*}(x)\Bigl(a^{*}\bigl(\tfrac{e^{-ik\cdot x}}{\sqrt{2\omega}}\chi_{\sigma}\bigr)+a\bigl(\tfrac{e^{-ik\cdot x}}{\sqrt{2\omega}}\chi_{\sigma}\bigr)\Bigr)\psi(x)  dx\; .
\end{equation}
We will denote the interaction part by $H_I(\sigma)=H_{\sigma}-H_0$.
\begin{remark}
  \label{rem:3}
  There is no loss of generality in the choice of $\chi$ as a radial
  function \citep[see][Proposition
  3.9]{MR1809881}.
\end{remark}
The following proposition shows the self-adjointness of $H_\sigma$, see e.g.
\citep[][Proposition
2.5]{Ammari:2014aa
} or \citep{Falconi:2014aa
}.
\begin{proposition}
  \label{prop:1}
  For any $\sigma>0$, $H_{\sigma}$ is essentially self-adjoint on
  $D(H_0)\cap \mathcal{C}^{\infty}_0(N)$.
\end{proposition}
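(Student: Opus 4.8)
The plan is to exploit the conservation of nucleon number and reduce the claim to a sector-wise Kato-Rellich argument. Since $H_0=d\Gamma(-\tfrac{\Delta}{2M}+V)+d\Gamma(\omega)$ commutes with both the nucleon number $N_1$ and the meson number $N_2$, while the interaction $H_I(\sigma)$ preserves $N_1$ and changes $N_2$ by $\pm1$, the decomposition $\mathcal{H}=\bigoplus_{n=0}^{\infty}\mathcal{H}_n$ reduces $H_0$ and leaves every $\mathcal{H}_n$ invariant under $H_\sigma$. Writing $H_\sigma^{(n)}=H_0^{(n)}+H_I(\sigma)^{(n)}$, it is then enough to show that each $H_\sigma^{(n)}$ is essentially self-adjoint on $D(H_0^{(n)})\cap\mathcal{C}^{\infty}_0(N_2)$ --- the vectors with finite meson number lying in the domain of $H_0^{(n)}$ --- and afterwards to pass to the orthogonal direct sum.

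Within a fixed sector I would apply the Kato-Rellich theorem. The form factor $v_\sigma(k):=\chi_\sigma(k)/\sqrt{2\omega(k)}$ lies in $L^2(\mathds{R}^3)$, because $\chi_\sigma$ has compact support and $\omega(k)\geq m_0>0$; hence on $\mathcal{H}_n$ the interaction $H_I(\sigma)^{(n)}$ is built (at fixed $\varepsilon$) out of $n$ meson field operators with square-integrable form factor, and the elementary estimate $\lVert a^{\#}(g)\Phi\rVert\leq C\lVert g\rVert_{2}\,\lVert(N_2+\varepsilon)^{1/2}\Phi\rVert$ yields $\lVert H_I(\sigma)^{(n)}\Psi\rVert\leq C_n\lVert d\Gamma(\omega)^{1/2}\Psi\rVert+C_n\lVert\Psi\rVert$. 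Since $V\geq0$ gives $d\Gamma(\omega)\leq H_0^{(n)}$, a standard interpolation upgrades this to an infinitesimal bound $\lVert H_I(\sigma)^{(n)}\Psi\rVert\leq\delta\lVert H_0^{(n)}\Psi\rVert+C_{n,\delta}\lVert\Psi\rVert$ for every $\delta>0$. Kato-Rellich then makes $H_\sigma^{(n)}$ self-adjoint on $D(H_0^{(n)})$ and essentially self-adjoint on every core of $H_0^{(n)}$; and $D(H_0^{(n)})\cap\mathcal{C}^{\infty}_0(N_2)$ is such a core, because $H_0^{(n)}$ commutes with $N_2$, so truncating the meson number by a spectral projection of $N_2$ approximates any element of $D(H_0^{(n)})$ together with its $H_0^{(n)}$-image.

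Finally I would reassemble the sectors: if $(A_n)_n$ are operators essentially self-adjoint on respective cores $D_n\subset\mathcal{H}_n$, then $\bigoplus_n A_n$ is essentially self-adjoint on the algebraic direct sum $\bigoplus_n^{\mathrm{alg}}D_n$, since the closure of that restriction already contains $\bigoplus_n\overline{A_n}$, which is self-adjoint, hence equals it. It then only remains to verify the bookkeeping identity $\bigoplus_n^{\mathrm{alg}}\bigl(D(H_0^{(n)})\cap\mathcal{C}^{\infty}_0(N_2)\bigr)=D(H_0)\cap\mathcal{C}^{\infty}_0(N)$: a vector supported on finitely many nucleon sectors lies in $D(H_0)$ iff every component lies in $D(H_0^{(n)})$, and it has finite total particle number iff every component has finite meson number, which is precisely membership in $\mathcal{C}^{\infty}_0(N)$. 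I do not expect a genuine obstacle, the statement being a standard consequence of the ultraviolet cutoff; the only points demanding care are the number-operator bound on $H_I(\sigma)^{(n)}$ (together with the use of $V\geq0$) and the bookkeeping that matches the assembled sector-wise cores with $D(H_0)\cap\mathcal{C}^{\infty}_0(N)$. One could alternatively bypass the sector decomposition and invoke Nelson's commutator theorem with a comparison operator of the form $H_0+N_1^2+1$; this needs, in addition, control of the commutator $[H_I(\sigma),d\Gamma(-\tfrac{\Delta}{2M})]$, which the route above avoids.
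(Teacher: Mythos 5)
Your proof is correct. The paper does not give an argument for this proposition; it simply cites \citep[Proposition 2.5]{Ammari:2014aa} and the self-adjointness criterion of \citep{Falconi:2014aa}, the latter being essentially a Nelson-commutator-theorem criterion — the alternative you mention at the end. Your chosen route (reduce by $N_1$ since $[H_\sigma,N_1]=0$, apply Kato--Rellich in each sector $\mathcal{H}_n$, then reassemble the orthogonal direct sum) is an equally standard and self-contained strategy for the cutoff Hamiltonian, and all the steps check out: the form factor $\chi_\sigma/\sqrt{2\omega}\in L^2$ because $\chi_\sigma$ has compact support and $m_0>0$; this gives the relative $(N_2+\varepsilon)^{1/2}$-bound on $H_I(\sigma)^{(n)}$, hence an $H_0^{(n)}$-form bound, which is upgraded to an infinitesimal operator bound by $x^{1/2}\le\delta x+(4\delta)^{-1}$; spectral truncation of $N_2$ shows $D(H_0^{(n)})\cap\mathcal{C}^{\infty}_0(N_2)$ is a core for $H_0^{(n)}$; and the closing bookkeeping (a vector is in $\mathcal{C}^{\infty}_0(N)$ iff it is supported on finitely many nucleon sectors each with finite meson number, and for such finitely supported vectors $D(H_0)$-membership reduces to componentwise $D(H_0^{(n)})$-membership) is right. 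One small remark: the $n$-dependence of the constant $C_n$ is harmless here because the statement is at a fixed $\varepsilon$ and each sector is treated independently; the commutator-theorem criterion would instead be the natural choice if one wanted uniformity across sectors or across $\varepsilon$, but that is not needed for Proposition~\ref{prop:1}.
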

To obtain a meaningful limit when $\sigma\to \infty$, we use  a
 dressing transformation, introduced in the physics literature
by \citet{greenberg1958
} following  the work of van Hove \cite{MR0104463,MR0104464}. The dressing and the renormalization procedure are described in
Sections~\ref{sec:dressing-transform} and~\ref{sec:renorm-hamilt}
respectively. In Section~\ref{sec:extension-mathcalh} we discuss a
possible extension of the renormalized Hamiltonian on $\mathcal{H}_n$
to the whole Fock space $\mathcal{H}$. The extension we choose is not
the only possible one, however the choice is motivated by two facts: other
extensions should provide the same classical limit, and our choice
$\hat{H}_{ren}(\sigma _0)$ is, in our opinion, more consistent with the
quantization procedure of the classical energy functional.

\subsection{Dressing.}
\label{sec:dressing-transform}

The dressing transform was introduced as an alternative way of doing
renormalization in the Hamiltonian formalism, and has been utilized in
a rigorous fashion in various situations \citep[see
e.g.][]{MR0234682,1969LNP.....2.....H,nelson:1190
  ,MR0266533
}. For the Nelson Hamiltonian, it consists of a unitary transformation
that singles out the singular self-energy.

From now on, let $0<\sigma_0<\sigma$, with $\sigma_0$ fixed. Then
define:
\begin{gather}
  \label{eq:3}
  g_{\sigma}(k)=-\frac{i}{(2\pi)^{3/2}}\frac{1}{\sqrt{2\omega(k)}}\frac{\chi_{\sigma}(k)-\chi_{\sigma_0}(k)}{\frac{k^2}{2M}+\omega(k)}\; ;\\
  \label{eq:4}
  E_{\sigma}=\frac{1}{2(2\pi)^3}\int_{\mathds{R}^3}^{}\frac{1}{\omega(k)}\frac{(\chi_{\sigma}-\chi_{\sigma_0})^2(k)}{\frac{k^2}{2M}+\omega(k)}  dk-\frac{1}{(2\pi)^3}\int_{\mathds{R}^3}^{}\frac{\chi_{\sigma}(k)}{\omega(k)}\frac{(\chi_{\sigma}-\chi_{\sigma_0})(k)}{\frac{k^2}{2M}+\omega(k)}  dk\; .
\end{gather}

The dressing transformation is the unitary operator generated by (the
dependence on $\sigma_0$ will be usually omitted):
\begin{equation}
  \label{eq:5}
  T_{\sigma}=\int_{\mathds{R}^3}^{}\psi^{*}(x)  \Bigl(a^{*}(g_{\sigma}e^{-ik\cdot x})+a(g_{\sigma}e^{-ik\cdot x})\Bigr)\psi(x)  dx\; .
\end{equation}
The function $g_{\sigma}\in L^2 (\mathds{R}^3)$ for all
$\sigma\leq \infty$; therefore it is possible to prove the following
Lemma, e.g. utilizing the criterion of
\citep{Falconi:2014aa
}.
\begin{lemma}
  \label{lemma:1}
  For any $\sigma\leq\infty$, $T_{\sigma}$ is essentially self-adjoint
  on $\mathcal{C }^{\infty}_0(N)$. We denote by $U_{\sigma}(\theta )$
  the corresponding one-parameter unitary group
  $U_{\sigma}(\theta )=e^{-i\frac{\theta }{\varepsilon} T_{\sigma}}$.
\end{lemma}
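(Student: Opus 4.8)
The plan is to use the fact that $T_{\sigma}$ conserves the nucleon number and to reduce the claim, on each fixed nucleon sector, to essential self-adjointness of a direct integral of Segal field operators with square-integrable test functions, which is classical. First I would observe that $T_{\sigma}$ commutes with the nucleonic number operator $N_1$: in \eqref{eq:5} every summand $\psi^{*}(x)(\cdots)\psi(x)$ destroys and then recreates exactly one nucleon. Hence $T_{\sigma}=\bigoplus_{n\geq 0}T_{\sigma}^{(n)}$ along $\mathcal{H}=\bigoplus_n\mathcal{H}_n$, and $\mathcal{C}^{\infty}_0(N)=\bigoplus_n^{\mathrm{alg}}\mathcal{D}_n$ with $\mathcal{D}_n:=\mathcal{C}^{\infty}_0(N)\cap\mathcal{H}_n=\bigl(L^2(\mathds{R}^3)\bigr)^{\otimes_s n}\otimes\mathcal{C}^{\infty}_0(N_2)$ (finite meson vectors). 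Since the closure of a direct sum of symmetric operators, each given on a prescribed dense domain, is the direct sum of the individual closures, it suffices to prove that each $T_{\sigma}^{(n)}$ is essentially self-adjoint on $\mathcal{D}_n$; crucially, no bound uniform in $n$ is needed for this reassembly.

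Next I would put $T_{\sigma}^{(n)}$ in fibered form. On $\mathcal{H}_n=\bigl(L^2(\mathds{R}^3)\bigr)^{\otimes_s n}\otimes\Gamma_s(L^2(\mathds{R}^3))$ the nucleonic second quantization of $\psi^{*}(x)A(x)\psi(x)$ acts as $\varepsilon\sum_{j=1}^n A(x_j)$, and $h\mapsto a^{*}(h)+a(h)$ is additive in $h$, so
\begin{equation*}
  T_{\sigma}^{(n)}=\varepsilon\int_{\mathds{R}^{3n}}^{\oplus}\Bigl(a^{*}(G_{\sigma}^{x^n})+a(G_{\sigma}^{x^n})\Bigr)\,dx^n\,,\qquad G_{\sigma}^{x^n}(k):=g_{\sigma}(k)\sum_{j=1}^n e^{-ik\cdot x_j}\,,
\end{equation*}
a direct integral over the nucleon configuration $x^n=(x_1,\dots,x_n)\in\mathds{R}^{3n}$. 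The key point is the configuration-uniform estimate $\lVert G_{\sigma}^{x^n}\rVert_{L^2}\leq n\,\lVert g_{\sigma}\rVert_{L^2}<\infty$, valid for every $\sigma\leq\infty$ — for $\sigma=\infty$ precisely because the denominator $\tfrac{k^2}{2M}+\omega(k)\sim\lvert k\rvert^2$ in \eqref{eq:3} supplies the extra decay making $g_{\infty}\in L^2(\mathds{R}^3)$, as recalled just above the statement.

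Now $T_{\sigma}^{(n)}$ is symmetric on $\mathcal{D}_n$ by construction, and I would show that every $\psi\in\mathcal{D}_n$ is an analytic vector: if $\psi$ carries meson number at most $\bar{k}$, iterating the elementary bounds $\lVert(a^{*}(h)+a(h))\phi\rVert\leq 2\sqrt{\varepsilon(\ell+1)}\,\lVert h\rVert\,\lVert\phi\rVert$ on the $\ell$-meson sector, together with the uniform bound of the previous paragraph, yields $\lVert(T_{\sigma}^{(n)})^m\psi\rVert\leq C^{m}\sqrt{m!}\,\lVert\psi\rVert$ for a constant $C$ depending only on $n,\varepsilon,\lVert g_{\sigma}\rVert,\bar k$, whence $\sum_{m\geq0}\tfrac{t^m}{m!}\lVert(T_{\sigma}^{(n)})^m\psi\rVert<\infty$ for all $t\geq0$. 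By Nelson's analytic vector theorem, $T_{\sigma}^{(n)}$ is then essentially self-adjoint on $\mathcal{D}_n$; reassembling over $n$ gives essential self-adjointness of $T_{\sigma}$ on $\mathcal{C}^{\infty}_0(N)$, and Stone's theorem produces the unitary group $U_{\sigma}(\theta)=e^{-i\frac{\theta}{\varepsilon}T_{\sigma}}$.

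The one genuine structural obstacle I anticipate is that no single number or energy operator dominates $T_{\sigma}$ on all of $\mathcal{H}$, since the fiber coefficient $\lVert G_{\sigma}^{x^n}\rVert$ grows linearly in the nucleon number; this is exactly what forces the decomposition along $N_1$ and precludes a global relative-bound or commutator argument, while inside a fixed sector all estimates are uniform in the nucleon configuration and routine. An equivalent route, matching the criterion cited above, would replace the analytic-vector step by the Glimm--Jaffe--Nelson commutator theorem on $\mathcal{H}_n$ with comparison operator $N_2+\varepsilon$, checking the $N_2^{1/2}$-bounds for $T_{\sigma}^{(n)}$ and for $[T_{\sigma}^{(n)},N_2]=\varepsilon\int_{\mathds{R}^{3n}}^{\oplus}\bigl(a(G_{\sigma}^{x^n})-a^{*}(G_{\sigma}^{x^n})\bigr)\,dx^n$.
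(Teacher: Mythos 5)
Your argument is correct, but it is not the route the paper takes: the paper's entire proof is the observation that $g_{\sigma}\in L^2(\mathds{R}^3)$ for every $\sigma\leq\infty$ (the extra decay of the denominator in \eqref{eq:3} at $\sigma=\infty$) together with an appeal to the general self-adjointness criterion of \citep{Falconi:2014aa}, which applies directly on the full Fock space to Wick-type operators that shift the particle number by a bounded amount, with no decomposition along $N_1$ and no fibering over nucleon configurations. You instead exploit $[T_{\sigma},N_1]=0$ from \eqref{eq:5}, reduce to the sectors $\mathcal{H}_n$, fiber over the nucleon positions to get van Hove--type field operators $a^{*}(G^{x^n}_{\sigma})+a(G^{x^n}_{\sigma})$ with the configuration-uniform bound $\lVert G^{x^n}_{\sigma}\rVert_2\leq n\lVert g_{\sigma}\rVert_2$, and conclude by Nelson's analytic vector theorem sector by sector, reassembling via the closure-of-direct-sums fact (for which, as you note, no uniformity in $n$ is needed). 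Both proofs ultimately rest on the same single analytic input, $g_{\sigma},g_{\infty}\in L^2$; the citation route is shorter, avoids redoing standard Fock-space estimates, and the same reference also covers Proposition~\ref{prop:1}, while your route is self-contained and elementary, makes explicit why the relevant constants grow linearly in the nucleon number (and why this is harmless), and matches in spirit the $N_1$-sector decomposition the paper uses later for $\hat{H}_{\sigma}$. One cosmetic slip in your closing aside: with the $\varepsilon$-dependent CCR one has $[a(h),N_2]=\varepsilon\, a(h)$, so the commutator there should read $[T^{(n)}_{\sigma},N_2]=\varepsilon^{2}\int^{\oplus}_{\mathds{R}^{3n}}\bigl(a(G^{x^n}_{\sigma})-a^{*}(G^{x^n}_{\sigma})\bigr)\,dx^n$; the extra factor of $\varepsilon$ is irrelevant for the Glimm--Jaffe--Nelson commutator argument you sketch.
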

For the sake of brevity, we will write
$U_{\sigma}:=U_{\sigma}(-1)$. We remark that $T_{\sigma}$ and  $
H_{\sigma}$ preserve the number of ``nucleons'', i.e.:
for any $\sigma\leq\infty$, $\sigma'<\infty$:
\begin{equation}
  \label{eq:6}
  [T_{\sigma},N_1]=0=[H_{\sigma'},N_1]\; .
\end{equation}
The above operators also commute in the resolvent sense.
We are now in  position to define the dressed Hamiltonian
\begin{equation}
  \label{eq:7}
  \hat{H}_{\sigma}:=U_{\sigma}\bigl(H_{\sigma}-\varepsilon N_1 E_{\sigma}\bigr)U^{*}_{\sigma}\; .
\end{equation}
The operator $\hat{H}_{\sigma}$ is self-adjoint for any
$\sigma <\infty$, since $H_{\sigma}$ and $N_1$ are commuting
self-adjoint operators and $U_{\sigma}$ is unitary. The purpose is to
show that the quadratic form associated with
$\hat{H}_{\sigma}\bigr\rvert _{\mathcal{H }_n}$ satisfies the
hypotheses of KLMN theorem, even when $\sigma=\infty$, so it is
possible to define uniquely a self-adjoint operator
$\hat{H}_{\infty}$. In order to do that, we need to study in detail
the form associated with $\hat{H}_{\sigma}^{(n)}$.

By Equation~\eqref{eq:7}, it follows immediately that
\begin{equation}
  \label{eq:8}
  \hat{H}_{\sigma}^{(n)}=\varepsilon U_{\sigma}^{(n)}\Biggl(\frac{H_{\sigma}^{(n)}}{\varepsilon}-(\varepsilon n) E_{\sigma}\Biggr)(U^{(n)}_{\sigma})^{*}\; .
\end{equation}
A suitable calculation
\citep{nelson:1190
  ,MR1809881
} yields:
\begin{equation}
  \label{eq:9}
  \begin{split}
    \hat{H}_{\sigma}^{(n)}=H_{\sigma_0}^{(n)} +\varepsilon^2\sum_{i<j}^{}V_{\sigma}(x_i-x_j)+\frac{\varepsilon}{2M} \sum_{j=1}^n\Biggl(\Bigl(a^{*}(r_{\sigma}e^{-ik\cdot x_j})^2+a(r_{\sigma}e^{-ik\cdot x_j})^2\Bigr)\\+2a^{*}(r_{\sigma}e^{-ik\cdot x_j})a(r_{\sigma}e^{-ik\cdot x_j})-2\Bigl(D_{x_j}a(r_{\sigma}e^{-ik\cdot x_j})+a^{*}(r_{\sigma}e^{-ik\cdot x_j})D_{x_j}\Bigr)  \Biggr)\; ;
  \end{split}
\end{equation}
where $D_{x_j}=-i\nabla_{x_j} $ and
$$r_{\sigma}(k)=-ik g_{\sigma}(k)\,,$$
\begin{equation}
  \label{eq:10}
  V_{\sigma}(x)=2\Re \int_{\mathds{R}^3}^{}\omega(k)\lvert g_{\sigma}(k)\rvert_{}^2e^{-ik\cdot x}  dk -4\Im\int_{\mathds{R}^3}^{}\frac{\bar{g}_{\sigma}(k)}{(2\pi)^{3/2}}\frac{\chi_{\sigma}(k)}{\sqrt{2\omega(k)}}e^{-ik\cdot x}  dk\; .
\end{equation}
It is also possible to write $\hat{H}_{\sigma}$ in its second
quantized form as:
\begin{gather}
  \label{eq:11}
  \hat{H}_{\sigma}=H_0+\hat{H}_I(\sigma)\; ;\\
  \label{eq:12}
  \begin{split}
    \hat{H}_I(\sigma)=H_I(\sigma_0) + \frac{1}{2}\int_{\mathds{R}^{6}}^{}\psi^{*}(x)\psi^{*}(y)V_{\sigma}(x-y)\psi(x)\psi(y)  dxdy\\+\tfrac{1}{2M}\int_{\mathds{R}^3}^{}\psi^{*}(x)\Biggl(\Bigl(a^{*}(r_{\sigma}e^{-ik\cdot x})^2+a(r_{\sigma}e^{-ik\cdot x})^2\Bigr)+2a^{*}(r_{\sigma}e^{-ik\cdot x})a(r_{\sigma}e^{-ik\cdot x})\\-2\Bigl(D_{x}a(r_{\sigma}e^{-ik\cdot x})+a^{*}(r_{\sigma}e^{-ik\cdot x})D_{x}\Bigr)\Biggr)\psi(x)  dx\; .
  \end{split}
\end{gather}
\begin{remark}
  \label{rem:8}
  The dressed interaction Hamiltonian $\hat{H}_I(\sigma)$ contains a
  first term analogous to the undressed interaction with cutoff, a
  second term of two-body interaction between nucleons, and a more
  singular term that can be only defined as a form when
  $\sigma=\infty$.
\end{remark}

\subsection{Renormalization.}
\label{sec:renorm-hamilt}

We will now define the renormalized self-adjoint operator
$\hat{H}^{(n)}_{\infty}$. A simple calculation shows that
$E_{\sigma }\to -\infty $ when $\sigma \to +\infty $; hence the
subtraction of the self-energy in the definition~\eqref{eq:7} of
$\hat{H}_{\sigma}$ is necessary. It is actually the only
renormalization necessary for this system. We prove that the
quadratic form associated with $\hat{H}_{\sigma}^{(n)}$ of
Equation~\eqref{eq:9} has meaning for any $\sigma\leq \infty$, and the
KLMN theorem \citep[see][Theorem
X.17]{MR0493420
} can be applied, with a suitable choice of $\sigma_0$.

We start with some preparatory lemmas:
\begin{lemma}
  \label{lemma:2}
  For any $0\leq\sigma\leq\infty$, the symmetric function $V_{\sigma}$
  satisfies:
  \begin{enumerate}[label=\color{myblue}(\roman*)]
  \item\label{item:1}
    $V_{\sigma}(1-\Delta)^{-1/2}\in \mathcal{L }(L^2 (\mathds{R}^3))$;
  \item\label{item:2}
    $(1-\Delta)^{-1/2} V_{\sigma}(1-\Delta)^{-1/2}\in \mathcal{K}(L^2
    (\mathds{R}^3))$.
  \end{enumerate}
  In particular, $V_{\sigma}\in L^s (\mathds{R}^3)\cap  L^{3,\infty} (\mathds{R}^{3})$, for any
  $s\in[2,+\infty[$.
\end{lemma}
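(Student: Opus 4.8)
The plan is to reduce the whole statement to elementary integrability properties of a single, $\sigma$-independent majorant on the Fourier side. First I would rewrite $V_{\sigma}$, given by \eqref{eq:10}, as an inverse Fourier transform. By Remark~\ref{rem:3} we may take $\chi$ radial, so that $g_{\sigma}$ of \eqref{eq:3} has the form $-i\,\rho_{\sigma}(\lvert k\rvert)$ with $\rho_{\sigma}$ real valued; hence $\lvert g_{\sigma}\rvert^{2}$ is an even function and $\bar{g}_{\sigma}\chi_{\sigma}/\sqrt{2\omega}$ is $i$ times an even function, so both integrands in \eqref{eq:10} reduce to real multiples of even functions of $k$. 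Collecting them, $V_{\sigma}$ equals a fixed constant times $\mathcal{F}^{-1}[\Phi_{\sigma}]$ (taken distributionally when $\sigma=\infty$) for an even real symbol $\Phi_{\sigma}$ with
\begin{equation*}
  \lvert\Phi_{\sigma}(k)\rvert\ \leq\ 2\,\omega(k)\,\lvert g_{\sigma}(k)\rvert^{2}+\tfrac{4}{(2\pi)^{3/2}}\,\tfrac{\chi_{\sigma}(k)}{\sqrt{2\omega(k)}}\,\lvert g_{\sigma}(k)\rvert\,.
\end{equation*}
Since $\lvert\chi_{\sigma}-\chi_{\sigma_{0}}\rvert\leq 1$ and $0\leq\chi_{\sigma}\leq 1$, \eqref{eq:3} gives $\lvert g_{\sigma}(k)\rvert\leq(2\pi)^{-3/2}\bigl(2\omega(k)\bigr)^{-1/2}\bigl(\tfrac{k^{2}}{2M}+\omega(k)\bigr)^{-1}$, whence, uniformly in $\sigma$,
\begin{equation*}
  \lvert\Phi_{\sigma}(k)\rvert\ \leq\ F_{A}(k)+F_{B}(k):=C\bigl(\tfrac{k^{2}}{2M}+\omega(k)\bigr)^{-2}+C\,\omega(k)^{-1}\bigl(\tfrac{k^{2}}{2M}+\omega(k)\bigr)^{-1}\,.
\end{equation*}

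Next I would read off the $L^{p}$ properties of the majorant. Since $m_{0}>0$ one has $\omega(k)\geq m_{0}$ and $\tfrac{k^{2}}{2M}+\omega(k)\geq m_{0}$, while $\omega(k)\asymp\langle k\rangle$ and $\tfrac{k^{2}}{2M}+\omega(k)\asymp\langle k\rangle^{2}$, with $\langle k\rangle:=(1+\lvert k\rvert^{2})^{1/2}$; therefore $F_{A}\asymp\langle k\rangle^{-4}\in L^{p}(\mathds{R}^{3})$ for every $p\in[1,\infty]$ and $F_{B}\asymp\langle k\rangle^{-3}\in L^{p}(\mathds{R}^{3})$ for every $p\in(1,\infty]$. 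In particular $\Phi_{\sigma}\in L^{p}(\mathds{R}^{3})$ for $p\in(1,2]$, uniformly in $\sigma$, so Plancherel (for $p=2$) and the Hausdorff--Young inequality give $\lVert V_{\sigma}\rVert_{q}\lesssim\lVert\Phi_{\sigma}\rVert_{q'}\lesssim 1$ for all $q\in[2,\infty)$; interpolating between two such exponents yields $V_{\sigma}\in L^{s}(\mathds{R}^{3})$ for every $s\in[2,\infty)$, uniformly in $\sigma$. In particular $V_{\sigma}\in L^{3}(\mathds{R}^{3})\subset L^{3,\infty}(\mathds{R}^{3})$.

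From $V_{\sigma}\in L^{3}(\mathds{R}^{3})$ the two operator statements follow by standard one-body arguments. For \emph{(i)}: the Sobolev embedding $H^{1}(\mathds{R}^{3})\hookrightarrow L^{6}(\mathds{R}^{3})$ and Hölder's inequality give, for $u\in L^{2}(\mathds{R}^{3})$,
\begin{equation*}
  \lVert V_{\sigma}(1-\Delta)^{-1/2}u\rVert_{2}\ \leq\ \lVert V_{\sigma}\rVert_{3}\,\lVert(1-\Delta)^{-1/2}u\rVert_{6}\ \lesssim\ \lVert V_{\sigma}\rVert_{3}\,\lVert(1-\Delta)^{-1/2}u\rVert_{H^{1}}\ =\ \lVert V_{\sigma}\rVert_{3}\,\lVert u\rVert_{2}\,,
\end{equation*}
hence $V_{\sigma}(1-\Delta)^{-1/2}\in\mathcal{L}(L^{2}(\mathds{R}^{3}))$, with norm uniform in $\sigma$. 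For \emph{(ii)}: factor $V_{\sigma}=\lvert V_{\sigma}\rvert^{1/2}\operatorname{sgn}(V_{\sigma})\lvert V_{\sigma}\rvert^{1/2}$; since $\lvert V_{\sigma}\rvert^{1/2}\in L^{4}(\mathds{R}^{3})$ (because $V_{\sigma}\in L^{2}$) and $(1+\lvert\cdot\rvert^{2})^{-1/2}\in L^{4}(\mathds{R}^{3})$, the Kato--Seiler--Simon inequality shows that $\lvert V_{\sigma}\rvert^{1/2}(1-\Delta)^{-1/2}$ is a Schatten-class --- in particular compact --- operator on $L^{2}(\mathds{R}^{3})$; its adjoint $(1-\Delta)^{-1/2}\lvert V_{\sigma}\rvert^{1/2}$ is compact as well, and $\operatorname{sgn}(V_{\sigma})$ defines a bounded multiplication operator, so that
\begin{equation*}
  (1-\Delta)^{-1/2}V_{\sigma}(1-\Delta)^{-1/2}=\bigl[(1-\Delta)^{-1/2}\lvert V_{\sigma}\rvert^{1/2}\bigr]\,\operatorname{sgn}(V_{\sigma})\,\bigl[\lvert V_{\sigma}\rvert^{1/2}(1-\Delta)^{-1/2}\bigr]\ \in\ \mathcal{K}(L^{2}(\mathds{R}^{3}))\,.
\end{equation*}
(Equivalently, one may approximate $V_{\sigma}$ in $L^{3}$ by bounded compactly supported functions $W_{n}$ --- for which $(1-\Delta)^{-1/2}W_{n}(1-\Delta)^{-1/2}$ is Hilbert--Schmidt --- and pass to the operator-norm limit using the estimate of \emph{(i)}.) The $L^{s}\cap L^{3,\infty}$ statement was already established above.

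The main obstacle is the borderline decay $F_{B}\asymp\langle k\rangle^{-3}$, sitting exactly at the edge of $L^{1}(\mathds{R}^{3})$: it blocks any bound on $\lVert V_{\sigma}\rVert_{\infty}$ uniform in $\sigma$ --- that norm diverges logarithmically as $\sigma\to\infty$ --- which forces the exclusion of $s=\infty$, and it is also why $L^{3,\infty}$ is the natural space in which to state the conclusion: combined with the sharp Sobolev embedding $H^{1}(\mathds{R}^{3})\hookrightarrow L^{6,2}(\mathds{R}^{3})$ and Hölder in Lorentz spaces, $V_{\sigma}\in L^{3,\infty}(\mathds{R}^{3})$ already suffices for \emph{(i)}, exactly as for a Coulomb-type potential. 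Everything else --- the pointwise majorization, the $L^{p}$ sizes of $F_{A}$ and $F_{B}$, the Sobolev--Hölder estimate, and the compactness --- is routine.
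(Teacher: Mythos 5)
Your proof is correct, and it reaches the conclusion by a genuinely different route for the operator statements. The paper reduces \ref{item:1}--\ref{item:2} to showing $V_{\sigma}\in L^{3,\infty}(\mathds{R}^{3})$, invoking an external criterion (Corollary D.6 of the Ammari--Nier paper), and the bulk of its argument is then spent on the delicate piece $V_{\sigma}^{(2)}$: exploiting radial symmetry, it computes the Fourier transform by the one-dimensional formula $\mathcal{F}(f)(x)=\tfrac{4\pi}{\lvert x\rvert}\int_{0}^{\infty}f(r)r\sin(r\lvert x\rvert)\,dr$ and extracts the pointwise Coulomb-type bound $\lvert V_{\sigma}^{(2)}(x)\rvert\leq \tilde{c}/\lvert x\rvert$, which yields the weak-$L^{3}$ estimate. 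You instead observe that the uniform symbol majorant $\langle k\rangle^{-4}+\langle k\rangle^{-3}$ lies in $L^{p}$ for every $p\in(1,2]$, so Hausdorff--Young already gives the \emph{strong} bound $V_{\sigma}\in L^{s}$, $s\in[2,\infty)$, uniformly in $\sigma$ (hence $L^{3}\subset L^{3,\infty}$), and you then prove \ref{item:1} directly by H\"older and the Sobolev embedding $H^{1}\hookrightarrow L^{6}$, and \ref{item:2} by Kato--Seiler--Simon (or by $L^{3}$-norm approximation with bounded compactly supported potentials plus the bound from \ref{item:1}). What each approach buys: yours is self-contained, avoids both the citation and the radial computation, and delivers the $\sigma$-uniform operator bounds needed later (e.g.\ in Lemma~\ref{lemma:5}) directly from $\sup_{\sigma}\lVert V_{\sigma}\rVert_{3}<\infty$; the paper's argument yields the extra pointwise $1/\lvert x\rvert$ decay and a proof pattern that would survive in situations where the potential is only in weak-$L^{3}$ (genuinely Coulomb-like), and it keeps the lemma in exactly the form of the mean-field hypotheses of the cited framework, which is how it is reused in the proof of Proposition~\ref{prop:12}. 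Your closing heuristics (logarithmic divergence of $\lVert V_{\sigma}\rVert_{\infty}$, the Lorentz-space variant of \ref{item:1}) are correct but not needed for the statement.
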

\begin{proof}\renewcommand{\qedsymbol}{}
  It is sufficient to show \citep[][Corollary
  D.6]{2011arXiv1111.5918A
  } that $V_{\sigma}\in L^{3,\infty} (\mathds{R}^{3})$ (weak-$L^p$
  spaces). Write
  $V_{\sigma}=V_{\sigma}^{(1)}+V_{\sigma}^{(2)}$,
  \begin{align}
    \label{eq:13}
    V_{\sigma}^{(1)}(x)&=2\Re\int_{\mathds{R}^3}^{}\omega(k)\lvert g_{\sigma}(k) \rvert_{}^2e^{-ik\cdot x}  dk =2(2\pi)^{3/2}\Re\mathcal{F}\Bigl(\omega\lvert g_{\sigma}  \rvert_{}^2\Bigr)(x) \; ;\\
    \label{eq:28}
    V_{\sigma}^{(2)}(x)&=-2 \sqrt{2} \Im \int_{\mathds{R}^3}^{}\frac{\bar{g}_{\sigma}(k)}{(2\pi)^{3/2}}\frac{\chi_{\sigma}(k)}{\sqrt{\omega(k)}}e^{-ik\cdot x}  dk=-2 \sqrt{2} \Im \mathcal{F}\Bigl(\bar{g}_{\sigma}\frac{\chi_{\sigma}}{\sqrt{\omega}}\Bigr)(x)\; .
  \end{align}
  \begin{itemize}[label=\color{myblue}\textbullet]
  \item $\bigl[V_{\sigma }^{(1)}\bigr]$. For any
    $\sigma \leq \infty $,
    $\omega \lvert g_{\sigma } \rvert_{}^2\in L^{s'} (\mathds{R}^3 )$,
    $1\leq s'\leq 2$. Then $V_{\sigma }^{(1)}\in L^s (\mathds{R}^3 )$
    for any $s\in [2,+\infty ]$; furthermore
    $V_{\sigma }^{(1)}\in \mathcal{C }_0(\mathds{R}^3)$ (the space of
    continuous functions converging to zero at infinity). Hence
    $V_{\sigma }^{(1)}\in L^{3,\infty } (\mathds{R}^3 )$.
  \item $\bigl[V_{\sigma }^{(2)}\bigr]$. For any
    $\sigma \leq \infty $,
    $\bar{g}_{\sigma }\frac{\chi _{\sigma }}{\sqrt{\omega }}\in L^{s'}
    (\mathds{R}^3 )$,
    $1<s'\leq 2$. Therefore $V_{\sigma }^{(2)}\in L^s (\mathds{R}^3 )$
    for any $s\in [2,+\infty [$. It remains to show that
    $V_{\sigma }^{(2)}\in L^{3,\infty } (\mathds{R}^{3} )$. Define
    $f(k)\in L^2 (\mathds{R}^3 )$:
    \begin{equation}
      \label{eq:14}
      f(k):=\frac{\chi _{\sigma }(k)}{\omega (k)}\frac{\bigl(\chi _\sigma -\chi _{\sigma _0}\bigr)(k)}{\frac{k^2}{2M}+\omega (k)}\; .
    \end{equation}
    Then there is a constant $c>0$ such that
    $\lvert V_{\sigma }^{(2)}(x) \rvert_{}^{}\leq c\lvert \mathcal{F
    }(f)(x) \rvert_{}^{}$,
    where the Fourier transform is intended to be on
    $L^2 (\mathds{R}^3 )$. The function $f$ is radial, so we introduce
    the spherical coordinates
    $(r,\theta ,\phi )\equiv k\in \mathds{R}^3$, such that the
    $z$-axis coincides with the vector $x$.  We then obtain:
    \begin{equation*}
      \begin{split}
        \lim_{R\to +\infty }\int_{B(0,R)}^{}f(k)e^{-ik\cdot x}  dk=\lim_{R\to +\infty }\int_0^Rdr\int_0^{\pi }d\theta \int_0^{2\pi }d\phi\; r^2f(r) e^{-ir\lvert x  \rvert_{}^{}\cos \theta }\sin \theta\\=2\pi \lim_{R\to +\infty }\int_0^R  dr\int_{-1}^1  dy \;r^2f(r)e^{-ir\lvert x  \rvert_{}^{}y}=\frac{4\pi }{\lvert x  \rvert_{}^{}}\lim_{R\to +\infty }\int_0^Rf(r)r\sin (r\lvert x  \rvert_{}^{})  dr\; .
      \end{split}
    \end{equation*}
    Since for any $\sigma \leq +\infty $, $f(r)r\in L^1 (\mathds{R} )$
    we can take the limit $R\to +\infty $ and conclude:
    \begin{equation}
      \label{eq:15}
      \mathcal{F}(f)(x)=\frac{4\pi }{\lvert x  \rvert_{}^{}}\int_0^{+\infty }f(r)r\sin (r\lvert x  \rvert_{}^{})  dr\; .
    \end{equation}
    Therefore, for any $x\in \mathds{R}^3\setminus \{0\}$, there
    exists a
    $0<\tilde{c}\leq 4\pi c\lVert f(r)r \rVert_{L^1 (\mathds{R} )}^{}$
    such that:
    \begin{equation}
      \label{eq:16}
      \lvert V_{\sigma }^{(2)}(x)  \rvert_{}^{}\leq \frac{\tilde{c}}{\lvert x  \rvert_{}^{}}\; .
    \end{equation}
    Let $\lambda $ be the Lebesgue measure in $\mathds{R}^3$. Since
    $\bigl\{x:\lvert V_{\sigma }^{(2)}\rvert_{}^{}>t \bigr\}\subset
    \bigl\{x:\frac{\tilde{c}}{\lvert x \rvert_{}^{}}>t\bigr\}$,
    there is a positive $C$ such that:
    \begin{equation}
      \label{eq:17}
      \lambda \Bigl\{x:\lvert V_{\sigma }^{(2)}(x)  \rvert_{}^{}>t\Bigr\}\leq \lambda \Bigl\{x:\frac{\tilde{c}}{\lvert x  \rvert_{}^{}}>t\Bigr\}\leq \frac{C}{t^3}\; .
    \end{equation}
    Finally \eqref{eq:17} implies
    $V_{\sigma }^{(2)}\in L^{3,\infty } (\mathds{R}^3 )$.\hfill
    $\blacksquare$
  \end{itemize}
\end{proof}

\begin{lemma}
  \label{lemma:3}
  There exists $c>0$ such that for any
  $\varepsilon \in (0,\bar{\varepsilon}) $, $\sigma \leq +\infty $:
  \begin{align}
    \label{eq:18}
    \Bigl\lVert \bigl[(H_0+1)^{-1/2}D_{x_j} a(r_{\sigma }e^{-ik\cdot
      x_j})(H_0+1)^{-1/2}\bigr]^{(n)}
    \Bigr\rVert_{\mathcal{L}(\mathcal{H}_n)}^{}&\leq
    \frac{c}{\sqrt{n\varepsilon}} \lVert \omega ^{-1/2}r_{\sigma }  \rVert_2^{}\; ;\\
    \label{eq:19}
    \Bigl\lVert\bigl[(H_0+1)^{-1/2} a^{*}(r_{\sigma }e^{-ik\cdot x_j})
    D_{x_j} (H_0+1)^{-1/2}\bigr]^{(n)}
    \Bigr\rVert_{\mathcal{L}(\mathcal{H}_n)}^{}&\leq
    \frac{c}{\sqrt{n\varepsilon}} \lVert \omega ^{-1/2}r_{\sigma }  \rVert_2^{}\; .
  \end{align}
  Moreover, \eqref{eq:18} holds if we replace the left $H_0$ by
  $d\Gamma (-\tfrac{\Delta }{2M}+V)$ and the right $H_0$ by
  $d\Gamma (\omega )$ and similarly \eqref{eq:19} holds if we replace the
  left $H_0$ by $d\Gamma (\omega )$ and the right $H_0$ by
  $d\Gamma (-\tfrac{\Delta }{2M}+V)$.
\end{lemma}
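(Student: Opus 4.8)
The plan is to prove \eqref{eq:18} and \eqref{eq:19} by a duality argument, distributing the two resolvent factors, the nucleonic derivative $D_{x_j}$, and the annihilation operator onto the two test vectors. Fix $j$ and, for $\psi,\phi\in\mathcal{H}_n$, set $\psi':=(H_0^{(n)}+1)^{-1/2}\psi$ and $\phi':=(H_0^{(n)}+1)^{-1/2}\phi$; then $\psi'$ lies in the form domain of $H_0^{(n)}$, hence---by the quadratic form bound recalled below---in $D(D_{x_j})$, and $\phi'\in D\bigl(a(r_{\sigma}e^{-ik\cdot x_j})\bigr)$ since $\omega^{-1/2}r_{\sigma}\in L^2$. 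Moving the bounded resolvents and the self-adjoint operator $D_{x_j}$ to the left one obtains
\begin{equation*}
  \Bigl\langle \psi\, ,\, \bigl[(H_0+1)^{-1/2}D_{x_j}a(r_{\sigma}e^{-ik\cdot x_j})(H_0+1)^{-1/2}\bigr]^{(n)}\phi \Bigr\rangle = \bigl\langle D_{x_j}\psi'\, ,\, a(r_{\sigma}e^{-ik\cdot x_j})\phi' \bigr\rangle\; ,
\end{equation*}
so it suffices to estimate $\lVert D_{x_j}\psi'\rVert$ and $\lVert a(r_{\sigma}e^{-ik\cdot x_j})\phi'\rVert$ separately.

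For the annihilation operator I would use the standard relative bound: since $\lvert r_{\sigma}(k)e^{-ik\cdot x_j}\rvert=\lvert r_{\sigma}(k)\rvert$ pointwise, a Cauchy--Schwarz estimate on each mesonic $n$-particle sector, fibered over the nucleonic configuration, gives $\lVert a(r_{\sigma}e^{-ik\cdot x_j})u\rVert\leq\lVert\omega^{-1/2}r_{\sigma}\rVert_2\,\lVert d\Gamma(\omega)^{1/2}u\rVert$; and since $H_0^{(n)}=d\Gamma\bigl(-\tfrac{\Delta}{2M}+V\bigr)^{(n)}+d\Gamma(\omega)$ with the nucleonic part non-negative, $d\Gamma(\omega)\leq H_0^{(n)}$, so $\lVert a(r_{\sigma}e^{-ik\cdot x_j})\phi'\rVert\leq\lVert\omega^{-1/2}r_{\sigma}\rVert_2\,\lVert(H_0^{(n)}+1)^{1/2}\phi'\rVert=\lVert\omega^{-1/2}r_{\sigma}\rVert_2\,\lVert\phi\rVert$. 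For $D_{x_j}$ the essential observation---and the one place where the factor $n^{-1/2}$ is gained---is the bosonic permutation symmetry of $\mathcal{H}_n$: $(H_0^{(n)}+1)^{-1/2}$ commutes with exchanges of the nucleonic coordinates, hence $\psi'$ is symmetric, $\lVert D_{x_j}\psi'\rVert$ does not depend on $j$, and $\lVert D_{x_j}\psi'\rVert^2=\tfrac{1}{n}\sum_{i=1}^n\lVert D_{x_i}\psi'\rVert^2$. Since $V\geq 0$, $\sum_{i=1}^n D_{x_i}^2\leq\tfrac{2M}{\varepsilon}\,d\Gamma\bigl(-\tfrac{\Delta}{2M}+V\bigr)^{(n)}\leq\tfrac{2M}{\varepsilon}H_0^{(n)}$, whence $\lVert D_{x_j}\psi'\rVert^2\leq\tfrac{2M}{n\varepsilon}\langle\psi'\, ,\,H_0^{(n)}\psi'\rangle\leq\tfrac{2M}{n\varepsilon}\lVert\psi\rVert^2$. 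The two bounds combine to give \eqref{eq:18} with $c=\sqrt{2M}$ (the matrix element is $\leq\sqrt{2M/(n\varepsilon)}\,\lVert\omega^{-1/2}r_{\sigma}\rVert_2\,\lVert\psi\rVert\,\lVert\phi\rVert$, a bounded sesquilinear form, hence representing a bounded operator with this norm), and \eqref{eq:19} follows at once because its operator is the Hilbert space adjoint of the one in \eqref{eq:18} (use $(a^*(f))^*=a(f)$ and the self-adjointness of $D_{x_j}$ and of $(H_0+1)^{-1/2}$), hence has the same norm.

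For the ``Moreover'' assertion I would repeat the same computation with $\psi'=\bigl(d\Gamma(-\tfrac{\Delta}{2M}+V)^{(n)}+1\bigr)^{-1/2}\psi$ and $\phi'=(d\Gamma(\omega)+1)^{-1/2}\phi$: the annihilation bound only uses $d\Gamma(\omega)$ on the right, and the $D_{x_j}$ bound, through the very same symmetrization, only uses $d\Gamma(-\tfrac{\Delta}{2M}+V)^{(n)}$ on the left, so the constant is unchanged; the companion statement for \eqref{eq:19} is once more the adjoint of this one. The only genuinely delicate point in the whole argument is the symmetrization step: although neither $a(r_{\sigma}e^{-ik\cdot x_j})$ nor the single derivative $D_{x_j}$ respects the bosonic symmetry of $\mathcal{H}_n$, the regularized vector $\psi'$ does, and it is this that upgrades the naive estimate $\lVert D_{x_j}\psi'\rVert\leq\sqrt{2M/\varepsilon}\,\lVert(H_0^{(n)})^{1/2}\psi'\rVert$ to the sharper one with $\sqrt{2M/(n\varepsilon)}$, which is precisely what makes the sum over $j=1,\dots,n$ in~\eqref{eq:9} controllable in the ensuing KLMN argument.
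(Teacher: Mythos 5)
Your proposal is correct and follows essentially the same route as the paper: the factor $1/\sqrt{n\varepsilon}$ is extracted from $D_{x_j}$ via the bosonic symmetry of the regularized vector (the paper phrases this through the symmetrizer, giving $(n\varepsilon)\lVert D_{x_j}\Psi_n\rVert^2\leq\lVert(d\Gamma(-\Delta)+1)^{1/2}\Psi_n\rVert^2$), the annihilation operator is controlled by the standard bound $\lVert a(f)u\rVert\leq\lVert\omega^{-1/2}f\rVert_2\lVert d\Gamma(\omega)^{1/2}u\rVert$ together with $d\Gamma(\omega)\leq H_0$, and \eqref{eq:19} follows by adjunction. The only difference is presentational (you split the sesquilinear form onto the two test vectors, the paper bounds the vector norm by composing operator factors), which changes nothing of substance.
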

\begin{proof}
  Let $S_n\equiv S_n\otimes 1$ be the symmetrizer on $\mathcal{H }_n$
  (acting only on the $\{x_1,\dotsc,x_{n}\}$ variables) and
  $\Psi _n\in \mathcal{H }_n$ with $n>0$. Then:
  \begin{equation*}
    \begin{split}
      \langle \Psi _n  , d\Gamma (-\Delta )\Psi _n \rangle_{}=\langle \Psi _n  , (n\varepsilon) S_n(D_{x_1})^2\otimes 1^{n-1} \Psi _n \rangle_{}=(n\varepsilon )\langle \Psi _n  , (D_{x_j})^2\Psi _n \rangle_{}\; .
    \end{split}
  \end{equation*}
  Hence
  $(n\varepsilon) \lVert D_{x_j}\Psi _n \rVert_{}^2\leq \bigl\lVert
  \bigl(d\Gamma (-\Delta )+1\bigr)^{1/2}\Psi _n \bigr\rVert_{}^2$.
  It follows that
  \begin{equation}
    \label{eq:20}
    \Bigl\lVert \bigl[D_{x_j}\bigl(d\Gamma (-\Delta )+1\bigr)^{-1/2}\bigr]^{(n)}  \Bigr\rVert_{\mathcal{L}(\mathcal{H}_n)}^{}\leq \frac{1}{\sqrt{n\varepsilon }}\; ;\; \Bigl\lVert\bigl[\bigl(d\Gamma (-\Delta )+1\bigr)^{-1/2}D_{x_j}\bigr]^{(n)}  \Bigr\rVert_{\mathcal{L}(\mathcal{H}_n)}^{}\leq \frac{1}{\sqrt{n\varepsilon }}\; .
  \end{equation}
  Using \eqref{eq:20} we obtain for any $\Psi _n\in \mathcal{H }_n$,
  with $\lVert \Psi _n \rVert_{}^{}=1$:
  \begin{eqnarray*}
      \Bigl\lVert (H_0+1)^{-1/2}D_{x_j} a(r_{\sigma }e^{-ik\cdot x_j})(H_0+1)^{-1/2}\Psi _n  \Bigr\rVert_{}^{}&\leq& \frac{c}{\sqrt{n\varepsilon }}\Bigl\lVert a(r_{\sigma }e^{-ik\cdot x_j})\bigl(d\Gamma (\omega )+1\bigr)^{-1/2}\Psi _n  \Bigr\rVert_{}^{}\\&\leq& \frac{c}{\sqrt{n\varepsilon }}\lVert \omega ^{-1/2}r_{\sigma }  \rVert_2^{}\; ;
  \end{eqnarray*}
  where the last inequality follows from standard estimates on the
  Fock space \citep[see][Lemma
  2.1]{Ammari:2014aa
  }. The bound~\eqref{eq:19} is obtained by adjunction.
\end{proof}
\begin{lemma}
  \label{lemma:4}
  There exists $c>0$ such that for any
  $\varepsilon \in (0,\bar{\varepsilon} )$, $\sigma \leq +\infty $:
  \begin{align}
    \label{eq:21}
    \Bigl\lVert \bigl[(H_0+1)^{-1/2}a^{*}(r_{\sigma }e^{-ik\cdot x_j}) a(r_{\sigma }e^{-ik\cdot x_j}) (H_0+1)^{-1/2}\bigr]^{(n)} \Bigr\rVert_{\mathcal{L}(\mathcal{H}_n)}^{}&\leq c\lVert \omega ^{-1/2}r_{\sigma }  \rVert_2^2\; ;\\
    \label{eq:22}
    \Bigl\lVert \bigl[(H_0+1)^{-1/2}\bigl(a^{*}(r_{\sigma }e^{-ik\cdot x_j})\bigr)^2 (H_0+1)^{-1/2}\bigr]^{(n)} \Bigr\rVert_{\mathcal{L}(\mathcal{H}_n)}^{}&\leq c\lVert \omega ^{-1/4}r_{\sigma }  \rVert_2^2\; ;\\
    \label{eq:23}
    \Bigl\lVert \bigl[(H_0+1)^{-1/2}\bigl(a(r_{\sigma }e^{-ik\cdot x_j})\bigr)^2 (H_0+1)^{-1/2}\bigr]^{(n)} \Bigr\rVert_{\mathcal{L}(\mathcal{H}_n)}^{}&\leq c\lVert \omega ^{-1/4}r_{\sigma }  \rVert_2^2\; .
  \end{align}
  The same bounds hold if $H_0$ is replaced by $d\Gamma (\omega )$.
\end{lemma}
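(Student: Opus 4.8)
The plan is to reduce all four bounds to the standard $\omega$-weighted $N_{\tau}$-estimates on the meson Fock space. Since $H_{0}=d\Gamma(-\tfrac{\Delta}{2M}+V)+d\Gamma(\omega)\geq d\Gamma(\omega)\geq 0$ by \eqref{eq:1}, we have $\lVert(d\Gamma(\omega)+1)^{1/2}(H_{0}+1)^{-1/2}\rVert\leq 1$, so it suffices to prove everything with $H_{0}$ replaced by $d\Gamma(\omega)$; this also settles the last sentence of the statement. Next, since $\lvert e^{-ik\cdot x_{j}}\rvert=1$ we have $\lVert\omega^{-s}(r_{\sigma}e^{-ik\cdot x_{j}})\rVert_{2}=\lVert\omega^{-s}r_{\sigma}\rVert_{2}$ for every $s\geq 0$, uniformly in the nucleonic variable $x_{j}$; hence $x_{j}$ is a spectator and we argue as if the test function were $g:=r_{\sigma}$. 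Finally $\omega\geq m_{0}$ gives $\lVert\omega^{-1/2}g\rVert_{2}^{2}\leq m_{0}^{-1/2}\lVert\omega^{-1/4}g\rVert_{2}^{2}$, so all four right-hand sides are controlled by $\lVert\omega^{-1/4}r_{\sigma}\rVert_{2}^{2}$, which is finite uniformly in $\sigma\leq\infty$ because $r_{\sigma}(k)=-ikg_{\sigma}(k)=O(\lvert k\rvert^{-3/2})$ at infinity.

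The ``diagonal'' bound \eqref{eq:21} is immediate once one writes
\[
  (d\Gamma(\omega)+1)^{-1/2}a^{*}(g)a(g)(d\Gamma(\omega)+1)^{-1/2}=\bigl[a(g)(d\Gamma(\omega)+1)^{-1/2}\bigr]^{*}\bigl[a(g)(d\Gamma(\omega)+1)^{-1/2}\bigr],
\]
whose norm is $\lVert a(g)(d\Gamma(\omega)+1)^{-1/2}\rVert^{2}\leq\lVert\omega^{-1/2}g\rVert_{2}^{2}$ by the $\omega$-weighted creation/annihilation estimate already invoked in the proof of Lemma~\ref{lemma:3} (see \citep{Ammari:2014aa}). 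For the ``off-diagonal'' terms, the operator whose norm appears in \eqref{eq:22} is the adjoint of the one in \eqref{eq:23}, so it is enough to estimate $\lVert(d\Gamma(\omega)+1)^{-1/2}(a(g))^{2}(d\Gamma(\omega)+1)^{-1/2}\rVert$. Here the naive factorisation is \emph{not} allowed: commuting an annihilation operator past $a^{*}$ brings in, through $[a(g),a^{*}(g)]=\varepsilon\lVert g\rVert_{2}^{2}$, the quantity $\lVert r_{\sigma}\rVert_{2}$, which diverges as $\sigma\to\infty$ --- precisely the ultraviolet singularity that the dressing is designed to remove. Instead I would expand $(a(g))^{2}=\int\!\!\int\bar g(k)\bar g(k')\,a(k)a(k')\,dk\,dk'$ and use the pull-through identity $a(k)a(k')(d\Gamma(\omega)+1)^{-1/2}=(d\Gamma(\omega)+1+\omega(k)+\omega(k'))^{-1/2}a(k)a(k')$, so that the half-power of free energy that is ``missing'' on the right-hand side is restored from the energy $\omega(k)+\omega(k')$ carried off by the two annihilated mesons. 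A Cauchy--Schwarz inequality in $(k,k')$ together with $\tfrac{1}{\omega(k)+\omega(k')}\leq\tfrac12\,\omega(k)^{-1/2}\omega(k')^{-1/2}$ --- the mechanism that converts one lost resolvent power into the \emph{product} of two $\omega^{-1/4}$-weights rather than one $\omega^{-1/2}$-weight --- then yields the bound $c\,\lVert\omega^{-1/4}g\rVert_{2}^{2}$, and \eqref{eq:22} follows by adjunction.

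The delicate point, which I expect to be the main obstacle, is carrying out this last Cauchy--Schwarz estimate fibre-by-fibre in the meson number sectors with the resolvent weights correctly apportioned: a crude estimate that dumps the whole factor $(\omega(k)+\omega(k'))^{-1}$ onto the test function leaves an uncontrolled $\lVert N_{2}\psi\rVert$, whereas dumping it onto the state produces the divergent $\lVert\omega^{1/4}r_{\sigma}\rVert_{2}$; one has to split the energy denominator $\varepsilon(\omega(k)+\omega(k')+E_{y})+1$ while keeping the energy $E_{y}$ stored in the remaining $m-2$ mesons explicit, and check that the ensuing sums over number sectors converge with a constant uniform in both $\sigma\leq\infty$ and $\varepsilon\in(0,\bar\varepsilon)$. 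The bounds with $d\Gamma(\omega)$ in place of $H_{0}$ are exactly what this argument produces.
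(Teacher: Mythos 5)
Your treatment of \eqref{eq:21} (writing the operator as $\bigl[a(g)(d\Gamma(\omega)+1)^{-1/2}\bigr]^{*}\bigl[a(g)(d\Gamma(\omega)+1)^{-1/2}\bigr]$ and using the standard $\lVert a(g)(d\Gamma(\omega)+1)^{-1/2}\rVert\le\lVert\omega^{-1/2}g\rVert_2$ bound), the reduction from $H_0$ to $d\Gamma(\omega)$ via $H_0\ge d\Gamma(\omega)$, and the observation that \eqref{eq:22} and \eqref{eq:23} are adjoints of one another all coincide with the paper's argument. The only real divergence is the remaining off-diagonal estimate: the paper does not prove it but imports it from \citep[][Lemma 3.3 (iv)]{MR1809881}, whereas you attempt a direct proof.

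That attempt has a genuine gap, which you partly flag yourself. First, the pull-through identity as you state it is wrong under the paper's $\varepsilon$-dependent CCR: since $[a(k),d\Gamma(\omega)]=\varepsilon\,\omega(k)\,a(k)$, the shifted resolvent is $(d\Gamma(\omega)+1+\varepsilon\omega(k)+\varepsilon\omega(k'))^{-1/2}$, not $(d\Gamma(\omega)+1+\omega(k)+\omega(k'))^{-1/2}$. This is not cosmetic: extracting $\omega(k)^{-1/4}\omega(k')^{-1/4}$ from the correct resolvent by your AM--GM step costs a factor $\varepsilon^{-1/2}$, and showing that this is recovered from the $\sqrt{\varepsilon}$ factors carried by $a(k)a(k')$ together with $N_2\le m_0^{-1}d\Gamma(\omega)$ is exactly the uniform-in-$\varepsilon$ (and uniform-in-$\sigma$) bookkeeping the lemma asserts --- and it is the step you explicitly defer to future checking, so \eqref{eq:22}--\eqref{eq:23} are not actually established in your write-up. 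The mechanism you identify is the right one, and the bound is true; to close it you can either carry out the fibre-wise Cauchy--Schwarz in the meson number sectors (pair $\omega(k)^{1/4}\omega(k')^{1/4}$ against the $(m+2)$-particle wavefunction, use $\bigl(\sum_i\omega(k_i)^{1/2}\bigr)^2\le(m+2)\sum_i\omega(k_i)$, and absorb the leftover $\sqrt{\varepsilon(m+2)}$ into $(d\Gamma(\omega)+1)^{1/2}$ acting on the other vector, using $\omega\ge m_0$ and $\varepsilon\le\bar{\varepsilon}$), or, more economically, rescale $a^{\#}=\sqrt{\varepsilon}\,a^{\#}_1$, $d\Gamma(\omega)=\varepsilon\,d\Gamma_1(\omega)$ and invoke the $\varepsilon=1$ estimate of \citep[][Lemma 3.3 (iv)]{MR1809881}, which is in effect what the paper does (the same rescaling remark is spelled out in the proof of Lemma~\ref{appA2}).
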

\begin{proof}
  First of all observe that, since $m_0>0$, there exists $c>0$ such
  that, uniformly in $\varepsilon \in (0,\bar{\varepsilon} )$:
  \begin{align*}
    \Bigl\lVert (H_0+1)^{-1/2}\bigl(d\Gamma (\omega )+1\bigr)^{1/2}  \Bigr\rVert_{\mathcal{L}(\mathcal{H})}^{}\leq c\; ;\;
    \Bigl\lVert (H_0+1)^{-1/2}(N_2+1)^{1/2}  \Bigr\rVert_{\mathcal{L}(\mathcal{H})}^{}\leq c\; .
  \end{align*}
  Equation~\eqref{eq:21} is easy to prove:
  \begin{equation*}
    \begin{split}
      \Bigl\lVert \bigl[(H_0+1)^{-1/2}a^{*}(r_{\sigma }e^{-ik\cdot x_j}) a(r_{\sigma }e^{-ik\cdot x_j}) (H_0+1)^{-1/2}\bigr]^{(n)} \Bigr\rVert_{\mathcal{L}(\mathcal{H}_n)}^{}\leq c\Bigl\lVert \bigl[\bigl(d\Gamma (\omega )+1\bigr)^{-1/2} \\a^{*}(r_{\sigma }e^{-ik\cdot x_j})\bigr]^{(n)} \Bigr\rVert_{\mathcal{L}(\mathcal{H}_n)}^{}\,\cdot \,\Bigl\lVert \bigl[a(r_{\sigma }e^{-ik\cdot x_j}) \bigl(d\Gamma (\omega )+1\bigr)^{-1/2} \bigr]^{(n)} \Bigr\rVert_{\mathcal{L}(\mathcal{H}_n)}^{}\leq c\lVert \omega ^{-1/2}r_{\sigma }  \rVert_2^2\; .
    \end{split}
  \end{equation*}
  For the proof of \eqref{eq:22} the reader may refer to
  \citep[][Lemma 3.3
  (iv)]{MR1809881
  }. Finally \eqref{eq:23} follows from \eqref{eq:22} by
  adjunction.
\end{proof}
\begin{lemma}
  \label{lemma:5}
  There exists $c(\sigma _0)>0$ such that for any
  $\varepsilon \in (0,\bar{\varepsilon} )$ and $\lambda \geq 1$:
  \begin{align}
    \label{eq:24}
    \Bigl\lVert \bigl[ \bigl(H_0+\lambda \bigr)^{-1/2}H_I(\sigma _0)\bigl(H_0+\lambda \bigr)^{-1/2}  \bigr]^{(n)}  \Bigr\rVert_{\mathcal{L}(\mathcal{H}_n)}^{}&\leq c(\sigma _0)\lambda ^{-1/2}(n\varepsilon )\; ;\\
    \label{eq:25}
    \Bigl\lVert \bigl[ \bigl(H_0+\lambda \bigr)^{-1/2}\varepsilon
    ^2\sum_{i<j}^{}V_{\sigma }(x_i-x_j) \bigl(H_0+\lambda
    \bigr)^{-1/2}  \bigr]^{(n)}
    \Bigr\rVert_{\mathcal{L}(\mathcal{H}_n)}^{}&\leq c(\sigma _0)
    \lambda ^{-1/2}\sqrt{
n\varepsilon (1+n\varepsilon) }\; .
  \end{align}
\end{lemma}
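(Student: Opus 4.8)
The plan is to prove the two bounds separately: \eqref{eq:24} is a direct computation with Fock‑space relative bounds, while \eqref{eq:25} requires the frequency localisation of $V_\sigma$. For \eqref{eq:24}, first I would note that $H_I(\sigma_0)$ commutes with $N_1$ (cf.\ \eqref{eq:6}), so the restriction of the sandwiched operator to $\mathcal H_n$ is $(H_0^{(n)}+\lambda)^{-1/2}H_I(\sigma_0)^{(n)}(H_0^{(n)}+\lambda)^{-1/2}$, and by \eqref{eq:2}
\begin{equation*}
  H_I(\sigma_0)^{(n)}=\frac{\varepsilon}{(2\pi)^{3/2}}\sum_{j=1}^n\Bigl(a^{*}(h_{x_j})+a(h_{x_j})\Bigr),\qquad h_{x_j}(k)=\frac{\chi_{\sigma_0}(k)}{\sqrt{2\omega(k)}}\,e^{-ik\cdot x_j},
\end{equation*}
where $a^{\#}(h_{x_j})$ acts on the meson Fock space and $\lVert\omega^{-1/2}h_{x_j}\rVert_2=\tfrac1{\sqrt2}\lVert\omega^{-1}\chi_{\sigma_0}\rVert_2=:\tilde c(\sigma_0)$ is finite (compact support, $\omega\ge m_0>0$) and independent of $x_j$, $\sigma$, $\varepsilon$. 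Since $V\ge0$ one has $H_0^{(n)}\ge d\Gamma(\omega)$, so it suffices to bound the operator with $d\Gamma(\omega)$ in place of $H_0$; for fixed $x_j$, the standard estimate $\lVert a(h)\Phi\rVert\le\lVert\omega^{-1/2}h\rVert_2\lVert(d\Gamma(\omega))^{1/2}\Phi\rVert$ gives $\lVert(d\Gamma(\omega)+\lambda)^{-1/2}a^{\#}(h_{x_j})(d\Gamma(\omega)+\lambda)^{-1/2}\rVert\le\lambda^{-1/2}\tilde c(\sigma_0)$ (for $a^{*}$ by adjunction), and since $d\Gamma(\omega)$ does not act on the nucleonic variables these fibre bounds pass to the direct integral over $(x_1,\dots,x_n)$. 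Summing the $2n$ terms and multiplying by $\varepsilon/(2\pi)^{3/2}$ gives \eqref{eq:24}.

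For \eqref{eq:25}, the structural observation is that $\chi_\sigma-\chi_{\sigma_0}$ vanishes on $\{\lvert k\rvert\le\sigma_0\}$, so by \eqref{eq:3}, \eqref{eq:13}, \eqref{eq:28} both $g_\sigma$ and the Fourier symbols of $V_\sigma^{(1)}$, $V_\sigma^{(2)}$ live in $\{\lvert k\rvert\ge\sigma_0\}$, which is why the constant depends on $\sigma_0$. I would keep the term in second‑quantised form and, for $\Psi\in\mathcal H_n$, write $\langle\Psi,[\varepsilon^2\sum_{i<j}V_\sigma(x_i-x_j)]\Psi\rangle=\tfrac12\langle A\Psi,V_\sigma(x-y)A\Psi\rangle$, where $A\colon\mathcal H_n\to L^2(\mathds{R}^6;\mathcal H)$, $(A\Psi)(x,y)=\psi(x)\psi(y)\Psi$, satisfies $\lVert A\Psi\rVert^2=\langle\Psi,N_1(N_1-\varepsilon)\Psi\rangle$ and $\lVert(-\Delta_x)^{1/2}A\Psi\rVert^2=\langle\Psi,d\Gamma(-\Delta)(N_1-\varepsilon)\Psi\rangle$. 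Splitting $V_\sigma=V_\sigma^{(1)}+V_\sigma^{(2)}$ as in Lemma~\ref{lemma:2}, I would use that $V_\sigma^{(1)}$ is positive definite with $\lVert\omega^{1/2}g_\sigma\rVert_2^2\le C(\sigma_0)$ uniformly in $\sigma$, that $\lvert V_\sigma^{(2)}(x)\rvert\le\tilde c(\sigma_0)/\lvert x\rvert$ uniformly in $\sigma$ (cf.\ \eqref{eq:16}), and the relative bound $\lVert V_\sigma(1-\Delta)^{-1/2}\rVert_{\mathcal L(L^2(\mathds{R}^3))}<\infty$ of Lemma~\ref{lemma:2}, together with the frequency cut‑off, to control $A^{*}V_\sigma A$ by the kinetic energy and $N_1$; combined with $d\Gamma(-\Delta)\le2MH_0$ (from $V\ge0$) and $N_1\big|_{\mathcal H_n}=n\varepsilon$, the target is the form inequality $\pm\,\varepsilon^2\sum_{i<j}V_\sigma(x_i-x_j)\le c(\sigma_0)\lambda^{-1/2}\sqrt{n\varepsilon(1+n\varepsilon)}\,(H_0^{(n)}+\lambda)$ on $\mathcal H_n$, which is equivalent to \eqref{eq:25}.

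The main obstacle will be the sharp power $\sqrt{n\varepsilon(1+n\varepsilon)}$. The crude estimate — bounding $V_\sigma$ relative to $(1-\Delta)$ pairwise and summing over the $\sim(n\varepsilon)^2/\varepsilon^2$ pairs, equivalently estimating $\lVert(1-\Delta_x)^{1/2}A(H_0^{(n)}+\lambda)^{-1/2}\rVert$ and $\lVert A(H_0^{(n)}+\lambda)^{-1/2}\rVert$ separately — only gives order $\lambda^{-1/2}\,n\varepsilon\,\sqrt{n\varepsilon(1+n\varepsilon)}$, i.e.\ one power of $n\varepsilon$ too much, since on a strongly concentrated state the interaction is of size $\sim(n\varepsilon)^2$ while only one unit of kinetic energy per pair is spent. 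Recovering the correct power requires $\operatorname{supp}\widehat V_\sigma\subset\{\lvert k\rvert\ge\sigma_0\}$ to be used genuinely, forcing the collective density operators $\int\psi^{*}(x)e^{ik\cdot x}\psi(x)\,dx$ that enter $A^{*}V_\sigma A$ to be dominated by the kinetic energy rather than merely by $N_1$ on delocalised configurations — or, equivalently, the optimal Fock‑space bounds for Wick operators of bidegree $(2,2)$ with the right powers of $N_1$, in the spirit of \citep{ammari:nier:2008}. The $\sigma_0\to0$ divergence of the constants, harmless here since $\sigma_0$ is fixed, is intrinsic to this localisation.
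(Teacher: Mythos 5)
Your treatment of \eqref{eq:24} is correct and is essentially the ``standard Fock-space argument'' the paper disposes of by citation: writing $H_I(\sigma_0)^{(n)}$ as $\tfrac{\varepsilon}{(2\pi)^{3/2}}\sum_{j}\bigl(a^{*}(h_{x_j})+a(h_{x_j})\bigr)$, using the fibrewise bound $\lVert a(h)\Phi\rVert\leq\lVert\omega^{-1/2}h\rVert_2\lVert d\Gamma(\omega)^{1/2}\Phi\rVert$, extracting $\lambda^{-1/2}$ from one resolvent, and passing from $d\Gamma(\omega)$ to $H_0$ by monotonicity. For \eqref{eq:25}, however, your proposal stops short of a proof: you obtain (correctly) a bound of order $c(\sigma_0)\lambda^{-1/2}\,n\varepsilon\sqrt{n\varepsilon(1+n\varepsilon)}$ and then defer the recovery of the displayed factor $\sqrt{n\varepsilon(1+n\varepsilon)}$ to an unspecified refinement exploiting the support of $\widehat{V}_{\sigma}$. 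That deferred step is the gap in your write-up.

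It is, moreover, a step that cannot be supplied, so you were chasing the wrong target. The paper's own proof does not reach the printed power: it uses \ref{item:1} of Lemma~\ref{lemma:2} to get the pairwise form bound $\pm V_{\sigma}(x_i-x_j)\leq c(\sigma_0)\lambda^{-1/2}(-\Delta_{x_i}+\lambda)$ for $\lambda\geq1$, sums over pairs using the symmetry of $\Psi_n$, and ends with $\bigl\lvert\langle\Psi_n,\varepsilon^2\sum_{i<j}V_{\sigma}\Psi_n\rangle\bigr\rvert\leq c(\sigma_0)\lambda^{-1/2}\langle\Psi_n,(N_1+N_1^2)(d\Gamma(D_x^2)+\lambda)\Psi_n\rangle$, i.e.\ a sandwiched norm of order $c(\sigma_0)\lambda^{-1/2}\,n\varepsilon(1+n\varepsilon)$ --- the square of the printed factor, and no better than your ``crude'' estimate, since $n\varepsilon\sqrt{n\varepsilon(1+n\varepsilon)}\leq n\varepsilon(1+n\varepsilon)$. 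In fact the printed factor is unattainable with a constant depending only on $\sigma_0$: take $\Psi_n=\phi^{\otimes n}\otimes\Omega$ with $\phi\in\mathcal{C}^{\infty}_0$ fixed; then the interaction expectation equals $\tfrac{\varepsilon^2n(n-1)}{2}\iint V_{\sigma}(x-y)\lvert\phi(x)\rvert^2\lvert\phi(y)\rvert^2\,dxdy$, which can be made nonzero (from \eqref{eq:10} one checks that the Fourier transform of $V_{\sigma}$ is $\leq0$ and $<0$ where $\chi_{\sigma}>\chi_{\sigma_0}$, so the double integral is strictly negative for suitable $\phi$), while $\langle\Psi_n,(H_0+\lambda)\Psi_n\rangle=n\varepsilon\langle\phi,(-\tfrac{\Delta}{2M}+V)\phi\rangle+\lambda$; choosing $\lambda\asymp n\varepsilon\to\infty$ gives a lower bound of order $n\varepsilon$ for the left-hand side of \eqref{eq:25} against a claimed upper bound of order $\sqrt{n\varepsilon}$. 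So no frequency-localisation or collective-density argument rescues the literal statement; the exponent should be read as $n\varepsilon(1+n\varepsilon)$. The correct repair is therefore to finish with the weaker bound --- either your $A$-operator estimate or the paper's pairwise bound plus symmetry --- and to note that this is all that is used downstream: in \eqref{eq:26}--\eqref{eq:31} one only needs the coefficient of $H_0^{(n)}$ to be made $<1$ by first imposing \eqref{eq:32} and then taking $\lambda$ large (and in Section~\ref{sec:class-limit-renorm-1} one has $n\varepsilon\leq\mathfrak{C}$ anyway), which is insensitive to whether the factor is $n\varepsilon+1$ or $(n\varepsilon+1)^2$.
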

\begin{proof}
  The inequality \eqref{eq:24} can be proved by a standard argument on
  the Fock space \citep[see e.g.][Proposition
  IV.1]{falconi:012303
  }.

  To prove \eqref{eq:25} we proceed as follows. First of all, by means
  of \ref{item:1}, Lemma~\ref{lemma:2} we can write:
  \begin{equation*}
    \begin{split}
      \Bigl\lVert \bigl( -\Delta _{x_i}+\lambda   \bigr)^{-1/2}V_{\sigma }(x_i-x_j)\bigl( -\Delta _{x_i}+\lambda   \bigr)^{-1/2}  \Bigr\rVert_{\mathcal{L}(\mathcal{H}_n)}^{}\leq \lambda ^{-1/2}\Bigl\lVert V_{\sigma }(x_i)\bigl( -\Delta _{x_i}+\lambda   \bigr)^{-1/2}  \Bigr\rVert_{\mathcal{L}(\mathcal{H}_n)}^{}\\\leq c(\sigma _0)\lambda ^{-1/2}\; .
    \end{split}
  \end{equation*}
  Therefore
  $V_{\sigma }(x_i-x_j)\leq c(\sigma _0)\lambda ^{-1/2}\bigl(-\Delta
  _{x_i}+\lambda \bigr)$.
  Let $\Psi _n\in \mathcal{H }_n$; using its symmetry, and some
  algebraic manipulations we can write:
  \begin{equation*}
    \begin{split}
      \bigl\langle \Psi _n  , \varepsilon ^2\sum_{i<j}^{}V_{\sigma }(x_i-x_j)\Psi _n \bigr\rangle_{}\leq c(\sigma _0)(n\varepsilon )^2\bigl\langle \Psi _n  , \bigl(\lambda ^{-1/2}(D_{x_1})^2+\lambda ^{1/2}\bigr)\Psi _n \bigr\rangle_{}\\=c(\sigma _0)\bigl\langle \Psi _n  , N_1\bigl(\lambda ^{-1/2}d\Gamma (D_x^2)+\lambda ^{1/2}N_1\bigr)\Psi _n \bigr\rangle_{}\\\leq c(\sigma _0) \lambda ^{-1/2}\Bigl[\Bigl\lVert N_1^{1/2}\bigl(d\Gamma (D_x^2)+\lambda \bigr)^{1/2} \Psi _n \Bigr\rVert_{}^2+\Bigl\lVert N_1\bigl(d\Gamma (D_x^2)+\lambda \bigr)^{1/2}\Psi _n  \Bigr\rVert_{}^2\Bigr]\\\leq c(\sigma _0)\lambda^{-1/2} \bigl\langle \Psi _n  , \bigl(N_1+N_1^2\bigr)\bigl(d\Gamma (D_x^2)+\lambda \bigr)\Psi _n \bigr\rangle_{}\; ;
    \end{split}
  \end{equation*}
  where the constant $c(\sigma _0)$ is redefined in each
  inequality. The result follows since $N_1$ commutes with
  $d\Gamma (D_x^2)$.
\end{proof}

Combining Lemmas~\ref{lemma:3},~\ref{lemma:4} and~\ref{lemma:5}
together, we can prove easily the following proposition.
\begin{proposition}
  \label{prop:2}
  There exist $c>0$ and $c(\sigma _0)>0$ such that for any
  $\varepsilon \in (0,\bar{\varepsilon} )$, $\lambda \geq 1$,
  $\sigma _0<\sigma \leq +\infty $ and for any $\Psi \in D(N_1)$:
  \begin{equation}
    \label{eq:26}
    \begin{split}
      \Bigl\lVert \bigl(H_0+\lambda \bigr)^{-1/2}\hat{H}_I(\sigma ) \bigl(H_0+\lambda \bigr)^{-1/2}\Psi   \Bigr\rVert_{}^{}\leq \Bigl[c\bigl(\lVert \omega ^{-1/2}r_{\sigma }  \rVert_2^2+\lVert \omega ^{-1/4}r_{\sigma }  \rVert_2^2+\lVert \omega ^{-1/2}r_{\sigma }  \rVert_2^{}\bigr)\\+c(\sigma _0)\lambda^{-1/2} \Bigr]\,\cdot \,\Bigl\lVert (N_1+1)\Psi   \Bigr\rVert_{}^{}\; .
    \end{split}
  \end{equation}
\end{proposition}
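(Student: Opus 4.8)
The plan is to read the estimate off sector by sector from Lemmas~\ref{lemma:3},~\ref{lemma:4} and~\ref{lemma:5}, using the decomposition $\mathcal{H}=\bigoplus_{n}\mathcal{H}_n$. Since $\hat{H}_I(\sigma)$ preserves the number of nucleons and $H_0$ is block diagonal with respect to this decomposition (hence so is $(H_0+\lambda)^{-1/2}$), it suffices to bound, uniformly in $n$, the operator norm on $\mathcal{H}_n$ of $\bigl(H_0^{(n)}+\lambda\bigr)^{-1/2}\hat{H}_I(\sigma)^{(n)}\bigl(H_0^{(n)}+\lambda\bigr)^{-1/2}$; estimate~\eqref{eq:26} then follows by Pythagoras, since $N_1+1$ acts on $\mathcal{H}_n$ as multiplication by $n\varepsilon+1$. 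As a harmless preliminary I would note that for $\lambda\geq 1$ the spectral theorem gives $\lVert(H_0+\lambda)^{-1/2}(H_0+1)^{1/2}\rVert\leq 1$, whence $\lVert(H_0+\lambda)^{-1/2}X(H_0+\lambda)^{-1/2}\rVert\leq\lVert(H_0+1)^{-1/2}X(H_0+1)^{-1/2}\rVert$ for any bounded $X$; this lets me use the $H_0+1$ bounds of Lemmas~\ref{lemma:3} and~\ref{lemma:4} verbatim with $\lambda$ in place of $1$.

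Next I would insert the explicit form~\eqref{eq:9} of $\hat{H}_I(\sigma)^{(n)}$ and estimate its three groups of terms one at a time: the cutoff interaction $H_I(\sigma_0)^{(n)}$, the two-body term $\varepsilon^2\sum_{i<j}V_\sigma(x_i-x_j)$, and the dressing sum $\tfrac{\varepsilon}{2M}\sum_{j=1}^n(\cdots)$, whose summand collects the squared terms $a^{*}(r_\sigma e^{-ik\cdot x_j})^2,\,a(r_\sigma e^{-ik\cdot x_j})^2$, the term $2a^{*}(r_\sigma e^{-ik\cdot x_j})a(r_\sigma e^{-ik\cdot x_j})$, and the two terms linear in $D_{x_j}$. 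For the dressing sum, the triangle inequality over $j=1,\dots,n$ brings out a factor $n$; combined with the gain $1/\sqrt{n\varepsilon}$ of~\eqref{eq:18}--\eqref{eq:19} and the prefactor $\varepsilon$ it produces $\sqrt{n\varepsilon}\,\lVert\omega^{-1/2}r_\sigma\rVert_2$ for the two terms linear in $D_{x_j}$, whereas combined with the prefactor $\varepsilon$ and~\eqref{eq:21}--\eqref{eq:23} it produces $n\varepsilon\,\lVert\omega^{-1/2}r_\sigma\rVert_2^2$ and $n\varepsilon\,\lVert\omega^{-1/4}r_\sigma\rVert_2^2$ for the quadratic terms. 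For $H_I(\sigma_0)^{(n)}$ I would simply quote~\eqref{eq:24}, which already gives $c(\sigma_0)\lambda^{-1/2}(n\varepsilon)$, and for the two-body term~\eqref{eq:25}, which gives $c(\sigma_0)\lambda^{-1/2}\sqrt{n\varepsilon(1+n\varepsilon)}$; all of these bounds hold up to $\sigma=+\infty$.

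Finally I would collect the four contributions and absorb the $n$-dependent weights into $N_1+1$: since $\sqrt t\leq 1+t$, $t\leq 1+t$ and $\sqrt{t(1+t)}\leq 1+t$ for every $t=n\varepsilon\geq 0$, each of $\sqrt{n\varepsilon}$, $n\varepsilon$ and $\sqrt{n\varepsilon(1+n\varepsilon)}$ is bounded by $n\varepsilon+1$, so the operator norm on $\mathcal{H}_n$ is at most $\bigl[c\bigl(\lVert\omega^{-1/2}r_\sigma\rVert_2^2+\lVert\omega^{-1/4}r_\sigma\rVert_2^2+\lVert\omega^{-1/2}r_\sigma\rVert_2\bigr)+c(\sigma_0)\lambda^{-1/2}\bigr](n\varepsilon+1)$, after renaming $c$ and $c(\sigma_0)$ so that they also swallow the numerical factors and $M^{-1}$. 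Squaring, summing over $n$, and using $\sum_n(n\varepsilon+1)^2\lVert\Psi_n\rVert^2=\lVert(N_1+1)\Psi\rVert^2$ then yields~\eqref{eq:26}. I do not expect a genuine obstacle here, since all the analytic work already sits inside the three lemmas; the points deserving care are the $\varepsilon$- and $n$-bookkeeping, so that every contribution lands on the first power of $N_1+1$ and not on a higher one --- this is exactly why the refined $1/\sqrt{n\varepsilon}$ gain of Lemma~\ref{lemma:3} (rather than the crude $1/\sqrt{\varepsilon}$ bound for a single $D_{x_j}$) is needed, as otherwise the derivative terms would produce the uncontrolled weight $n\sqrt{\varepsilon}$ --- and the elementary check that the $(H_0+\lambda)^{-1/2}$-sandwich may be traded for the $(H_0+1)^{-1/2}$-sandwich when $\lambda\geq 1$.
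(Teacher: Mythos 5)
Your proposal is correct and is exactly the argument the paper intends: the paper simply states that the proposition follows "easily" by combining Lemmas~\ref{lemma:3},~\ref{lemma:4} and~\ref{lemma:5}, and your sector-by-sector bookkeeping on $\mathcal{H}_n$ (triangle inequality over $j$, the $1/\sqrt{n\varepsilon}$ gain against the prefactor $\varepsilon$, the elementary bounds $\sqrt{n\varepsilon},\,n\varepsilon,\,\sqrt{n\varepsilon(1+n\varepsilon)}\leq n\varepsilon+1$, the $\lambda\geq 1$ comparison of resolvent sandwiches, and Pythagoras over the $N_1$-sectors) is precisely the omitted combination.
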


Consider now $\hat{H}_I(\sigma )^{(n)}$. It follows easily from
Equation~\eqref{eq:26} above that for any
$\sigma _0<\sigma \leq +\infty $, and
$\Psi _n\in D(H_0^{1/2})\cap \mathcal{H }_n$:
\begin{equation}
  \label{eq:27}
  \begin{split}
    \Bigl\lvert \bigl\langle \Psi _n  , \hat{H}_I(\sigma )^{(n)}\Psi _n \bigr\rangle_{}  \Bigr\rvert_{}^{}\leq \Bigl[c(n\varepsilon +1)\bigl(\lVert \omega ^{-1/2}r_{\sigma }  \rVert_2^2+\lVert \omega ^{-1/4}r_{\sigma }  \rVert_2^2+\lVert \omega ^{-1/2}r_{\sigma }  \rVert_2^{}\bigr)\\+c(\sigma _0)(n\varepsilon +1)\lambda^{-1/2} \Bigr]\bigl\langle \Psi _n  , H_0^{(n)}\Psi _n \bigr\rangle_{}\\+\lambda \Bigl[c(n\varepsilon +1)\bigl(\lVert \omega ^{-1/2}r_{\sigma }  \rVert_2^2+\lVert \omega ^{-1/4}r_{\sigma }  \rVert_2^2+\lVert \omega ^{-1/2}r_{\sigma }  \rVert_2^{}\bigr)\\+c(\sigma _0)(n\varepsilon +1)\lambda^{-1/2} \Bigr]\bigl\langle \Psi _n  ,  \Psi _n\rangle_{}\; .
  \end{split}
\end{equation}
Consider now the term
$\bigl(\lVert \omega ^{-1/2}r_{\sigma } \rVert_2^2+\lVert \omega
^{-1/4}r_{\sigma } \rVert_2^2+\lVert \omega ^{-1/2}r_{\sigma }
\rVert_2^{}\bigr)$;
by definition of $r_{\sigma }$, there exists $c>0$ such that,
uniformly in $\sigma \leq +\infty $:
\begin{equation}
  \label{eq:29}
  \lVert \omega ^{-1/2}r_{\sigma }\rVert_2^2+\lVert \omega ^{-1/4}r_{\sigma } \rVert_2^2+\lVert \omega^{-1/2}r_{\sigma } \rVert_2^{}\leq c\bigl( \sigma _0^{-2}+\sigma _0^{-1}\bigr)\; .
\end{equation}
Hence for any $\sigma _0\geq 1$ there exist $K >0$ ($K=2c$), $c(\sigma _0)>0$
and $C(n,\varepsilon ,\lambda ,\sigma _0)>0$ such that
\eqref{eq:27} becomes:
\begin{equation}
  \label{eq:30}
  \begin{split}
    \Bigl\lvert \bigl\langle \Psi _n  , \hat{H}_I(\sigma )^{(n)}\Psi _n \bigr\rangle_{}  \Bigr\rvert_{}^{}\leq \Bigl[\tfrac{K(n\varepsilon +1)}{\sigma _0}+c(\sigma _0)(n\varepsilon +1)\lambda^{-1/2} \Bigr]\bigl\langle \Psi _n  , H_0^{(n)}\Psi _n \bigr\rangle_{}+C(n,\varepsilon ,\lambda ,\sigma _0)\bigl\langle \Psi _n  ,  \Psi _n\rangle_{}\; .
  \end{split}
\end{equation}
Therefore choosing
\begin{equation}
  \label{eq:32}
  \sigma _0>2K(n\varepsilon+1)
\end{equation}
and then $\lambda > \bigl(2c(\sigma _0)(n\varepsilon +1)\bigr)^2$, we
obtain the following bound for any
$\Psi _n\in D(H_0^{1/2})\cap \mathcal{H }_n$, with $a<1$, $b>0$ and
uniformly in $\sigma _0< \sigma \leq +\infty $:
\begin{equation}
  \label{eq:31}
  \Bigl\lvert \bigl\langle \Psi _n  , \hat{H}_I(\sigma )^{(n)}\Psi _n \bigr\rangle_{}  \Bigr\rvert_{}^{}\leq a\bigl\langle \Psi _n  , H_0^{(n)}\Psi _n \bigr\rangle_{}+b \bigl\langle \Psi _n  ,  \Psi _n\rangle_{}\; .
\end{equation}
Applying KLMN theorem, \eqref{eq:31} proves the following result
\citep[see e.g.][for additional
details]{nelson:1190
  ,MR1809881
}.
\begin{thm}
  \label{thm:1}
  There exists $K>0$ such that, for any $n\in \mathds{N}$, and
  $\varepsilon \in (0,\bar{\varepsilon} )$ the following statements
  hold:
  \begin{enumerate}[label=\color{myblue}(\roman*)]
  \item\label{item:3} For any
    $\bigl(2K(n\varepsilon +1)\bigr)<\sigma_0<\sigma \leq +\infty $,
    there exists a unique self-adjoint operator
    $\hat{H}_{\sigma }^{(n)}$ with domain
    $\hat{D}_{\sigma }^{(n)}\subset
    D\bigl((H_0^{(n)})^{1/2}\bigr)\subset \mathcal{H }_n$
    associated to the symmetric form
    $\hat{h}_{\sigma }^{(n)}(\cdot ,\cdot )$, defined for any
    $\Psi ,\Phi \in D\bigl((H_0^{(n)})^{1/2}\bigr)$ as:
    \begin{equation}
      \label{eq:33}
      \hat{h}_{\sigma }^{(n)}(\Psi ,\Phi)=\bigl\langle \Psi   ,  H_{0}^{(n)}\Phi  \bigl\rangle_{}+\bigl\langle \Psi   ,  \hat{H}_{I }(\sigma )^{(n)}\Phi  \bigl\rangle_{}\; .
    \end{equation}
    The operator $\hat{H}_{\sigma }^{(n)}$ is bounded from below, with
    bound $-b_{\sigma _0}(\sigma)$ (where
    $\lvert b_{\sigma _0}(\sigma) \rvert_{}^{}$ is a bounded increasing
    function of $\sigma $).
  \item\label{item:4} The following convergence holds in the norm topology  of
    $\mathcal{L }(\mathcal{H }_n)$:
    \begin{equation}
      \label{eq:34}
      \lim_{\sigma \to +\infty }(z-\hat{H}_{\sigma
      }^{(n)})^{{-1}}=(z-\hat{H}_{\infty }^{(n)})^{-1}\;,\quad
      \text{ for all } \quad z\in \mathds{C}\setminus\mathds{R}.
    \end{equation}
  \item\label{item:5} For any $t\in \mathds{R}$, the following
    convergence holds in the strong topology of
    $\mathcal{L }(\mathcal{H }_n)$:
    \begin{equation}
      \label{eq:35}
      s-\lim_{\sigma \to +\infty }e^{-i \frac{t}{\varepsilon }\hat{H}_{\sigma }^{(n)}}=e^{-i \frac{t}{\varepsilon }\hat{H}_{\infty }^{(n)}}\; .
    \end{equation}
  \end{enumerate}
\end{thm}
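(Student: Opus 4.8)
The overall plan is to obtain (i) directly from the KLMN theorem using the relative form bound \eqref{eq:31} that has already been established, then to upgrade that bound into a norm‑resolvent statement for (ii), and finally to deduce (iii) from (ii) by a standard perturbation‑theoretic fact.

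\emph{Statement (i).} I would fix $n\in\mathds N$ and $\varepsilon\in(0,\bar\varepsilon)$, choose $\sigma_0>2K(n\varepsilon+1)$ with $K=2c$ as in \eqref{eq:29}--\eqref{eq:32}, and then $\lambda>\bigl(2c(\sigma_0)(n\varepsilon+1)\bigr)^{2}$. Inequality \eqref{eq:31} says exactly that the symmetric form $\langle\,\cdot\,,\hat H_I(\sigma)^{(n)}\,\cdot\,\rangle$ is bounded relatively to the form of $H_0^{(n)}$ with relative bound $a<1$, uniformly in $\sigma_0<\sigma\le+\infty$; by \eqref{eq:30} together with \eqref{eq:29} the additive constant can be taken to be $b_{\sigma_0}(\sigma)$, which is bounded (by \eqref{eq:29}) and increasing in $\sigma$ (because $r_\sigma$ and $V_\sigma$ are built from $\chi_\sigma-\chi_{\sigma_0}$, whose relevant norms grow with $\sigma$). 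The KLMN theorem then produces a unique self-adjoint operator $\hat H_\sigma^{(n)}$, bounded below by $-b_{\sigma_0}(\sigma)$, with form domain $Q\bigl((H_0^{(n)})^{1/2}\bigr)$ and associated to the form $\hat h_\sigma^{(n)}$ of \eqref{eq:33}. For $\sigma<\infty$ this operator must coincide with the self-adjoint operator obtained in \eqref{eq:7}--\eqref{eq:8} by conjugating $H_\sigma^{(n)}-\varepsilon^{2}nE_\sigma$ with the unitary $U_\sigma^{(n)}$ (self-adjoint by Proposition~\ref{prop:1}), since both are self-adjoint realizations of the same quadratic form and KLMN gives the unique such realization; for $\sigma=+\infty$ it is the new object $\hat H_\infty^{(n)}$.

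\emph{Statement (ii).} On the common form domain the forms of $\hat H_\sigma^{(n)}$ and $\hat H_\infty^{(n)}$ differ only by the form of $\hat H_I(\sigma)^{(n)}-\hat H_I(\infty)^{(n)}$, and by \eqref{eq:12} this difference drops the $\sigma$-independent summand $H_I(\sigma_0)$, leaving only the two-body term carrying $V_\sigma-V_\infty$ and the quadratic and $D_x$ terms carrying $r_\sigma-r_\infty$. Running the estimates of Lemmas~\ref{lemma:3}, \ref{lemma:4}, \ref{lemma:5} and Proposition~\ref{prop:2} verbatim with $r_\sigma-r_\infty$ in place of $r_\sigma$ and $V_\sigma-V_\infty$ in place of $V_\sigma$, the right-hand side of \eqref{eq:26} shows that
\[
 R_\sigma:=\Bigl\lVert(H_0+\lambda)^{-1/2}\bigl(\hat H_\sigma^{(n)}-\hat H_\infty^{(n)}\bigr)(H_0+\lambda)^{-1/2}\Bigr\rVert_{\mathcal L(\mathcal H_n)}
\]
is controlled by $\lVert\omega^{-1/2}(r_\sigma-r_\infty)\rVert_2^{2}+\lVert\omega^{-1/4}(r_\sigma-r_\infty)\rVert_2^{2}+\lVert\omega^{-1/2}(r_\sigma-r_\infty)\rVert_2+\lVert(V_\sigma-V_\infty)(1-\Delta)^{-1/2}\rVert_{\mathcal L}$, with constants that do not grow with $\sigma$. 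One then checks each of these tends to $0$: the $r$-norms because $r_\sigma-r_\infty=-ik(g_\sigma-g_\infty)$ is supported in $\{\lvert k\rvert\ge\sigma\}$ (since $\chi_\sigma\equiv1$ on $\{\lvert k\rvert\le\sigma\}$) and the corresponding weighted integrals converge, so shrinking the support forces them to $0$; and $\lVert(V_\sigma-V_\infty)(1-\Delta)^{-1/2}\rVert_{\mathcal L}\to0$ by revisiting the splitting $V=V^{(1)}+V^{(2)}$ of Lemma~\ref{lemma:2} --- $V_\sigma^{(1)}-V_\infty^{(1)}\to0$ in $L^2\cap L^\infty\subset L^3$ by Hausdorff--Young and dominated convergence, while (after the cancellation $\chi_\infty\equiv1$ localizes the difference at high momenta) $V_\sigma^{(2)}-V_\infty^{(2)}$ is the Fourier transform of a radial function supported in $\{\lvert k\rvert\ge\sigma\}$, giving a pointwise bound $\lvert V_\sigma^{(2)}(x)-V_\infty^{(2)}(x)\rvert\le\min(a_\sigma,b_\sigma/\lvert x\rvert)$ with $a_\sigma,b_\sigma\to0$, hence convergence to $0$ in $L^{3,\infty}$. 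Once $R_\sigma\to0$, the resolvent identity
\[
 (z-\hat H_\sigma^{(n)})^{-1}-(z-\hat H_\infty^{(n)})^{-1}=-\bigl[(z-\hat H_\sigma^{(n)})^{-1}(H_0+\lambda)^{1/2}\bigr]\bigl[(H_0+\lambda)^{-1/2}(\hat H_\sigma^{(n)}-\hat H_\infty^{(n)})(H_0+\lambda)^{-1/2}\bigr]\bigl[(H_0+\lambda)^{1/2}(z-\hat H_\infty^{(n)})^{-1}\bigr]
\]
yields \eqref{eq:34}: the outer factors are bounded uniformly in $\sigma$ because the uniform relative form bound from Step~1 makes the graph norms of $(\hat H_\sigma^{(n)})^{1/2}$ and $(H_0^{(n)})^{1/2}$ uniformly equivalent, and the middle factor is $\pm R_\sigma\to0$. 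Statement (iii) then follows: norm-resolvent convergence of self-adjoint operators implies strong (indeed locally uniform in $t$) convergence of the associated unitary groups, and the factor $1/\varepsilon$ in the exponent is harmless.

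\emph{Main obstacle.} Steps~1 and~3 are essentially bookkeeping on top of work already done; the real content is the vanishing $R_\sigma\to0$. The delicate point is the $V$-term: unlike the $r$-terms, pointwise decay of $V_\sigma-V_\infty$ does not suffice --- one needs it in the operator norm $\lVert\,\cdot\,(1-\Delta)^{-1/2}\rVert$, hence in $L^{3}$ or $L^{3,\infty}$, which forces a re-examination of the Coulomb-type tail of $V^{(2)}$ and exploitation of the cancellation $\chi_\infty\equiv1$. Keeping careful track of which Fock-space estimate from Lemmas~\ref{lemma:3}--\ref{lemma:5} controls which summand of $\hat H_I(\sigma)-\hat H_I(\infty)$, and of the fact that the $\sigma$-independent pieces drop out, is where most of the care is needed; everything else is a mechanical repetition of those lemmas.
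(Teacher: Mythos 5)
Your proposal is correct and follows essentially the same route the paper takes, namely KLMN applied to the uniform bound \eqref{eq:31} for (i), a resolvent-difference estimate built on the uniform equivalence of the form norms (together with the fact that the $\sigma$-independent piece $H_I(\sigma_0)$ cancels and the remaining terms are controlled by the vanishing tails $\lVert\omega^{-s}(r_\sigma-r_\infty)\rVert_2$ and $\lVert(V_\sigma-V_\infty)(1-\Delta)^{-1/2}\rVert$) for (ii), and the standard implication from norm-resolvent convergence to strong convergence of unitary groups for (iii) — details the paper delegates to \cite{nelson:1190,MR1809881}. Two small points of hygiene: the parenthetical ``self-adjoint by Proposition~\ref{prop:1}'' attached to $U_\sigma^{(n)}$ is garbled ($U_\sigma$ is unitary, generated by the self-adjoint $T_\sigma$ of Lemma~\ref{lemma:1}; Proposition~\ref{prop:1} is about $H_\sigma$), and for the terms quadratic in $r_\sigma$ the replacement $r_\sigma\mapsto r_\sigma-r_\infty$ is not literally ``verbatim'' but requires the difference-of-products decomposition $a^\#(r_\sigma)a^\#(r_\sigma)-a^\#(r_\infty)a^\#(r_\infty)=a^\#(r_\sigma-r_\infty)a^\#(r_\sigma)+a^\#(r_\infty)a^\#(r_\sigma-r_\infty)$ together with the uniform boundedness of $\lVert\omega^{-s}r_\sigma\rVert_2$ — both harmless and within the spirit of what you wrote.
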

\begin{remark}
  \label{rem:1}
  The operator $\hat{H}_{\infty }^{(n)}$ can be decomposed only in the
  sense of forms, i.e.
  \begin{equation}
    \label{eq:36}
    \hat{H}_{\infty }^{(n)}=H_0^{(n)}\;\dotplus\;\hat{H}_I^{(n)}(\infty )\; ;
  \end{equation}
  where $\dotplus$ has to be intended as the form sum.
\end{remark}

\subsection{Extension of $\hat{H}_{\infty }^{(n)}$ to $\mathcal{H }$.}
\label{sec:extension-mathcalh}

We have defined the self-adjoint operator $\hat{H}_{\infty }^{(n)}$
 which depends in $\sigma_0$ for each $n\in \mathds{N}$. Now we are interested in extending it to
the whole space $\mathcal{H }$. This can be done in at least two different
ways, however we choose the one that is more suitable to interpret
$\hat{H}_{\infty }$ as the Wick quantization of a classical symbol.

Let $K$ be defined by Theorem~\ref{thm:1}. Then define
$\mathfrak{N}(\varepsilon,\sigma _0 )\in \mathds{N}$ by:
\begin{equation}
  \label{eq:55}
  \mathfrak{N}(\varepsilon,\sigma _0 )=\Bigl[\frac{\sigma _0-2K}{2K\varepsilon } -1 \Bigr]\; ;
\end{equation}
where the square brackets mean that we take the integer part if the
number within is positive, zero otherwise.
\begin{definition}[$\hat{H}_{ren}(\sigma _0)$]
  \label{def:7}
  Let $0\leq \sigma _0<+\infty $ be fixed. Then we define
  $\hat{H}_{ren}(\sigma _0)$ on $\mathcal{H }$ by:
  \begin{equation}
    \label{eq:59}
    \hat{H}_{ren }(\sigma _0)\bigr\rvert_{\mathcal{H}_n}=\left\{
      \begin{aligned}
        &\hat{H}_{\infty }^{(n)}&\text{ if }n\leq \mathfrak{N}(\varepsilon,\sigma _0 )\\
        &0&\text{ if }n>\mathfrak{N}(\varepsilon,\sigma _0 )
      \end{aligned}
    \right.
  \end{equation}
  where $\mathfrak{N}(\varepsilon,\sigma _0 )$ is defined by
  \eqref{eq:55}. We may also write
  $\hat{H}_{ren}(\sigma _0)=H_0\dotplus \hat{H}_{ren, I}(\sigma _0)$
  as a sum of quadratic forms.
\end{definition}
The operator $\hat{H}_{ren}(\sigma _0)$ is self-adjoint on
$\mathcal{H }$, with domain of self adjointness:
\begin{equation}
  \label{eq:60}
  \hat{D}_{ren }(\sigma _0)=\Bigl\{ \Psi \in \mathcal{H}\;,\;\Psi\bigr\rvert_{\mathcal{H}_n}\in \hat{D}_{\infty }^{(n)}\text{ for any }n\leq \mathfrak{N}(\varepsilon,\sigma _0 )  \Bigr\}\; .
\end{equation}
Acting with the dressing operator $U_{\infty }$ defined in
Lemma~\ref{lemma:1} (with the same fixed $\sigma _0$ as for
$\hat{H}_{ren }(\sigma _0)$), we can also define the undressed
extension $H_{ren}(\sigma _0)$.
\begin{definition}[$H_{ren}(\sigma _0)$]
  \label{def:8}
  Let $0\leq\sigma _0<+\infty $ be fixed. Then we define the
  following operator on $\mathcal{H }$:
  \begin{equation}
    \label{eq:61}
    H_{ren }(\sigma _0)=U^{*}_{\infty }(\sigma _0)\hat{H}_{ren }(\sigma _0)U_{\infty }(\sigma _0)\; .
  \end{equation}
\end{definition}
The operator $H_{ren}(\sigma _0)$ is self-adjoint on $\mathcal{H }$,
with domain of self adjointness:
\begin{equation}
  \label{eq:62}
  D_{ren }(\sigma _0)=\Bigl\{ \Psi \in \mathcal{H}\;,\;\Psi\bigr\rvert_{\mathcal{H}_n}\in e^{-\frac{i}{\varepsilon }T_{\infty }^{(n)}}\hat{D}_{\infty }^{(n)}\text{ for any }n\leq \mathfrak{N}(\varepsilon,\sigma _0 )  \Bigr\}\; .
\end{equation}
\begin{remark}
  \label{rem:7}
  Let $\sigma _0\geq 0$ be fixed. Then the $\hat{H}_{\sigma }$ given by
  \eqref{eq:7} defines, in the limit $\sigma \to \infty $, a
  symmetric quadratic form $\hat{h}_{\infty }$ on
  $D(H_0^{1/2})\subset \mathcal{H }$. Also $\hat{H}_{ren}(\sigma _0)$
  defines a quadratic form $\hat{h}_{ren}$. We
  have\footnote{$\mathds{1}_{[0,\mathfrak{N}]}(N_1)$ is the orthogonal
    projector on $\bigoplus _{n=0}^{\mathfrak{N}}\mathcal{H }_n$.}:
  \begin{equation}
    \label{eq:64}
    \hat{h}_{\infty }(\mathds{1}_{[0,\mathfrak{N}]}(N_1)\,\cdot\, ,\,\cdot \,)=\hat{h}_{ren}(\mathds{1}_{[0,\mathfrak{N}]}(N_1)\,\cdot \,,\,\cdot\, )\; .
  \end{equation}
  However, we are not able to prove that there is a self-adjoint
  operator on $\mathcal{H}$ associated to $\hat{h}_{\infty }$, and it
  is possible that there is none.
\end{remark}

\section{The classical system: S-KG equations.}
\label{sec:classical-system}

In this section we define the S-KG system, with initial data in a
suitable dense subset of $L^2 (\mathds{R}^3 )\oplus L^2 (\mathds{R}^3 )$, that describes the
classical dynamics of a particle-field interaction. Then we introduce the classical
dressing transformation (viewed itself as a dynamical system), and
then study the transformation it induces on the Hamiltonian
functional. Finally, we discuss the global existence of  unique
solutions of the classical equations, both in their original and dressed
form.

\paragraph*{The Yukawa coupling:}

The S-KG[Y] system (Schrödinger-Klein-Gordon with Yukawa interaction),
or undressed classical equations, is defined by:
\begin{equation}
  \label{eq:70}\tag{S-KG[Y]}
  \left\{
    \begin{aligned}
      &i\partial_t u=-\frac{\Delta }{2M}u+ Vu+ A u\\
      &(\square+m^2_0)A=- \lvert u \rvert^2
    \end{aligned}
  \right .\; ;
\end{equation}
where $V:\mathds{R}^3\to \mathds{R}$ is an external potential. If we
introduce the complex field $\alpha $, defined by
\begin{align}
  \label{eq:71}
  A(x)&=\tfrac{1}{(2\pi )^{\frac{3}{2}}}\int_{\mathds{R}^3}^{}\tfrac{1}{\sqrt{2\omega (k)}}\bigl(\bar{\alpha}(k)e^{-ik\cdot x}+\alpha(k)e^{ik\cdot x}  \bigr)  dk\; ,\\
  \dot{A}(x)&=-\tfrac{i}{(2\pi )^{\frac{3}{2}}}\int_{\mathds{R}^3}^{}\sqrt{\tfrac{\omega(k)}{2}}\bigl(\alpha(k)e^{ik\cdot x}-\bar{\alpha}(k)e^{-ik\cdot x}  \bigr)  dk\; ,
\end{align}
we can rewrite \eqref{eq:70} as the equivalent system\footnote{The two
  systems are equivalent since
  $(1+\omega^{\varsigma } )\Re \alpha\in L^2 (\mathds{R}^3
  )\Leftrightarrow A\in H^{\varsigma +1/2}(\mathds{R}^3)$,
  $(1+\omega^{\varsigma } )\Im \alpha\in L^2 (\mathds{R}^3
  )\Leftrightarrow \partial _t A\in H^{\varsigma -1/2}(\mathds{R}^3)$.
  In \eqref{eq:72} the unknowns are $u$ and $\alpha $.}:
\begin{equation}
  \label{eq:72}\tag{S-KG$_{\alpha }$[Y]}
  \left\{
    \begin{aligned}
      &i\partial_t u=-\frac{\Delta }{2M}u+ Vu+ A u\\
      &i\partial _t\alpha =\omega \alpha +\frac{1}{\sqrt{2\omega }}\mathcal{F}\bigl(\lvert u  \rvert_{}^2\bigr)
    \end{aligned}
  \right .\; .
\end{equation}

\paragraph*{The ``dressed'' coupling:}

The system that arises from the dressed interaction is quite
complicated. We will denote it by S-KG[D], and it has the following
form\footnote{We denote by $\partial _{(i)}$ the derivative with
  respect to the $i$-th component of the variable $x\in \mathds{R}^3$.
  Analogously, we denote by $v^{(i)}$ the $i$-th component of a
  $3$-dimensional vector $v$.}:
\begin{equation}
  \label{eq:73}\tag{S-KG[D]}
  \left\{
    \begin{aligned}
      i\partial _tu&=-\frac{\Delta }{2M}u+Vu+(W*\lvert u  \rvert_{}^2)u+[(\varphi *A)+(\xi  *\partial _tA)]u+\sum_{i=1}^3[(\rho ^{(i)}*A)\partial _{(i)}+(\zeta ^{(i)}*A)^2]u\\
      (\square +&m^2_0)A=-\varphi *\lvert u  \rvert_{}^2+i\sum_{i=1}^3\rho ^{(i)}*[(u\partial _{(i)}u)-\sqrt{2M}(\zeta ^{(i)}*\partial _tA)]
    \end{aligned}
  \right.
\end{equation}
where: $V,W,\varphi :\mathds{R}^3\to \mathds{R}$ with $W,\varphi $
even; $\xi :\mathds{R}^3\to \mathds{C}$, even;
$\rho :(\mathds{R}^3)^3\to \mathds{C}$, odd; and
$\zeta :(\mathds{R}^3)^3\to \mathds{R}$, odd. Obviously also
\eqref{eq:73} can be written as an equivalent system
S-KG$_{\alpha }$[D], with unknowns $u$ and $\alpha $ (omitted
here). As discussed in detail in Section \ref{sec:glob-exist-results},
with a suitable choice of $W$, $\varphi$, $\xi$, $\rho$ and $\zeta $
the global well-posedness of \eqref{eq:73} follows directly from the
global well-posedness of \eqref{eq:70}.

\subsection{Dressing.}
\label{sec:classical-dressing}

We look for a classical correspondent of the dressing transformation
$U_{\infty }(\theta )$. Since $U_{\infty }(\theta )$ is a
one-parameter group of unitary transformations on $\mathcal{H }$, the
classical counterpart of its generator is expected to induce a
non-linear evolution on the phase-space
$L^2 (\mathds{R}^3 )\oplus L^2 (\mathds{R}^3 )$, using the quantum-classical
correspondence principle for systems with infinite degrees of freedom
 \citep[see
e.g.][]{MR530915,MR0332046,ammari:nier:2008
}. The resulting ``classical dressing'' $D_{g_{\infty }}(\theta )$
plays a crucial role in proving our results: on one hand it is
necessary to link the S-KG classical dynamics with the quantum dressed
one; on the other it is at the heart of the "classical" renormalization
procedure.

Let $g\in L^2 (\mathds{R}^3 )$; define the following functional
$\mathscr{D}_g:L^2 (\mathds{R}^3 )\oplus L^2 (\mathds{R}^3 )\to
\mathds{R}$,
\begin{equation}
  \label{eq:40}
  \mathscr{D}_g(u, \alpha ):= \int_{\mathds{R}^6}^{}\Bigl(g(k)\bar{\alpha}(k)e^{-ik\cdot x}+\bar{g}(k)\alpha (k)e^{ik\cdot x} \Bigr) \lvert u(x)  \rvert_{}^2 dxdk\;.
\end{equation}
The functional $\mathscr{D}_g$ induces the following Hamiltonian
equations of motion:
\begin{equation}
  \label{eq:41}
  \left\{
    \begin{aligned}
      i\partial _{\theta }u&=A_gu\\
      i\partial _{\theta }\alpha &=g F(\lvert u  \rvert_{}^2)
    \end{aligned}
  \right. \; ;
\end{equation}
where
\begin{align}
  \label{eq:42}
  A_g(x)&=\int_{\mathds{R}^3}^{}\Bigl(g(k)\bar{\alpha}(k)e^{-ik\cdot x}+\bar{g}(k)\alpha (k)e^{ik\cdot x} \Bigr)  dk\; ,\\
  \label{eq:43}
  F(\lvert u  \rvert_{}^2)(k)&=\int_{\mathds{R}^3}^{}e^{-ik\cdot x}\lvert u(x)  \rvert_{}^2  dx\; .
\end{align}
Observe that for any $g\in L^2 (\mathds{R}^3 )$ and
$x\in \mathds{R}^3$, $A_g(x)\in \mathds{R}$. This will lead to an
explicit form for the solutions of the Cauchy problem related to
\eqref{eq:41}. The latter can be rewritten in integral form, for any
$\theta \in \mathds{R}$:
\begin{equation}
  \label{eq:44}
  \left\{
    \begin{aligned}
      u_{\theta}(x)&=u_0(x)\exp\biggl\{-i\int_{0}^{\theta }(A_g)_{\tau}(x)\,  d\tau \biggr\}\\
      \alpha_{\theta }(k) &=\alpha_0(k)-ig(k)\int_{0}^{\theta }F(\lvert  u_{\tau } \rvert_{}^2)(k)\,  d\tau
    \end{aligned}
  \right. \; ;
\end{equation}
where $(A_g)_{\tau}$ is defined by \eqref{eq:42} with $\alpha $
replaced by $\alpha _{\tau}$; analogously we define $B_g$ by
\eqref{eq:42} with $\alpha $ replaced by $\beta $.
\begin{lemma}
  \label{lemma:6}
  Let $s\geq 0$, $s-\frac{1}{2}\leq \varsigma \leq s+\frac{1}{2}$;
  $(1+\omega^{\frac{1}{2}} )g\in L^2 (\mathds{R}^3)$. Also, let
  $u,v\in H^s(\mathds{R}^3)$ and
  $(1+\omega^{\varsigma } )\alpha ,(1+\omega^{\varsigma } )\beta \in
  L^2 (\mathds{R}^3 )$.
  Then there exist constants $C_s,C_{\varsigma }>0$ such that:
  \begin{gather}
    \label{eq:45}
    \lVert (A_g-B_g)u  \rVert_{H^s}^{}\leq C_s\max_{w\in \{u,v\}}\lVert w  \rVert_{H^s}^{}\lVert (1+\omega ^{\frac{1}{2}})g  \rVert_2^{}\lVert (1+\omega ^{\varsigma})(\alpha-\beta)  \rVert_2^{}\; ,\\
    \label{eq:46}
    \lVert A_g(u-v)  \rVert_{H^s}^{}\leq C_s\max_{\zeta \in \{\alpha ,\beta \}}\lVert (1+\omega ^{\varsigma})\zeta   \rVert_2^{}\lVert (1+\omega ^{\frac{1}{2}})g  \rVert_2^{}\lVert u-v  \rVert_{H^s}^{}\; ,\\
    \label{eq:47}
    \Bigl\lVert (1+\omega ^{\varsigma})g\int_{\mathds{R}^3}^{}e^{-ik\cdot x}\bigl((u-v)\bar{v}+(\bar{u}-\bar{v})u\bigr)  dx  \Bigr\rVert_2^{}\leq C_\varsigma \max_{w\in \{u,v\}}\lVert w  \rVert_{H^s}^{}\lVert (1+\omega ^{\frac{1}{2}})g  \rVert_2^{}\lVert u-v  \rVert_{H^s}^{}\; .
  \end{gather}
\end{lemma}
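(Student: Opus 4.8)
The plan is to reduce all three bounds to a single elementary multiplication estimate on $H^{s}$, after an explicit Fourier description of the (real‑valued) multiplier $A_{g}$. Writing $\hat{f}=\mathcal{F}(f)$, from \eqref{eq:42} one gets, up to the fixed constant $(2\pi)^{3/2}$, that $\widehat{A_{g}}(\xi)=(g\bar\alpha)(-\xi)+(\bar g\alpha)(\xi)$, hence
\begin{equation*}
  \lvert\widehat{A_{g}}(\xi)\rvert\le c\bigl(\lvert g(\xi)\rvert\,\lvert\alpha(\xi)\rvert+\lvert g(-\xi)\rvert\,\lvert\alpha(-\xi)\rvert\bigr),
\end{equation*}
and $A_{g}-B_{g}$ is just $A_{g}$ with $\alpha$ replaced by $\alpha-\beta$. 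Throughout I would use that $\omega(k)\simeq\langle k\rangle:=(1+\lvert k\rvert^{2})^{1/2}$, with constants depending only on the fixed $m_{0}>0$, in order to pass freely between $\omega$‑weights and $\langle\cdot\rangle$‑weights.

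The multiplication estimate I would use is: for $s\ge0$ and any $m$ with $\langle\cdot\rangle^{s}\hat{m}\in L^{1}$,
\begin{equation*}
  \lVert m\,w\rVert_{H^{s}}\le C_{s}\,\lVert\langle\cdot\rangle^{s}\hat{m}\rVert_{1}\,\lVert w\rVert_{H^{s}}.
\end{equation*}
This follows by writing $\widehat{mw}=(2\pi)^{-3/2}\hat{m}*\hat{w}$, using the Peetre‑type bound $\langle\xi\rangle^{s}\le C_{s}(\langle\xi-\eta\rangle^{s}+\langle\eta\rangle^{s})$ to split the convolution into $(\langle\cdot\rangle^{s}\lvert\hat{m}\rvert)*\lvert\hat{w}\rvert$ and $\lvert\hat{m}\rvert*(\langle\cdot\rangle^{s}\lvert\hat{w}\rvert)$, and then Young's inequality $\lVert f*h\rVert_{2}\le\lVert f\rVert_{1}\lVert h\rVert_{2}$. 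To get \eqref{eq:45} (resp.\ \eqref{eq:46}) I apply this with $m=A_{g}-B_{g}$, $w=u$ (resp.\ $m=A_{g}$, $w=u-v$), so that it remains to bound $\lVert\langle\cdot\rangle^{s}\widehat{A_{g}-B_{g}}\rVert_{1}\le c\int\langle\xi\rangle^{s}\lvert g(\xi)\rvert\,\lvert(\alpha-\beta)(\xi)\rvert\,d\xi$. Splitting $\langle\xi\rangle^{s}=\langle\xi\rangle^{a}\langle\xi\rangle^{s-a}$ with $a=\min(s,\tfrac12)$ and using Cauchy--Schwarz gives $\lVert\langle\cdot\rangle^{a}g\rVert_{2}\,\lVert\langle\cdot\rangle^{s-a}(\alpha-\beta)\rVert_{2}$; since $a\le\tfrac12$ the first factor is $\lesssim\lVert(1+\omega^{1/2})g\rVert_{2}$, and since $s-a\le\varsigma$ — which is precisely where the hypothesis $\varsigma\ge s-\tfrac12$ is used — the second is $\lesssim\lVert(1+\omega^{\varsigma})(\alpha-\beta)\rVert_{2}$ (and for \eqref{eq:46} one bounds $\lVert(1+\omega^{\varsigma})\zeta\rVert_{2}$ by the maximum over $\zeta\in\{\alpha,\beta\}$).

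For \eqref{eq:47} I would argue directly on the Fourier side, setting $h:=(u-v)\bar v+\overline{(u-v)}u$. From $\omega\simeq\langle\cdot\rangle$ and $\varsigma\le s+\tfrac12$ one has the pointwise inequality $1+\omega^{\varsigma}(k)\le C(1+\omega^{1/2}(k))(1+\langle k\rangle^{s})$, hence
\begin{equation*}
  \Bigl\lVert(1+\omega^{\varsigma})g\,\hat{h}\Bigr\rVert_{2}\le C\,\lVert(1+\omega^{1/2})g\rVert_{2}\,\bigl\lVert(1+\langle\cdot\rangle^{s})\hat{h}\bigr\rVert_{\infty}.
\end{equation*}
It then suffices to show $\lVert(1+\langle\cdot\rangle^{s})\hat{h}\rVert_{\infty}\le C\max_{w\in\{u,v\}}\lVert w\rVert_{H^{s}}\lVert u-v\rVert_{H^{s}}$. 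Writing $\widehat{(u-v)\bar v}=(2\pi)^{-3/2}\widehat{u-v}*\widehat{\bar v}$, applying once more the Peetre splitting of $\langle k\rangle^{s}$ and then Cauchy--Schwarz in each of the two resulting integrals (together with $\lVert\bar v\rVert_{H^{s}}=\lVert v\rVert_{H^{s}}$) bounds $\lVert\langle\cdot\rangle^{s}\widehat{(u-v)\bar v}\rVert_{\infty}$ by $\lVert u-v\rVert_{H^{s}}\lVert v\rVert_{H^{s}}$; the term $\overline{(u-v)}u$ is identical, and the unweighted part is the trivial $\lVert\widehat{fg}\rVert_{\infty}\le(2\pi)^{-3/2}\lVert f\rVert_{2}\lVert g\rVert_{2}$.

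The Young and Cauchy--Schwarz manipulations are routine; the only real point — and the place where both endpoint hypotheses $s-\tfrac12\le\varsigma\le s+\tfrac12$ are forced — is the allocation of the weight $\langle\cdot\rangle^{s}$. In \eqref{eq:45}--\eqref{eq:46} at most order $\tfrac12$ of it can be carried by $g$ (leaving order $s-\tfrac12\le\varsigma$ for $\alpha$), while in \eqref{eq:47} the excess weight of order $(\varsigma-\tfrac12)_{+}$ that $g$ cannot carry must instead be absorbed by $\hat{h}$, which the bilinear estimate permits only up to order $s$, forcing $\varsigma\le s+\tfrac12$. Once this accounting is set up, the constants $C_{s},C_{\varsigma}$ come out explicit and the finiteness of all norms in the statement is automatic.
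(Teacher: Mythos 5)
Your proof is correct, and it takes a genuinely different route from the paper's. The paper's proof is a two-line sketch: for integer $s$ one uses standard (Leibniz-rule, Cauchy--Schwarz) estimates together with $|k|\le\omega(k)\le|k|+m_0$, and then interpolates in $s$. You instead prove all three bounds directly for every $s\ge 0$ at once by reducing them to the Fourier-multiplier estimate $\lVert m\,w\rVert_{H^s}\le C_s\lVert\langle\cdot\rangle^{s}\hat m\rVert_{1}\lVert w\rVert_{H^s}$ (Peetre inequality plus Young's inequality), and then allocating the weight $\langle\xi\rangle^{s}$ between the factors $g$ and $\alpha-\beta$ (resp.\ $g$ and $\hat h$). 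This dispenses with interpolation and, more importantly, makes fully explicit why the two endpoint constraints appear: in \eqref{eq:45}--\eqref{eq:46} at most an order-$\tfrac12$ weight can be carried by $g$, so the remainder of order $\max(0,s-\tfrac12)$ must go to $\alpha-\beta$, forcing $\varsigma\ge s-\tfrac12$; in \eqref{eq:47} the excess weight beyond order $\tfrac12$ must be absorbed by the bilinear term $\hat h$, which the $H^s\times H^s$ product estimate supplies only up to order $s$, forcing $\varsigma\le s+\tfrac12$. One small imprecision: when $s\le\tfrac12$ and $\varsigma<0$ your split gives $s-a=0$, and the passage from $\lVert\alpha-\beta\rVert_2$ to $\lVert(1+\omega^{\varsigma})(\alpha-\beta)\rVert_2$ does not use $\varsigma\ge s-\tfrac12$ but only that $1+\omega^{\varsigma}\ge 1$ pointwise; the statement "precisely where the hypothesis is used" is therefore a mild overstatement at that endpoint, though the bound itself is correct. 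Apart from this wording, the argument is complete and, to my mind, cleaner than the proof sketched in the paper.
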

\begin{proof}
  If $s\in \mathds{N}$, the results follow by standard estimates,
  keeping in mind that
  $\lvert k \rvert_{}^{}\leq \omega(k) \leq \lvert k \rvert_{}^{}+m_0 $.
  The bounds for non-integer $s$ are then obtained by interpolation.
\end{proof}
\begin{proposition}
  \label{prop:5}
  Let $\theta \in \mathds{R}$, $(u_0, \alpha _0)\in L^2 \oplus L^2 $.
  If
  $(u_{\theta}, \alpha _{\theta})\in \mathcal{C}^0_{}(\mathds{R}, L^2
  \oplus L^2 )$
  is a solution of~\eqref{eq:44}, then it is unique,
  i.e. any
  $(v_{\theta}, \beta _{\theta})\in \mathcal{C}^0_{}(\mathds{R}, L^2
  \oplus L^2 )$
  that satisfies~\eqref{eq:44} is such that
  $ (v_{\theta}, \beta _{\theta})=(u_{\theta}, \alpha _{\theta})$.
\end{proposition}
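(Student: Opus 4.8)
The plan is to exploit the special structure of the dressing flow. As already observed after \eqref{eq:43}, $A_g(x)\in\mathds{R}$ for every $x\in\mathds{R}^3$ and every $g,\alpha\in L^2(\mathds{R}^3)$; consequently the phase appearing in the first line of \eqref{eq:44} has modulus one, so that a solution necessarily satisfies $\lvert u_\theta(x)\rvert=\lvert u_0(x)\rvert$ for a.e.\ $x$ and every $\theta$. This conservation law \emph{linearizes} the system and makes it essentially explicit, so uniqueness will follow with no contraction or Gr\"onwall estimate --- which is fortunate, since in Proposition~\ref{prop:5} only $g\in L^2(\mathds{R}^3)$ is assumed and the estimates of Lemma~\ref{lemma:6} (which require $(1+\omega^{1/2})g\in L^2$) are not available.

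Concretely, let $(u_\theta,\alpha_\theta)$ and $(v_\theta,\beta_\theta)$ be two solutions in $\mathcal{C}^0(\mathds{R},L^2\oplus L^2)$ of \eqref{eq:44}, both with initial datum $(u_0,\alpha_0)$ (this is built into \eqref{eq:44}). \textbf{Step 1.} For fixed $x$, the map $\tau\mapsto (A_g)_\tau(x)$ is continuous with $\lvert (A_g)_\tau(x)\rvert\leq 2\lVert g\rVert_2\,\sup_{\lvert s\rvert\leq\lvert\theta\rvert}\lVert\alpha_s\rVert_2$ by Cauchy-Schwarz in \eqref{eq:42}, so $\int_0^\theta (A_g)_\tau(x)\,d\tau$ is a well-defined real number, measurable and bounded in $x$; the first line of \eqref{eq:44} then gives $\lvert u_\theta(x)\rvert=\lvert u_0(x)\rvert$ a.e., and likewise for $v_\theta$. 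In particular $\lVert u_\theta\rVert_2=\lVert v_\theta\rVert_2=\lVert u_0\rVert_2$ for all $\theta$, so $\lvert u_\tau\rvert^2\in L^1(\mathds{R}^3)$ uniformly in $\tau$ and $F(\lvert u_\tau\rvert^2)$ of \eqref{eq:43} is bounded, continuous, uniformly in $\tau$. \textbf{Step 2.} Since $\lvert u_\tau(x)\rvert^2=\lvert u_0(x)\rvert^2$ a.e., for each $\tau$, we have $F(\lvert u_\tau\rvert^2)=F(\lvert u_0\rvert^2)$, and similarly for $v$; inserting this into the second line of \eqref{eq:44} yields the closed form $\alpha_\theta(k)=\alpha_0(k)-i\theta\,g(k)F(\lvert u_0\rvert^2)(k)=\beta_\theta(k)$, a genuine element of $L^2$ because $g\in L^2$ and $F(\lvert u_0\rvert^2)\in L^\infty$. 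Hence $\alpha_\theta=\beta_\theta$ for every $\theta$. \textbf{Step 3.} With $\alpha_\tau=\beta_\tau$ for all $\tau$, formula \eqref{eq:42} gives $(A_g)_\tau(x)=(B_g)_\tau(x)$ for all $\tau$ and a.e.\ $x$, whence the first line of \eqref{eq:44} forces $u_\theta=v_\theta$. Therefore $(u_\theta,\alpha_\theta)=(v_\theta,\beta_\theta)$, which is the claim.

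The only delicate point --- and the step I expect to need the most care to write out --- is the bookkeeping of null sets when passing from Step~1 to Step~2: the identity $\lvert u_\tau(x)\rvert^2=\lvert u_0(x)\rvert^2$ holds, for each fixed $\tau$, only outside a $\tau$-dependent null set in $x$, whereas integrating in $\tau$ in \eqref{eq:44} requires it for a.e.\ $(\tau,x)$. This is handled by a Fubini argument: the function $(\tau,k)\mapsto F(\lvert u_\tau\rvert^2)(k)-F(\lvert u_0\rvert^2)(k)$ is jointly measurable and bounded on $[0,\theta]\times\mathds{R}^3$ and vanishes in $k$ for every $\tau$, hence vanishes $(\tau,k)$-a.e., so for a.e.\ $k$ one has $\int_0^\theta F(\lvert u_\tau\rvert^2)(k)\,d\tau=\theta F(\lvert u_0\rvert^2)(k)$, which is exactly what enters \eqref{eq:44}; the same remark makes Step~3 rigorous. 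Everything else is elementary, no quantitative estimate being needed because the nonlinearity is, in effect, linearized by the conservation $\lvert u_\theta\rvert=\lvert u_0\rvert$.
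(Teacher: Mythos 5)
Your proof is correct, but it takes a genuinely different route from the paper's. The paper proves Proposition~\ref{prop:5} by an a priori estimate: it differentiates $\lVert u_\theta-v_\theta\rVert_2^2+\lVert\alpha_\theta-\beta_\theta\rVert_2^2$ in $\theta$, bounds the resulting terms using the estimates of Lemma~\ref{lemma:6} with $s=0$, and concludes with Gronwall's lemma. You instead exploit the reality of $A_g(x)$ to obtain the conservation $\lvert u_\theta\rvert=\lvert u_0\rvert$, which freezes $F(\lvert u_\theta\rvert^2)=F(\lvert u_0\rvert^2)$, hence determines $\alpha_\theta$ and then $u_\theta$ explicitly from $(u_0,\alpha_0)$, so any two continuous solutions of \eqref{eq:44} coincide. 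This is precisely the computation the paper performs immediately \emph{after} Proposition~\ref{prop:5} to derive the explicit formula \eqref{eq:48}; you have promoted it to the uniqueness proof itself, which is legitimate since it uses nothing beyond the integral identities \eqref{eq:44} and the boundedness and continuity of $(A_g)_\tau(x)$ for $\alpha\in\mathcal{C}^0(\mathds{R},L^2)$. What your route buys: it is more elementary (no Gronwall, no differentiation along a merely continuous curve, and only $g\in L^2$ is used; your side remark about Lemma~\ref{lemma:6} is fair but minor, since at $s=\varsigma=0$ the needed bounds follow from Cauchy--Schwarz with just $\lVert g\rVert_2$, so the paper's argument is not actually compromised), and it gives the stronger conclusion that every continuous solution equals \eqref{eq:48}. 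What the paper's route buys: the Gronwall argument is robust, yields continuous dependence on the data, and would survive perturbations of $\mathscr{D}_g$ for which the conservation law fails. Finally, your closing Fubini discussion is harmless but unnecessary: for each fixed $\tau$ the a.e.-in-$x$ identity $\lvert u_\tau\rvert^2=\lvert u_0\rvert^2$ already gives $F(\lvert u_\tau\rvert^2)(k)=F(\lvert u_0\rvert^2)(k)$ for \emph{every} $k$, so the $\tau$-integral in \eqref{eq:44} is evaluated pointwise in $k$ with no null-set bookkeeping.
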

\begin{proof}
  We have:
  \begin{equation*}
    \begin{split}
      \frac{i}{2}\partial _{\theta }\Bigl(\bigl\lVert u_{\theta}-v_{\theta}  \bigr\rVert_2^2+\bigl\lVert \alpha _{\theta}-\beta _{\theta}  \bigr\rVert_2^2\Bigr)=\Im \Bigl(\Bigl\langle u_{\theta}-v_{\theta}  , \bigl((A_g)_{\theta}-(B_g)_{\theta} \bigr)u_{\theta}+(B_g)_{\theta }\bigl(u_{\theta }-v_{\theta } \bigr) \Bigr\rangle_2 \\+\Bigl\langle \alpha _{\theta }-\beta _{\theta }  , g\int_{\mathds{R}^3}^{}e^{-ik\cdot x}\bigl((u_{\theta }-v_{\theta })\bar{v}_{\theta }+(\bar{u}_{\theta }-\bar{v}_{\theta })u_{\theta }\bigr)  dx  \Bigr\rangle_2 \Bigr)\; .
    \end{split}
  \end{equation*}
  The result hence is an application of the estimates of
  Lemma~\ref{lemma:6} with $s=0$ and Gronwall's Lemma.
\end{proof}

Now that we are assured that the solution of~\eqref{eq:44} is unique,
we can construct it explicitly. Since $A_g(x)$ is real, it follows
that for any $\theta \in \mathds{R}$:
$\lvert u_{\theta } \rvert_{}^{}=\lvert u_0 \rvert_{}^{}$.  Therefore
$F(\lvert u_{\theta } \rvert_{}^2)=F(\lvert u_0 \rvert_{}^2)$, and
\begin{equation*}
  \alpha _{\theta }(k)=\alpha _0(k)-i\theta g(k)F(\lvert u_0  \rvert_{}^2)(k)\; .
\end{equation*}
Substituting this explicit form in the expression for $u_{\theta}$, we
obtain the solution for any
$(u_0, \alpha _0)\equiv (u, \alpha) \in L^2 (\mathds{R}^3 )\oplus L^2
(\mathds{R}^3 )$:
\begin{equation}
  \label{eq:48}
  \left\{
    \begin{aligned}
      u_{\theta}(x)&=u(x) \; \exp\biggl\{-i\theta A_g(x)+i \theta^2\Im \int_{\mathds{R}^3}^{}F(\lvert u  \rvert_{}^2)(k)\lvert g(k)  \rvert_{}^2e^{ik\cdot x}  dk\biggr\}\\
      \alpha _{\theta}(k)&=\alpha(k) -i\theta g(k)F(\lvert u  \rvert_{}^2)(k)
    \end{aligned}
  \right.\; .
\end{equation}
This system of equations defines a non-linear symplectomorphism: the
``classical dressing map'' on $L^2 \oplus L^2 $.
\begin{definition}
  \label{def:4}
  Let $g\in L^2 (\mathds{R}^3 )$. Then
  $\mathbf{D}_g(\cdot ):\mathds{R}\times \bigl(L^2 \oplus L^2
  \bigr)\to L^2 \oplus L^2 $ is defined by \eqref{eq:48} as:
  \begin{equation*}
    \mathbf{D}_g(\theta ) (u, \alpha )=(u_{\theta },\alpha _{\theta})\; .
  \end{equation*}
  The map $\mathbf{D}_g(\cdot )$ is the Hamiltonian flow generated by
  $\mathscr{D}_g$.
\end{definition}
Using the explicit form~\eqref{eq:48} and Lemma~\ref{lemma:6}, it is
straightforward to prove some interesting properties of the classical
dressing map. The results are formulated in the following proposition,
after the definition of  useful classes  of subspaces of
$L^2 \oplus L^2 $.
\begin{definition}
  \label{def:5}
  Let $s\geq 0$, $s-\frac{1}{2}\leq \varsigma \leq s+\frac{1}{2}$. We
  define the spaces
  $H^{s}(\mathds{R}^3)\oplus\mathcal{F }H^{\varsigma
  }(\mathds{R}^3)\subseteq L^2 (\mathds{R}^3 )\oplus L^2 (\mathds{R}^3
  )$:
  \begin{equation*}
    H^{s}(\mathds{R}^3)\oplus\mathcal{F}H^{\varsigma }(\mathds{R}^3) =\Bigl\{(u, \alpha) \in L^2 (\mathds{R}^3 )\oplus L^2 (\mathds{R}^3  )\text{ , } u\in H^s(\mathds{R}^3)\text{ and } \mathcal{F}^{-1}(\alpha) \in H^{\varsigma } (\mathds{R}^3  )\Bigr\}\; .
  \end{equation*}
\end{definition}
\begin{proposition}
  \label{prop:6}
  Let $s\geq 0$, $s-\frac{1}{2}\leq \varsigma \leq s+\frac{1}{2}$; and
  $g\in \mathcal{F }H^{\frac{1}{2}} (\mathds{R}^3 )$. Then
  \begin{equation*}
    \mathbf{D}_g: \mathds{R}\times \bigl(H^{s}\oplus\mathcal{F}H^{\varsigma }\bigr)\to H^{s}\oplus\mathcal{F}H^{\varsigma }\; ;
  \end{equation*}
  i.e. the flow preserves the spaces
  $H^{s}\oplus\mathcal{F}H^{\varsigma }$.  Furthermore, it is a
  bijection with inverse
  $\bigl(\mathbf{D}_g(\theta )\bigr)^{-1}=\mathbf{D}_g(-\theta )$.
  Hence the classical dressing is an Hamiltonian flow on
  $H^{s}\oplus\mathcal{F }H^{\varsigma }$.
\end{proposition}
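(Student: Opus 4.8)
The plan is to read everything off from the explicit solution formula~\eqref{eq:48}, using the estimates of Lemma~\ref{lemma:6}, and to treat the two components of $\mathbf{D}_g(\theta)(u,\alpha)=(u_\theta,\alpha_\theta)$ separately. As in the proof of Lemma~\ref{lemma:6}, I would first prove the statement for $s\in\mathds{N}$ (with $\varsigma$ in the admissible window) and then recover non-integer $s$ by interpolation. Note that $g\in\mathcal{F}H^{1/2}(\mathds{R}^3)$ is exactly the hypothesis $(1+\omega^{1/2})g\in L^2$ needed to invoke Lemma~\ref{lemma:6}.

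\emph{Field component.} By~\eqref{eq:48}, $\alpha_\theta-\alpha=-i\theta\,g\,F(\lvert u\rvert^2)$. Taking $v=0$ in estimate~\eqref{eq:47} (so that the integral there reduces to $F(\lvert u\rvert^2)(k)$) yields $\lVert(1+\omega^{\varsigma})g\,F(\lvert u\rvert^2)\rVert_2\le C_\varsigma\lVert u\rVert_{H^s}^2\,\lVert(1+\omega^{1/2})g\rVert_2<\infty$. Hence, whenever $(1+\omega^{\varsigma})\alpha\in L^2$ and $u\in H^s$, also $(1+\omega^{\varsigma})\alpha_\theta\in L^2$, i.e. the $\alpha$-component remains in $\mathcal{F}H^{\varsigma}$, with norm bounded in terms of the data.

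\emph{Particle component.} Write $u_\theta=u\,e^{i\phi_\theta}$ with the real-valued phase $\phi_\theta(x)=-\theta A_g(x)+\theta^2\,\Im\int_{\mathds{R}^3}F(\lvert u\rvert^2)(k)\,\lvert g(k)\rvert^2\,e^{ik\cdot x}\,dk$. Since $\lvert e^{i\phi_\theta}\rvert=1$ the $L^2$ norm is preserved, so it suffices to check that $e^{i\phi_\theta}$ is a multiplier on $H^s$; for integer $s$ this follows from the Leibniz and chain rules once $\phi_\theta\in W^{s,\infty}$ is known. For $\lvert\beta\rvert\le s$ one estimates $\lvert\partial^\beta A_g(x)\rvert\le\lVert\omega^{\lvert\beta\rvert-\frac{1}{2}-\varsigma}\rVert_\infty\,\lVert\omega^{1/2}g\rVert_2\,\lVert\omega^{\varsigma}\alpha\rVert_2$, which is finite because $\lvert\beta\rvert-\tfrac{1}{2}-\varsigma\le s-\tfrac{1}{2}-\varsigma\le0$ (as $\varsigma\ge s-\tfrac{1}{2}$) and $\omega\ge m_0>0$; similarly, bounding $\lvert F(\lvert u\rvert^2)(k)\rvert\le\lVert u\rVert_2^2$ and splitting the weight $\lvert k\rvert^{\lvert\beta\rvert}$ between $\lvert g\rvert^2$ (using $\omega\lvert g\rvert^2\in L^1$, which absorbs one unit of weight) and $F(\lvert u\rvert^2)$ (using $\partial^\gamma(\lvert u\rvert^2)\in L^1$ for $\lvert\gamma\rvert\le s$, from $u\in H^s$ by Leibniz and Cauchy--Schwarz), one gets the analogous bound for the second term of $\phi_\theta$. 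Thus $\phi_\theta\in W^{s,\infty}$, hence $e^{i\phi_\theta}\in W^{s,\infty}$, and $\lVert u_\theta\rVert_{H^s}\lesssim\lVert u\rVert_{H^s}$; non-integer $s$ follows by interpolation over the admissible range of $\varsigma$.

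\emph{Bijectivity.} The family $\theta\mapsto\mathbf{D}_g(\theta)$ is the flow of the time-independent functional $\mathscr{D}_g$, and Proposition~\ref{prop:5} gives uniqueness of trajectories in $\mathcal{C}^0(\mathds{R},L^2\oplus L^2)$; hence $\mathbf{D}_g(\theta_1)\circ\mathbf{D}_g(\theta_2)=\mathbf{D}_g(\theta_1+\theta_2)$ and $\mathbf{D}_g(0)=\mathrm{Id}$ (alternatively both are read off directly from~\eqref{eq:48}, using $\lvert u_\theta\rvert=\lvert u\rvert$, hence $F(\lvert u_\theta\rvert^2)=F(\lvert u\rvert^2)$). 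Therefore $\bigl(\mathbf{D}_g(\theta)\bigr)^{-1}=\mathbf{D}_g(-\theta)$; since the preservation step holds for every parameter value, the restriction of $\mathbf{D}_g(\theta)$ to $H^s\oplus\mathcal{F}H^{\varsigma}$ is a bijection of that space with inverse $\mathbf{D}_g(-\theta)$, and it carries the symplectic structure inherited from $L^2\oplus L^2$, so it is a Hamiltonian flow on $H^s\oplus\mathcal{F}H^{\varsigma}$. The one genuinely delicate point is the $H^s$-multiplier bound for $e^{i\phi_\theta}$: the assumption $g\in\mathcal{F}H^{1/2}$ is used there precisely to move half a derivative of frequency weight onto $g$, leaving a remainder controllable by $\alpha\in\mathcal{F}H^{\varsigma}$ and by the Sobolev regularity of $u$; the rest is bookkeeping with~\eqref{eq:48} and Lemma~\ref{lemma:6}.
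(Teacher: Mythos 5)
Your proposal is correct and takes essentially the route the paper indicates (the paper supplies no explicit proof for Proposition~\ref{prop:6}, only the remark that it is ``straightforward'' from the explicit formula~\eqref{eq:48} and Lemma~\ref{lemma:6}). The $\alpha$-component via estimate~\eqref{eq:47} with $v=0$ is exactly the intended argument, and your $W^{s,\infty}$ multiplier estimate for the phase $\phi_\theta$ is the right way to fill in the $u$-component; in particular, the splitting $\omega^{|\beta|}=\omega^{|\beta|-\frac12-\varsigma}\cdot\omega^{1/2}\cdot\omega^{\varsigma}$ is precisely where $g\in\mathcal{F}H^{1/2}$ and the lower constraint $\varsigma\ge s-\tfrac12$ are used, while the upper constraint $\varsigma\le s+\tfrac12$ shows up in the field component.

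Two small remarks. First, in the case actually used later ($g$ even, as in Lemma~\ref{lemma:8}), the quadratic-in-$\theta$ phase term vanishes identically, so the multiplier bound reduces to controlling $A_g$ alone; for general $g$, your ``read off directly from~\eqref{eq:48}'' route to the group law is delicate (the $\theta^2$ correction in $A_g$ after one step does not cancel naively), so the flow/uniqueness argument via Proposition~\ref{prop:5} is the robust one and you are right to lead with it. Second, the claim that the non-integer $s$ case ``follows by interpolation'' needs a touch more care than for the bilinear estimates of Lemma~\ref{lemma:6}, because the admissible window $[s-\tfrac12,s+\tfrac12]$ moves with $s$: interpolating the multiplier bound between integer exponents would require $\varsigma\ge\lceil s\rceil-\tfrac12$, which is strictly stronger than the hypothesis at the endpoint $\varsigma=s-\tfrac12$. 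A direct fractional-order (Hölder--Zygmund) estimate for $\phi_\theta$, mirroring your integer-order computation with $|k|^s$ in place of $|k|^{|\beta|}$, closes this cleanly without interpolation. Neither point affects the overall correctness of the approach.
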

\begin{corollary}
  \label{cor:1}
  Let $s\geq 0$, $s-\frac{1}{2}\leq \varsigma \leq s+\frac{1}{2}$,
  $\theta \in \mathds{R}$, and
  $g\in \mathcal{F }H^{\frac{1}{2}} (\mathds{R}^3 )$. Then there
  exists a constant $C(g,\theta)>0$ and a
  $\lambda(s) \in \mathds{N}^{*}$ such that for any
  $(u, \alpha) \in H^{s}\oplus\mathcal{F }H^{\varsigma }$:
  \begin{equation}
    \label{eq:63}
    \lVert \mathbf{D}_g(\theta ) (u, \alpha )  \rVert_{H^s\oplus \mathcal{F}H^{\varsigma }}^{}\leq C(g,\theta) \lVert (u, \alpha)  \rVert_{H^s\oplus \mathcal{F}H^{\varsigma }}^{\lambda(s) }\; .
  \end{equation}
\end{corollary}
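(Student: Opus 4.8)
\emph{Proof proposal.} The plan is to work directly from the explicit solution formula~\eqref{eq:48} for $\mathbf{D}_g(\theta)(u,\alpha)=(u_\theta,\alpha_\theta)$, treating the two components separately and using the equivalence of the $\mathcal{F}H^\varsigma$-norm with $\lVert(1+\omega^\varsigma)\,\cdot\,\rVert_2$. The field component is the easy one: since $\alpha_\theta=\alpha-i\theta\,gF(\lvert u\rvert^2)$, one has $\lVert(1+\omega^\varsigma)\alpha_\theta\rVert_2\le\lVert(1+\omega^\varsigma)\alpha\rVert_2+\lvert\theta\rvert\,\lVert(1+\omega^\varsigma)gF(\lvert u\rvert^2)\rVert_2$, and the last term is exactly the left-hand side of~\eqref{eq:47} in Lemma~\ref{lemma:6} with $v=0$ (so that $(u-v)\bar v+(\bar u-\bar v)u=\lvert u\rvert^2$), which bounds it by $C_\varsigma\lVert(1+\omega^{1/2})g\rVert_2\lVert u\rVert_{H^s}^2$; here the hypothesis $\varsigma\le s+\tfrac12$ is what makes the needed regularity of $\lvert u\rvert^2$ available. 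So the $\alpha$-part already obeys a polynomial bound of degree two in $\lVert(u,\alpha)\rVert_{H^s\oplus\mathcal{F}H^\varsigma}$ with coefficients depending only on $g$ and $\theta$.

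The substance is the Schrödinger component $u_\theta=u\,e^{i\phi}$, where by~\eqref{eq:48} the real phase is $\phi(x)=-\theta A_g(x)+\theta^2\,\Im\!\int_{\mathds{R}^3}F(\lvert u\rvert^2)(k)\lvert g(k)\rvert^2e^{ik\cdot x}\,dk$, a sum of a term linear in $\alpha$ (carrying one factor of $g$) and a term quadratic in $u$ (carrying two factors of $g$, namely the convolution of $\lvert u\rvert^2$ with $\mathcal{F}^{-1}(\lvert g\rvert^2)$). Since $\lvert e^{i\phi}\rvert\equiv1$ the $L^2$ part is trivial ($\lVert u_\theta\rVert_2=\lVert u\rVert_2$); for integer $s$ I would expand $\partial^{\beta}(u\,e^{i\phi})$, $\lvert\beta\rvert\le s$, by the Leibniz and Faà di Bruno rules, writing it as $e^{i\phi}$ times a finite sum of terms $(\partial^{\beta_0}u)\prod_{l=1}^m(\partial^{\beta_l}\phi)$ with $\beta_0+\cdots+\beta_m=\beta$ and $\lvert\beta_l\rvert\ge1$ for $l\ge1$. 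Each such term is multilinear in $u,\bar u,\alpha,\bar\alpha,g,\bar g$, and I would estimate its $L^2$ norm by distributing the derivative budget: the single $u$-factor $\partial^{\beta_0}u$ absorbs up to $s$ derivatives from $u\in H^s$; each $\alpha$-factor absorbs up to $\varsigma$ derivatives from $\alpha\in\mathcal{F}H^\varsigma$; each $g$-factor up to $\tfrac12$ from $g\in\mathcal{F}H^{1/2}$; and the leftover powers of $\omega$ on each $g\bar\alpha$ block are then $\le0$ — precisely because $\lvert\beta_l\rvert\le s$ together with $\varsigma\ge s-\tfrac12$ — hence bounded by a power of $m_0$ using $\omega\ge m_0>0$. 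For the quadratic-in-$u$ part of $\phi$ one routes at most one derivative onto the kernel $\mathcal{F}^{-1}(\lvert g\rvert^2)$ and the rest onto $\lvert u\rvert^2$ (which can carry up to $2s$), using Young's inequality for the convolution; this is really the same game as in the proof of Lemma~\ref{lemma:6}. Collecting, $\lVert u_\theta\rVert_{H^s}$ is bounded by $\lVert u\rVert_{H^s}$ times a polynomial in $\lVert u\rVert_{H^s}$ and $\lVert(1+\omega^\varsigma)\alpha\rVert_2$ whose coefficients depend only on $\lVert(1+\omega^{1/2})g\rVert_2$ and $\theta$; its degree is controlled by the maximal number of $\partial\phi$-factors occurring when $e^{i\phi}$ is differentiated up to order $s$, each contributing degree $1$ or $2$ in $(u,\alpha)$, so the overall exponent is a positive integer $\lambda(s)=O(s)$, which is why it grows with $s$.

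Adding the two contributions and bounding the resulting polynomial by $C(g,\theta)\lVert(u,\alpha)\rVert_{H^s\oplus\mathcal{F}H^\varsigma}^{\lambda(s)}$ yields~\eqref{eq:63}; for non-integer $s$ one concludes by interpolation between the two neighbouring integers, exactly as in the proof of Lemma~\ref{lemma:6} (and Proposition~\ref{prop:6} already guarantees that the flow preserves $H^s\oplus\mathcal{F}H^\varsigma$, so the left-hand side is finite to begin with). I expect the main obstacle to be organising the derivative bookkeeping uniformly over the whole range $s\ge0$: for $s<3/2$ the factor $u$ (or $\partial^{\beta_0}u$) is not in $L^\infty$, so in every term of the Leibniz expansion the $u$-factor must be kept in $L^2$ and all $\partial^{\beta_l}\phi$-factors in $L^\infty$, which forces one to verify — using only $g\in\mathcal{F}H^{1/2}$ and the two-sided constraint $s-\tfrac12\le\varsigma\le s+\tfrac12$ — that every $\partial^{\beta_l}\phi$ of order $\le s$ is indeed bounded. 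That verification, together with the accompanying count of the nonlinearity degree, is the heart of the argument and is exactly what pins down the value of $\lambda(s)$.
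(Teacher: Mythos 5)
Your argument is precisely what the paper gestures at: it gives no explicit proof of this corollary, only remarking just before Proposition~\ref{prop:6} that such properties are ``straightforward'' consequences of the explicit solution formula~\eqref{eq:48} and Lemma~\ref{lemma:6}. You deploy exactly that machinery. The reading of~\eqref{eq:47} with $v=0$ (so that $(u-v)\bar v+(\bar u-\bar v)u=\lvert u\rvert^2$) gives the correct quadratic bound on the field component, and the derivative bookkeeping on $u_\theta=u\,e^{i\phi}$ --- Leibniz/Fa\`a di Bruno for integer $s$, the observation that the two-sided constraint $s-\tfrac12\le\varsigma\le s+\tfrac12$ is exactly what puts each $\partial^{\beta_l}\phi$ of order $\le s$ into $L^\infty$ via $\omega^{\lvert\beta_l\rvert}g\alpha\in L^1$, and interpolation for non-integer $s$ --- is the natural way to make ``straightforward'' precise, and follows the proof pattern of Lemma~\ref{lemma:6} itself. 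Your attention to the regime $s<3/2$, where $u$ is not in $L^\infty$ and the $\partial^{\beta_l}\phi$ factors must carry the $L^\infty$ bound, is the right thing to single out.

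One caveat worth flagging, which is present in the paper's statement and not introduced by you: what the Leibniz expansion actually delivers is a polynomial bound with terms of several degrees, whereas~\eqref{eq:63} asserts a single monomial $C\lVert(u,\alpha)\rVert^{\lambda(s)}$. Already for $s=\varsigma=0$, the bound $\lVert\alpha_\theta\rVert_2\le\lVert\alpha\rVert_2+\lvert\theta\rvert\lVert g\rVert_2\lVert u\rVert_2^2$ is not of the form $C\lVert(u,\alpha)\rVert^\lambda$ for any fixed $\lambda\in\mathds{N}^*$ uniformly in $(u,\alpha)$: $\lambda=1$ fails for large $u$, $\lambda=2$ fails for small $(u,\alpha)$. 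The intended content is surely the polynomial estimate you derived, or equivalently $C\bigl(1+\lVert(u,\alpha)\rVert\bigr)^{\lambda(s)}$, and the way the corollary is used downstream (Corollary~\ref{cor:2}, the global-existence discussion in Section~\ref{sec:glob-exist-results}) is insensitive to the distinction. But the last line of your argument --- ``bounding the resulting polynomial by $C\lVert(u,\alpha)\rVert^{\lambda(s)}$'' --- would need to be softened to one of these forms to be literally valid.
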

Using the positivity of both $-\Delta $ and $V$, and
Corollary~\ref{cor:1} one also obtains the following result.
\begin{corollary}
  \label{cor:2}
  Let $V\in L^2_{loc} (\mathds{R}^d, \mathds{R}_+ )$; and let
  $Q(-\Delta +V)\subset L^2 (\mathds{R}^3 )$ be the form domain of
  $-\Delta +V$. Then for any
  $\frac{1}{2}\leq \varsigma \leq \frac{3}{2}$, and
  $g\in \mathcal{F }H^{\frac{1}{2}} (\mathds{R}^3 )$:
  \begin{equation*}
    \mathbf{D}_g: \mathds{R}\times \bigl(Q(-\Delta +V)\oplus \mathcal{F}H^{\varsigma }\bigr)\to Q(-\Delta +V)\oplus \mathcal{F}H^{\varsigma }\; .
  \end{equation*}
\end{corollary}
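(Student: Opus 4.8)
The proof rests on combining Proposition~\ref{prop:6} with the modulus–preserving property of the classical dressing already recorded after Proposition~\ref{prop:5}. First I would recall the description of the form domain. Since $-\Delta\ge 0$ and the multiplication operator $V\ge 0$, the self-adjoint operator $-\Delta+V$ is the one associated with the sum of the two nonnegative closed quadratic forms $u\mapsto\lVert\nabla u\rVert_2^2$ (form domain $H^1(\mathds{R}^3)$) and $u\mapsto\int_{\mathds{R}^3}V\lvert u\rvert^2\,dx$ (form domain $Q(V):=\{u\in L^2(\mathds{R}^3):\sqrt V\,u\in L^2(\mathds{R}^3)\}$); the sum of nonnegative closed forms is closed on the intersection of the domains, so $Q(-\Delta+V)=H^1(\mathds{R}^3)\cap Q(V)$. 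Writing $(u_\theta,\alpha_\theta):=\mathbf{D}_g(\theta)(u,\alpha)$ for $(u,\alpha)\in Q(-\Delta+V)\oplus\mathcal{F}H^{\varsigma}$ with $\tfrac12\le\varsigma\le\tfrac32$, it therefore suffices to establish (a) $(u_\theta,\alpha_\theta)\in H^1(\mathds{R}^3)\oplus\mathcal{F}H^{\varsigma}(\mathds{R}^3)$ and (b) $\sqrt V\,u_\theta\in L^2(\mathds{R}^3)$.

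Claim (a) is exactly Proposition~\ref{prop:6} with $s=1$: the constraint $s-\tfrac12\le\varsigma\le s+\tfrac12$ there becomes $\tfrac12\le\varsigma\le\tfrac32$, and $g\in\mathcal{F}H^{\frac12}(\mathds{R}^3)$ by hypothesis, so $\mathbf{D}_g$ maps $H^1\oplus\mathcal{F}H^{\varsigma}$ into itself, and $Q(-\Delta+V)\oplus\mathcal{F}H^{\varsigma}\subseteq H^1\oplus\mathcal{F}H^{\varsigma}$. Claim (b) is where the structure of the dressing enters, and it is just the observation made after the proof of Proposition~\ref{prop:5}: from the explicit formula~\eqref{eq:48} one has $u_\theta=u\,e^{i\phi_\theta}$ with $\phi_\theta$ real-valued (the term $-\theta A_g(x)$ is real because $A_g(x)\in\mathds{R}$ for all $x$, and the term $\theta^2\Im\int_{\mathds{R}^3}F(\lvert u\rvert^2)(k)\lvert g(k)\rvert^2 e^{ik\cdot x}\,dk$ is real, being an imaginary part), hence $\lvert u_\theta\rvert=\lvert u\rvert$ pointwise. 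Consequently $\int_{\mathds{R}^3}V\lvert u_\theta\rvert^2\,dx=\int_{\mathds{R}^3}V\lvert u\rvert^2\,dx<\infty$ since $u\in Q(V)$, which gives (b).

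Combining (a) and (b) yields $u_\theta\in H^1(\mathds{R}^3)\cap Q(V)=Q(-\Delta+V)$ and $\alpha_\theta\in\mathcal{F}H^{\varsigma}(\mathds{R}^3)$, which is the asserted invariance. I do not expect a genuine obstacle here: the only point requiring a word of justification is the identity $Q(-\Delta+V)=H^1\cap Q(V)$, which is precisely where the nonnegativity of both $-\Delta$ and $V$ is used (so that the form sum is automatically closed on the intersection of the two form domains); this is the ``positivity of both $-\Delta$ and $V$'' invoked in the statement. Everything else is a direct application of Proposition~\ref{prop:6} together with the already-established fact that the classical dressing preserves the pointwise modulus of the Schr\"odinger field.
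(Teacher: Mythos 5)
Your proof is correct and matches the route the paper indicates (the paper does not spell out the argument, but its one-line hint ``positivity of both $-\Delta$ and $V$, and Corollary~\ref{cor:1}'' points at exactly the decomposition $Q(-\Delta+V)=H^1\cap Q(V)$ together with the $H^1\oplus\mathcal{F}H^{\varsigma}$-invariance of $\mathbf{D}_g$ and the modulus-preservation $\lvert u_\theta\rvert=\lvert u\rvert$ that you use). The only cosmetic difference is that you invoke Proposition~\ref{prop:6} for the $H^1$-invariance where the paper cites the quantitative version Corollary~\ref{cor:1}; either suffices.
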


\subsection{Classical Hamiltonians.}
\label{sec:class-hamilt}

In this section we define the classical Hamiltonian functionals that
generate the undressed and dressed dynamics on
$L^2 \oplus L^2 $. Then we show that
they are related by a suitable classical dressing: the quantum
procedure described in Section~\ref{sec:renorm-hamilt} is reproduced,
in simplified terms, on the classical level.

\begin{definition}[$\mathscr{E}$, $\hat{\mathscr{E}}$]\label{def:9}
  The undressed Hamiltonian (or energy) $\mathscr{E}$ is defined as
  the following real functional on
  $L^2 (\mathds{R}^3 )\oplus L^2 (\mathds{R}^3 )$:
  \begin{equation*}
    \label{eq:49}
    \mathscr{E}(u, \alpha ):= \Bigl\langle u  , \Bigl(-\tfrac{\Delta }{2M}+V\Bigr)u \Bigr\rangle_2+\langle \alpha   , \omega \alpha  \rangle_2+\tfrac{1}{(2\pi )^{3/2}}\int_{\mathds{R}^6}^{}\tfrac{1}{\sqrt{2\omega (k)}}\Bigl(\bar{\alpha}(k)e^{-ik\cdot x}+ \alpha (k)e^{ik\cdot x}\Bigr)\lvert u(x)  \rvert_{}^2  dxdk \; .
  \end{equation*}
  We denote by $\mathscr{E}_0$ the free part of the classical energy,
  namely
  \begin{equation*}\label{eq:110}
    \mathscr{E}_0(u, \alpha )= \Bigl\langle u  , \Bigl(-\tfrac{\Delta}{2M}+V\Bigr)u \Bigr\rangle_2+\langle \alpha   , \omega \alpha  \rangle_2\; .
  \end{equation*}
  Let
  $\chi _{\sigma _0}\in L^{\infty } (\mathds{R}^3 )\cap \mathcal{F
  }H^{-1/2}(\mathds{R}^3)$
  such that $\chi_{\sigma _0}(k)=\chi _{\sigma _0}(-k)$ for any
  $k\in \mathds{R}^3$. Then (again as a real functional on
  $L^2\oplus L^2$) the dressed Hamiltonian $\hat{\mathscr{E}}$ is
  defined as\footnote{We recall that:
    $g_{\infty
    }(k)=-i\frac{(2\pi)^{-3/2}}{\sqrt{2\omega(k)}}\frac{1-\chi_{\sigma_0}(k)}{\frac{k^2}{2M}+\omega(k)}$
    ;
    $V_{\infty }(x)=2\Re \int_{\mathds{R}^3}^{}\omega(k)\lvert g_{\infty
    }(k)\rvert_{}^2e^{-ik\cdot x} dk
    -4\Im\int_{\mathds{R}^3}^{}\frac{\bar{g}_{\infty
      }(k)}{(2\pi)^{3/2}}\frac{1}{\sqrt{2\omega(k)}}e^{-ik\cdot x} dk$
    .  Also, $D_{x}=-i\nabla_{x} $ ;
    $r_{\infty}(k)=-ik g_{\infty }(k)$ .}:
  \begin{equation*}
    \label{eq:50}
    \begin{split}
      \hat{\mathscr{E}}(u, \alpha ):=\Bigl\langle u  , \Bigl(-\tfrac{\Delta }{2M}+V\Bigr)u \Bigr\rangle_2+\langle \alpha   , \omega \alpha  \rangle_2 +\tfrac{1}{(2\pi )^{3/2}}\int_{\mathds{R}^6}^{}\tfrac{\chi _{\sigma _0}(k)}{\sqrt{2\omega (k)}}\Bigl(\bar{\alpha}(k)e^{-ik\cdot x}+ \alpha (k)e^{ik\cdot x}\Bigr)\lvert u(x)  \rvert_{}^2  dxdk \\+\tfrac{1}{2M}\int_{\mathds{R}^9}^{}\Bigl(r_{\infty }(k)\bar{\alpha}(k)e^{-ik\cdot x}+\bar{r}_{\infty }(k)\alpha (k)e^{ik\cdot x} \Bigr) \Bigl(r_{\infty }(l)\bar{\alpha}(l)e^{-il\cdot x}+\bar{r}_{\infty }(l)\alpha (l)e^{il\cdot x} \Bigr)  \lvert u(x)  \rvert_{}^2dxdkdl\\-\tfrac{2}{M}\Re\int_{\mathds{R}^6}^{}r_{\infty }(k) \bar{\alpha} (k)e^{-ik\cdot x}\bar{u}(x)D_xu(x)  dxdk +\tfrac{1}{2}\int_{\mathds{R}^6}^{}V_{\infty}(x-y)\lvert u(x)  \rvert_{}^2\lvert u(y)  \rvert_{}^2  dxdy\; .
    \end{split}
  \end{equation*}
\end{definition}

\begin{remark}
  \label{rem:4}
  We denote by $D(\mathscr{E})\subset L^2 \oplus L^2 $ the domain of
  definition of $\mathscr{E}$, and by
  $D(\hat{\mathscr{E}})\subset L^2 \oplus L^2 $ the domain of
  definition of $\hat{\mathscr{E}}$. We have that
  $D(\mathscr{E})\supset \mathcal{C}^{\infty }_0\oplus
  \mathcal{C}^{\infty }_0$
  and
  $D(\hat{\mathscr{E}})\supset \mathcal{C}^{\infty }_0\oplus
  \mathcal{C}^{\infty }_0$.
  Therefore both $\mathscr{E}$ and $\hat{\mathscr{E}}$ are densely
  defined, and $D(\mathscr{E})\cap D(\hat{\mathscr{E}})$ is dense in
  $L^2 \oplus L^2 $.
\end{remark}

We are interested in the action of $\mathscr{E}$ and
$\hat{\mathscr{E}}$ on $H^1\oplus \mathcal{F }H^{\frac{1}{2}}$, since
this emerges naturally as the energy space of the system, at least
when $V=0$.
\begin{lemma}
  \label{lemma:7}
  Let $\theta \in \mathds{R}$,
  $g\in \mathcal{F }H^{\frac{1}{2}} (\mathds{R}^3 )$. Then for
  any
  $u\in Q(V)\cap H^1(\mathds{R}^3)$, and
  $\alpha \in \mathcal{F }H^{\frac{1}{2}}(\mathds{R}^3)$:
  $\mathbf{D}_g(\theta )(u, \alpha )\in D(\mathscr{E})$.
\end{lemma}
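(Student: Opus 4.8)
The plan is to split the statement into two independent claims and chain them: first, that the classical dressing flow $\mathbf{D}_g(\theta)$ maps the space in which the data lie into $Q(-\tfrac{\Delta}{2M}+V)\oplus\mathcal{F}H^{\frac{1}{2}}$; and second, that this space is contained in $D(\mathscr{E})$. For the first claim I would note that, because $V\geq0$ and $M>0$, one has $Q(V)\cap H^1(\mathds{R}^3)=Q(-\tfrac{\Delta}{2M}+V)$ (form sum of two closed nonnegative forms), so the hypotheses put $(u,\alpha)$ in $Q(-\tfrac{\Delta}{2M}+V)\oplus\mathcal{F}H^{\frac{1}{2}}$. Then Corollary~\ref{cor:2}, applied with $\varsigma=\tfrac12$ (which lies in $[\tfrac12,\tfrac32]$) and with $g\in\mathcal{F}H^{\frac{1}{2}}(\mathds{R}^3)$, gives $\mathbf{D}_g(\theta)(u,\alpha)\in Q(-\tfrac{\Delta}{2M}+V)\oplus\mathcal{F}H^{\frac{1}{2}}$ for every $\theta\in\mathds{R}$.

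For the second claim I would simply verify that each of the three terms defining $\mathscr{E}$ is finite on $(v,\beta)\in Q(-\tfrac{\Delta}{2M}+V)\oplus\mathcal{F}H^{\frac{1}{2}}$. The free parts $\langle v,(-\tfrac{\Delta}{2M}+V)v\rangle_2$ and $\langle\beta,\omega\beta\rangle_2=\lVert\omega^{1/2}\beta\rVert_2^2$ are finite by the definitions of the two spaces, using $\omega^{1/2}\asymp(1+\lvert k\rvert^2)^{1/4}$ since $m_0>0$. For the interaction term I would carry out the $x$-integration to rewrite it as $\sqrt2\,\Re\int_{\mathds{R}^3}\omega(k)^{-1/2}\bar\beta(k)\,\mathcal{F}(\lvert v\rvert^2)(k)\,dk$, bound it by Cauchy--Schwarz by $\sqrt2\,\lVert\omega^{-1/2}\beta\rVert_2\,\lVert\,\lvert v\rvert^2\,\rVert_2\leq\sqrt2\,m_0^{-1/2}\lVert\beta\rVert_2\,\lVert v\rVert_4^2$, and invoke the Sobolev embedding $H^1(\mathds{R}^3)\hookrightarrow L^4(\mathds{R}^3)$. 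Combining the two claims yields $\mathbf{D}_g(\theta)(u,\alpha)\in D(\mathscr{E})$.

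Alternatively, one can bypass Corollary~\ref{cor:2} and argue straight from the explicit formula~\eqref{eq:48}: since $\lvert u_\theta\rvert=\lvert u\rvert$, the potential term $\langle u_\theta,Vu_\theta\rangle_2=\langle u,Vu\rangle_2$ and the $L^4$-integrability of $u_\theta$ are immediate, and $\alpha_\theta\in\mathcal{F}H^{\frac{1}{2}}$ follows from $\lVert\omega^{1/2}gF(\lvert u\rvert^2)\rVert_2\leq\lVert u\rVert_2^2\,\lVert\omega^{1/2}g\rVert_2$, an estimate of the type of~\eqref{eq:47}. On this route the only delicate point—and the main obstacle—is the $H^1$-regularity of $u_\theta=u\,e^{i\phi}$ with $\phi(x)=-\theta A_g(x)+\theta^2\Im\int_{\mathds{R}^3}F(\lvert u\rvert^2)(k)\lvert g(k)\rvert^2e^{ik\cdot x}\,dk$; since $\nabla u_\theta=e^{i\phi}(\nabla u+iu\nabla\phi)$, one needs $\nabla\phi\in L^\infty$, and the bounds $\lVert\nabla A_g\rVert_\infty\leq2\lVert\,\lvert k\rvert^{1/2}g\rVert_2\,\lVert\,\lvert k\rvert^{1/2}\alpha\rVert_2$ together with the analogous estimate for the second summand of $\phi$ use exactly the assumptions $g,\alpha\in\mathcal{F}H^{\frac{1}{2}}$. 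This $H^1$ bound (equivalently, the input feeding Corollary~\ref{cor:2}) is the heart of the matter; the remaining estimates are routine.
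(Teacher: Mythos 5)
Your proof is correct, and structurally it mirrors the paper's: first show that the classical dressing maps the energy space $Q(-\Delta+V)\oplus\mathcal{F}H^{1/2}$ into itself, then show that this space sits inside $D(\mathscr{E})$. Citing Corollary~\ref{cor:2} rather than combining Proposition~\ref{prop:6} with the explicit identity $\langle u_{\theta},Vu_{\theta}\rangle_2=\langle u,Vu\rangle_2$ (as the paper does) is an inessential repackaging, since Corollary~\ref{cor:2} is itself deduced from exactly those two facts.

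The one genuine difference is the bound on the interaction term. The paper integrates by parts in $x$ to produce $D_x\lvert u\rvert^2\in L^1$, then pairs $\tfrac{1}{\lvert k\rvert\omega(k)}\in L^2$ against $\omega^{1/2}\alpha\in L^2$, yielding a bound $\lesssim \lVert\omega^{1/2}\alpha\rVert_2\lVert u\rVert_2\lVert u\rVert_{H^1}$. You instead apply Plancherel to rewrite the term as $\int\omega^{-1/2}\bar\alpha\,\mathcal{F}(\lvert u\rvert^2)\,dk$ and use $\lVert\lvert u\rvert^2\rVert_2=\lVert u\rVert_4^2$ together with the Sobolev embedding $H^1(\mathds{R}^3)\hookrightarrow L^4(\mathds{R}^3)$, getting $\lesssim\lVert\alpha\rVert_2\lVert u\rVert_4^2$. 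Your estimate is cleaner and demands less of $\alpha$ (only $L^2$, not $\mathcal{F}H^{1/2}$), at the price of invoking Sobolev; the paper's avoids Sobolev by exploiting the $\mathcal{F}H^{1/2}$ regularity of $\alpha$ and the $L^2$-integrability of $\frac{1}{\lvert k\rvert\omega(k)}$ in three dimensions. Both land in the right place. Your alternative, explicit route via~\eqref{eq:48} with the $\nabla\phi\in L^{\infty}$ bound is also sound and is in effect a self-contained re-derivation of the relevant part of Proposition~\ref{prop:6}/Corollary~\ref{cor:1}.
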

\begin{proof}
  Let $u \in Q(V)$, and $\alpha \in L^2 (\mathds{R}^3 )$. Then
  \begin{equation*}
    \langle u_{\theta}  , V u_{\theta} \rangle_2=\langle u  , V u \rangle_2\; ;
  \end{equation*}
  where $u_{\theta}$ is defined in Equation~\eqref{eq:48}, and it is
  the first component of $\mathbf{D}_g(\theta )(u, \alpha )$. Also,
  for any $(u, \alpha) \in H^1\oplus \mathcal{F }H^{\frac{1}{2}}$ we
  have that:
  \begin{equation*}
    \begin{split}
      \Bigl\lvert\int_{\mathds{R}^6}^{}\lvert u(x)  \rvert_{}^2 \frac{1}{\sqrt{\omega (k)}}\alpha (k)e^{ik\cdot x}  dxdk  \Bigr\rvert_{}^{}=C\Bigl\lvert \int_{\mathds{R}^3}^{}\frac{1}{\lvert k  \rvert_{}^{}\omega (k)}\bigl(\omega^{1/2} \alpha \bigr)(k)\Bigl(\int_{\mathds{R}^3}^{}\bigl(D_x\lvert u(x)  \rvert_{}^2\bigr)e^{ik\cdot x}  dx\Bigr)  dk \Bigr\rvert_{}^{}\\
      \leq 2C \Bigl\lVert \frac{1}{\lvert k  \rvert_{}^{}\omega (k)}  \Bigr\rVert_2^{}\lVert \omega^{1/2} \alpha  \rVert_2^{}\lVert u  \rVert_2^{}\lVert u  \rVert_{H^1}^{}<+\infty \; .
    \end{split}
  \end{equation*}
  The result then follows since $\mathbf{D}_g(\theta )$ maps
  $H^1\oplus \mathcal{F }H^{\frac{1}{2}}$ into itself by
  Proposition~\ref{prop:6}.
\end{proof}

The functional $\mathscr{E}$ is independent of $g_{\infty }$, while
$\hat{\mathscr{E}}$ depends on it. In addition, we know that
$g_{\infty }$ has been fixed, at the quantum level, to renormalize the
Nelson Hamiltonian, and it is the function that appears in the
generator of the dressing transformation $U_{\infty }$. Hence, since
we are establishing a correspondence between the classical and quantum
theories, we expect it to be the function that appears in the
classical dressing too. Two features of $g_{\infty }$ are very
important in the classical setting: the first is that
$g_{\infty }\in \mathcal{F } H^{\frac{1}{2}} (\mathds{R}^3)$ for any
$\chi _{\sigma _0}\in L^{\infty } \cap \mathcal{F }H^{-\frac{1}{2}}$;
the second is that it is an even function, i.e.
$g_{\infty }(k)=g_{\infty }(-k)$ for any $k\in \mathds{R}^3$. Using
the first fact, one shows that $\mathbf{D}_{g_{\infty }}(\cdot )$ maps
the energy space into itself (and that will be convenient
when discussing global solutions); using the second property we can
simplify the explicit form of $\mathbf{D}_{g_{\infty }}(\cdot )$.
\begin{lemma}
  \label{lemma:8}
  Let $\theta \in \mathds{R}$, and $g\in L^2 (\mathds{R}^3 )$. If $g$
  is an even or odd function, then the map $\mathbf{D}_g(\theta )$
  defined by \eqref{eq:48} becomes:
  \begin{equation}
    \label{eq:51}
    \mathbf{D}_g(\theta )(u(x),\alpha(k) )= \Bigl(\:u(x) e^{-i\theta A_g(x)}\:,\: \alpha(k) -i\theta g(k)F(\lvert u  \rvert_{}^2)(k)\:\Bigr)\; .
  \end{equation}
\end{lemma}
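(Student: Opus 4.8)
The plan is to obtain the statement by simplifying the explicit formula~\eqref{eq:48} for $\mathbf{D}_g(\theta)$. The second component of $\mathbf{D}_g(\theta)(u,\alpha)$ in~\eqref{eq:48} is already exactly $\alpha(k)-i\theta g(k)F(\lvert u\rvert^2)(k)$, so there is nothing to do there. For the first component~\eqref{eq:48} reads
\begin{equation*}
  u_\theta(x)=u(x)\,\exp\Bigl\{-i\theta A_g(x)+i\theta^2\,\Im\int_{\mathds{R}^3}F(\lvert u\rvert^2)(k)\,\lvert g(k)\rvert^2 e^{ik\cdot x}\,dk\Bigr\}\,,
\end{equation*}
so the desired identity~\eqref{eq:51} is equivalent to showing that
\begin{equation*}
  I(x):=\int_{\mathds{R}^3}F(\lvert u\rvert^2)(k)\,\lvert g(k)\rvert^2 e^{ik\cdot x}\,dk
\end{equation*}
is real for every $x\in\mathds{R}^3$ whenever $g$ is even or odd; then the $\theta^2$-phase disappears and $u_\theta(x)=u(x)e^{-i\theta A_g(x)}$, which is precisely~\eqref{eq:51}.

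The argument I would use rests on two elementary symmetries. First, $\lvert u\rvert^2$ is real-valued, so by~\eqref{eq:43} its Fourier transform satisfies the Hermitian relation $\overline{F(\lvert u\rvert^2)(k)}=F(\lvert u\rvert^2)(-k)$. Second, if $g$ is even or odd then $\lvert g(\cdot)\rvert^2$ is even, i.e. $\lvert g(-k)\rvert^2=\lvert g(k)\rvert^2$. Conjugating the integral, inserting the Hermitian relation, and then performing the change of variables $k\mapsto-k$ (which preserves Lebesgue measure) gives
\begin{equation*}
  \overline{I(x)}=\int_{\mathds{R}^3}F(\lvert u\rvert^2)(-k)\,\lvert g(k)\rvert^2 e^{-ik\cdot x}\,dk=\int_{\mathds{R}^3}F(\lvert u\rvert^2)(k)\,\lvert g(-k)\rvert^2 e^{ik\cdot x}\,dk=I(x)\,,
\end{equation*}
the last equality being the evenness of $\lvert g\rvert^2$. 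Hence $I(x)\in\mathds{R}$, so $\Im I(x)=0$, and~\eqref{eq:51} follows.

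The only technical point to check is that $I(x)$ is well defined and that the change of variables above is legitimate; this is immediate, since $u\in L^2(\mathds{R}^3)$ gives $\lvert u\rvert^2\in L^1$ and hence $F(\lvert u\rvert^2)\in L^\infty$ (in fact in $\mathcal{C}_0$), while $g\in L^2(\mathds{R}^3)$ gives $\lvert g\rvert^2\in L^1$, so the integrand lies in $L^1(\mathds{R}^3)$ uniformly in $x$. There is accordingly no genuine obstacle: the whole content of the lemma is the cancellation of $\Im I(x)$, and this is the only step where the parity hypothesis on $g$ is used --- recall in particular that $A_g(x)$ is already real for arbitrary $g\in L^2(\mathds{R}^3)$, so no parity is needed to make sense of the factor $e^{-i\theta A_g(x)}$.
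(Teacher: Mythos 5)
Your proposal is correct and follows essentially the same route as the paper: both reduce the lemma to showing that $I(x)=\int F(\lvert u\rvert^2)(k)\lvert g(k)\rvert^2 e^{ik\cdot x}\,dk$ is real, which is established by conjugating, changing variables $k\mapsto -k$, and using $\lvert g(-k)\rvert=\lvert g(k)\rvert$ for $g$ even or odd. The additional integrability remark you include is a harmless (and reasonable) supplement to the paper's argument.
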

\begin{proof}
  Consider
  $I(x):=\int_{\mathds{R}^3}^{}F(\lvert u \rvert_{}^2)(k)\lvert g(k)
  \rvert_{}^2e^{ik\cdot x} dk$.
  We will show that $\bar{I}(x)=I(x)$. We have that:
  \begin{equation*}
    \bar{I}(x)=\int_{\mathds{R}^6}^{}\lvert u(x')  \rvert_{}^2\lvert g(k)  \rvert_{}^2e^{-ik\cdot (x-x')}  dx'dk=\int_{\mathds{R}^6}^{}\lvert u(x')  \rvert_{}^2\lvert g(-k)  \rvert_{}^2e^{ik\cdot (x-x')}  dx'dk\; .
  \end{equation*}
  Now if $g$ is either even or odd,
  $\lvert g(-k) \rvert_{}^{}=\lvert g(k) \rvert_{}^{}$. Hence
  $\bar{I}(x)=I(x)$, therefore $\Im I(x)=0$.
\end{proof}

We conclude this section proving its main result: $\mathscr{E}$ and
$\hat{\mathscr{E}}$ are related by the $\mathbf{D}_{g_{\infty }}(1)$
classical dressing\footnote{We recall again that
  $g_{\infty }=-i
  \frac{(2\pi)^{-3/2}}{\sqrt{2\omega(k)}}\frac{1-\chi_{\sigma_0}(k)}{\frac{k^2}{2M}+\omega(k)}$ .}.
\begin{proposition}
  \label{prop:7}
  For any $u\in Q(V)\cap H^1(\mathds{R}^3)$,
  $\alpha \in \mathcal{F }H^{\frac{1}{2}}(\mathds{R}^3)$, and for any
  $\chi _{\sigma _0}\in L^{\infty } (\mathds{R}^3 )\cap \mathcal{F
  }H^{-\frac{1}{2}}(\mathds{R}^3)$:
  \begin{enumerate}[label=\color{myblue}(\arabic*)]
  \item\label{item:6} $(u, \alpha) \in D(\mathscr{E})$;
  \item\label{item:7} $(u, \alpha) \in D(\hat{\mathscr{E}})$;
  \item\label{item:8}
    $\hat{\mathscr{E}}(u, \alpha )=\mathscr{E}\circ
    \mathbf{D}_{g_{\infty }}(1)(u,\alpha )$.
  \end{enumerate}
\end{proposition}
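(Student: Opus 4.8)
The plan is to settle~\ref{item:6} and~\ref{item:7} by inspection and to devote the real work to the algebraic identity~\ref{item:8}.

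\emph{Claims~\ref{item:6} and~\ref{item:7}.} Since $u\in Q(V)\cap H^1(\mathds{R}^3)$ and $\alpha\in\mathcal{F}H^{1/2}(\mathds{R}^3)$, the pair $(u,\alpha)$ lies in $H^1\oplus\mathcal{F}H^{1/2}$, and~\ref{item:6} is Lemma~\ref{lemma:7} applied with $g=g_\infty$ and $\theta=0$ (so that $\mathbf{D}_{g_\infty}(0)$ is the identity). For~\ref{item:7} one bounds the five summands of $\hat{\mathscr{E}}$ one at a time: the free part is finite on $(Q(V)\cap H^1)\oplus\mathcal{F}H^{1/2}$; the $\chi_{\sigma_0}$-interaction term is estimated as in the proof of Lemma~\ref{lemma:7} (the cutoff only helps, and $\chi_{\sigma_0}/\sqrt{2\omega}\in L^1\cap L^2$); for the $r_\infty$-quadratic term and the one involving $D_xu$ one uses $r_\infty=-ikg_\infty$ together with $g_\infty\in\mathcal{F}H^{1/2}(\mathds{R}^3)$, which by~\eqref{eq:29} gives $\omega^{-1/2}r_\infty\in L^2$ and hence, by Cauchy--Schwarz, $r_\infty\alpha\in L^1$ and $A_{g_\infty},\nabla A_{g_\infty}\in L^\infty$, so both terms are controlled by $\lVert u\rVert_{H^1}^2$-type quantities; the $V_\infty$-quartic term is finite since $V_\infty\in L^2$ by Lemma~\ref{lemma:2} and $\lvert u\rvert^2\in L^1\cap L^2$, whence $V_\infty*\lvert u\rvert^2\in L^\infty$. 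These same bounds make all the manipulations below absolutely convergent.

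\emph{Claim~\ref{item:8}.} By Lemma~\ref{lemma:8}, $g_\infty$ being even,
\begin{equation*}
  \mathbf{D}_{g_\infty}(1)(u,\alpha)=\bigl(u\,e^{-iA_{g_\infty}},\ \alpha-ig_\infty F(\lvert u\rvert^2)\bigr)=:(\tilde u,\tilde\alpha)\ ,
\end{equation*}
and I would substitute this into $\mathscr{E}$ and expand the three summands. Because $\lvert\tilde u\rvert=\lvert u\rvert$, the potential term and the weight $\lvert u(x)\rvert^2$ inside the interaction are unchanged, so the phase acts only on the kinetic term; writing $\nabla\tilde u=(\nabla u-iu\nabla A_{g_\infty})e^{-iA_{g_\infty}}$ and $\nabla A_{g_\infty}(x)=\int_{\mathds{R}^3}r_\infty(k)\bigl(\bar\alpha(k)e^{-ik\cdot x}+\alpha(k)e^{ik\cdot x}\bigr)\,dk$ (note that $g_\infty$ even forces $g_\infty$ purely imaginary and $r_\infty$ real) one gets
\begin{equation*}
  \tfrac1{2M}\lVert\nabla\tilde u\rVert_2^2=\tfrac1{2M}\lVert\nabla u\rVert_2^2+\tfrac1M\Im\int_{\mathds{R}^3}u\,(\overline{\nabla u}\cdot\nabla A_{g_\infty})\,dx+\tfrac1{2M}\int_{\mathds{R}^3}\lvert u\rvert^2\lvert\nabla A_{g_\infty}\rvert^2\,dx\ ,
\end{equation*}
whose last term is exactly the $r_\infty$-quadratic term of $\hat{\mathscr{E}}$. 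Expanding $\langle\tilde\alpha,\omega\tilde\alpha\rangle_2$ produces $\langle\alpha,\omega\alpha\rangle_2$, a cross term $2\Re\langle\alpha,-i\omega g_\infty F(\lvert u\rvert^2)\rangle_2$, and a $u$-only term $\lVert\omega^{1/2}g_\infty F(\lvert u\rvert^2)\rVert_2^2$; expanding the interaction term produces the full cutoff-free interaction $\tfrac1{(2\pi)^{3/2}}\int h_0(k)\bigl(\bar\alpha e^{-ik\cdot x}+\alpha e^{ik\cdot x}\bigr)\lvert u\rvert^2\,dxdk$, where $h_0(k):=(2\pi)^{-3/2}(2\omega(k))^{-1/2}$, plus a $u$-only piece.

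It then remains to reorganize, and the only algebraic input needed is the defining relation $h_0(k)\bigl(1-\chi_{\sigma_0}(k)\bigr)=ig_\infty(k)\bigl(\tfrac{k^2}{2M}+\omega(k)\bigr)$ read off from~\eqref{eq:3}. Splitting $h_0=h_0\chi_{\sigma_0}+h_0(1-\chi_{\sigma_0})$ in the cutoff-free interaction: the $h_0\chi_{\sigma_0}$ part is precisely the $\chi_{\sigma_0}$-interaction term of $\hat{\mathscr{E}}$; in the $h_0(1-\chi_{\sigma_0})$ part I insert the relation and write $\tfrac{k^2}{2M}+\omega=\omega+\tfrac{k^2}{2M}$. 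The $\omega$-contribution turns out to be the negative of the cross term $2\Re\langle\alpha,-i\omega g_\infty F(\lvert u\rvert^2)\rangle_2$, so these cancel; the $\tfrac{k^2}{2M}$-contribution, after one integration by parts in $x$ (using $2\Re(\bar u\nabla u)=\nabla\lvert u\rvert^2$ and $ke^{-ik\cdot x}=i\nabla_xe^{-ik\cdot x}$), combines with the kinetic cross term $\tfrac1M\Im\int u(\overline{\nabla u}\cdot\nabla A_{g_\infty})\,dx$ to reproduce exactly $-\tfrac2M\Re\int r_\infty(k)\bar\alpha(k)e^{-ik\cdot x}\bar u(x)D_xu(x)\,dxdk$, the remaining interaction term of $\hat{\mathscr{E}}$. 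Finally the two $u$-only leftovers, $\lVert\omega^{1/2}g_\infty F(\lvert u\rvert^2)\rVert_2^2$ and the $u$-only piece of the interaction, assemble — via $F(\lvert u\rvert^2)(k)\,\overline{F(\lvert u\rvert^2)(k)}=\int_{\mathds{R}^6}e^{-ik\cdot(x-y)}\lvert u(x)\rvert^2\lvert u(y)\rvert^2\,dxdy$ — into $\tfrac12\int_{\mathds{R}^6}V_\infty(x-y)\lvert u(x)\rvert^2\lvert u(y)\rvert^2\,dxdy$, with $V_\infty^{(1)}$ produced by the $\omega\lvert g_\infty\rvert^2$ contribution and $V_\infty^{(2)}$ by the interaction leftover (cf.~\eqref{eq:10},~\eqref{eq:13},~\eqref{eq:28}). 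Adding up all the pieces gives $\mathscr{E}(\tilde u,\tilde\alpha)=\hat{\mathscr{E}}(u,\alpha)$. The only real obstacle is book-keeping: keeping track of the reality and parity of $g_\infty$ and $r_\infty$ and of the signs generated by the integration by parts in the cross-term identity; there is no genuine analytic difficulty, since by the estimates used for~\ref{item:7} every integral above is absolutely convergent on the dense set $(Q(V)\cap H^1)\oplus\mathcal{F}H^{1/2}$.
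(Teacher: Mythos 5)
Your argument is correct and follows essentially the same route as the paper: item~\ref{item:6} is Lemma~\ref{lemma:7} at $\theta=0$, and item~\ref{item:8} is the same direct substitution of the explicit dressing~\eqref{eq:51} into $\mathscr{E}$ with term-by-term identification — your split $h_0=h_0\chi_{\sigma_0}+h_0(1-\chi_{\sigma_0})$ combined with the defining relation of $g_{\infty}$ and one integration by parts is just a reorganization of the paper's grouping \eqref{eq:53}--\eqref{eq:58}, where your symmetric kinetic cross term equals the paper's \eqref{eq:58} plus the $\tfrac{k^2}{2M}g_\infty$ piece \eqref{eq:57}. The only cosmetic deviations are that you verify~\ref{item:7} by direct estimates instead of deducing it from~\ref{item:8} and Lemma~\ref{lemma:7} as the paper does, and that the parenthetical claim $\chi_{\sigma_0}/\sqrt{2\omega}\in L^1$ is not guaranteed for a general $\chi_{\sigma_0}\in L^{\infty}\cap\mathcal{F}H^{-1/2}$ — but only the $L^2$ property (which does hold) is actually needed at that step, so nothing breaks.
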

\begin{remark}
  \label{rem:5}
  Relation \ref{item:8} of Proposition~\ref{prop:7} actually holds for
  any $(u, \alpha) \in \mathbf{D}_{g_{\infty }}(-1)D(\mathscr{E})$.
\end{remark}
\begin{proof}[Proof of Proposition~\ref{prop:7}]
  The statement \ref{item:6} is just an application of
  Lemma~\ref{lemma:7} when $\theta =0$. If \ref{item:8} holds
  formally, than \ref{item:7} follows directly, since by
  Lemma~\ref{lemma:7} the right hand side of \ref{item:8} is well
  defined. It remains to prove that the relation \ref{item:8} holds
  formally. This is done by means of a direct calculation, that we
  will briefly outline here.
  \begin{align}
    \notag
    \mathscr{E}\circ \mathbf{D}_{g_{\infty }}(1)(u,\alpha )=&\Bigl\langle ue^{-iA_{g_{\infty }}}  , \frac{D_x}{2M}D_x\bigl(ue^{-iA_{g_{\infty }}}\bigr) \Bigr\rangle_2+\langle u  , Vu \rangle_2+\langle \alpha   , \omega \alpha  \rangle_2\\
    \label{eq:53}\tag{a}
    &+2\Im \langle \alpha   , \omega g_{\infty } F_u \rangle_2+\frac{1}{(2\pi )^{3/2}}2\Re\int_{\mathds{R}^6}^{}\frac{1}{\sqrt{2\omega (k)}}\bar{\alpha}(k)e^{-ik\cdot x}\lvert u(x)  \rvert_{}^2  dxdk\\
    \label{eq:54}\tag{b}
    &+\lVert \omega g_{\infty }F_u  \rVert_2^2+\frac{1}{(2\pi )^{3/2}}2\Im\int_{\mathds{R}^6}^{}\frac{1}{\sqrt{2\omega (k)}}g_{\infty }(k)F_u(k)e^{ik\cdot x} \lvert u(x)  \rvert_{}^2  dxdk\; .
  \end{align}
  After some manipulation, taking care of the ordering, the first term
  on the right hand side becomes:
  \begin{align}
    \notag
    \Bigl\langle ue^{-iA_{g_{\infty }}}  , \frac{D_x}{2M}D_x\bigl(ue^{-iA_{g_{\infty }}}\bigr) \Bigr\rangle_2=&\Bigl\langle u  , -\frac{\Delta }{2M}u \Bigr\rangle_2\\
    \label{eq:56}\tag{c}
    &+\frac{1}{2M} \langle A_{r_{\infty }}u  , A_{r_{\infty }}u \rangle_2\\
    \label{eq:57}\tag{d}
    &-i\langle u  , A_{\frac{k^2}{2M}g_{\infty }}u \rangle_2\\
    \label{eq:58}\tag{e}
    &-\frac{1}{M}\Bigl\langle u  , \int_{\mathds{R}^3}^{}dk\Bigl(D_x\bar{r}_{\infty }(k)\alpha (k)e^{ik\cdot x}+ r_{\infty }(k)\bar{\alpha} (k)e^{-ik\cdot x}D_x\Bigr) u\Bigr\rangle_2\; .
  \end{align}
  The proof is concluded making the following identifications (the
  other terms sum to the free part):
  \begin{align*}
    &\text{\eqref{eq:53}}+\text{\eqref{eq:57}}=\frac{1}{(2\pi )^{3/2}}\int_{\mathds{R}^6}^{}\frac{\chi _{\sigma _0}}{\sqrt{2\omega (k)}}\Bigl(\bar{\alpha}(k)e^{-ik\cdot x}+ \alpha (k)e^{ik\cdot x}\Bigr)\lvert u(x)  \rvert_{}^2  dxdk\; ;\\
    &\text{\eqref{eq:54}}=\frac{1}{2}\int_{\mathds{R}^6}^{}V_{\infty}(x-y)\lvert u(x)  \rvert_{}^2\lvert u(y)  \rvert_{}^2  dxdy\; ;\\
    &\text{\eqref{eq:56}}=\frac{1}{2M}\int_{\mathds{R}^9}^{}\Bigl(r_{\infty }(k)\bar{\alpha}(k)e^{-ik\cdot x}+\bar{r}_{\infty }(k)\alpha (k)e^{ik\cdot x} \Bigr) \Bigl(r_{\infty }(l)\bar{\alpha}(l)e^{-il\cdot x}+\bar{r}_{\infty }(l)\alpha (l)e^{il\cdot x} \Bigr)  \lvert u(x)  \rvert_{}^2dxdkdl\; ;\\
    &\text{\eqref{eq:58}}=-\frac{2}{M}\Re\int_{\mathds{R}^6}^{}r_{\infty }(k) \bar{\alpha} (k)e^{-ik\cdot x}\bar{u}(x)D_xu(x)  dxdk\; .
  \end{align*}
\end{proof}

\subsection{Global existence results.}
\label{sec:glob-exist-results}

In this section we discuss uniqueness and global existence of the
classical dynamical system: using a well-known result on the undressed
dynamics, we prove uniqueness and existence also for the dressed
system.

The Cauchy problem associated to $\mathscr{E}$ by the Hamilton's
equations is\footnote{The Cauchy problem associated to
  $\hat{\mathscr{E}}$ is equivalent to \eqref{eq:73}, setting:
  $W=V_{\infty }$,
  $\varphi =(2\pi )^{-3/2}\mathcal{F }(\chi _{\sigma _0})$,
  $\xi =\frac{(2\pi )^{-3/2}}{\sqrt{2}M}\bigl(\mathcal{F
  }(\frac{k^2}{\sqrt{\omega }}g_{\infty })-\mathcal{F }(i
  \frac{k^2}{\omega }g_{\infty }) \bigr)$,
  $\rho =\frac{\sqrt{2}}{M}\mathcal{F }(\sqrt{\omega }kg_{\infty })$,
  and
  $\zeta =\frac{i}{\sqrt{M}}\mathcal{F }(\frac{k}{\sqrt{\omega
    }}g_{\infty })$.}
\eqref{eq:72}. Theorem~\ref{prop:8} below is a straightforward
extension of
\citep{MR2403699
  ,MR2906550
} that includes a (confining) potential on the NLS equation. As proved
in \citep{MR1016082
  ,Carles:2014aa
}, the quadratic potential is the maximum we can afford to still have
Strichartz estimates and global existence in the energy
space. Therefore we make the following standard  assumption on
$V$:
\begin{assumption}{$A_V$}
  \label{ass:2}
  $V\in \mathcal{C }^{\infty }(\mathds{R}^3,\mathds{R}_+)$, and
  $\partial ^{\alpha }V\in L^{\infty } (\mathds{R}^3 )$ for any
  $\alpha \in \mathds{N}^3$, with $\lvert \alpha \rvert_{}^{}\geq 2$
  (i.e. at most quadratic positive confining potential).
\end{assumption}
\begin{thm}[Undressed global existence]
  \label{prop:8}
  Assume \ref{ass:2}. Then there is a unique Hamiltonian
  flow solving \eqref{eq:72}:
  \begin{equation}
    \label{eq:66}
    \mathbf{E}:\mathds{R}\times \bigl(Q(-\Delta +V)\oplus \mathcal{F}H^{\frac{1}{2}}(\mathds{R}^3)\bigr)\to Q(-\Delta +V)\oplus \mathcal{F}H^{\frac{1}{2}}(\mathds{R}^3)\; .
  \end{equation}
  If $V=0$, then there is a unique Hamiltonian flow
  \begin{equation}
    \label{eq:65}
    \mathbf{E}:\mathds{R}\times \bigl(H^s(\mathds{R}^3)\oplus \mathcal{F}H^{\varsigma }(\mathds{R}^3)\bigr)\to H^s (\mathds{R}^3 )\oplus \mathcal{F}H^{\varsigma }(\mathds{R}^3)\; .
  \end{equation}
  for any $0\leq s\leq 1$,
  $s-\frac{1}{2}\leq \varsigma \leq s+\frac{1}{2}$.
\end{thm}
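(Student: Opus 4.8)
The plan is to reduce to the potential-free case, which is already covered by the references cited above, and then to insert the confining potential $V$ by means of the dispersive estimates available for the harmonic-type Schr\"odinger propagator.

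\textbf{Step 1.} For $V\equiv 0$ nothing new is needed: the existence and uniqueness of the flows \eqref{eq:66} (with $V=0$) and \eqref{eq:65} is exactly the content of \citep{MR2403699,MR2906550}. One solves the Schr\"odinger component of \eqref{eq:72} by a contraction mapping in mixed space–time (Strichartz) norms, the Klein–Gordon component by the explicit Duhamel representation $\alpha_t(k)=e^{-i\omega(k)t}\alpha_0(k)-i\int_0^t e^{-i\omega(k)(t-s)}(2\omega(k))^{-1/2}\mathcal{F}(\lvert u_s\rvert^2)(k)\,ds$, couples the two on a short interval, and globalises in the energy space $H^1\oplus\mathcal{F}H^{1/2}$ using conservation of the charge $\lVert u_t\rVert_2$ and of the energy $\mathscr{E}$; the lower-regularity flows follow by persistence of regularity. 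I would simply quote this.

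\textbf{Step 2.} By Assumption~\ref{ass:2}, $V$ is smooth, non-negative and of at most quadratic growth, so $e^{it(-\Delta/2M+V)}$ obeys the same local-in-time Strichartz estimates (homogeneous and inhomogeneous) as $e^{it\Delta/2M}$; this is \citep{MR1016082}, quadratic growth being the borderline case \citep{Carles:2014aa}. Since the coupling term $Au$ enters \eqref{eq:72} only through the inhomogeneous Duhamel term and the Strichartz exponents are unchanged, the contraction mapping argument of Step 1 carries over verbatim on a short interval $[-T,T]$ whose length depends only on $\lVert(u_0,\alpha_0)\rVert_{Q(-\Delta+V)\oplus\mathcal{F}H^{1/2}}$, yielding a unique local solution $(u_t,\alpha_t)\in\mathcal{C}^0\bigl([-T,T];Q(-\Delta+V)\oplus\mathcal{F}H^{1/2}\bigr)$. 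Along this local flow the Schr\"odinger potential $V+A$ is real, hence $\lVert u_t\rVert_2=\lVert u_0\rVert_2$, and since \eqref{eq:72} is the system of Hamilton equations of $\mathscr{E}$, a standard regularisation argument gives $\mathscr{E}(u_t,\alpha_t)=\mathscr{E}(u_0,\alpha_0)$. The cubic term of $\mathscr{E}$ is bounded, exactly as in the proof of Lemma~\ref{lemma:7}, by $C\lVert u_t\rVert_2\lVert u_t\rVert_{H^1}\lVert\omega^{1/2}\alpha_t\rVert_2$, and after Young's inequality, using $V\ge 0$, it is absorbed into $\tfrac{1}{2}\langle u_t,(-\tfrac{\Delta}{2M}+V)u_t\rangle+\tfrac{1}{2}\langle\alpha_t,\omega\alpha_t\rangle+C'\lVert u_0\rVert_2^4$. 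Together with charge conservation this bounds $\langle u_t,(-\Delta+V+1)u_t\rangle+\lVert\alpha_t\rVert_{\mathcal{F}H^{1/2}}^2$ uniformly in $t$, i.e.\ the $Q(-\Delta+V)\oplus\mathcal{F}H^{1/2}$-norm stays bounded on the whole interval of existence; since the local existence time depends only on this norm, the solution extends to all $t\in\mathds{R}$. This produces the flow $\mathbf{E}$ of \eqref{eq:66}; it is the Hamiltonian flow of $\mathscr{E}$ by construction, and uniqueness is inherited from the contraction argument.

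\textbf{Main obstacle.} The only genuinely delicate point is Step 2: the effective time-dependent potential $V(x)+A(t,x)$ is unbounded (it grows quadratically in $x$), so it cannot be handled as a perturbation of $-\Delta$; one has to rely instead on the Fujiwara-type short-time parametrix and the resulting local-in-time Strichartz inequalities for $e^{it(-\Delta/2M+V)}$ from \citep{MR1016082}, and check that the nonlinear coupling $Au$ of \eqref{eq:72} is still controlled by the inhomogeneous Strichartz estimate on the same time interval, exactly as in the potential-free theory \citep{MR2403699,MR2906550}. Everything else — local contraction, conservation laws, the a priori bound, and continuation — is then routine.
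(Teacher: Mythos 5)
Your overall strategy matches the one the paper itself relies on: Theorem~\ref{prop:8} is stated in the paper with no written proof, being characterized as a ``straightforward extension'' of \citep{MR2403699,MR2906550} once the Strichartz estimates for at-most-quadratic potentials of \citep{MR1016082,Carles:2014aa} are invoked. Your Step~1 (quote the $V=0$ literature) and the Strichartz machinery of Step~2 mirror this faithfully.

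However, your energy absorption in Step~2 has a genuine gap. You claim that the cubic term, bounded as in Lemma~\ref{lemma:7} by $C\lVert u_t\rVert_2\lVert u_t\rVert_{H^1}\lVert\omega^{1/2}\alpha_t\rVert_2$, can be absorbed by Young's inequality into $\tfrac{1}{2}\langle u_t,(-\tfrac{\Delta}{2M}+V)u_t\rangle + \tfrac{1}{2}\langle\alpha_t,\omega\alpha_t\rangle + C'\lVert u_0\rVert_2^4$. This fails for general $\lVert u_0\rVert_2$. Indeed, set $K^2 = \langle u,(-\tfrac{\Delta}{2M}+V)u\rangle$ and $L^2 = \langle\alpha,\omega\alpha\rangle$. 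Since $V\geq 0$ one has $\lVert u\rVert_{H^1}\leq \lVert u\rVert_2 + \sqrt{2M}\,K$, so your bound contains the term $C\sqrt{2M}\lVert u\rVert_2\, KL$, and Young's inequality turns it into $\tfrac{1}{2}C\sqrt{2M}\lVert u\rVert_2\,(K^2+L^2)$, which dominates $\tfrac{1}{2}(K^2+L^2)$ as soon as $C\sqrt{2M}\lVert u\rVert_2\geq 1$. Since $\lVert u\rVert_2$ is conserved but a priori unrestricted, this gives no uniform bound for large data; the inequality of Lemma~\ref{lemma:7} proves finiteness of the energy, not its coercivity.

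The estimate one actually needs --- and the one the references use --- puts a sub-unit power on $\lVert\nabla u\rVert_2$ via Gagliardo--Nirenberg. In three dimensions,
\begin{equation*}
\Bigl\lvert\int_{\mathds{R}^3} A\,\lvert u\rvert^2\,dx\Bigr\rvert \leq \lVert A\rVert_{L^6}\lVert u\rVert_{L^{12/5}}^2\leq C\lVert\nabla A\rVert_{2}\,\lVert u\rVert_{2}^{3/2}\lVert\nabla u\rVert_{2}^{1/2}\leq C\lVert\omega^{1/2}\alpha\rVert_2\,\lVert u\rVert_2^{3/2}\lVert\nabla u\rVert_2^{1/2}\; ,
\end{equation*}
and two applications of Young's inequality give $\lvert Y\rvert\leq \delta(K^2+L^2)+C(\delta,\lVert u_0\rVert_2)$ for any $\delta<1$, with the constant depending only on the conserved charge, whatever its size. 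This is what closes the continuation argument in $Q(-\Delta+V)\oplus\mathcal{F}H^{1/2}$ for all data, and it is unaffected by the confining potential since $V\geq 0$ only improves the coercivity of $K^2$. Apart from this point, your reduction to the potential-free theory via the Fujiwara/Carles Strichartz estimates is the correct route and the same one the paper takes.
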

\begin{thm}[Dressed global existence]
  \label{prop:9}
  Assume \ref{ass:2}. Then for any
  $\chi _{\sigma _0}\in L^{\infty } (\mathds{R}^3 )\cap \mathcal{F
  }H^{-\frac{1}{2}}(\mathds{R}^3)$,
  there is a unique Hamiltonian flow:
  \begin{equation}
    \label{eq:68}
    \hat{\mathbf{E}}:\mathds{R}\times \bigl(Q(-\Delta +V)\oplus \mathcal{F}H^{\frac{1}{2}}(\mathds{R}^3)\bigr)\to Q(-\Delta +V)\oplus \mathcal{F}H^{\frac{1}{2}}(\mathds{R}^3)\; .
  \end{equation}
  If $V=0$, then there is a unique Hamiltonian flow
  \begin{equation}
    \label{eq:65a}
    \hat{\mathbf{E}}:\mathds{R}\times \bigl(H^s(\mathds{R}^3)\oplus \mathcal{F}H^{\varsigma }(\mathds{R}^3)\bigr)\to H^s (\mathds{R}^3 )\oplus \mathcal{F}H^{\varsigma }(\mathds{R}^3)\; .
  \end{equation}
  for any $0\leq s\leq 1$,
  $s-\frac{1}{2}\leq \varsigma \leq s+\frac{1}{2}$. For any $V$ that
  satisfies \ref{ass:2}, the flows $\hat{\mathbf{E}}$ and $\mathbf{E}$
  are related by:
  \begin{equation}
    \label{eq:69}
    \hat{\mathbf{E}}=\mathbf{D}_{g_{\infty }}(-1)\circ\mathbf{E}\circ \mathbf{D}_{g_{\infty }}(1)\quad ,\quad \mathbf{E}=\mathbf{D}_{g_{\infty }}(1)\circ \hat{\mathbf{E}} \circ \mathbf{D}_{g_{\infty }}(-1)\quad .
  \end{equation}
\end{thm}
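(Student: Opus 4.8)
The plan is to \emph{define} the dressed flow $\hat{\mathbf{E}}$ by conjugating the undressed flow $\mathbf{E}$ of Theorem~\ref{prop:8} with the classical dressing map, and then to read off existence, uniqueness and the relations~\eqref{eq:69} from the corresponding properties of $\mathbf{E}$ through the identity $\hat{\mathscr{E}}=\mathscr{E}\circ\mathbf{D}_{g_{\infty}}(1)$ established in Proposition~\ref{prop:7}. Under the hypothesis $\chi_{\sigma_0}\in L^{\infty}(\mathds{R}^3)\cap\mathcal{F}H^{-1/2}(\mathds{R}^3)$ one has $g_{\infty}\in\mathcal{F}H^{1/2}(\mathds{R}^3)$ (the remark preceding Proposition~\ref{prop:7}), so Proposition~\ref{prop:6} and Corollary~\ref{cor:2} apply: $\mathbf{D}_{g_{\infty}}(1)$ and $\mathbf{D}_{g_{\infty}}(-1)$ are mutually inverse symplectomorphisms of $L^2\oplus L^2$ that preserve the energy space $Q(-\Delta+V)\oplus\mathcal{F}H^{1/2}(\mathds{R}^3)$, as well as each scale $H^s(\mathds{R}^3)\oplus\mathcal{F}H^{\varsigma}(\mathds{R}^3)$ with $0\le s\le1$, $s-\tfrac{1}{2}\le\varsigma\le s+\tfrac{1}{2}$.

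Accordingly I would set
\begin{equation*}
  \hat{\mathbf{E}}(t):=\mathbf{D}_{g_{\infty}}(-1)\circ\mathbf{E}(t)\circ\mathbf{D}_{g_{\infty}}(1)\; .
\end{equation*}
This is well defined on $Q(-\Delta+V)\oplus\mathcal{F}H^{1/2}$ and, when $V=0$, on each $H^s\oplus\mathcal{F}H^{\varsigma}$, by the invariance just recalled together with Theorem~\ref{prop:8}; it is continuous in $t$ since $\mathbf{E}(t)$ is and $\mathbf{D}_{g_{\infty}}(\pm1)$ are continuous maps (explicit from~\eqref{eq:48}); and $\hat{\mathbf{E}}(0)=\mathrm{id}$, $\hat{\mathbf{E}}(t+s)=\hat{\mathbf{E}}(t)\circ\hat{\mathbf{E}}(s)$ hold because $\mathbf{E}$ has these properties and the inner factors $\mathbf{D}_{g_{\infty}}(1)\circ\mathbf{D}_{g_{\infty}}(-1)=\mathrm{id}$ cancel. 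The second relation in~\eqref{eq:69} follows from the definition by pre-composing with $\mathbf{D}_{g_{\infty}}(1)$ and post-composing with $\mathbf{D}_{g_{\infty}}(-1)$.

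It remains to identify $\hat{\mathbf{E}}$ as \emph{the} Hamiltonian flow of $\hat{\mathscr{E}}$, equivalently as the solution operator of S-KG$_{\alpha}$[D] --- that is, of~\eqref{eq:73} with the kernels $W,\varphi,\xi,\rho,\zeta$ specified in Subsection~\ref{sec:glob-exist-results}. In the abstract, this is the statement that pulling a Hamiltonian back by a symplectomorphism pulls back its Hamiltonian vector field: as $\mathbf{D}_{g_{\infty}}(1)$ is symplectic (Definition~\ref{def:4}) and $\hat{\mathscr{E}}=\mathscr{E}\circ\mathbf{D}_{g_{\infty}}(1)$ holds on the whole energy space (Proposition~\ref{prop:7}, using $Q(-\Delta+V)=H^1\cap Q(V)$ since $-\Delta,V\ge0$), one gets $X_{\hat{\mathscr{E}}}=\bigl(\mathbf{D}_{g_{\infty}}(1)\bigr)^{*}X_{\mathscr{E}}$, whose flow is exactly $\mathbf{D}_{g_{\infty}}(-1)\circ\mathbf{E}(\cdot)\circ\mathbf{D}_{g_{\infty}}(1)$. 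Concretely --- and this is how I would write it --- since $g_{\infty}$ is even, $\mathbf{D}_{g_{\infty}}(\theta)$ is the explicit near-identity map~\eqref{eq:51}; so setting $(u_t,\alpha_t):=\mathbf{E}(t)\,\mathbf{D}_{g_{\infty}}(1)(u_0,\alpha_0)$, which solves~\eqref{eq:72} and stays in $Q(-\Delta+V)\oplus\mathcal{F}H^{1/2}\subset H^1\oplus\mathcal{F}H^{1/2}$, and differentiating $\mathbf{D}_{g_{\infty}}(-1)(u_t,\alpha_t)$ in $t$ by the chain rule yields~\eqref{eq:73}. On $H^1\oplus\mathcal{F}H^{1/2}$ every term of~\eqref{eq:73} --- notably the first-order term $(\rho^{(i)}*A)\partial_{(i)}u$ --- is well defined, so one has a genuine strong solution. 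Uniqueness is the reverse manoeuvre: any solution $(v_t,\beta_t)$ of the dressed system with data $(v_0,\beta_0)$ is mapped by $\mathbf{D}_{g_{\infty}}(1)$ to a solution of~\eqref{eq:72}, which by the uniqueness part of Theorem~\ref{prop:8} equals $\mathbf{E}(t)\,\mathbf{D}_{g_{\infty}}(1)(v_0,\beta_0)$, whence $(v_t,\beta_t)=\hat{\mathbf{E}}(t)(v_0,\beta_0)$.

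The one genuinely non-formal step --- and the place I expect the real work --- is this last verification that the conjugated curve is an honest, not merely formal, solution of~\eqref{eq:73} in the stated space: one must match, term by term, the time derivative produced by differentiating~\eqref{eq:51} against the right-hand side of~\eqref{eq:73}, and check that each object produced lies in $Q(-\Delta+V)\oplus\mathcal{F}H^{1/2}$. This is the static computation from the proof of Proposition~\ref{prop:7} made dynamical, and the only analytic estimates it calls on are those of Lemma~\ref{lemma:6} together with the embedding $Q(-\Delta+V)\hookrightarrow H^1$ (valid because $V\ge0$); nothing beyond the excerpt is needed.
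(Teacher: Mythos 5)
Your proposal is correct and follows essentially the same route as the paper: the paper's proof is a one-line deduction from the global well-posedness of $\mathbf{E}$ (Theorem~\ref{prop:8}), the identity $\hat{\mathscr{E}}=\mathscr{E}\circ\mathbf{D}_{g_{\infty}}(1)$ (Proposition~\ref{prop:7}), and the regularity/invariance properties of the dressing (Proposition~\ref{prop:6}), which is precisely the conjugation argument $\hat{\mathbf{E}}=\mathbf{D}_{g_{\infty}}(-1)\circ\mathbf{E}\circ\mathbf{D}_{g_{\infty}}(1)$ you spell out. Your additional explicit verification that the conjugated curve solves \eqref{eq:73} term by term is a more detailed rendering of what the paper leaves implicit in those citations, not a different method.
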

\begin{proof}[Proof of Theorem~\ref{prop:9}]
  The theorem is a direct consequence of the global well-posedness
  result of Theorem~\ref{prop:8}, the relation
  $\hat{\mathscr{E}}=\mathscr{E}\circ\mathbf{D}_{g_{\infty }}(1)$
  proved in Proposition~\ref{prop:7}, and the regularity properties of
  the dressing proved in Proposition~\ref{prop:6}.
\end{proof}

\section{Classical renormalization of the Nelson model.}
\label{sec:class-renorm-nels}

In this section we would like to describe in some detail the classical
approach to renormalization outlined in
Section~\ref{sec:an-altern-renorm}, for the quantization of the S-KG
energy functional $\mathscr{E}$ of Definition~\ref{def:9}.

The Wick quantization
of $\mathscr{E}$---defined on
$L^2 (\mathds{R}^3 )\oplus L^2 (\mathds{R}^3 )$---yields a formal
operator on
$\Gamma _s(L^2 (\mathds{R}^3 )\oplus L^2 (\mathds{R}^3 ))$:
\begin{equation}
  \label{eq:76}
  \begin{split}
    (\mathscr{E})^{Wick}=\int_{\mathds{R}^3}^{}\psi^{*}(x)\Bigl(-\frac{\Delta}{2M}+V(x)\Bigr)\psi(x)  dx+\int_{\mathds{R}^3}^{}a^{*}(k)\omega(k)a(k)  dk\\+\frac{1}{(2\pi)^{3/2}}\int_{\mathds{R}^6}^{}\psi^{*}(x) \frac{1}{\sqrt{2\omega(k)}}\Bigl(a(k)e^{ik\cdot x}+a^{*}(k)e^{-ik\cdot x}\Bigr)\psi(x)  dxdk\; .
  \end{split}
\end{equation}
Such quantity $(\mathscr{E})^{Wick}$ makes sense only as a densely
defined quadratic form
\begin{equation}
  \label{eq:80}
  h(\cdot ,\cdot )=\mathrm{Wick}(\mathscr{E})(\cdot ,\cdot )=\langle \,\cdot\, , (\mathscr{E})^{Wick}\,\cdot\,\rangle_{}\; ,
\end{equation}
because the last term of the right hand side of \eqref{eq:76} creates,
roughly speaking, a Klein-Gordon particle with a wavefunction that is
not square-integrable. In addition, the quadratic form
$h(\cdot ,\cdot )$ is not bounded from below. We should therefore
define a suitable transformation (the classical dressing), to modify
the interaction part of $\mathscr{E}$ in the spirit of normal forms. It turns out that a near identity change of
coordinates in the phase space is sufficient (more precisely, a
non-linear canonical transformation). For any function
$\chi _{\sigma _0}:\mathds{R}^3\to \mathds{C}$ we can define
$\mathbf{D}_{\chi _{\sigma _0}}:=\mathbf{D}_{g_{\infty }}(1)$, where
$\mathbf{D}_{g_{\infty }}(1)$ is given by \eqref{eq:51} with
\begin{equation*}
  g_{\infty}(k)=-i\frac{(2\pi)^{-3/2}}{\sqrt{2\omega(k)}}\frac{1-\chi_{\sigma_0}(k)}{\frac{k^2}{2M}+\omega(k)}\; .
\end{equation*}
The map $\mathbf{D}_{\chi _{\sigma _0}}$ is a candidate to be the
classical dressing, since it cancels the singular term in
$\mathscr{E}$. As we will see, the choice of $\chi _{\sigma _0}$
modifies the self-adjoint Hamiltonian that results from Wick
quantization. The properties of $\mathbf{D}_{\chi _{\sigma _0}}$ have
been discussed in Sections~\ref{sec:classical-dressing}
and~\ref{sec:class-hamilt}. Since it is suitable that
$\mathbf{D}_{\chi _{\sigma _0}}$ map the domain of definition of
$\mathscr{E}$ into itself, we restrict $\chi _{\sigma _0}$ to be in
$ L^{\infty } (\mathds{R}^3 )\cap \mathcal{F
}H^{-\frac{1}{2}}(\mathds{R}^3)$.

Now, we exploit the $U(1)$ symmetry of $\mathscr{E}$ with respect to
the first variable. This symmetry leads to the conservation of the
``mass'' of $u$, i.e. of $\lVert u \rVert_2^2$.  The $U(1)$ symmetry
of $\mathscr{E}$ induces a conservation also at the quantum level: the
Wick quantization of $\mathscr{E}$ ``commutes'' with the Wick
quantization of $\lVert u \rVert_2^2$, and the latter is the
self-adjoint number operator $N_1\otimes 1$ for the nucleons. More
precisely, we have that for any $\Psi ,\Phi \in Q(h)$, for any
$n_1\leq n_2\in \mathds{N}$:
\begin{equation}
  \label{eq:79}
  h\bigl(\Psi ,\mathds{1}_{[n_1,n_2]}(N_1)\Phi \bigr)=h\bigl(\mathds{1}_{[n_1,n_2]}(N_1)\Psi ,\Phi \bigr)=h\bigl(\mathds{1}_{[n_1,n_2]}(N_1)\Psi ,\mathds{1}_{[n_1,n_2]}(N_1)\Phi \bigr)\; ;
\end{equation}
where the quadratic form $h$ is defined by~\eqref{eq:80}, and
$\mathds{1}_{[0,n]}(N_1)$ is a spectral projection of $N_1\otimes 1$.
Let $\mathcal{H}=\Gamma _s(L^2 \oplus L^2)$, it is possible to rewrite
$\mathcal{H}=\bigoplus _n\mathcal{H}_n$ with
$\mathcal{H}_n=\bigl(L^2(\mathds{R}^3)\bigr)^{\otimes_s n}\otimes
\Gamma_s(L^2 (\mathds{R}^3))$.
Equation~\eqref{eq:79} implies that $h$ factorizes into a direct sum
of forms on each $\mathcal{H}_n$, i.e. for any $n\in \mathds{N}$,
there exists a quadratic form $h_n$ densely defined on $\mathcal{H}_n$
such that $h=\bigoplus _{n}h_n$. Since
$\mathbf{D}_{\chi _{\sigma _0}}$ preserves the $U(1)$
invariance\footnote{$\mathscr{E}\circ\mathbf{D}_{\chi _{\sigma
      _0}}(e^{i\phi }u,\alpha )=\mathscr{E}\circ\mathbf{D}_{\chi
    _{\sigma _0}}(u,\alpha )$
  for any $\phi \in \mathds{R}$.}, also the form
$\hat{h}(\chi _{\sigma
  _0})=\mathrm{Wick}(\mathscr{E}\circ\mathbf{D}_{\chi _{\sigma_0}})$
factorizes as a direct sum of forms $\hat{h}_n(\chi _{\sigma _0})$ on
$\mathcal{H}_n$. However, even if we may hope\footnote{This is due to
  the fact that $\mathscr{E}$ is unbounded from below on
  $D(\mathscr{E})$, but bounded from below when restricted to any cylinder 
  with bounded $u$-norm $B_u(0,C)\cap D(\mathscr{E})$; see
  Section~\ref{sec:relat-betw-sigm} below.} that each
$\hat{h}_n(\chi _{\sigma _0})$ is bounded from below, this cannot be
expected for $\hat{h}(\chi _{\sigma _0})$ due to asymptotic
instability of bosonic systems.

Fix $n\in \mathds{N}$; we can prove that, for a suitable choice of
$\chi _{\sigma _0}$, $\hat{h}_n(\chi _{\sigma _0})$ is closed and
bounded from below by the KLMN theorem (it is exactly the quadratic
form associated to $\hat{H}^{(n)}_{\infty }$, and the bounds are
proved in Section~\ref{sec:renorm-hamilt}). The usual choice of the
function $\chi _{\sigma _0}$ is described at the beginning of
Section~\ref{sec:nelson-hamiltonian} (it is, roughly speaking, a
smooth characteristic function of
$0\leq \lvert k \rvert_{}^{}\leq \sigma _0$). We need also
$2K(1+n\varepsilon )< \sigma _0<\infty $, where $K$ is defined in
Theorem~\ref{thm:1}, to apply the KLMN theorem. The choice of
$\sigma _0$ depends on $n$, and thus we cannot define a unique
self-adjoint operator for any $n\in \mathds{N}$. Nevertheless, for any
$\sigma _0\in \mathds{R}_+$, we can define a unique self adjoint
operator $\hat{H}_{ren}(\sigma _0)$ on
$\Gamma _s(L^2 (\mathds{R}^d )\otimes L^2 (\mathds{R}^d ))$ by:
\begin{equation}
  \label{eq:81}
  \hat{H}_{ren}(\sigma _0)=\left\{
    \begin{aligned}
      &(\mathscr{E}\circ \mathbf{D}_{\chi _{\sigma_0}})^{Wick}\bigl\rvert_{\mathcal{H}_n} &\text{ if }n\leq \mathfrak{N}\\
      &0&\text{ if }n>\mathfrak{N}
    \end{aligned}
  \right.\; ,
\end{equation}
where $\mathfrak{N}$ is defined by \eqref{eq:55}, and
$(\mathscr{E}\circ \mathbf{D}_{\chi
  _{\sigma_0}})^{Wick}\bigl\rvert_{\mathcal{H}_n}$
is the unique self-adjoint operator defined by $\hat{h}_n$.

It remains to consider the quantization, if any, of
$\mathbf{D}_{\chi _{\sigma_0}}$. The classical dressing is the
Hamiltonian flow, at time $1$, generated by
$\mathscr{D}_{g_{\infty }}$. As stated in Lemma~\ref{lemma:1},
$(\mathscr{D}_{g_{\infty }})^{Wick}$ is a densely defined self-adjoint
operator on
$\Gamma _s(L^2 (\mathds{R}^d )\oplus L^2 (\mathds{R}^d ))$. Therefore
the quantization of $\mathbf{D}_{\chi _{\sigma_0}}$ would be the
unitary operator
$e^{-\frac{i}{\varepsilon }(\mathscr{D}_{g_{\infty }})^{Wick}}$.
Analogously, for any $\sigma _0\in \mathds{R}_+$ the quantization of
the dressed flow\footnote{In Section~\ref{sec:glob-exist-results}
  above the dressed flow was defined as $\hat{\mathbf{E}}(t)$.}
$\hat{\mathbf{E}}_{\sigma _0}(t)$ would be the renormalized unitary
evolution $e^{-\frac{i}{\varepsilon }t \hat{H}_{ren}(\sigma _0)}$. So
quantizing \eqref{eq:69} we can define the undressed unitary
dynamics for any
$\sigma _0\in\mathds{R}_+$:
\begin{equation}
  \label{eq:82}
  e^{-i\frac{t}{\varepsilon } H_{ren}(\sigma _0)}=e^{-\frac{i}{\varepsilon }(\mathscr{D}_{g_{\infty }})^{Wick}}e^{-i\frac{t}{\varepsilon } \hat{H}_{ren}(\sigma _0)}e^{\frac{i}{\varepsilon }(\mathscr{D}_{g_{\infty }})^{Wick}}\; ;
\end{equation}
and the undressed renormalized Hamiltonian $H_{ren}(\sigma _0)$ as its
generator.

\subsection{The relation between $\sigma_0$ and $\lVert u\rVert_2$.}
\label{sec:relat-betw-sigm}

The parameter $\sigma _0$ is necessary to make the renormalized
Hamiltonian operator self-adjoint; and for each
$\sigma _0\in\mathds{R}_+$ the quantum dynamics may be defined on the
subspace of $\mathcal{H}$ with at most $\mathfrak{N}(\sigma _0)$
nucleons. However, the choice of $\sigma _0$ remains in some sense
arbitrary. In this
section we argue that the number of nucleons is related to the
boundedness from below of the classical energy $\mathscr{E}$.

The classical energy $\mathscr{E}$, when defined on the whole
$D(\mathscr{E})$, is unbounded from below:
\begin{equation}
  \label{eq:77}
  \inf_{(u,\alpha )\in D(\mathscr{E})}\mathscr{E}(u,\alpha )=-\infty \; .
\end{equation}
However, we can take advantage
of the conservation of the $L^2$-norm of $u$. Indeed it can be easily seen
that on each hollow cylinder\footnote{We define a hollow cylinder as
  $S_u(0,C)=\{(u,\alpha )\in L^2 (\mathds{R}^d )\oplus L^2
  (\mathds{R}^d ), \lVert u \rVert_2^{}= C\}$.}
$S_u(0,C)\cap D(\mathscr{E})$ the energy becomes bounded from
below. One can also prove that that the energy is bounded from below on the
cylinder $B_u(0,\sqrt{\mathfrak{C}})\cap D(\mathscr{E})$:
\begin{equation}
  \label{eq:78}
  \inf_{\substack{(u,\alpha )\in D(\mathscr{E})\\\lVert u  \rVert_2^{}\leq \sqrt{\mathfrak{C}} }}\mathscr{E}(u,\alpha )>-\infty \; ,
\end{equation}
and both $B_u(0,\sqrt{\mathfrak{C}})$ and $D(\mathscr{E})$ are
conserved by the Hamiltonian flow $\mathbf{E}$. Therefore it seems
quite natural to consider the restriction of the S-KG theory to
$B_u(0,\sqrt{\mathfrak{C}})$; and in particular to consider classical
probability distributions concentrated in
$B_u(0,\sqrt{\mathfrak{C}})$. Since we would like to quantize our
theory in the Fock space $\Gamma _s(L^2 \oplus L^2)$, we may ask which
are the families of normal quantum states on the Fock space
that have Wigner measures concentrated  on
$B_u(0,\sqrt{\mathfrak{C}})$. The families that satisfy
Assumption~\eqref{eq:84}, or equivalently that have at most $[\mathfrak{C}/\varepsilon] $
nucleons (see Lemma~\ref{lemma:10}), have measures concentrated  on
$B_u(0,\sqrt{\mathfrak{C}})$ according to \citep[Lemma 2.14]{MR2802894} or
\cite[Theorem 3.1]{Ammari:2014ab}. If in addition we take into account that
to a classical theory with $U(1)$ invariance in $u$, corresponds a
quantum theory that preserves the number of nucleons; we
we are led to believe that a quantization of $\mathscr{E}$ should be
meaningful, as an operator bounded from below, on the subspace
$\bigoplus _{n\leq [\mathfrak{C}/\varepsilon ]}\mathcal{H}_n$ of the
Fock representation. As discussed above, we are able to define the
quantization $H_{ren}$ of $\mathscr{E}$ thanks to the classical
dressing renormalization. It is remarkable that we have also the
freedom, by choosing $\sigma _0$, to make $H_{ren}$ self-adjoint and
different from zero precisely on the aforementioned relevant subspace
$\bigoplus _{n\leq [\mathfrak{C}/\varepsilon ]}\mathcal{H}_n$.

We remark that there are families of quantum states with Wigner
measures concentrated on $B_u(0,\sqrt{\mathfrak{C}})$ that have non-zero
components in every subspace with fixed nucleons. In this case
however, the Wick quantization procedure yields an unboundedness from
below of the quantum energy, due to the fact that the number of
nucleons is conserved by the dynamics, but unbounded on the
state. This asymptotic instability or unboundedness from below of the energy, that is characteristic of the bosonic quantum system, seems to prevent the definition of a suitable non-trivial dynamics for these states.

\subsection{Symplectic character of $\mathbf{D}_{\chi_{\sigma_0}}$.}
\label{sec:sympl-prop-class}

To complete our description of the classical renormalization scheme,
we explicitly prove that the classical dressing is a (non-linear)
symplectic map for the real symplectic structure
$\bigl\{(L^2\oplus L^2)_{\mathds{R}},\Im\langle \,\cdot\, , \,\cdot\,
\rangle_{L^2\oplus L^2}\bigr\}$.
We denote by
$\mathrm{d}\mathbf{D}_g(\theta)_{(u,\alpha)}\in \mathcal{L}(L^2\oplus L^2)$ the
(Fréchet) derivative of $\mathbf{D}_g(\theta)$ at the point
$(u,\alpha)\in L^2\oplus L^2$.
\begin{proposition}
  Let $g\in L^2(\mathds{R}^3)$ be an even or odd function. Then for
  any $\theta\in \mathds{R}$, $\mathbf{D}_g(\theta)$ is differentiable
  at any point
  $(u,\alpha)\in L^2(\mathds{R}^3)\oplus L^2(\mathds{R}^3)$. In
  addition, it satisfies for any
  $(v_1,\beta_1),(v_2,\beta_2)\in L^2(\mathds{R}^3)\oplus
  L^2(\mathds{R}^3)$:
  \begin{equation}
    \label{eq:74}
    \Im\langle \mathrm{d}\mathbf{D}_g(\theta)_{(u,\alpha)}(v_1,\beta_1)  , \mathrm{d}\mathbf{D}_g(\theta)_{(u,\alpha)}(v_2,\beta_2) \rangle_{L^2\oplus L^2}=\Im\langle (v_1,\beta_1)  , (v_2,\beta_2) \rangle_{L^2\oplus L^2}\; .
  \end{equation}
\end{proposition}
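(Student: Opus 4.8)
The plan is to use the explicit form of the dressing map, differentiate it, and then check \eqref{eq:74} by a short Parseval‑type identity. Since $g$ is even or odd, Lemma~\ref{lemma:8} applies, so $\mathbf{D}_g(\theta)$ is given by \eqref{eq:51}; writing $A_g[\alpha]$ for the real‑valued function \eqref{eq:42} (which depends $\mathds{R}$‑linearly on $\alpha$) and $F$ for the transform \eqref{eq:43}, this reads $\mathbf{D}_g(\theta)(u,\alpha)=\bigl(u\,e^{-i\theta A_g[\alpha]},\ \alpha-i\theta\,g\,F(\lvert u\rvert^2)\bigr)$.

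First I would settle differentiability. The basic estimates are $\lVert A_g[\beta]\rVert_{\infty}\leq 2\lVert g\rVert_2\lVert\beta\rVert_2$ (Cauchy--Schwarz, so $A_g[\,\cdot\,]\colon L^2\to L^\infty$ is bounded) and $\lVert g\,F(h)\rVert_2\leq\lVert g\rVert_2\lVert h\rVert_1$ for $h\in L^1(\mathds{R}^3)$ (since $F(h)\in L^\infty$); together with $\lvert e^{-i\theta A_g[\alpha]}\rvert=1$ these show that $\mathbf{D}_g(\theta)$ maps $L^2\oplus L^2$ into itself and, being built from continuous multilinear maps and the entire exponential, is Fréchet smooth. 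A difference‑quotient computation then gives
\[
  \mathrm{d}\mathbf{D}_g(\theta)_{(u,\alpha)}(v,\beta)=\Bigl(e^{-i\theta A_g[\alpha]}\bigl(v-i\theta\,u\,A_g[\beta]\bigr),\ \beta-i\theta\,g\,F\bigl(2\Re(\bar u v)\bigr)\Bigr).
\]

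Next I would insert this into the left‑hand side of \eqref{eq:74}. Since the phase factor is unimodular and $A_g[\alpha]$ is real, it drops out of the $L^2_x$ inner product; expanding both arguments and taking imaginary parts, the diagonal terms return $\Im\langle v_1,v_2\rangle+\Im\langle\beta_1,\beta_2\rangle$, while the two terms quadratic in $\theta$, namely $\theta^2\int\lvert u\rvert^2 A_g[\beta_1]A_g[\beta_2]\,dx$ and $\theta^2\int\lvert g\rvert^2\,\overline{F(2\Re\bar u v_1)}\,F(2\Re\bar u v_2)\,dk$, are both real — here one uses $\lvert g(-k)\rvert=\lvert g(k)\rvert$ (valid since $g$ is even or odd) and $\overline{F(h)(k)}=F(h)(-k)$ for real $h$ — hence contribute nothing imaginary. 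What is left is
\[
  \theta\Bigl[\int A_g[\beta_1]\,\Re(\bar u v_2)\,dx-\int A_g[\beta_2]\,\Re(\bar u v_1)\,dx-\Re\langle\beta_1,gF(2\Re\bar u v_2)\rangle+\Re\langle gF(2\Re\bar u v_1),\beta_2\rangle\Bigr].
\]
To show this vanishes I would use $A_g[\beta_1](x)=2\Re\int\bar g(k)\beta_1(k)e^{ik\cdot x}\,dk$, Fubini, and $\int\Re(\bar u v_2)(x)e^{-ik\cdot x}\,dx=\tfrac12 F(2\Re\bar u v_2)(k)$, which give $\int A_g[\beta_1]\,\Re(\bar u v_2)\,dx=\Re\langle\beta_1,gF(2\Re\bar u v_2)\rangle$ and, symmetrically, $\int A_g[\beta_2]\,\Re(\bar u v_1)\,dx=\Re\langle gF(2\Re\bar u v_1),\beta_2\rangle$. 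Substituting these, the bracket collapses to $0$, which is precisely \eqref{eq:74}.

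The computation is elementary; the only real care needed is bookkeeping — keeping every integral finite along the way (the $L^2\times L^2\to L^1$ bounds above suffice) and correctly tracking which $\theta^2$‑ and cross‑terms survive passage to the imaginary part, the evenness/oddness of $g$ being used exactly to make the $\theta^2$‑terms real. Alternatively one could note that $\mathbf{D}_g(\cdot)$ is the Hamiltonian flow of $\mathscr{D}_g$ (Definition~\ref{def:4}) and reduce, via the group law and $\mathbf{D}_g(0)=\mathrm{id}$, to differentiating \eqref{eq:74} at $\theta=0$; but with \eqref{eq:51} explicit the direct route above is the shortest.
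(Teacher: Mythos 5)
Your proposal is correct and follows essentially the same route as the paper: invoke Lemma~\ref{lemma:8} for the explicit form of $\mathbf{D}_g(\theta)$, compute its Fréchet derivative, expand the $L^2\oplus L^2$ inner product, show the $\theta^2$ terms contribute nothing imaginary, and establish the cancellation of the two $\theta$-linear cross terms by a Fubini/Parseval identity relating $A_g[\beta]$ to $gF(\cdot)$. Two small remarks: your derivative formula $-i\theta\,g\,F\bigl(2\Re(\bar u v)\bigr)$ is the unambiguous form of the paper's shorthand $-2i\theta\,g\,\Re\bigl(F(\bar u v)\bigr)$ (the latter is to be read as $F$ applied to the real part of $\bar u v$, not the pointwise real part of the Fourier integral, which would be incorrect — your version avoids that potential confusion), and you also supply the elementary $L^2\to L^\infty$ / $L^1\to L^\infty$ bounds that justify Fréchet differentiability, a step the paper simply asserts. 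One tiny bookkeeping point: the even/odd property of $g$ (hence $\lvert g(-k)\rvert=\lvert g(k)\rvert$) is needed only for the $k$-side $\theta^2$ term; the $x$-side $\theta^2$ term $\theta^2\int\lvert u\rvert^2A_g[\beta_1]A_g[\beta_2]\,dx$ is already manifestly real because $A_g[\beta_j]$ is real-valued by construction.
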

\begin{proof}
  We recall that with the assumptions on $g$, $\mathbf{D}_g(\theta)$
  has the explicit form:
  \begin{equation*}
    \mathbf{D}_g(\theta )(u(x),\alpha(k) )= \Bigl(\:u(x) e^{-i\theta A_g(x)}\:,\: \alpha(k) -i\theta g(k)F(\lvert u  \rvert_{}^2)(k)\:\Bigr)\; ;
  \end{equation*}
  where $A_g$ and $F$ are defined by Equations~\eqref{eq:42}
  and~\eqref{eq:43} respectively. The Fréchet derivative of
  $\mathbf{D}_g(\theta)$ is easily computed, and yields
  \begin{equation*}
    \begin{split}
      \mathrm{d}\mathbf{D}_g(\theta )_{(u,\alpha)}(v(x),\beta(k))= \Bigl(\:\bigl(v(x) -i\theta B_g(x)u(x)\bigr) e^{-i\theta A_g(x)} \:,\: \beta(k) -2i\theta g(k)\Re \bigl(F(\bar{u}v)(k)\bigr)\:\Bigr)\\= \,\Bigr(\: \mathrm{i}(v,\beta)\:,\: \mathrm{ii}(v,\beta)\:\Bigr)\; ;
    \end{split}
  \end{equation*}
  where we recall that $B_g(x)$ is $A_g(x)$ with $\alpha$ substituted
  by $\beta$. Then we have:
  \begin{gather*}
    \Im\langle \mathrm{i}(v_1,\beta_1)  , \mathrm{i}(v_2,\beta_2) \rangle_{L^2}=\Im\langle v_1  , v_2 \rangle_{L^2}+2\theta\Re\Bigl(\langle B^{(1)}_g u , v_2 \rangle_{L^2}-\langle v_1,B^{(2)}_g u \rangle_{L^2}\Bigr)\; ,\\
    \Im\langle \mathrm{ii}(v_1,\beta_1)  , \mathrm{ii}(v_2,\beta_2) \rangle_{L^2}=\Im\langle \beta_1  , \beta_2 \rangle_{L^2}+2\theta\Re\Bigl(\langle g\,\Re F(\bar{u}v_1), \beta_2 \rangle_{L^2}-\langle\beta_1,g\,\Re F(\bar{u}v_2) \rangle_{L^2}\Bigr)\; .
  \end{gather*}
  The result then follows, noting that
  $\langle g\,\Re F(\bar{u}v_1), \beta_2 \rangle_{L^2}=\langle
  v_1,B^{(2)}_g u \rangle_{L^2}$
  and
  $\langle\beta_1,g\,\Re F(\bar{u}v_2) \rangle_{L^2}=\langle B^{(1)}_g
  u , v_2 \rangle_{L^2}$.
\end{proof}

\section{The classical limit of the renormalized Nelson model.}
\label{sec:class-limit-renorm-1}

In this section we discuss in detail the classical limit of the
renormalized Nelson model, both dressed and undressed. The Subsections
from \ref{sec:integral-formula} to \ref{sec:uniq-solut-transp} are
dedicated to prove the convergence of the dressed dynamics. The
obtained results are summarized by Theorem~\ref{thm:3}. In
Subsection~\ref{sec:class-limit-dress} we study the classical limit of
the dressing transformation. Finally, in
Section~\ref{sec:link-dress-dynam} we prove Theorem~\ref{thm:2} and
overview the results.

\subsection{The integral formula for the dressed Hamiltonian.}
\label{sec:integral-formula}

The results of this and the next subsection are similar in spirit to
the ones previously obtained in \citep[][Section
3]{Ammari:2014aa
} for the Nelson model with cutoff and in \cite[Section
3]{2011arXiv1111.5918A} for the mean field problem. However, some additional care has
to be taken, for in this more singular situation the manipulations
below are allowed only in the sense of quadratic forms. We start with
a couple of preparatory lemmas. The proof of the first can be essentially obtained
 following \citep[Lemma
6.1]{AB}; the second is an equivalent reformulation of Assumption
\eqref{eq:84}:
\begin{equation*}
  \exists \mathfrak{C}>0\, ,\, \forall \varepsilon \in (0,\bar{\varepsilon} )\, ,\, \forall k\in \mathds{N}\, ,\, \Tr[\varrho _{\varepsilon }N_1^k]\leq \mathfrak{C}^k\; .
\end{equation*}
We recall that the Weyl operator $W(\xi )$,
$L^2 \oplus L^2 \ni \xi =\xi _1\oplus \xi _2$, is defined as:
\begin{equation}
  \label{eq:104}
  W(\xi )=e^{\frac{i}{\sqrt{2}}\bigl(\psi ^{*}(\xi _1)+\psi (\xi _1)\bigr)}e^{\frac{i}{\sqrt{2}}\bigl(a ^{*}(\xi _2)+a (\xi _2)\bigr)}\; .
\end{equation}
\begin{lemma}\label{lemma:9}
  For any $\xi =\xi _1\oplus \xi _2$ such that
  $\xi _1\in Q(-\Delta +V)\subset H^1$ and
  $\xi _2\in D(\omega ^{1/2})\equiv \mathcal{F}H^{1/2}$, there exists
  $C(\xi )>0$ that depends only on $\lVert \xi _1 \rVert_{H^1}^{}$
  and $\lVert \xi _2 \rVert_{\mathcal{F}H^{1/2}}^{}$, such that for any
  $\varepsilon \in (0,\bar{\varepsilon} )$:
  \begin{gather*}
    \lVert H_0^{1/2}W(\xi )  \Psi \rVert_{}^{}\leq C(\xi )\lVert (H_0+\bar{\varepsilon} )^{1/2}\Psi   \rVert_{}^{}\; ,\; \forall \Psi \in Q(H_0)\; ;\\
    \lVert (H_0+1)^{1/2}(N_1+1)^{1/2}W(\xi )  \Psi \rVert_{}^{}\leq C(\xi )\lVert (H_0+\bar{\varepsilon} )^{1/2}(N_1+\bar{\varepsilon} )^{1/2}\Psi   \rVert_{}^{}\; ,\; \forall \Psi \in Q(H_0)\cap Q(N_1)\; .
  \end{gather*}
  In an analogous fashion, for any $\xi \in L^2 \oplus L^2$, $r>0$, there
  exist $C(\xi )>0$ that depends only on $\lVert \xi _1 \rVert_2^{}$
  and $\lVert \xi _2 \rVert_2^{}$, such that for any
  $\varepsilon \in (0,\bar{\varepsilon} )$:
  \begin{equation*}
    \lVert (N_1+N_2)^{r/2}W(\xi )  \Psi \rVert_{}^{}\leq C(\xi )\lVert (N_1+N_2+\bar{\varepsilon} )^{r/2}\Psi   \rVert_{}^{}\; ,\; \forall \Psi \in Q(N_1^r)\; .
  \end{equation*}
\end{lemma}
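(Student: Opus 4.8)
The plan is to exploit the pull-through formulae for Weyl operators on Fock space, which express $W(\xi)^{*} a^{\#}(g) W(\xi)$ as $a^{\#}(g)$ plus a scalar, and more generally conjugate the generators of $H_0$ and of $N_1,N_2$ by $W(\xi)$. Concretely, for the second-quantized one-particle operators one has the identities
$W(\xi)^{*}\, d\Gamma(S)\, W(\xi) = d\Gamma(S) + \sqrt{2}\,\Re R(S\xi) \cdot(\text{lin. in }a^{\#}) + \tfrac12\langle \xi, S\xi\rangle$
on a suitable core (finite-particle vectors in the domain), with the obvious adaptation to the two-component Fock space $\mathcal H$ and the $\varepsilon$-dependent CCR. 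First I would establish these algebraic identities on $\mathcal C^{\infty}_0(N)\cap D(H_0)$ (or its intersection with $D(N_1)$), where all manipulations are legitimate, and then extend by density/closure. Applying this with $S = -\tfrac{\Delta}{2M}+V$ on the first factor and $S=\omega$ on the second gives
$W(\xi)^{*} H_0 W(\xi) = H_0 + \Phi(\xi) + c(\xi)$,
where $\Phi(\xi)$ is a field-type operator linear in creation/annihilation operators with form factors $(-\tfrac{\Delta}{2M}+V)\xi_1$ and $\omega\xi_2$, and $c(\xi)=\langle\xi_1,(-\tfrac{\Delta}{2M}+V)\xi_1\rangle+\langle\xi_2,\omega\xi_2\rangle$ is a finite constant precisely because $\xi_1\in Q(-\Delta+V)$ and $\xi_2\in\mathcal FH^{1/2}$.

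The key quantitative step is then the standard relative bound: a field operator $\Phi(\xi)$ with form factor $f$ satisfies $\lVert\Phi(\xi)\Psi\rVert\le C\lVert f\rVert\,\lVert(N+\bar\varepsilon)^{1/2}\Psi\rVert$ uniformly in $\varepsilon\in(0,\bar\varepsilon)$ (this is the $\varepsilon$-dependent version of the Fock-space estimates already invoked, e.g. in the proof of Lemma~\ref{lemma:3}). Since $N+\bar\varepsilon\le C'(H_0+\bar\varepsilon)$ by $m_0>0$ and positivity of $-\Delta+V$ (the same comparison used at the start of the proof of Lemma~\ref{lemma:4}), we get $\lVert\Phi(\xi)\Psi\rVert\le C(\xi)\lVert(H_0+\bar\varepsilon)^{1/2}\Psi\rVert$. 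Combining,
$\lVert H_0^{1/2}W(\xi)\Psi\rVert^2 = \langle\Psi, W(\xi)^{*}H_0 W(\xi)\Psi\rangle = \langle\Psi,(H_0+\Phi(\xi)+c(\xi))\Psi\rangle \le C(\xi)^2\lVert(H_0+\bar\varepsilon)^{1/2}\Psi\rVert^2$,
using $H_0\ge0$ and absorbing the cross term by Cauchy–Schwarz; this proves the first inequality. For the third inequality one does the same with $S=1$ on each factor (so $d\Gamma(1)=N_1$ or $N_2$), getting $W(\xi)^{*}(N_1+N_2)W(\xi) = N_1+N_2+\Phi'(\xi)+(\lVert\xi_1\rVert_2^2+\lVert\xi_2\rVert_2^2)$ with $\Phi'(\xi)$ linear in $a^{\#}$, and iterating (or directly using the binomial-type commutator expansion of $W(\xi)^{*}(N_1+N_2)^{r}W(\xi)$) to handle the power $r>0$; only $\xi\in L^2\oplus L^2$ is needed since no derivative appears.

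For the second (mixed) inequality, note that $W(\xi)$ shifts $N_1$ only by operators built from $\psi^{\#}(\xi_1)$ (it commutes with the meson variables up to a phase), so $W(\xi)^{*}(N_1+1)W(\xi)$ is $(N_1+1)$ plus a field term in the nucleonic sector with form factor $\xi_1\in H^1\subset L^2$, controlled by $(N_1+\bar\varepsilon)^{1/2}$; meanwhile $W(\xi)^{*}(H_0+1)W(\xi)$ is handled as above. Since $H_0$ and $N_1$ commute (strongly), one can estimate $\lVert(H_0+1)^{1/2}(N_1+1)^{1/2}W(\xi)\Psi\rVert^2 = \langle\Psi, W(\xi)^{*}(H_0+1)(N_1+1)W(\xi)\Psi\rangle$ by inserting $W W^{*}=1$ and bounding the resulting product $W(\xi)^{*}(H_0+1)W(\xi)\cdot W(\xi)^{*}(N_1+1)W(\xi)$ by $C(\xi)^2(H_0+\bar\varepsilon)(N_1+\bar\varepsilon)$ — this uses that the error field operators are relatively bounded with respect to the square roots, together with the commutativity of $H_0,N_1$ to keep the cross terms under control via Cauchy–Schwarz with a small parameter. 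The main obstacle I anticipate is bookkeeping: making the conjugation identities rigorous on a common core, tracking the $\varepsilon$-dependence so that all constants are uniform on $(0,\bar\varepsilon)$, and in the mixed estimate cleanly absorbing the cross terms involving both $H_0^{1/2}$ and $N_1^{1/2}$ without losing uniformity — but all of this is routine Fock-space analysis once the right pull-through identities are written down.
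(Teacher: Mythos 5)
Your overall strategy---conjugate $H_0$, $N_1+N_2$, and the mixed operator through $W(\xi)$ via the pull-through formula and then apply a relative Fock-space bound---is the same route the paper takes (it relegates the details to [Lemma~6.1, AB]). However, as written there is a genuine gap in the quantitative step, and it is not merely bookkeeping.

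You expand $W(\xi)^{*}d\Gamma(S)W(\xi) = d\Gamma(S) + \Phi(S\xi) + c(\xi)$ with $S=-\tfrac{\Delta}{2M}+V$ or $S=\omega$, and then invoke the estimate $\lVert \Phi(f)\Psi\rVert \le C\lVert f\rVert_{L^2}\lVert (N+\bar\varepsilon)^{1/2}\Psi\rVert$ with form factor $f=S\xi$. This requires $S\xi\in L^2$, i.e.\ $\xi_1\in D(-\Delta+V)$ and $\xi_2\in D(\omega)$, which is strictly stronger than the hypothesis $\xi_1\in Q(-\Delta+V)$, $\xi_2\in D(\omega^{1/2})$. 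In fact, the operator $a^{*}(S\xi)$ does not even make sense as an operator (on nonzero vectors) when $S\xi\notin L^2$, since $\lVert a^{*}(f)\Psi\rVert^{2}=\lVert a(f)\Psi\rVert^{2}+\varepsilon\lVert f\rVert^{2}\lVert\Psi\rVert^{2}$; so the operator-level expansion itself is ill-defined here. The fact that the resulting constant would then involve $\lVert(-\Delta+V)\xi_1\rVert_2$ and $\lVert\omega\xi_2\rVert_2$ rather than the $H^1$ and $\mathcal{F}H^{1/2}$ norms claimed in the statement is the telltale sign.

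The remedy is to stay at the quadratic-form level and use a weighted mode decomposition. Writing $d\Gamma(S)=\sum_j\lambda_j a^{*}(e_j)a(e_j)$ in an eigenbasis of $S$ and using $a(e_j)W(\xi)=W(\xi)\bigl(a(e_j)+\tfrac{i\varepsilon}{\sqrt 2}\langle e_j,\xi\rangle\bigr)$ one gets, for any $\delta>0$,
\begin{equation*}
\bigl\langle W(\xi)\Psi,\,d\Gamma(S)W(\xi)\Psi\bigr\rangle=\sum_j\lambda_j\bigl\lVert a(e_j)W(\xi)\Psi\bigr\rVert^{2}\le(1+\delta)\bigl\langle\Psi,d\Gamma(S)\Psi\bigr\rangle+(1+\delta^{-1})\tfrac{\varepsilon^{2}}{2}\lVert S^{1/2}\xi\rVert^{2}\lVert\Psi\rVert^{2}\;.
\end{equation*}
Only $\lVert S^{1/2}\xi\rVert$ appears, which is exactly $\lVert(-\tfrac{\Delta}{2M}+V)^{1/2}\xi_1\rVert$ (resp.\ $\lVert\omega^{1/2}\xi_2\rVert$), consistent with the hypothesis and the claimed dependence of $C(\xi)$. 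Equivalently you can phrase this as using the weighted Fock-space estimate $\lVert a(f)\Psi\rVert\le\lVert S^{-1/2}f\rVert\lVert d\Gamma(S)^{1/2}\Psi\rVert$ rather than the unweighted $\lVert f\rVert\lVert(N+\bar\varepsilon)^{1/2}\Psi\rVert$, and noting that the $a^{*}$ contribution to the form $\langle\Psi,\cdot\,\Psi\rangle$ is just the complex conjugate of the $a$ contribution, so no separate $a^{*}(S\xi)$ bound is needed. Once this is fixed, your step $N+\bar\varepsilon\lesssim H_0+\bar\varepsilon$ becomes unnecessary for the first inequality, and the third ($(N_1+N_2)^{r/2}$) and second (mixed) inequalities follow by the same mode-by-mode argument combined with your commutativity and interpolation/iteration remarks, which are sound.
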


\begin{lemma}\label{lemma:10}
  Let
  $(\varrho _{\varepsilon })_{\varepsilon \in (0,\bar{\varepsilon} )}$
  be a family of normal states on $\mathcal{H}$. Then
  $(\varrho _{\varepsilon })_{\varepsilon \in (0,\bar{\varepsilon} )}$
  satisfies Assumption~\eqref{eq:84} if and only if for any
  $\varepsilon \in (0,\bar{\varepsilon} )$ there exists a sequence
  $(\Psi _i(\varepsilon ))_{i\in \mathds{N}}$ of orthonormal vectors
  in $\mathcal{H}$ with non-zero components only in
  $\bigoplus _{n=0}^{[\mathfrak{C}/\varepsilon ]}\mathcal{H}_n$ and a
  sequence $(\lambda _i(\varepsilon ))_{i\in \mathds{N}}\in l^1$, with
  each $\lambda _i(\varepsilon )>0$, such that:
  \begin{equation*}
    \varrho _{\varepsilon }=\sum_{i\in \mathds{N}}^{}\lambda _i(\varepsilon ) \lvert \Psi _i(\varepsilon )\rangle\langle \Psi _i(\varepsilon )\rvert\; .
  \end{equation*}
  The explicit $\varepsilon $-dependence of $\Psi _i$ and $\lambda _i$
  will be often omitted.
\end{lemma}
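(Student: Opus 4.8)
The plan is to prove the two implications separately, the bridge between them being that $N_1=d\Gamma(\mathds{1})$ (on the nucleonic Fock factor) acts on $\mathcal{H}_n$ as multiplication by the scalar $\varepsilon n$, so that $\bigoplus_{n=0}^{\mathfrak{N}}\mathcal{H}_n$ is the spectral subspace $\mathrm{Ran}\,\mathds{1}_{[0,\varepsilon\mathfrak{N}]}(N_1)$. Fix $\varepsilon\in(0,\bar\varepsilon)$, set $\mathfrak{N}:=[\mathfrak{C}/\varepsilon]$, $P_{\le}:=\mathds{1}_{[0,\mathfrak{N}]}(N_1)$ (the projector onto $\bigoplus_{n\le\mathfrak{N}}\mathcal{H}_n$) and $P_{>}:=\mathds{1}-P_{\le}$. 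By definition of the integer part one has the \emph{strict} chain $\varepsilon\mathfrak{N}\le\mathfrak{C}<\varepsilon(\mathfrak{N}+1)$, hence $\delta:=\varepsilon(\mathfrak{N}+1)-\mathfrak{C}>0$, and as self-adjoint operators $N_1P_{\le}\le\mathfrak{C}\,P_{\le}$ while $N_1P_{>}\ge(\mathfrak{C}+\delta)\,P_{>}$.

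For the implication ``$\Leftarrow$'': if $\varrho_\varepsilon=\sum_i\lambda_i\lvert\Psi_i\rangle\langle\Psi_i\rvert$ with $\lambda_i>0$, $\sum_i\lambda_i=1$, and each $\Psi_i\in\mathrm{Ran}\,P_{\le}$, then $\langle\Psi_i,N_1^k\Psi_i\rangle\le(\varepsilon\mathfrak{N})^k\le\mathfrak{C}^k$ for all $k$, and summing against the $\lambda_i$ gives $\Tr[\varrho_\varepsilon N_1^k]\le\mathfrak{C}^k$, i.e.\ Assumption~\eqref{eq:84} holds with the \emph{same} constant $\mathfrak{C}$.

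For ``$\Rightarrow$'': assume $\Tr[\varrho_\varepsilon N_1^k]\le\mathfrak{C}^k$ for all $k\in\mathds{N}$. Since $N_1$ commutes with $P_{>}$ and $N_1^kP_{>}\ge(\mathfrak{C}+\delta)^kP_{>}\ge0$, positivity of $\varrho_\varepsilon$ gives
\[
  \mathfrak{C}^k\ \ge\ \Tr[\varrho_\varepsilon N_1^k]\ \ge\ \Tr[\varrho_\varepsilon N_1^kP_{>}]\ \ge\ (\mathfrak{C}+\delta)^k\,\Tr[\varrho_\varepsilon P_{>}]\,,
\]
so $\Tr[\varrho_\varepsilon P_{>}]\le\big(\mathfrak{C}/(\mathfrak{C}+\delta)\big)^k$ for \emph{every} $k$; here $\delta>0$ and the full hierarchy of moment bounds are both essential, since for a single fixed $k$ one only gets $\Tr[\varrho_\varepsilon P_{>}]<1$. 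Letting $k\to\infty$ forces $\Tr[\varrho_\varepsilon P_{>}]=0$, hence (as $\varrho_\varepsilon\ge0$) $\lVert\varrho_\varepsilon^{1/2}P_{>}\rVert_{\mathrm{HS}}=0$, so $\varrho_\varepsilon^{1/2}P_{>}=0$ and therefore $P_{>}\varrho_\varepsilon=\varrho_\varepsilon P_{>}=0$, i.e.\ $\varrho_\varepsilon=P_{\le}\varrho_\varepsilon P_{\le}$. Finally take the spectral decomposition of the density matrix $\varrho_\varepsilon=\sum_i\lambda_i\lvert\Psi_i\rangle\langle\Psi_i\rvert$ ($\lambda_i>0$, $(\lambda_i)\in\ell^1$, $(\Psi_i)$ orthonormal eigenvectors); from $\lambda_i\Psi_i=\varrho_\varepsilon\Psi_i=P_{\le}\varrho_\varepsilon\Psi_i=\lambda_iP_{\le}\Psi_i$ and $\lambda_i>0$ we get $\Psi_i=P_{\le}\Psi_i\in\bigoplus_{n=0}^{\mathfrak{N}}\mathcal{H}_n$, which is exactly the asserted decomposition.

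I do not expect a real obstacle: the proof is elementary functional analysis. The only points needing a word of care are (i) the strict inequality $\varepsilon([\mathfrak{C}/\varepsilon]+1)>\mathfrak{C}$, which produces the spectral gap $\delta>0$ without which the geometric bound would not tend to zero; and (ii) the a priori meaning of $\Tr[\varrho_\varepsilon N_1^kP_{>}]$ for the unbounded $N_1^k$ --- both factors being nonnegative, the trace is well defined in $[0,+\infty]$, and to be fully rigorous one first runs the above inequalities with $N_1$ replaced by its bounded truncations $N_1\mathds{1}_{[0,m]}(N_1)$ and then passes to $m\to\infty$ by monotone convergence.
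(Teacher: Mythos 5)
Your proof is correct and rests on the same core idea as the paper's: using the full hierarchy of moment bounds $\Tr[\varrho_\varepsilon N_1^k]\le\mathfrak{C}^k$, the fact that $N_1\ge\mathfrak{C}+\delta$ on the orthogonal complement of $\bigoplus_{n\le[\mathfrak{C}/\varepsilon]}\mathcal{H}_n$ forces the state to live entirely in that subspace as $k\to\infty$. The only difference is organizational: you prove $\Tr[\varrho_\varepsilon P_>]=0$ globally and then deduce $P_>\Psi_i=0$ for each eigenvector via $\varrho_\varepsilon=P_\le\varrho_\varepsilon P_\le$, whereas the paper takes the spectral decomposition of $\varrho_\varepsilon$ first and runs the geometric-growth argument eigenvector by eigenvector (using a free parameter $L>\mathfrak{C}$ in place of your explicit gap $\delta=\varepsilon([\mathfrak{C}/\varepsilon]+1)-\mathfrak{C}>0$); your version is slightly cleaner and sidesteps a small bookkeeping point in the paper where a factor $\lambda_i$ is silently dropped. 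Your remark about handling the unbounded $N_1^k$ by truncation and monotone convergence is a nice extra precision that the paper leaves implicit.
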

\begin{proof}
  We start assuming \eqref{eq:84}. Let
  $\varrho _{\varepsilon }=\sum_{i\in \mathds{N}}^{}\lambda _i \lvert
  \Psi _i\rangle\langle \Psi _i\rvert$
  be the spectral decomposition of $\varrho _{\varepsilon }$. Then
  \begin{equation*}
    \begin{split}
      \Tr\bigl[\varrho _{\varepsilon }N_1^k\bigr]=\sum_{i\in \mathds{N}}^{}\lambda _i \langle \Psi _i  , N^k_1\Psi _i \rangle_{}\leq \mathfrak{C}^k \Rightarrow \sum_{i\in \mathds{N}}^{}\lambda _i\langle \Psi _i  , (N_1/\mathfrak{C})^k\Psi _i \rangle_{}\leq 1\; .
    \end{split}
  \end{equation*}
  Let $\mathds{1}_{[L ,+\infty )}(N_1)$ be the spectral projection of
  $N_1$ on the interval $[L ,+\infty )$; and choose $L
  >\mathfrak{C}$. Then it follows that:
  \begin{equation*}
    \begin{split}
      1\geq \Tr\bigl[\varrho _{\varepsilon }\mathds{1}_{[L ,+\infty )}(N_1) (N_1/\mathfrak{C})^k\bigr]=\sum_{i\in \mathds{N}}^{}\lambda _i \langle \Psi _i  , \mathds{1}_{[L ,+\infty )}(N_1) (N_1/\mathfrak{C})^k\Psi _i \rangle_{}\\\geq \sum_{i\in \mathds{N}}^{}\lambda _i(L/\mathfrak{C})^k\langle \Psi _i  , \mathds{1}_{[L ,+\infty )}(N_1)\Psi _i \rangle_{}\; .
    \end{split}
  \end{equation*}
  Therefore
  $(L/\mathfrak{C})^k\langle \Psi _i , \mathds{1}_{[L ,+\infty
    )}(N_1)\Psi _i \rangle_{}\leq 1$
  for any $k\in \mathds{N}$ and for any $\Psi _i$. Now
  $(L/\mathfrak{C})^k$ diverges when $k\to \infty $, while
  $\langle \Psi _i , \mathds{1}_{[L ,+\infty )}(N_1)\Psi _i
  \rangle_{}$
  does not depend on $k$, so their product is uniformly bounded if and
  only if $\mathds{1}_{[L ,+\infty )}(N_1)\Psi _i=0$ for any
  $L>\mathfrak{C}$. The result follows immediately, recalling that the
  eigenvalues of $N_1$ are of the form $\varepsilon n_1$, with
  $n_1\in \mathds{N}$.

  The converse statement that Assumption~\eqref{eq:84} follows if
  $\varrho _{\varepsilon }=\sum_{i\in \mathds{N}}^{}\lambda _i \lvert
  \Psi _i\rangle\langle \Psi _i\rvert$,
  with each $\Psi _i$ with at most $[\mathfrak{C}/\varepsilon ]$
  particles is trivial to prove.
\end{proof}

In this subsection, we will consider only families  of states
$(\varrho _{\varepsilon })_{\varepsilon \in (0,\bar{\varepsilon} )}$
that satisfy Assumption~\eqref{eq:84} and the following assumption:
\begin{equation}
  \label{eq:arho}\tag{$A'_\rho$}
  \exists C>0  \, ,\, \forall \varepsilon \in (0,\bar{\varepsilon} )\, ,\, \Tr[\varrho _{\varepsilon }(N_1+H_0)]\leq C\; .
\end{equation}

\begin{definition}[$\varrho_{\varepsilon } (t)$, $\tilde{\varrho} _{\varepsilon }(t)$]\label{def:10}
  We define the dressed time evolution of a state
  $\varrho _{\varepsilon }$ to be
  \begin{equation*}
    \varrho _{\varepsilon }(t)=e^{-i \frac{t}{\varepsilon } \hat{H}_{ren}}\,\varrho _{\varepsilon } \,e^{i \frac{t}{\varepsilon } \hat{H}_{ren}}\; ,
  \end{equation*}
  where the dependence on $\sigma _0$ of $\hat{H}_{ren}$ is omitted,
  and the $\sigma _0$ is chosen such that the dynamics is non-trivial
  on the whole subspace with at most $[\mathfrak{C}/\varepsilon ]$
  nucleons (see Lemma~\ref{lemma:10} and the discussion in
  Section~\ref{sec:class-limit-renorm}). We also define the dressed
  evolution in the interaction picture to be
  \begin{equation*}
    \tilde{\varrho} _{\varepsilon }(t)=e^{i \frac{t}{\varepsilon } H_0}\,\varrho _{\varepsilon }(t)\, e^{-i \frac{t}{\varepsilon } H_0}\; .
  \end{equation*}
\end{definition}

To characterize the evolved Wigner measures corresponding to
$\tilde{\varrho} _{\varepsilon }(t)$, it is sufficient to study its
Fourier transform; this is done studying the evolution of
$\Tr[\tilde{\varrho} _{\varepsilon }(t)W(\xi )]$ by means of an
integral equation.

\begin{proposition}\label{prop:10}
  Let
  $(\varrho _{\varepsilon })_{\varepsilon \in (0,\bar{\varepsilon} )}$
  be a family of normal states on $\mathcal{H}$ satisfying
  Assumptions~\eqref{eq:84} and~\eqref{eq:arho}. Then for any
  $t\in \mathds{R}$,
  $Q(-\Delta +V)\oplus D(\omega ^{1/2})\ni \xi =\xi _1\oplus \xi _2$:
  \begin{equation}\label{eq:94}
    \begin{split}
      \Tr\Bigl[\tilde{\varrho}_{\varepsilon }(t)W(\xi ) \Bigr]=\Tr\Bigl[\varrho _{\varepsilon }W(\xi )\Bigr]+\frac{i}{\varepsilon }\int_0^t \Tr\Bigl[\varrho _{\varepsilon }(s)[\hat{H}_{ren,I},W(\tilde{\xi}_s )] \Bigr] \, ds\,,
    \end{split}
  \end{equation}
  where
  $\tilde{\xi}_s=e^{is(-\Delta +V)}\xi _1\oplus e^{-is\omega}\xi _2 $.
  The commutator $[\hat{H}_{ren,I},W(\tilde{\xi}_s )]$ has to be
  intended as a densely defined quadratic form with domain $Q(H_0)$,
  or equivalently as an operator from $Q(H_0)$ to $Q(H_0)^{*}$.
\end{proposition}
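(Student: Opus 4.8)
The plan is to derive the integral equation \eqref{eq:94} from the (formal) Duhamel formula, making rigorous each manipulation in the sense of quadratic forms, since $\hat H_{ren,I}$ is only form-bounded with respect to $H_0$. First I would observe that $\Tr[\tilde\varrho_\varepsilon(t)W(\xi)]$ is, for each fixed $\xi$, a differentiable function of $t$: writing $\tilde\varrho_\varepsilon(t)=e^{i\frac{t}{\varepsilon}H_0}e^{-i\frac t\varepsilon \hat H_{ren}}\varrho_\varepsilon e^{i\frac t\varepsilon \hat H_{ren}}e^{-i\frac t\varepsilon H_0}$, the formal time derivative is $\frac{i}{\varepsilon}\Tr[\tilde\varrho_\varepsilon(t)\,(H_0 W(\xi)-W(\xi)H_0)]-\frac{i}{\varepsilon}\Tr[e^{i\frac t\varepsilon H_0}[\hat H_{ren},\varrho_\varepsilon(t)]e^{-i\frac t\varepsilon H_0}W(\xi)]$, and combining the two $H_0$-pieces one gets $\frac{i}{\varepsilon}\Tr[\varrho_\varepsilon(t)\,[\hat H_{ren,I},W(\tilde\xi_t)]]$ after commuting $e^{\pm i\frac t\varepsilon H_0}$ through $W(\xi)$ (which produces $W(\tilde\xi_t)$ since $e^{i\frac t\varepsilon H_0}W(\xi)e^{-i\frac t\varepsilon H_0}=W(e^{it(-\Delta/2M+V)}\xi_1\oplus e^{-it\omega}\xi_2)$ by the standard Weyl/second-quantization identity, with the free-Schrödinger part acting as claimed — note the statement writes $e^{is(-\Delta+V)}$, absorbing $1/2M$ into notation). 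Then \eqref{eq:94} follows by integrating from $0$ to $t$, provided the integrand is continuous in $s$ and the identities above are justified rigorously.

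The rigorous justification proceeds in three steps. \textbf{Step 1: the commutator as a form on $Q(H_0)$.} Using Proposition~\ref{prop:2} (equivalently \eqref{eq:26}), $\hat H_{ren,I}$ maps $Q(H_0)\cap D(N_1)$ into $Q(H_0)^*$ with the bound $\|(H_0+\lambda)^{-1/2}\hat H_{ren,I}(H_0+\lambda)^{-1/2}\|\le c\,\|(N_1+1)\cdot\|$ on the relevant nucleon sectors; and by Lemma~\ref{lemma:9}, $W(\tilde\xi_s)$ preserves $Q(H_0)$ (and $Q(N_1)$) with constants uniform in $\varepsilon$ and locally uniform in $s$ (since $\|\tilde\xi_{1,s}\|_{H^1}$, $\|\tilde\xi_{2,s}\|_{\mathcal F H^{1/2}}$ are bounded on compacts by Assumption~\ref{ass:2} on $V$). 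Hence $[\hat H_{ren,I},W(\tilde\xi_s)]=\hat H_{ren,I}W(\tilde\xi_s)-W(\tilde\xi_s)\hat H_{ren,I}$ is a well-defined quadratic form on $Q(H_0)$, or an operator $Q(H_0)\to Q(H_0)^*$. \textbf{Step 2: trace-class pairing.} Since $\varrho_\varepsilon$ satisfies \eqref{eq:arho}, i.e. $\Tr[\varrho_\varepsilon(N_1+H_0)]\le C$, and $\hat H_{ren}$ commutes with $N_1$ and leaves invariant (up to the cut-off $\mathfrak N$) the form domain structure, the evolved state $\varrho_\varepsilon(s)$ also satisfies $\Tr[\varrho_\varepsilon(s)(N_1+1)(H_0+1)]\le C'$ locally uniformly in $s$ — this uses that $\hat H_{ren}$ is, sector by sector, comparable to $H_0$ (the KLMN bounds of Theorem~\ref{thm:1}) so that $e^{-i\frac s\varepsilon\hat H_{ren}}$ propagates $H_0$-moments, and $N_1$ is conserved. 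Then $\Tr[\varrho_\varepsilon(s)\,[\hat H_{ren,I},W(\tilde\xi_s)]]$ is a legitimate trace: writing it as $\Tr[(H_0+1)^{1/2}(N_1+1)^{1/2}\varrho_\varepsilon(s)(N_1+1)^{1/2}(H_0+1)^{1/2}\cdot(H_0+1)^{-1/2}(N_1+1)^{-1/2}[\hat H_{ren,I},W(\tilde\xi_s)](H_0+1)^{-1/2}(N_1+1)^{-1/2}]$, the first factor is trace-class and the second is bounded by Steps~1 and the moment bounds. \textbf{Step 3: continuity in $s$ and the fundamental theorem of calculus.} Strong continuity of $s\mapsto e^{-i\frac s\varepsilon\hat H_{ren}}$ and $s\mapsto e^{-i\frac s\varepsilon H_0}$ on the appropriate form domains, together with continuity of $s\mapsto\tilde\xi_s$ in $H^1\oplus\mathcal F H^{1/2}$, gives continuity of the integrand; then one checks the derivative identity above holds, e.g. by first verifying it for a regularized family (replace $\hat H_{ren}$ by $\hat H_\sigma$, $\sigma<\infty$, where everything is a genuine operator identity on $D(H_0)$, and $\hat H_{ren,I}$ by $\hat H_I(\sigma)$), and then passing to the limit $\sigma\to\infty$ using the norm-resolvent convergence of Theorem~\ref{thm:1}\ref{item:4} together with the uniform (in $\sigma$) form bounds of Proposition~\ref{prop:2}, the uniform convergence $\|\omega^{-1/2}r_\sigma\|_2\to 0$-type estimates \eqref{eq:29}, and dominated convergence in the $s$-integral.

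The main obstacle is Step~2–Step~3 interplay: controlling the evolved moments $\Tr[\varrho_\varepsilon(s)(N_1+1)(H_0+1)]$ uniformly, and more delicately, justifying the exchange of limit and integral when regularizing. The difficulty is that $\hat H_{ren}$ is defined only through its quadratic form (Remark~\ref{rem:1}), so ``$\hat H_{ren}$ propagates $H_0$-regularity'' is not automatic; one must exploit the explicit sector-wise structure of Definition~\ref{def:7}–\ref{def:8} (finitely many nucleon sectors active, the rest trivial), where on each active $\mathcal H_n$ the KLMN comparison $\hat H_\infty^{(n)}\gtrsim -b_{\sigma_0}(\infty)$ and $H_0^{(n)}\lesssim c(\hat H_\infty^{(n)}+b)$ hold, so the dynamics $e^{-i\frac s\varepsilon\hat H_\infty^{(n)}}$ maps $Q(H_0^{(n)})$ to itself boundedly. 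Patching these together over $n\le\mathfrak N(\varepsilon,\sigma_0)$ gives the needed $H_0$-moment propagation, at the cost of constants depending on $\sigma_0$ and $\varepsilon$ — which is acceptable here since \eqref{eq:94} is an identity at fixed $\varepsilon$. The $\sigma\to\infty$ regularization limit is then the technically heaviest point, but it is powered entirely by results already in hand (Theorem~\ref{thm:1}, Proposition~\ref{prop:2}, Lemma~\ref{lemma:9}) plus dominated convergence.
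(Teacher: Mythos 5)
Your proposal is correct in substance and isolates the same pressure points as the paper's own argument: form bound for $\hat H_{ren,I}$ on $Q(H_0)$, $W(\tilde\xi_s)$ preserving $Q(H_0)\cap Q(N_1)$ with uniform constants (Lemma~\ref{lemma:9}), the Weyl conjugation $e^{-i\frac{s}{\varepsilon}H_0}W(\xi)e^{i\frac{s}{\varepsilon}H_0}=W(\tilde\xi_s)$, and dominated convergence. But the route you take differs in two ways worth noting, and one of them hides a soft spot.

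First, the paper leads with Lemma~\ref{lemma:10}: Assumption~\eqref{eq:84} yields a spectral decomposition $\varrho_\varepsilon=\sum_i\lambda_i\lvert\Psi_i\rangle\langle\Psi_i\rvert$ with each $\Psi_i$ supported on $\bigoplus_{n\le[\mathfrak{C}/\varepsilon]}\mathcal{H}_n$, so $N_1$ is a \emph{bounded} operator on every $\Psi_i$. The Duhamel step is then carried out vector-by-vector for $\Psi_i\in Q(H_0)$ (which follows from~\eqref{eq:arho}), and the sum over $i$ is handled by dominated convergence. Your Step~2 instead pursues a trace-level \emph{product} moment bound, $\Tr[\varrho_\varepsilon(s)(N_1+1)(H_0+1)]\le C'$. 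That is strictly stronger than the \emph{sum} moment $\Tr[\varrho_\varepsilon(s)(N_1+H_0)]$ which is what Lemma~\ref{lemma:13} actually propagates, and it is only available because~\eqref{eq:84} forces the state onto finitely many nucleon sectors where the product degenerates into the sum. You gesture at this ("finitely many nucleon sectors active, the rest trivial") but never make the reduction explicit; without it, the product moment does not follow from the stated assumptions, and the factorization of the trace you write down is not justified.

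Second, your Step~3 rigorizes the Duhamel-for-forms step by regularizing $\hat H_{ren}\to\hat H_\sigma$, $\sigma<\infty$, and passing to the limit via Theorem~\ref{thm:1}. The paper avoids the $\sigma$-limit entirely: since $Q(H_0)=Q(\hat H_{ren})$ on the relevant sectors and Lemma~\ref{lemma:9} controls $W(\tilde\xi_s)$ on that form domain, differentiability in $t$ is asserted directly at the vector level. Both routes are viable; yours is heavier (it introduces an extra exchange of limit and $s$-integral requiring the uniform-in-$\sigma$ bounds of Proposition~\ref{prop:2}), while the paper's vector-by-vector formulation is both shorter and the natural setting in which the dominated-convergence argument closes.
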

\begin{proof}
  The family
  $(\varrho _{\varepsilon })_{\varepsilon \in (0,\bar{\varepsilon} )}$
  satisfies Assumption~\eqref{eq:84}, therefore by
  Lemma~\ref{lemma:10}:
  \begin{equation*}
    \Tr\bigl[\tilde{\varrho}_{\varepsilon }(t)W(\xi ) \bigr]=\sum_{i\in \mathds{N}}^{}\lambda _i\langle e^{i \frac{t}{\varepsilon }H_0}e^{-i \frac{t}{\varepsilon }\hat{H}_{ren}} \Psi _i , W(\xi ) e^{i \frac{t}{\varepsilon }H_0}e^{-i \frac{t}{\varepsilon }\hat{H}_{ren}} \Psi _i\rangle_{}\; .
  \end{equation*}
  By Assumption~\eqref{eq:arho}, it follows that $\Psi _i\in Q(H_0)$
  for any $i\in \mathds{N}$. Hence the right hand side is
  differentiable in $t$ by Lemma~\ref{lemma:9}, since $Q(H_0)$ is the
  form domain of both $H_0$ and $\hat{H}_{ren}$. Using the Duhamel
  formula and the fact that
  $e^{-i \frac{s}{\varepsilon }H_0}W(\xi )e^{i \frac{s}{\varepsilon
    }H_0}=W(\tilde{\xi}_s )$, we then obtain:
  \begin{equation*}
    \begin{split}
      \Tr\bigl[\tilde{\varrho}_{\varepsilon }(t)W(\xi ) \bigr]=\sum_{i\in \mathds{N}}^{}\lambda _i\biggl(\langle \Psi _i  , W(\xi )\Psi _i \rangle_{}+\frac{i}{\varepsilon }\int_0^t\langle e^{-i \frac{s}{\varepsilon }\hat{H}_{ren}}\Psi _i  , [\hat{H}_{ren,I},W(\tilde{\xi}_s )]e^{-i \frac{s}{\varepsilon }\hat{H}_{ren}}\Psi _i \rangle_{}  ds\biggr)\; ;
    \end{split}
  \end{equation*}
  where $[\hat{H}_{ren,I},W(\tilde{\xi}_s )]$ makes sense as a
  quadratic form on $Q(H_0)$. The result is then obtained using
  Lebesgue's dominated convergence theorem on the right hand side, by
  virtue of Assumption~\eqref{eq:arho} and Lemma~\ref{lemma:9}.
\end{proof}

\subsection{The commutator $[\hat{H}_{ren,I},W(\tilde{\xi}_s)]$.}
\label{sec:comm-hath_r-i}

In this subsection we deal with the commutator
$[\hat{H}_{ren,I},W(\tilde{\xi}_s )]$. The goal is to show that each
of its terms converges in the limit $\varepsilon \to 0$, either to
zero or to a suitable phase space symbol.

For convenience, we recall some terminology related to quantization
procedures in infinite dimensional phase spaces (see
\citep{ammari:nier:2008
} for additional informations). Let $\mathcal{Z}$ be a Hilbert space
(the classical phase space).  In the language of quantization, we call
a densely defined functional
$\mathscr{A}: D\subset \mathcal{Z}\to \mathds{C}$ a (classical)
\emph{symbol}. We say that $A$ is a \emph{polynomial symbol} if there are densely defined
bilinear forms $b_{p,q}$ on
$\mathcal{Z}^{\otimes _s p} \times \mathcal{Z}^{\otimes _s q}$,
$0\leq p\leq \bar{p}$, $0\leq q\leq \bar{q}$ (with
$p,\bar{p},q,\bar{q}\in \mathds{N}$) such that
\begin{equation}
  \label{eq:93}
  \mathscr{A}(z)=\sum_{\substack{0\leq p\leq \bar{p}\\0\leq q\leq \bar{q}}}^{} b_{p,q}(z^{\otimes p},z^{\otimes q})\; .
\end{equation}
The Wick quantized quadratic form $(\mathscr{A})^{Wick}$ on
$\Gamma _s(\mathcal{Z})$ is then obtained, roughly speaking, replacing
each $z(\cdot )$ with the annihilation operator valued distribution
$a(\cdot )$; each $\bar{z}(\cdot )$ with the creation operator valued
distribution $a^{*}(\cdot )$; and putting all the $a^{*}(\cdot )$ to
the left of the $a(\cdot )$. We denote, with a straightforward
notation, the class of all polynomial symbols on $\mathcal{Z}$ by
$\bigoplus^{alg}_{(p,q)\in \mathds{N}^2}
\mathcal{Q}_{p,q}(\mathcal{Z})$.
If $\mathscr{A}:\mathcal{Z}\to \mathds{C}$ and the bilinear forms
$b_{p,q}(z^{\otimes p},z^{\otimes q})$ in \eqref{eq:93} can
all be written as
$\langle z^{\otimes q} , \tilde{b}_{p,q}z^{\otimes p}
\rangle_{\mathcal{Z}^{\otimes_s q }}$
for some bounded (resp.~compact) operator
$\tilde{b}_{p,q}:\mathcal{Z}^{\otimes_s p }\to \mathcal{Z}^{\otimes_s
  q }$,
we say that $\mathscr{A}$ is a bounded (resp.~compact) polynomial symbol. We denote the class of
all bounded (resp.~compact)  polynomial symbols by $\bigoplus^{alg}_{(p,q)\in \mathds{N}^2} \mathcal{P}_{p,q}(\mathcal{Z})$  \bigg(resp.
$\bigoplus^{alg}_{(p,q)\in \mathds{N}^2} \mathcal{P}_{p,q}^{\infty
}(\mathcal{Z})$\bigg).
We remark that $\mathscr{E}$, $\hat{\mathscr{E}}$ and $\mathscr{D}_g$
defined in Section~\ref{sec:classical-system} are all polynomial
symbols\footnote{In $L^2 (\mathds{R}^3 )\oplus L^2 (\mathds{R}^3 )$,
  we adopt the notation $z=(u,\alpha )$; and to each $u(x)$ it
  corresponds the operator valued distribution $\psi(x) $, to each
  $\alpha (k)$ the distribution $a(k)$. The Wick quantization is again
  obtained by substituting each $\bigl(u^{\#}(x),\alpha^{\#}(k) \bigr)$
  with $\bigl(\psi^{\#}(x) ,a^{\#}(k)\bigr)$, and using the normal
  ordering of creators to the left of annihilators.} on
$L^2\oplus L^2 $.

\begin{lemma}\label{lemma:12}
  Let
  $(\varrho _{\varepsilon })_{\varepsilon \in (0,\bar{\varepsilon} )}$
  satisfy the same assumptions as in Proposition~\ref{prop:10}. Then
  there exist maps
  $\mathscr{B}_j(\cdot ):Q(-\Delta +V)\oplus D(\omega ^{1/2})\to
  \bigoplus^{alg}_{(p,q)\in \mathds{N}^2}\mathcal{Q}_{p,q}\bigl(L^2
  \oplus L^2 \bigr)$,
  $j=0,\dotsc,3$, such that for any $t\in \mathds{R}$,
  $\xi \in Q(-\Delta +V)\oplus D(\omega ^{1/2})$:
  \begin{equation}\label{eq:92}
    \begin{split}
      \Tr\Bigl[\tilde{\varrho}_{\varepsilon }(t)W(\xi ) \Bigr]=\Tr\Bigl[\varrho _{\varepsilon }W(\xi )\Bigr]+\sum_{j=0}^3\varepsilon ^j\int_0^t \Tr\Bigl[\varrho _{\varepsilon }(s)W(\tilde{\xi}_s) \bigl(\mathscr{B}_j(\tilde{\xi}_s)\bigr)^{Wick} \Bigr]  ds\\
      =\Tr\Bigl[\varrho _{\varepsilon }W(\xi )\Bigr]+\sum_{j=0}^3\varepsilon ^j\int_0^t \Tr\Bigl[\varrho _{\varepsilon }(s)W(\tilde{\xi}_s) B_j(\tilde{\xi}_s) \Bigr]  ds\; ;
    \end{split}
  \end{equation}
  where the $\bigl(\mathscr{B}_j(\tilde{\xi}_s)\bigr)^{Wick}$ make
  sense as densely defined quadratic forms. To simplify the notation,
  we have set $B_j(\cdot ):=\bigl(\mathscr{B}_j(\cdot )\bigr)^{Wick}$.
\end{lemma}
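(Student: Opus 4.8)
The plan is to evaluate the commutator $[\hat{H}_{ren,I},W(\tilde{\xi}_s)]$ occurring in~\eqref{eq:94} by conjugating $\hat{H}_{ren,I}$ with the Weyl operator and then expanding in powers of $\varepsilon$. Since $\hat{H}_{ren,I}$ is the Wick quantization of a polynomial symbol of total degree at most four, this expansion terminates after order $\varepsilon^{4}$, and dividing by the prefactor $\varepsilon^{-1}$ already present in~\eqref{eq:94} leaves exactly the four contributions $j=0,\dots,3$ of~\eqref{eq:92}.

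First I would reduce to the uncut dressed interaction. Fix the value of $\sigma_0$ chosen in Definition~\ref{def:10}, so that $[\mathfrak{C}/\varepsilon]\le\mathfrak{N}(\varepsilon,\sigma_0)$ for all $\varepsilon\in(0,\bar{\varepsilon})$ (see~\eqref{eq:55}). Since $\hat{H}_{ren}$ preserves the nucleonic number, Lemma~\ref{lemma:10} shows that both $\varrho_\varepsilon$ and $\varrho_\varepsilon(s)$ are supported on $\bigoplus_{n\le[\mathfrak{C}/\varepsilon]}\mathcal{H}_n$, where $\hat{H}_{ren,I}$ agrees, as a quadratic form, with $\hat{H}_I(\infty)$ — the form of~\eqref{eq:12} at $\sigma=\infty$, whose $\mathfrak{N}$-cutoff is inactive there. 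Hence in the Duhamel term of~\eqref{eq:94} one may replace $\hat{H}_{ren,I}$ by $\hat{H}_I(\infty)$, and by Proposition~\ref{prop:7} together with a term-by-term comparison of~\eqref{eq:12} with~\eqref{eq:50} one has $\hat{H}_I(\infty)=(\hat{\mathscr{E}}_I)^{Wick}$, where $\hat{\mathscr{E}}_I:=\mathscr{E}\circ\mathbf{D}_{\chi_{\sigma_0}}-\mathscr{E}_0$ is a polynomial symbol in $\bigoplus^{alg}_{(p,q)}\mathcal{Q}_{p,q}(L^2\oplus L^2)$ with $p+q\le4$, its unique degree-four component being the two-body term $\tfrac12\langle u^{\otimes 2},V_\infty u^{\otimes 2}\rangle$.

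Next I would carry out the conjugation. Write $[\hat{H}_I(\infty),W(\tilde{\xi}_s)]=W(\tilde{\xi}_s)\bigl(W(\tilde{\xi}_s)^{*}\hat{H}_I(\infty)W(\tilde{\xi}_s)-\hat{H}_I(\infty)\bigr)$ as a form identity on $Q(H_0)$, which is licit because $W(\tilde{\xi}_s)$ maps $Q(H_0)$ into itself (Lemma~\ref{lemma:9}; here one uses that $\tilde{\xi}_s=(e^{is(-\Delta+V)}\xi_1,e^{-is\omega}\xi_2)$ has first component in $Q(-\Delta+V)\subset H^1$ and second in $\mathcal{F}H^{1/2}$ for every $s$, since these spaces are preserved by the free flows). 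Conjugation by a Weyl operator merely shifts each annihilation/creation operator-valued distribution by a scalar of order $\varepsilon$, so it sends a Wick-ordered polynomial to the Wick quantization of the translated symbol,
\begin{equation*}
  W(\eta)^{*}(\mathscr{A})^{Wick}W(\eta)=\Bigl(\mathscr{A}\bigl(\,\cdot\,+\tfrac{i}{\sqrt2}\varepsilon\,\eta\bigr)\Bigr)^{Wick}\,,
\end{equation*}
a Fock-space identity that will extend to the present form setting by the approximation argument below. As $\hat{\mathscr{E}}_I$ has degree $\le4$, its Taylor expansion in the shift is finite,
\begin{equation*}
  \hat{\mathscr{E}}_I\bigl(z+\tfrac{i}{\sqrt2}\varepsilon\eta\bigr)-\hat{\mathscr{E}}_I(z)=\sum_{m=1}^{4}\varepsilon^{m}\,\mathscr{R}_m(z;\eta)\,,
\end{equation*}
each $\mathscr{R}_m(\,\cdot\,;\eta)$ being a polynomial symbol in $z$ of degree $\le4-m$ with coefficients built $m$-homogeneously from $\eta$ and the kernels $g_\infty,r_\infty,V_\infty,\chi_{\sigma_0}$ of~\eqref{eq:50}. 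Setting $\mathscr{B}_j(\eta):=i\,\mathscr{R}_{j+1}(\,\cdot\,;\eta)$, $j=0,\dots,3$, and dividing by $\varepsilon$ gives $\tfrac{i}{\varepsilon}[\hat{H}_I(\infty),W(\tilde{\xi}_s)]=\sum_{j=0}^{3}\varepsilon^{j}\,W(\tilde{\xi}_s)\,(\mathscr{B}_j(\tilde{\xi}_s))^{Wick}$, which inserted into~\eqref{eq:94} and traced against $\varrho_\varepsilon(s)$ produces~\eqref{eq:92} with $B_j=(\mathscr{B}_j)^{Wick}$.

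The delicate point is to make the conjugation and the Taylor manipulation rigorous, since $\hat{H}_I(\infty)$ is only a quadratic form. I would first perform the previous step with the regularized dressed Hamiltonian $\hat{H}_\sigma$ of~\eqref{eq:7} for $\sigma<\infty$, where $\hat{H}_I(\sigma)$ is a genuine operator and the computation is the standard one on Fock space, obtaining approximating symbols $\mathscr{B}_j^{\sigma}$; then I would pass to the limit $\sigma\to\infty$ using the norm-resolvent convergence of Theorem~\ref{thm:1} (transferred to $\varrho_\varepsilon(s)$) together with the convergence of $r_\sigma\to r_\infty$, $V_\sigma\to V_\infty$ in the norms controlling~\eqref{eq:29} and Lemma~\ref{lemma:2}, so that $\mathscr{B}_j^{\sigma}\to\mathscr{B}_j$ and every integrand in~\eqref{eq:92} converges; the domination needed to interchange the limit with the $s$-integral comes from Lemma~\ref{lemma:9} and Assumptions~\eqref{eq:84},~\eqref{eq:arho}. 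The hard part is precisely this uniform control: one must bound, uniformly in $\varepsilon$ (and in $\sigma$), each of the finitely many coefficient operators entering $W(\tilde{\xi}_s)(\mathscr{B}_j(\tilde{\xi}_s))^{Wick}$ as a quadratic form relative to $H_0$, so that the termwise expansion is legitimate and absolutely convergent — but this is exactly the content of the relative form bounds already proved in Proposition~\ref{prop:2} (via~\eqref{eq:29}) and in Lemma~\ref{lemma:9}, each coefficient of each $\mathscr{B}_j$ being of a type estimated there.
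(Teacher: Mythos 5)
Your proof follows essentially the same route as the paper's: both reduce to the form identity $\tfrac{i}{\varepsilon}[\hat{H}_{ren,I},W(\tilde\xi_s)]=\sum_{j=0}^{3}\varepsilon^{j}W(\tilde\xi_s)B_j(\tilde\xi_s)$, identify $\hat{H}_{ren,I}$ as $(\hat{\mathscr{E}}_I)^{Wick}$, apply the Weyl translation identity $W^*(\xi)(\mathscr{A})^{Wick}W(\xi)=\bigl(\mathscr{A}(\cdot+\tfrac{i\varepsilon}{\sqrt2}\xi)\bigr)^{Wick}$, and read off the $\mathscr{B}_j$ from the finite Taylor expansion of the degree-$\le4$ symbol. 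The paper cites \cite{2011arXiv1111.5918A,liard:2015} for the validity of the translation identity on quadratic forms, whereas you propose to rederive it via the regularized $\hat{H}_\sigma$ and pass to the limit $\sigma\to\infty$; that is a legitimate, somewhat more self-contained variant of the same argument. One small slip: the quartic part of $\hat{\mathscr{E}}_I$ in \eqref{eq:95} is not reduced to the $V_\infty$ two-body term alone — the bilinear-in-$\alpha$ term $\tfrac{1}{2M}\int(r_\infty\bar\alpha e^{-ik\cdot x}+\bar r_\infty\alpha e^{ik\cdot x})(r_\infty\bar\alpha e^{-il\cdot x}+\bar r_\infty\alpha e^{il\cdot x})\lvert u\rvert^2$ is also of total degree four (two in $\alpha$, two in $u$), as reflected by the $a^{*}a^{*}$, $aa$, $a^{*}a$ terms appearing in $B_0$ in \eqref{eq:91}. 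This does not affect the argument, since all that is used is that the degree is at most four so the Taylor expansion terminates at $\varepsilon^{4}$.
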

\begin{proof} We only sketch the proof here since it follows the same lines as in \cite{ 2011arXiv1111.5918A,liard:2015}.   By \eqref{eq:94}, the only thing we have to prove is that,
  in the sense of quadratic forms,
  $\frac{i}{\varepsilon }[\hat{H}_{ren,I},W(\tilde{\xi}_s
  )]=\sum_{j=0}^3 W(\tilde{\xi}_s) B_j(\tilde{\xi}_s)$.
  First of all, we remark that $\hat{H}_{ren,I}$ is the Wick
  quantization of a polynomial symbol\footnote{To be precise, we are
    considering here the quadratic form $\hat{h}_{ren,I}$, defined and
    different from zero on the whole space $\mathcal{H}$, since it agrees
    with
    $\langle \,\cdot \, ,\, \hat{H}_{ren,I}\, \cdot \, \rangle_{}$
    when restricted to vectors that belong to
    $\bigoplus _{n\leq [\mathfrak{C}/\varepsilon ]}\mathcal{H}_n$
    (being here the case by Lemma~\ref{lemma:10}).}; i.e.
  $\hat{H}_{ren,I}=\bigl(\hat{\mathscr{E}}_I \bigr)^{Wick}$, with
  \begin{equation}
    \label{eq:95}
    \begin{split}
      \hat{\mathscr{E}}_I(u,\alpha )=\frac{1}{(2\pi )^{3/2}}\int_{\mathds{R}^6}^{}\frac{\chi _{\sigma _0}}{\sqrt{2\omega (k)}}\Bigl(\bar{\alpha}(k)e^{-ik\cdot x}+ \alpha (k)e^{ik\cdot x}\Bigr)\lvert u(x)  \rvert_{}^2  dxdk \\+\frac{1}{2M}\int_{\mathds{R}^9}^{}\Bigl(r_{\infty }(k)\bar{\alpha}(k)e^{-ik\cdot x}+\bar{r}_{\infty }(k)\alpha (k)e^{ik\cdot x} \Bigr) \Bigl(r_{\infty }(l)\bar{\alpha}(l)e^{-il\cdot x}\\+\bar{r}_{\infty }(l)\alpha (l)e^{il\cdot x} \Bigr)  \lvert u(x)  \rvert_{}^2dxdkdl\\-\frac{2}{M}\Re\int_{\mathds{R}^6}^{}r_{\infty }(k) \bar{\alpha} (k)e^{-ik\cdot x}\bar{u}(x)D_xu(x)  dxdk +\frac{1}{2}\int_{\mathds{R}^6}^{}V_{\infty}(x-y)\lvert u(x)  \rvert_{}^2\lvert u(y)  \rvert_{}^2  dxdy\; .
    \end{split}
  \end{equation}
  We also recall, according to \citep[Proposition 2.10 for bounded polynomial
  symbols]{ammari:nier:2008
  } and \cite[Proposition 2.1.30 for the general case]{liard:2015}, that essentially for any
  $\mathscr{A}\in \bigoplus^{alg}_{(p,q)\in \mathds{N}^2}
  \mathcal{Q}_{p,q}(L^2\oplus L^2)$ the following formula is true, in the sense of forms, for any
  suitably regular $\xi \in L^2\oplus L^2$:
  \begin{equation}
    \label{eq:96}
    W^{*}(\xi )\bigl(\mathscr{A}\bigr)^{Wick}W(\xi )=\Bigl(\mathscr{A}\bigl(\cdot + \tfrac{i\varepsilon }{\sqrt{2}}\xi \bigr)\Bigr)^{Wick}\; .
  \end{equation}
  Roughly speaking, the Weyl operators $W(\xi )$ translate each
  creation/annihilation operator by
  $\mp \tfrac{i\varepsilon }{\sqrt{2}}\xi$. The result then follows
  immediately on the states $\varrho_{\varepsilon}(s)$:
  \begin{equation*}
    [\hat{H}_{ren,I},W(\tilde{\xi}_s)]=W(\tilde{\xi}_s )\bigl(W^{*}(\tilde{\xi}_s )\hat{H}_{ren,I}W(\tilde{\xi}_s )-\hat{H}_{ren,I}\bigr)=W(\tilde{\xi}_s )\Bigl(\hat{\mathscr{E}}_I(\cdot + \tfrac{i\varepsilon }{\sqrt{2}}\tilde{\xi}_s)-\hat{\mathscr{E}}_I(\cdot )\Bigr)^{Wick}\; ;
  \end{equation*}
  finally we define
  $\sum_{j=0}^{3}\varepsilon ^j\mathscr{B}_j(\xi)
  (z)=\frac{i}{\varepsilon }\Bigl(\hat{\mathscr{E}}_I(z +
  \tfrac{i\varepsilon }{\sqrt{2}}\xi )-\hat{\mathscr{E}}_I(z)\Bigr)$,
  to factor out the $\varepsilon $-dependence.
\end{proof}

We state the next lemma without giving the tedious proof, that is
based on the same type of estimates given in
Section~\ref{sec:renorm-hamilt} for the full operator
$\hat{H}_{ren,I}$.

\begin{lemma}\label{lemma:11}
  For any $j=0,1,2,3$, $\xi \in Q(-\Delta +V)\cap D(\omega ^{1/2})$
  and $\mathfrak{C}>0$, there exists
  $C_j(\xi )>0$ such that for any
  $\Phi ,\Psi \in D(H_0^{1/2})\cap D(N_1)$, with $\Phi$ or $\Psi$ in
$\bigoplus _{n=0}^{[\mathfrak{C}/\varepsilon ]}\mathcal{H}_n$ and for
any $s\in \mathds{R}$ and $\varepsilon \in (0,\bar{\varepsilon} )$:
  \begin{equation}
    \label{eq:97}
    \lvert \langle \Phi   , B_j(\tilde{\xi}_s ) \Psi \rangle  \rvert_{}^{}\leq C_{j}(\xi )\lVert (N_1+H_0+\bar{\varepsilon} )^{1/2}\Phi   \rVert_{}^{}\cdot \lVert (N_1+H_0+\bar{\varepsilon} )^{1/2}\Psi   \rVert_{}^{}\; .
  \end{equation}
\end{lemma}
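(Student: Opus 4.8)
The plan is to reduce the estimate to the $\varepsilon$-uniform Fock-space bounds already proved in Section~\ref{sec:renorm-hamilt}. Recall from the proof of Lemma~\ref{lemma:12} that $\sum_{j=0}^{3}\varepsilon^{j}\mathscr{B}_j(\xi)(z)=\tfrac{i}{\varepsilon}\bigl(\hat{\mathscr{E}}_I(z+\tfrac{i\varepsilon}{\sqrt 2}\xi)-\hat{\mathscr{E}}_I(z)\bigr)$, where $\hat{\mathscr{E}}_I$ is the polynomial symbol~\eqref{eq:95}, of degree at most $4$ (at most quadratic in $\alpha$, at most quartic in $u$). First I would expand each of the four blocks of $\hat{\mathscr{E}}_I$ by multilinearity around $z$: replacing $k$ of the creation/annihilation ``slots'' of a degree-$d$ monomial by the corresponding component of $\tfrac{i\varepsilon}{\sqrt 2}\xi$ produces a term of order $\varepsilon^{k-1}$ after dividing by $\varepsilon$, so $\mathscr{B}_j(\xi)$ is a polynomial symbol in $z$ of degree $\le 3$ whose coefficients are the fixed kernels $\chi_{\sigma_0}/\sqrt{2\omega}$, $r_\infty=-ikg_\infty$ and $V_\infty$ partially contracted against $\xi_1$, $\xi_2$ and $D_x\xi_1$. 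Wick quantization then exhibits $B_j(\tilde\xi_s)=\bigl(\mathscr{B}_j(\tilde\xi_s)\bigr)^{Wick}$ as a finite sum of monomials, each carrying at most two $\psi^{\#}$, at most two $a^{\#}$, possibly one factor $D_x$, and one ``frozen'' factor built out of $\tilde\xi_s$.

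Next I would estimate each such monomial term by term, exactly as in Lemmas~\ref{lemma:3}, \ref{lemma:4} and~\ref{lemma:5} and Proposition~\ref{prop:2}. The meson kernels contribute the finite quantities $\|\omega^{-1/2}r_\infty\|_2$, $\|\omega^{-1/4}r_\infty\|_2$ and $\|\omega^{-1/2}\chi_{\sigma_0}\|_2$ (finite since $g_\infty\in\mathcal{F}H^{1/2}$ and $\chi_{\sigma_0}\in\mathcal{F}H^{-1/2}$), the two-body $V_\infty$-terms are handled by Lemma~\ref{lemma:2} as in the proof of~\eqref{eq:25}, and a derivative $D_x$ landing on a $\psi^{\#}$ is absorbed by~\eqref{eq:20}. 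The frozen factors are controlled by the standard bounds $\|a^{\#}(g)(N_2+\bar\varepsilon)^{-1/2}\|\le\|g\|_2$, $\|\psi^{\#}(g)(N_1+\bar\varepsilon)^{-1/2}\|\le\|g\|_2$ and $\|\psi(D_x g)(H_0+1)^{-1/2}\|\le c\|g\|_{H^1}$; the key point is that $e^{is(-\Delta+V)}$ and $e^{-is\omega}$ are unitary and commute with $(-\Delta+V)^{1/2}$ and $\omega^{1/2}$ respectively, so $\|\tilde\xi_{s,1}\|_{H^1}+\|\tilde\xi_{s,1}\|_2+\|\omega^{1/2}\tilde\xi_{s,2}\|_2+\|\tilde\xi_{s,2}\|_2\le c(\xi)$ uniformly in $s\in\mathds{R}$, using the hypothesis $\xi_1\in Q(-\Delta+V)\subset H^1$ (as $V\ge 0$) and $\xi_2\in D(\omega^{1/2})$. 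Collecting, for components $\Psi_n\in\mathcal{H}_n$, $\Phi_{n'}\in\mathcal{H}_{n'}$ one obtains a bound of the form $|\langle\Phi_{n'},B_j(\tilde\xi_s)\Psi_n\rangle|\le c(\xi)(n\varepsilon+1)^{\kappa}(n'\varepsilon+1)^{\kappa'}\,\|(N_1+H_0+\bar\varepsilon)^{1/2}\Phi_{n'}\|\,\|(N_1+H_0+\bar\varepsilon)^{1/2}\Psi_n\|$ with exponents $\kappa,\kappa'\le 1$, just as in~\eqref{eq:27}.

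Finally I would use the hypothesis that $\Phi$ or $\Psi$ has non-zero components only in $\bigoplus_{n\le[\mathfrak{C}/\varepsilon]}\mathcal{H}_n$. Since $\hat{\mathscr{E}}_I$ is at most quartic in $u$, the shifted symbol changes the nucleon number by at most two, so $B_j(\tilde\xi_s)$ maps $\mathcal{H}_n$ into $\bigoplus_{|n'-n|\le 2}\mathcal{H}_{n'}$; hence in $\langle\Phi,B_j(\tilde\xi_s)\Psi\rangle=\sum_{n,n'}\langle\Phi_{n'},B_j(\tilde\xi_s)\Psi_n\rangle$ the surviving terms all satisfy $\min(n,n')\le[\mathfrak{C}/\varepsilon]$ and $\max(n,n')\le[\mathfrak{C}/\varepsilon]+2$, so that $n\varepsilon,n'\varepsilon\le\mathfrak{C}+2\bar\varepsilon$ there and the prefactors $(n\varepsilon+1)^{\kappa}(n'\varepsilon+1)^{\kappa'}$ are bounded by a constant depending only on $\mathfrak{C}$. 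A Cauchy--Schwarz summation over $n,n'$, using $\sum_n\|(N_1+H_0+\bar\varepsilon)^{1/2}\Psi_n\|^2=\|(N_1+H_0+\bar\varepsilon)^{1/2}\Psi\|^2$ (and likewise for $\Phi$), then yields~\eqref{eq:97} with $C_j(\xi)$ depending only on $\xi$, $\mathfrak{C}$ and the fixed data $m_0,M,\bar\varepsilon,\chi$.

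The main obstacle is essentially bookkeeping: systematically enumerating the monomials generated by the expansion of $\hat{\mathscr{E}}_I(z+\tfrac{i\varepsilon}{\sqrt 2}\xi)$ and matching each to the appropriate estimate of Section~\ref{sec:renorm-hamilt}. The only genuinely delicate points are the contribution of the block $-\tfrac{2}{M}\Re\int r_\infty\bar\alpha e^{-ik\cdot x}\bar u D_xu\,dxdk$ when $D_x$ falls on the frozen factor $\tilde\xi_{s,1}$ --- which is precisely why $\xi_1\in Q(-\Delta+V)\subset H^1$ is required --- and keeping every prefactor at most linear in $n\varepsilon$, so that it can be absorbed once one restricts to $\le[\mathfrak{C}/\varepsilon]$ nucleons; this is the same mechanism already used in~\eqref{eq:27}--\eqref{eq:31}.
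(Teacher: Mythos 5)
Your proof is correct and follows exactly the route the paper indicates when it describes the argument as ``tedious'' and refers back to the estimates of Section~\ref{sec:renorm-hamilt}. You correctly isolate the two ingredients that make the bound uniform in $\varepsilon$ and $s$: that $B_j(\tilde\xi_s)$ shifts the nucleon number by at most two (since $\hat{\mathscr{E}}_I$ is at most quartic in $u$), so the hypothesis that $\Phi$ or $\Psi$ lives in $\bigoplus_{n\le[\mathfrak{C}/\varepsilon]}\mathcal{H}_n$ bounds the surviving $(n\varepsilon+1)$ prefactors exactly as in~\eqref{eq:27}--\eqref{eq:31}; and that the free flow commutes with $(-\Delta+V)^{1/2}$ and $\omega^{1/2}$, giving the $s$-uniform control of $\|\tilde\xi_{s,1}\|_{H^1}$ and $\|\omega^{1/2}\tilde\xi_{s,2}\|_2$, with $\xi_1\in Q(-\Delta+V)\subset H^1$ being precisely what lets $D_x$ land safely on the frozen factor.
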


Thanks to this lemma we are now in a position to prove that the higher
order terms in $\varepsilon $ of Equation~\eqref{eq:92} (namely those
with $j>0$) vanish in the limit $\varepsilon \to 0$.

\begin{proposition}\label{prop:11}
  Let
  $(\varrho _{\varepsilon })_{\varepsilon \in (0,\bar{\varepsilon} )}$
  satisfy Assumptions~\eqref{eq:84} and~\eqref{eq:arho}; let
  $\xi \in Q(-\Delta +V)\cap D(\omega ^{1/2})$. Then the following
  limit holds for any $t\in \mathds{R}$:
  \begin{equation}
    \label{eq:98}
    \lim_{\varepsilon \to 0}\sum_{j=1}^3\varepsilon ^j\int_0^t \Tr\Bigl[\varrho _{\varepsilon }(s)W(\tilde{\xi}_s) B_j(\tilde{\xi}_s) \Bigr]  ds=0\; .
  \end{equation}
\end{proposition}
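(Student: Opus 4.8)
The plan is to show that the integrand $\Tr\bigl[\varrho_{\varepsilon}(s)W(\tilde{\xi}_s)B_j(\tilde{\xi}_s)\bigr]$ is bounded uniformly with respect to $\varepsilon\in(0,\bar{\varepsilon})$ and to $s$ in the compact interval with endpoints $0$ and $t$; once this is known, the explicit prefactor $\varepsilon^j$ with $j\geq1$ appearing in \eqref{eq:92} kills the sum in the limit, since then
\[
\Bigl\lvert\sum_{j=1}^3\varepsilon^j\int_0^t\Tr\bigl[\varrho_{\varepsilon}(s)W(\tilde{\xi}_s)B_j(\tilde{\xi}_s)\bigr]\,ds\Bigr\rvert\leq\lvert t\rvert\,(\varepsilon+\varepsilon^2+\varepsilon^3)\max_{1\leq j\leq 3}\ \sup_{\varepsilon,s}\ \bigl\lvert\Tr\bigl[\varrho_{\varepsilon}(s)W(\tilde{\xi}_s)B_j(\tilde{\xi}_s)\bigr]\bigr\rvert\xrightarrow[\varepsilon\to0]{}0\;.
\]

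To obtain the uniform bound, first I would decompose $\varrho_{\varepsilon}(s)=\sum_i\lambda_i\lvert\Psi_i(s)\rangle\langle\Psi_i(s)\rvert$ with $\Psi_i(s)=e^{-i\frac{s}{\varepsilon}\hat{H}_{ren}}\Psi_i$, using Lemma~\ref{lemma:10}. Since $[\hat{H}_{ren},N_1]=0$, each $\Psi_i(s)$ still has non-zero components only in $\bigoplus_{n\leq[\mathfrak{C}/\varepsilon]}\mathcal{H}_n$, and by Assumption~\eqref{eq:arho} together with the equality of form domains $Q(\hat{H}_{ren})=Q(H_0)$ coming from the KLMN construction of Theorem~\ref{thm:1}, each $\Psi_i(s)\in Q(H_0)\cap Q(N_1)$. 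Moving the Weyl operator onto the bra (legitimate because, by Lemma~\ref{lemma:9}, $W(-\tilde{\xi}_s)\Psi_i(s)\in Q(H_0)\cap Q(N_1)$, the form domain of $B_j$), one writes $\Tr[\varrho_{\varepsilon}(s)W(\tilde{\xi}_s)B_j(\tilde{\xi}_s)]=\sum_i\lambda_i\,B_j(\tilde{\xi}_s)\bigl(W(-\tilde{\xi}_s)\Psi_i(s),\Psi_i(s)\bigr)$, and Lemma~\ref{lemma:11} (applicable since $\Psi_i(s)$ lies in the low-nucleon subspace) bounds the $i$-th term by $C_j(\xi)\lVert(N_1+H_0+\bar{\varepsilon})^{1/2}W(-\tilde{\xi}_s)\Psi_i(s)\rVert\cdot\lVert(N_1+H_0+\bar{\varepsilon})^{1/2}\Psi_i(s)\rVert$.

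Next I would absorb the Weyl operator via Lemma~\ref{lemma:9}: splitting $(N_1+H_0+\bar{\varepsilon})^{1/2}$ into its $N_1$, $H_0$ and scalar parts, using $N_2\leq H_0/m_0$ (so that $N_1+N_2$ is controlled by $N_1+H_0$) and the estimates of Lemma~\ref{lemma:9} with $\tilde{\xi}_s$ in place of $\xi$, one gets $\lVert(N_1+H_0+\bar{\varepsilon})^{1/2}W(-\tilde{\xi}_s)\Psi_i(s)\rVert\leq C(\tilde{\xi}_s)\lVert(N_1+H_0+\bar{\varepsilon})^{1/2}\Psi_i(s)\rVert$, and $C(\tilde{\xi}_s)$ stays bounded uniformly in $s$ because the free propagators preserve the relevant norms ($e^{-is\omega}$ is an isometry of $\mathcal{F}H^{1/2}$, while $\lVert\cdot\rVert_{H^1}\leq\lVert(-\Delta+V+1)^{1/2}\cdot\rVert$ since $V\geq0$, and the latter norm is preserved by $e^{is(-\Delta+V)}$). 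Summing over $i$ against $\lambda_i$ and recombining the two factors yields
\[
\bigl\lvert\Tr[\varrho_{\varepsilon}(s)W(\tilde{\xi}_s)B_j(\tilde{\xi}_s)]\bigr\rvert\leq C_j(\xi)\,\Tr\bigl[\varrho_{\varepsilon}(s)\,(N_1+H_0+\bar{\varepsilon})\bigr]\;.
\]

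It then remains to bound $\Tr[\varrho_{\varepsilon}(s)(N_1+H_0)]$ uniformly in $\varepsilon$ and $s$, and this is the main obstacle. Since $N_1$ commutes with $\hat{H}_{ren}$, $\Tr[\varrho_{\varepsilon}(s)N_1]=\Tr[\varrho_{\varepsilon}N_1]\leq C$ by \eqref{eq:arho}. For $H_0$ the key point is that the relative form bound of $\hat{H}_{ren,I}$ with respect to $H_0$ behind Theorem~\ref{thm:1} holds with constants $a<1$, $b>0$ that are \emph{uniform in $n\leq\mathfrak{N}$ and in $\varepsilon$}, precisely because the choice $\sigma_0\geq2K(\mathfrak{C}+1+\bar{\varepsilon})$ together with Assumption~\eqref{eq:84} (via Lemma~\ref{lemma:10}) confines the states to $\bigoplus_{n\leq[\mathfrak{C}/\varepsilon]}\mathcal{H}_n$, where $\sigma_0>2K(n\varepsilon+1)$; consequently $(1-a)H_0-b\leq\hat{H}_{ren}\leq(1+a)H_0+b$ on that subspace, hence $H_0\leq\tfrac{1}{1-a}(\hat{H}_{ren}+b)$. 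Since $\hat{H}_{ren}$ is conserved by the dynamics,
\[
\Tr[\varrho_{\varepsilon}(s)H_0]\leq\tfrac{1}{1-a}\bigl(\Tr[\varrho_{\varepsilon}\hat{H}_{ren}]+b\bigr)\leq\tfrac{1}{1-a}\bigl((1+a)\Tr[\varrho_{\varepsilon}H_0]+2b\bigr)\;,
\]
which is finite and uniform in $\varepsilon$ by \eqref{eq:arho}. Combining the displays gives the uniform bound, and the opening estimate concludes. I expect the hard part to be exactly this last step — propagating the energy bound through the dressed dynamics uniformly in $\varepsilon$ — which works only because Assumption~\eqref{eq:84} and the tuning of $\sigma_0$ restrict everything to the subspace on which the KLMN bounds are $\varepsilon$-uniform; checking the $s$-uniformity of the constants in Lemma~\ref{lemma:9} is a secondary, purely technical point.
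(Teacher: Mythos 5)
Your proposal is correct and follows essentially the same route as the paper: decompose the state via Lemma~\ref{lemma:10}, control the commutator terms with Lemma~\ref{lemma:11} and Lemma~\ref{lemma:9}, and then propagate the bound on $\Tr[\varrho_\varepsilon(s)(N_1+H_0)]$ uniformly in $\varepsilon$ and $s$ using the $n$- and $\varepsilon$-uniform KLMN form bound (valid on $\bigoplus_{n\leq[\mathfrak{C}/\varepsilon]}\mathcal{H}_n$ thanks to the choice of $\sigma_0$) together with conservation of $N_1$ and of the dressed energy. Your two-sided inequality $(1-a)H_0-b\leq\hat{H}_{ren}\leq(1+a)H_0+b$ plus energy conservation is just a repackaging of the paper's sector-wise computation writing $H_0=\hat{H}_{ren}-\hat{H}_{ren,I}$ and applying the relative bound \eqref{eq:31} twice, so there is no substantive difference.
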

\begin{proof}
  By Lemma~\ref{lemma:10} we can write
  $\varrho _{\varepsilon }=\sum_{i}^{}\lambda _i\lvert\Psi
  _i\rangle\langle\Psi_i \rangle$,
  where each $\Psi _i$ has non-zero components only in the subspace
  $\bigoplus _{n\leq [\mathfrak{C}/\varepsilon ]}\mathcal{H}_n$, and
  each $\lambda _i>0$. Assumption~\eqref{eq:arho} then translates on
  the fact that each $\Psi _i$ is on the domain $Q(H_0)\cap Q(N_1)$,
  and in addition
  $\sum_i^{}\lambda _i\langle \Psi _i , (N_1+H_0)\Psi _i
  \rangle_{}\leq C$,
  uniformly with respect to $\varepsilon\in (0,\bar{\varepsilon} )
  $. Therefore we can write
  \begin{equation*}
    \Biggl\lvert \sum_{j=1}^3\varepsilon ^j\int_0^t \Tr\Bigl[\varrho _{\varepsilon }(s)W(\tilde{\xi}_s) B_j(\tilde{\xi}_s) \Bigr]  ds  \Biggr\rvert_{}^{}\leq \sum_{j=1}^3\varepsilon ^j\sum_i^{}\lambda _i\int_0^t\biggl\lvert \langle W^{*}(\tilde{\xi}_s)e^{-i \frac{s}{\varepsilon } \hat{H}_{ren}}\Psi _i  , B_j(\tilde{\xi}_s) e^{-i \frac{s}{\varepsilon } \hat{H}_{ren}}\Psi _i \rangle_{}  \biggr\rvert_{}^{}  ds\; .
  \end{equation*}
  Using now Lemma~\ref{lemma:11} and then Lemma~\ref{lemma:9} and the
  fact that $N_1$ commutes with $\hat{H}_{ren}$ we obtain
  \begin{equation*}
    \begin{split}
      \Biggl\lvert \sum_{j=1}^3\varepsilon ^j\int_0^t \Tr\Bigl[\varrho _{\varepsilon }(s)W(\tilde{\xi}_s) B_j(\tilde{\xi}_s) \Bigr]  ds  \Biggr\rvert_{}^{}\leq \sum_{j=1}^3\varepsilon ^jC_j(\xi )\sum_i^{}\lambda _i\int_0^t\lVert (N_1+H_0+\bar{\varepsilon} )^{1/2}W^{*}(\tilde{\xi}_s)e^{-i \frac{s}{\varepsilon } \hat{H}_{ren}}\Psi _i  \rVert_{}^{}\\\cdot \lVert (N_1+H_0+\bar{\varepsilon} )^{1/2}e^{-i \frac{s}{\varepsilon } \hat{H}_{ren}}\Psi _i  \rVert_{}^{}  ds\\
      \leq \sum_{j=1}^3\varepsilon ^jC(\xi )C_j(\xi )\sum_i^{}\lambda _i\int_0^t\langle e^{-i \frac{s}{\varepsilon } \hat{H}_{ren}}\Psi _i,(N_1+H_0+\bar{\varepsilon} )e^{-i \frac{s}{\varepsilon } \hat{H}_{ren}}\Psi _i  \rangle_{}^{}ds\\
      \leq \sum_{j=1}^3\varepsilon ^jC(\xi )C_j(\xi )\sum_i^{}\lambda _i\biggl(t\langle \Psi _i,(N_1+\bar{\varepsilon}) \Psi _i  \rangle_{}^{}+\int_0^t\langle e^{-i \frac{s}{\varepsilon } \hat{H}_{ren}}\Psi _i,H_0e^{-i \frac{s}{\varepsilon } \hat{H}_{ren}}\Psi _i  \rangle_{}^{}ds\biggr)\; .
    \end{split}
  \end{equation*}
  Now we consider the term
  $\langle e^{-i \frac{s}{\varepsilon } \hat{H}_{ren}}\Psi _i,H_0e^{-i
    \frac{s}{\varepsilon } \hat{H}_{ren}}\Psi _i \rangle_{}^{}$.
  First of all we write it as
  \begin{equation}\label{eq:100}
    \begin{split}
      \langle e^{-i \frac{s}{\varepsilon } \hat{H}_{ren}}\Psi _i,H_0e^{-i\frac{s}{\varepsilon } \hat{H}_{ren}}\Psi _i \rangle_{}^{}=\langle e^{-i \frac{s}{\varepsilon } \hat{H}_{ren}}\Psi _i,(\hat{H}_{ren}-\hat{H}_{ren,I})e^{-i\frac{s}{\varepsilon } \hat{H}_{ren}}\Psi _i \rangle_{}^{}\\=\sum_{n=0}^{[\mathfrak{C}/\varepsilon ]}\langle e^{-i \frac{s}{\varepsilon } \hat{H}_{\infty }^{(n)}}\Psi^{(n)} _i,\Bigl(\hat{H}^{(n)}_{\infty }-\hat{H}_{I}^{(n)}(\infty )\Bigr)e^{-i\frac{s}{\varepsilon } \hat{H}_{\infty }^{(n)}}\Psi _i^{(n)} \rangle_{}^{}\\
      = \sum_{n=0}^{[\mathfrak{C}/\varepsilon ]}\langle \Psi ^{(n)}_i  , \hat{H}^{(n)}_{\infty }\Psi ^{(n)}_i \rangle_{}-\langle e^{-i \frac{s}{\varepsilon } \hat{H}_{\infty }^{(n)}}\Psi^{(n)} _i,\hat{H}_{I}^{(n)}(\infty )e^{-i\frac{s}{\varepsilon } \hat{H}_{\infty }^{(n)}}\Psi _i^{(n)} \rangle_{}^{}\\
      \leq  \sum_{n=0}^{[\mathfrak{C}/\varepsilon ]}\Biggl(\Bigl\lvert \langle \Psi ^{(n)}_i  , \hat{H}^{(n)}_{\infty }\Psi ^{(n)}_i \rangle_{}  \Bigr\rvert_{}^{}+\Bigl\lvert  \langle e^{-i \frac{s}{\varepsilon } \hat{H}_{\infty }^{(n)}}\Psi^{(n)} _i,\hat{H}_{I}^{(n)}(\infty )e^{-i\frac{s}{\varepsilon } \hat{H}_{\infty }^{(n)}}\Psi _i^{(n)} \rangle_{}^{} \Bigr\rvert_{}^{}\Biggr)\; .
    \end{split}
  \end{equation}
  The idea now is to use the bound of Equation~\eqref{eq:31} on
  $\Bigl\lvert \langle e^{-i \frac{s}{\varepsilon } \hat{H}_{\infty
    }^{(n)}}\Psi^{(n)} _i,\hat{H}_{I}^{(n)}(\infty
  )e^{-i\frac{s}{\varepsilon } \hat{H}_{\infty }^{(n)}}\Psi _i^{(n)}
  \rangle_{}^{} \Bigr\rvert_{}^{}$.
  The crucial point is that since we have chosen $\sigma _0$ such that
  the dynamics is non-trivial for any
  $n\leq [\mathfrak{C}/\varepsilon ]$, it follows that there exist an
  $a<1$ and a $b<\infty $ \emph{both independent of $\varepsilon $ and
    $n$} such that the bound~\eqref{eq:31} holds for any
  $n\leq [\mathfrak{C}/\varepsilon ]$. Therefore we obtain
  \begin{equation}\label{eq:102}
    \begin{split}
      \langle e^{-i \frac{s}{\varepsilon } \hat{H}_{ren}}\Psi _i,H_0e^{-i\frac{s}{\varepsilon } \hat{H}_{ren}}\Psi _i \rangle_{}^{}\leq a \langle e^{-i \frac{s}{\varepsilon } \hat{H}_{ren}}\Psi _i,H_0e^{-i\frac{s}{\varepsilon } \hat{H}_{ren}}\Psi _i \rangle_{}^{} + b\langle \Psi _i  , \Psi _i \rangle_{}\\+\sum_{n=0}^{[\mathfrak{C}/\varepsilon ]}\Bigl\lvert \langle \Psi ^{(n)}_i  , \hat{H}^{(n)}_{\infty }\Psi ^{(n)}_i \rangle_{}  \Bigr\rvert_{}^{}\; .
    \end{split}
  \end{equation}
  Now since $a<1$, we may take it to the left hand side and use
  again~\eqref{eq:31} on
  $\Bigl\lvert \langle \Psi ^{(n)}_i , \hat{H}^{(n)}_{\infty }\Psi
  ^{(n)}_i \rangle_{} \Bigr\rvert_{}^{}$:
  \begin{equation}\label{eq:101}
    \begin{split}
      \langle e^{-i \frac{s}{\varepsilon } \hat{H}_{ren}}\Psi _i,H_0e^{-i\frac{s}{\varepsilon } \hat{H}_{ren}}\Psi _i \rangle_{}^{}\leq \frac{1}{1-a}\langle \Psi _i  , H_0\Psi _i \rangle_{}+\frac{2b}{1-a}\langle \Psi _i  , \Psi _i \rangle_{}\; .
    \end{split}
  \end{equation}
  Finally, since the state is normalized (i.e.
  $\sum_i^{}\lambda _i\langle \Psi _i , \Psi _i \rangle_{}=1$), we
  conclude:
  \begin{equation*}
    \begin{split}
      \Biggl\lvert \sum_{j=1}^3\varepsilon ^j\int_0^t \Tr\Bigl[\varrho _{\varepsilon }(s)W(\tilde{\xi}_s) B_j(\tilde{\xi}_s) \Bigr]  ds  \Biggr\rvert_{}^{}\leq t\sum_{j=1}^3\varepsilon ^jC(\xi )C_j(\xi )\sum_i^{}\lambda _i\biggl(\langle \Psi _i,N_1\Psi _i  \rangle_{}^{}+\tfrac{1}{1-a}\langle \Psi _i  , H_0\Psi _i \rangle_{}\\+(\tfrac{2b}{1-a}+\bar{\varepsilon} )\langle \Psi _i  , \Psi _i \rangle_{}\biggr)\\
      \leq t\sum_{j=1}^3\varepsilon ^jC(\xi )C_j(\xi )\biggl(\Bigl(1+\tfrac{1}{1-a}\Bigr)\sum_i^{}\lambda _i\langle \Psi _i,(N_1+H_0)\Psi _i  \rangle_{}^{}+\tfrac{2b}{1-a}+\bar{\varepsilon} \biggr)\\
      \leq t\sum_{j=1}^3\varepsilon ^jC(\xi )C_j(\xi )\biggl(\Bigl(1+\tfrac{1}{1-a}\Bigr)C+\tfrac{2b}{1-a}+\bar{\varepsilon} \biggr)\; .
    \end{split}
  \end{equation*}
  The right hand side has no implicit dependence on $\varepsilon $, so
  it converges to zero when $\varepsilon \to 0$.
\end{proof}
By the same argument used  from \eqref{eq:100} to
\eqref{eq:101} above, we can prove the following useful
lemma.
\begin{lemma}\label{lemma:13}
  If a family of states
  $(\varrho _{\varepsilon })_{\varepsilon \in (0,\bar{\varepsilon} )}$
  satisfies Assumptions~\eqref{eq:84} and \eqref{eq:arho}, then for
  any $t\in \mathds{R}$,
  $\bigl(\varrho _{\varepsilon }(t)\bigr)_{\varepsilon \in
    (0,\bar{\varepsilon} )}$
  and
  $\bigl(\tilde{\varrho} _{\varepsilon }(t)\bigr)_{\varepsilon \in
    (0,\bar{\varepsilon} )}$
  satisfy Assumptions~\eqref{eq:84} and \eqref{eq:arho}. In
  particular, there exist $a(\mathfrak{C})<1$ and $b(\mathfrak{C})>0$
  such that uniformly on $\varepsilon \in (0,\bar{\varepsilon} )$:
  \begin{align}
    \label{eq:103}
    \Tr[&\varrho _{\varepsilon }(t)N_1^k]\leq \mathfrak{C}^k\; ,\; \forall k\in \mathds{N}\;;\\
\label{eq:106}
\Tr[&\varrho _{\varepsilon }(t)(N_1+H_0)]\leq \frac{1}{1-a(\mathfrak{C})}C+\frac{2b(\mathfrak{C})}{1-a(\mathfrak{C})}\; ;
  \end{align}
  and the same bounds hold for
  $\bigl(\tilde{\varrho} _{\varepsilon }(t)\bigr)_{\varepsilon \in
    (0,\bar{\varepsilon} )}$.
\end{lemma}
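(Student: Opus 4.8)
The plan is to reuse the estimate chain \eqref{eq:100}--\eqref{eq:101} already carried out inside the proof of Proposition~\ref{prop:11}, supplemented by two structural observations. First, $N_1$ commutes with $\hat{H}_{ren}$ and with $H_0$ (cf.\ \eqref{eq:6} and the sectorwise Definition~\ref{def:8} of $\hat{H}_{ren}$, together with $H_0=d\Gamma(-\tfrac{\Delta}{2M}+V)+d\Gamma(\omega)$), so the unitary groups $e^{-i\frac{t}{\varepsilon}\hat{H}_{ren}}$ and $e^{i\frac{t}{\varepsilon}H_0}$ preserve each $\mathcal{H}_n$ and commute with every bounded Borel function of $N_1$. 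Second, once $\sigma_0$ is fixed as in the statement of the main theorem (say $\sigma_0\ge 2K(\mathfrak{C}+1+\bar{\varepsilon})$, with $K$ as in Theorem~\ref{thm:1}), the KLMN constants $a<1$, $b>0$ of \eqref{eq:31} can be chosen \emph{uniform} in $\varepsilon\in(0,\bar{\varepsilon})$ and in all $n\le[\mathfrak{C}/\varepsilon]$, since on those sectors $n\varepsilon\le\mathfrak{C}$ and the smallness condition \eqref{eq:32} holds with room to spare (item~\ref{item:3} of Theorem~\ref{thm:1}).

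First I would prove \eqref{eq:103}. By Lemma~\ref{lemma:10}, write $\varrho_\varepsilon=\sum_i\lambda_i\lvert\Psi_i\rangle\langle\Psi_i\rvert$ with each $\Psi_i$ supported in $\bigoplus_{n=0}^{[\mathfrak{C}/\varepsilon]}\mathcal{H}_n$ and $\lambda_i>0$. Since $e^{-i\frac{t}{\varepsilon}\hat{H}_{ren}}$ commutes with $N_1^k$ and preserves $\bigoplus_{n\le[\mathfrak{C}/\varepsilon]}\mathcal{H}_n$, one gets $\Tr[\varrho_\varepsilon(t)N_1^k]=\Tr[\varrho_\varepsilon N_1^k]\le\mathfrak{C}^k$ for all $k$; and since $e^{i\frac{t}{\varepsilon}H_0}$ also commutes with $N_1^k$, the same holds for $\tilde{\varrho}_\varepsilon(t)$. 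Hence both evolved families satisfy \eqref{eq:84} with the same $\mathfrak{C}$.

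Next I would bound $\Tr[\varrho_\varepsilon(t)H_0]$. On the form domain of $\hat{H}_{ren}$, which restricted to $\bigoplus_{n\le[\mathfrak{C}/\varepsilon]}\mathcal{H}_n$ is contained in $D(H_0^{1/2})$, decompose $H_0^{(n)}=\hat{H}_\infty^{(n)}\dotplus\bigl(-\hat{H}_I^{(n)}(\infty)\bigr)$ as in \eqref{eq:36}. Using that $\hat{H}_\infty^{(n)}$ commutes with its own unitary group, and applying \eqref{eq:31} twice --- once to the $\hat{H}_I^{(n)}(\infty)$-term evaluated on the time-evolved vector, once to $\langle\Psi_i^{(n)},\hat{H}_\infty^{(n)}\Psi_i^{(n)}\rangle$ --- exactly as in the passage \eqref{eq:100}--\eqref{eq:102}, one obtains the $n$- and $\varepsilon$-uniform estimate
\begin{equation*}
  \bigl\langle e^{-i\frac{t}{\varepsilon}\hat{H}_{ren}}\Psi_i,\,H_0\,e^{-i\frac{t}{\varepsilon}\hat{H}_{ren}}\Psi_i\bigr\rangle\;\le\;\tfrac{1}{1-a}\langle\Psi_i,H_0\Psi_i\rangle+\tfrac{2b}{1-a}\langle\Psi_i,\Psi_i\rangle\,.
\end{equation*}
Multiplying by $\lambda_i$, summing, and using $\sum_i\lambda_i\langle\Psi_i,\Psi_i\rangle=1$ together with $\sum_i\lambda_i\langle\Psi_i,H_0\Psi_i\rangle=\Tr[\varrho_\varepsilon H_0]\le C$ (from \eqref{eq:arho}) gives $\Tr[\varrho_\varepsilon(t)H_0]\le\frac{C}{1-a}+\frac{2b}{1-a}$. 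Adding the $N_1$-bound of the previous step and absorbing the numerical factors into a redefined pair $a(\mathfrak{C})<1$, $b(\mathfrak{C})>0$ yields \eqref{eq:106}. Since conjugation by $e^{i\frac{t}{\varepsilon}H_0}$ leaves $\Tr[\,\cdot\,H_0]$ and $\Tr[\,\cdot\,N_1]$ invariant, \eqref{eq:106} holds verbatim for $\tilde{\varrho}_\varepsilon(t)$ as well, so both families satisfy \eqref{eq:arho}.

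The computation is essentially the bookkeeping already performed in Proposition~\ref{prop:11}; the point I expect to require the most care in a full write-up is the uniformity of $a$ and $b$ over all $n\le[\mathfrak{C}/\varepsilon]$ and over $\varepsilon$ --- this is what makes $a(\mathfrak{C})$, $b(\mathfrak{C})$ depend on $\mathfrak{C}$ alone, and it rests on combining Lemma~\ref{lemma:10} (which confines $\varrho_\varepsilon$, and hence $\varrho_\varepsilon(t)$ and $\tilde{\varrho}_\varepsilon(t)$, to sectors with $n\le[\mathfrak{C}/\varepsilon]$) with the choice of $\sigma_0$ for which \eqref{eq:32} holds simultaneously on all those sectors. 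A secondary technical point to verify is the legitimacy of the form-level manipulations: preservation of $D\bigl((H_0^{(n)})^{1/2}\bigr)$ by $e^{-i\frac{s}{\varepsilon}\hat{H}_\infty^{(n)}}$, the form identity \eqref{eq:36}, and the admissibility of inserting \eqref{eq:31} inside the relevant scalar products --- all standard consequences of the KLMN construction in Theorem~\ref{thm:1}.
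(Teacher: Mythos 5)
Your proposal is correct and follows essentially the paper's own route: the paper proves this lemma precisely by invoking the estimate chain \eqref{eq:100}--\eqref{eq:101} from the proof of Proposition~\ref{prop:11} (with $a,b$ uniform in $\varepsilon$ and in $n\le[\mathfrak{C}/\varepsilon]$ thanks to the choice of $\sigma_0$), combined with the commutation of $N_1$ with $\hat{H}_{ren}$ and $H_0$ and the invariance under conjugation by $e^{i\frac{t}{\varepsilon}H_0}$, exactly as you do. Your closing remark about ``absorbing'' constants is not even needed, since $1\le\frac{1}{1-a}$ lets one bound $\Tr[\varrho_\varepsilon(t)(N_1+H_0)]\le\frac{1}{1-a}\Tr[\varrho_\varepsilon(N_1+H_0)]+\frac{2b}{1-a}$ directly with the same $a,b$.
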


It remains to study the limit of the $B_0(\cdot )$-term in
\eqref{eq:92}. As already pointed out in
Lemma~\ref{lemma:12}, we know that $B_0$ is a Wick quantization.  More
precisely, there exist a densely defined map from the one-particle
space to polynomial symbols in
$\bigoplus_{(p,q) \in \{(i,j)\rvert 0\leq i,j\leq 2;2\leq i+j\leq
  3\}}\mathcal{Q}_{p,q}\bigl(L^2 \oplus L^2\bigr)$.
In order to apply the convergence results of
\citet{ammari:nier:2008
}, we need to show that the symbol of $B_0$ may be approximated by a
compact one, with an error that vanishes in the limit
$\varepsilon \to 0$.

To improve readability, we will write $B_0(\xi )$ in a schematic
fashion. The precise structure of each term will be discussed and
analyzed in the proof of the sequent proposition. In addition, as seen
in Equation~\eqref{eq:12}, the dressed interaction quadratic form
$\hat{H}_I(\infty )$ can be split in three terms: the first is just
the interaction term $H_I(\sigma _0)$ of the Nelson model with cutoff
(with $\sigma$ replaced by $\sigma_0$), whose classical limit has been
analyzed by the authors in
\citep{Ammari:2014aa
}; the second is a ``mean-field'' term for the nucleons, of the same
type as the ones analyzed by Ammari and Nier in
\citep{2011arXiv1111.5918A
}; the last one has a structure similar to the interaction part of the
Pauli-Fierz model \citep[see
e.g.][]{BFS,BFS1,BFS2,MR2097788
}, and thus will be called of ``Pauli-Fierz type''. We will
concentrate on the analysis of the Pauli-Fierz type terms of $B_0$,
while for a precise treatment of the others the reader may refer to
\citep{Ammari:2014aa
  ,2011arXiv1111.5918A
}. In order to highlight the different parts of
$B_0(\xi )=B_0(\xi _1,\xi _2)$, we will underline with different
style and color the
\textcolor{OliveGreen}{\underline{\textcolor{black}{Nelson}}},
\textcolor{Fuchsia}{\dotuline{\textcolor{black}{mean-field}}} and
\textcolor{YellowOrange}{\dashuline{\textcolor{black}{Pauli-Fierz}}}
type terms:
\begin{equation}
  \label{eq:91}
  \begin{split}
    B_0(\xi _1,\xi _2 )=\bigl(\mathscr{B}_0(\xi _1,\xi _2) \bigr)^{Wick}=\textcolor{OliveGreen}{\underline{\textcolor{black}{(a^{*}+a)(\xi _1\psi ^{*}-\bar{\xi}_1\psi  )+\Im(\xi _2)\psi ^{*}\psi} }}+\textcolor{Fuchsia}{\dotuline{\textcolor{black}{ \bar{\xi} _{1}\psi ^{*}\psi \psi-\xi _1\psi ^{*}\psi ^{*}\psi}}}\\+\textcolor{YellowOrange}{\dashuline{\textcolor{black}{(a^{*}a^{*}+aa+a^{*}a)(\xi _1\psi ^{*}-\bar{\xi}_1\psi  )+(\xi _2a^{*}-\bar{\xi}_2a )\psi ^{*}\psi +(a^{*}D_x +D_xa)(\psi ^{*}\xi_1 -\bar{\xi}_1\psi  ) }}}\\\textcolor{YellowOrange}{\dashuline{\textcolor{black}{+\psi ^{*}D_x\xi _2\psi -\psi ^{*}\bar{\xi}_2D_x\psi  }}}\; .
  \end{split}
\end{equation}

\begin{proposition}\label{prop:12}
  There exists a family of maps
  $(\mathscr{B}_0^{(m)})_{m\in \mathds{N}}$ such that:
  \begin{itemize}[label=\color{myblue}*]
  \item For any $m\in \mathds{N}$
    \begin{equation*}
      \mathscr{B}_0^{(m)}(\cdot ):Q(-\Delta +V)\oplus D(\omega ^{3/4})\to\bigoplus_{\substack{(p,q) \in \{(i,j)\rvert0\leq i,j\leq 2;2\leq i+j\leq3\}}}\mathcal{P}^{\infty }_{p,q}\bigl(L^2\oplus L^2\bigr)\; ;
    \end{equation*}
  \item For any $\xi \in Q(-\Delta +V)\oplus D(\omega ^{3/4})$, there
    exist a sequence $\bigl(C^{(m)}(\xi )\bigr)_{m\in \mathds{N}}$ that depends
    only on
    $\lVert \xi \rVert_{Q(-\Delta +V)\oplus D(\omega ^{3/4})}^{}$ such
    that $\lim_{m\to \infty }C^{(m)}= 0$; and such that for any two
    vectors $\Phi,\Psi \in \mathcal{H}\cap D(N_1)$, and for any
    $\varepsilon \in (0,\bar{\varepsilon} )$:
    \begin{equation}\label{eq:99}
      \begin{split}
        \Bigl\lvert  \Bigl\langle (H_0+1)^{-1/2}\Phi   ,  \bigl(\mathscr{B}_{0}(\xi )-\mathscr{B}_0^{(m)}(\xi ) \bigr)^{Wick}(H_0+1)^{-1/2}\Psi \Bigr\rangle_{} \Bigr\rvert_{}^{}\leq C^{(m)}(\xi )\bigl\lVert (N_1+\bar{\varepsilon} )^{1/2}\Phi    \bigr\rVert_{}^{}\\\cdot\, \bigl\lVert  (N_1+\bar{\varepsilon} )^{1/2}\Psi  \bigr\rVert_{}^{}\; .
      \end{split}
    \end{equation}
  \end{itemize}
\end{proposition}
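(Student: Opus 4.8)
\emph{Plan of proof.} By the very construction in Lemma~\ref{lemma:12}, $\mathscr{B}_0(\xi)$ is the sum of finitely many monomial Wick symbols whose coefficient kernels are explicit: they are built from the functions $r_\infty=-ikg_\infty$, $V_\infty$, $\chi_{\sigma_0}/\sqrt{2\omega}$, the plane waves $e^{\pm ik\cdot x}$, the nucleonic momentum operator $D_x$, and the components $\xi_1\in Q(-\Delta+V)\subset H^1$ and $\xi_2\in D(\omega^{3/4})$ of the test vector. Following the splitting displayed in~\eqref{eq:91} I would group these monomials into the Nelson, mean-field and Pauli--Fierz families. For the first two the compact approximation together with the corresponding error bound is already contained, essentially verbatim, in \citep[Section~3]{Ammari:2014aa} and \citep[Section~3]{2011arXiv1111.5918A} respectively, so I would only recall the construction there and concentrate on the Pauli--Fierz monomials, which carry in addition the unbounded factors $D_x$ and the meson weights $\omega^{1/2}$ hidden in $r_\infty$.

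The approximants $\mathscr{B}_0^{(m)}(\xi)$ would be obtained by inserting in every coefficient kernel the sharp cut-offs $\mathds{1}_{\{\lvert k\rvert\le m\}}$ on each meson momentum and $\mathds{1}_{\{\lvert D_x\rvert\le m\}}$ on each nucleonic momentum (together with, where convenient, the spectral projection $\mathds{1}_{[0,m]}(-\tfrac{\Delta}{2M}+V)$, which is finite rank when $V$ is confining). Two points then have to be checked. (i) $\mathscr{B}_0^{(m)}(\xi)$ belongs to the class $\bigoplus_{(p,q)}\mathcal{P}^{\infty}_{p,q}$ of compact polynomial symbols: after the cut-offs each kernel factorizes as an $L^2$ function of the meson momenta (such as $\mathds{1}_{\{\lvert k\rvert\le m\}}r_\infty$ or $\mathds{1}_{\{\lvert k\rvert\le m\}}\chi_{\sigma_0}/\sqrt{2\omega}$), times the compactly supported $L^2$ Fourier multiplier $p\mapsto p\,\mathds{1}_{\{\lvert p\rvert\le m\}}$ in the nucleonic variable, times multiplication by a function that lies in $L^2(\mathds{R}^3)$ --- namely $\xi_1\in H^1\subset L^2$, $V_\infty\in L^2\cap L^{3,\infty}$ by Lemma~\ref{lemma:2}, or the convolution functions $x\mapsto\int r_\infty(k)\overline{\xi_2(k)}e^{-ik\cdot x}\,dk$, which lie in $L^2\cap L^\infty$ since $\xi_2\in L^2$ and $r_\infty\in L^2\cap L^\infty$. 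Each such factorization is Hilbert--Schmidt, hence compact. (ii) The error: writing $\mathscr{B}_0(\xi)-\mathscr{B}_0^{(m)}(\xi)$ as a finite sum of monomials whose kernels are supported in $\{\lvert k\rvert>m\}$, in $\{\lvert D_x\rvert>m\}$, or in the range of $\mathds{1}_{(m,\infty)}(-\tfrac{\Delta}{2M}+V)$, I would bound their Wick quantizations sandwiched between $(H_0+1)^{-1/2}$ by the standard creation/annihilation estimates on Fock space (as in Lemma~2.1 of \citep{Ammari:2014aa} and Lemmas~\ref{lemma:3}--\ref{lemma:5} here), distributing $H_0=d\Gamma(-\tfrac{\Delta}{2M}+V)+d\Gamma(\omega)$ so that $d\Gamma(\omega)^{1/2}$ lands on the meson operators and $d\Gamma(-\tfrac{\Delta}{2M}+V)^{1/2}$ on the $D_x$'s, exactly as in the Remark after Lemma~\ref{lemma:3}. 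This reduces everything to a finite sum of products of truncated weighted $L^2$-norms (like $\lVert\omega^{-1/2}\mathds{1}_{\{\lvert k\rvert>m\}}r_\infty\rVert_2$ or $\lVert\mathds{1}_{\{\lvert x\rvert>m\}}W_{\xi_2}\rVert_2$), which tend to $0$ by dominated convergence, times operator norms of the form $\lVert(1-\Delta)^{-1/2}W_{\xi_2}D_x\mathds{1}_{\{\lvert D_x\rvert>m\}}(1-\Delta)^{-1/2}\rVert$; the latter tends to $0$ because $(1-\Delta)^{-1/2}W_{\xi_2}$ is compact while $D_x\mathds{1}_{\{\lvert D_x\rvert>m\}}(1-\Delta)^{-1/2}\to0$ strongly, so the product goes to $0$ in operator norm. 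Collecting these into one null sequence $C^{(m)}(\xi)$ --- which, by construction, depends only on $\lVert\xi_1\rVert_{Q(-\Delta+V)}$ and $\lVert\omega^{3/4}\xi_2\rVert_2$ --- yields~\eqref{eq:99}, the factors $\lVert(N_1+\bar\varepsilon)^{1/2}\Phi\rVert$ and $\lVert(N_1+\bar\varepsilon)^{1/2}\Psi\rVert$ coming from the number-operator bookkeeping in the $d\Gamma$-estimates (cf.\ Proposition~\ref{prop:2}) and the restriction $\Phi,\Psi\in D(N_1)$.

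The main obstacle is precisely the handling of the $D_x$ factors in the Pauli--Fierz monomials: they are genuinely unbounded and survive the $(H_0+1)^{-1/2}$-regularization only with norm $\mathcal{O}(1)$, so the smallness of the associated pieces of the error cannot come from a ``small coefficient'' bound and must instead be extracted either from the meson ultraviolet truncation (through factors such as $\lVert\omega^{-1/2}\mathds{1}_{\{\lvert k\rvert>m\}}r_\infty\rVert_2$) or from the soft compactness-times-strongly-null argument above. Verifying, through all the monomials and all the ways in which the translation by $\tfrac{i\varepsilon}{\sqrt2}\xi$ can put derivatives on $\xi_1$, on $\xi_2$ or on $u$, that every error term falls under one of these two mechanisms --- and that the resulting constant depends only on the $Q(-\Delta+V)\oplus D(\omega^{3/4})$-norm of $\xi$ --- is the only delicate bookkeeping; the rest is a routine, if lengthy, repetition of the Fock-space estimates of Section~\ref{sec:renorm-hamilt}.
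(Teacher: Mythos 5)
Your plan follows the paper's proof quite closely: the same Nelson/mean-field/Pauli--Fierz splitting of \eqref{eq:91} with the first two families delegated to \citep{Ammari:2014aa} and \citep{2011arXiv1111.5918A}, approximants obtained by truncating the meson kernels (your $\mathds{1}_{\{\lvert k\rvert\le m\}}$ plays the role of the paper's $r_{\sigma_m}$) and the nucleonic momentum (your $\mathds{1}_{\{\lvert D_x\rvert\le m\}}$ versus the paper's $\chi_m(D_x)$), compactness of the truncated symbols via Hilbert--Schmidt/factorization arguments, and error bounds via the Fock-space estimates of Lemmas~\ref{lemma:3}--\ref{lemma:4} producing the $(N_1+\bar\varepsilon)^{1/2}$ weights. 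Two side remarks: the finite-rank projection $\mathds{1}_{[0,m]}(-\tfrac{\Delta}{2M}+V)$ you invoke ``where convenient'' is not available when $V=0$, which Assumption~\eqref{ass:2} allows, and the spatial truncation $\mathds{1}_{\{\lvert x\rvert>m\}}W_{\xi_2}$ you mention never arises from your momentum cutoffs; both points are inessential.

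The one place where you genuinely depart from the paper is the error produced by the nucleonic cutoff in the $\xi_2$-bearing Pauli--Fierz terms (the analogues of $\bar\xi_2 a\psi^*\psi$ and $\psi^*D_x\xi_2\psi$), where you extract smallness from $\bigl\lVert(1-\Delta)^{-1/2}W_{\xi_2}\,D_x\mathds{1}_{\{\lvert D_x\rvert>m\}}(1-\Delta)^{-1/2}\bigr\rVert\to0$, with $W_{\xi_2}=\mathcal{F}^{-1}(\xi_2\bar r_\infty)$, arguing that a compact operator times a bounded strongly-null sequence converges in norm. That argument is correct for each fixed $\xi_2$, but it is purely qualitative: the rate of norm convergence depends on the compact operator itself, i.e.\ on $\xi_2$ and not only on $\lVert\omega^{3/4}\xi_2\rVert_2$. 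Consequently the resulting $C^{(m)}(\xi)$ is \emph{not}, contrary to your claim, a function of $\lVert\xi\rVert_{Q(-\Delta+V)\oplus D(\omega^{3/4})}$ alone --- yet this is precisely what the second bullet of the statement demands, and it is used in the proof of Proposition~\ref{prop:14}, where the bound must be uniform in $s$ because $\lVert\tilde{\xi}_s\rVert$ is conserved by the free flow. The paper avoids this by never attaching $D_x$ to the cutoff remainder: it commutes $D_x$ through the multiplication operator $\mathcal{F}^{-1}(\xi_2\bar r_\infty)$ (this is exactly where $\xi_2\in D(\omega^{3/4})$ is consumed, to control $\mathcal{F}^{-1}(k\xi_2\bar r_\infty)$ in $L^\infty$), absorbs the surviving derivative into the right $(H_0+1)^{-1/2}$ with the usual $(N_1+\bar\varepsilon)^{1/2}$ bookkeeping, and pairs the remainder $1-\chi_m(D_x)$ with $(1-\Delta)^{-1/2}$ alone, whose operator norm $\lVert(1-D_x^2)^{-1/2}(1-\chi_m(D_x))\rVert$ is a $\xi$-independent null sequence; the meson-cutoff errors are handled, as you propose, by norms such as $\lVert\omega^{-1/2}(r_\infty-r_{\sigma_m})\rVert_2$. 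Replacing your soft compactness step by this explicit multiplier bound (which works verbatim with sharp cutoffs) closes the gap; the rest of your plan coincides with the paper's proof.
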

\begin{remark}\label{rem:9}
  Contrarily to what it was previously assumed throughout
  Section~\ref{sec:class-limit-renorm-1}, in this proposition we need
  additional regularity on $\xi _2$, namely
  $\xi _2\in D(\omega ^{3/4})\subset D(\omega ^{1/2})$. This will not
  be a problem in the following, since we will extend our results to
  any $\xi \in L^2 (\mathds{R}^3 )\oplus L^2 (\mathds{R}^3 )$ by a
  density argument, and $D(\omega ^{3/4})$ is still dense in
  $L^2 (\mathds{R}^3 )$.
\end{remark}
\begin{proof}[Proof of Proposition~\ref{prop:12}]
  To prove the proposition, we need to analyze each term of
  Equation~\eqref{eq:91}, and prove that either it has a compact
  symbol or it can be approximated by one, in a way that
  \eqref{eq:99} holds. The analysis for
  \textcolor{OliveGreen}{\underline{\textcolor{black}{Nelson}}} terms
  has been carried out in \citep[][Proposition 3.11 and Lemma
  3.15]{Ammari:2014aa
  }. In addition, using Lemma~\ref{lemma:2} we see that $V_{\infty }$
  satisfies the hypotheses of the mean field potentials in
  \citep{2011arXiv1111.5918A
  }, therefore \eqref{eq:99} holds also for the
  \textcolor{Fuchsia}{\dotuline{\textcolor{black}{mean-field}}} terms
  \citep[see in particular Section 3.2
  of][]{2011arXiv1111.5918A
  }. For the sake of completeness, we explicitly write the Nelson and
  mean field part of Equation~\eqref{eq:91}:
  \begin{equation*}
    \textcolor{OliveGreen}{\underline{\textcolor{black}{(a^{*}+a)(\xi _1\psi ^{*}-\bar{\xi}_1\psi  )}}}=-\tfrac{1}{\sqrt{2}(2\pi)^{3/2}}\int_{\mathds{R}^3}^{}\Bigl(a^{*}\bigl(\tfrac{e^{-ik\cdot x}}{\sqrt{2\omega}}\chi_{\sigma_0}\bigr)+a\bigl(\tfrac{e^{-ik\cdot x}}{\sqrt{2\omega}}\chi_{\sigma_0}\bigr)\Bigr)\Bigl(\xi_1(x)\psi ^{*}(x)-\bar{\xi}_1(x)\psi (x)  \Bigr)  dx\; ;
  \end{equation*}
  \begin{equation*}
    \textcolor{OliveGreen}{\underline{\textcolor{black}{\Im(\xi _2)\psi ^{*}\psi}}}=-\tfrac{1}{\sqrt{2}(2\pi)^{3/2}}\int_{\mathds{R}^6}^{}\biggl(\tfrac{\chi _{\sigma _0}(k)}{\sqrt{2\omega (k)}}\Bigl(\xi _2(k)e^{ik\cdot x}-\bar{\xi}_2(k)e^{-ik\cdot x} \Bigr)\biggr)\psi ^{*}(x)\psi (x)  dxdk\; ;
  \end{equation*}
  \begin{equation*}
    \textcolor{Fuchsia}{\dotuline{\textcolor{black}{\bar{\xi} _{1}\psi ^{*}\psi \psi-\xi _1\psi ^{*}\psi ^{*}\psi}}}= \tfrac{1}{\sqrt{2}}\int_{\mathds{R}^6}^{}V_{\infty }(x-y)\Bigl(\bar{\xi}_1(y) \psi ^{*}(x)\psi (x)\psi (y)-\xi_1(y) \psi ^{*}(x)\psi ^{*}(y)\psi (x)\Bigr)   dxdy\; .
  \end{equation*}
  Now we start to study in detail terms of
  \textcolor{YellowOrange}{\dashuline{\textcolor{black}{Pauli-Fierz}}}
  type.

  \paragraph*{\textcolor{myblue}{1)} $\bar{\xi}_1 aa\psi$,
    $\xi_{1} a^{*}a^{*}\psi^{*}$.}

  \begin{equation*}
    \textcolor{YellowOrange}{\dashuline{\textcolor{Black}{\bar{\xi}_1 aa\psi}}}=-\tfrac{1}{2\sqrt{2}M}\int_{\mathds{R}^3}^{}\Bigl(a\bigl(r_{\infty}e^{-ik\cdot x}\bigr)\Bigr)^2\bar{\xi} _1(x)\psi (x) dx\; .
  \end{equation*}
  We recall that $r_{\infty }\sim kg_{\infty }$, where $g_{\sigma }$
  is defined by \eqref{eq:3} for any $\sigma \leq \infty $.
  Let $\bar{\xi}_1 \alpha \alpha u$ be the symbol\footnote{We
    recall that for the Nelson model
    $\mathcal{Z}=L^2 (\mathds{R}^d )\oplus L^2 (\mathds{R}^d )$, thus
    we denote the variable $z$ by $u\oplus \alpha $.}  associated to
  $\bar{\xi}_1 aa\psi$, i.e.
  $\bar{\xi}_1 aa\psi=(\bar{\xi}_1 \alpha \alpha u)^{Wick}$.  Now,
  since $r_{\infty }\notin L^2 (\mathds{R}^3 )$, we cannot expect that
  $\bar{\xi}_1 \alpha \alpha u$ is defined for any
  $u,\alpha \in L^2 (\mathds{R}^3 )$, and therefore that it is a
  compact symbol. We introduce the approximated symbol
  $\bar{\xi}_1 \alpha \alpha u^{(m)}$ defined by
  \begin{equation*}
    \bar{\xi}_1 aa\psi^{(m)}=(\bar{\xi}_1 \alpha \alpha u^{(m)})^{Wick}=-\tfrac{1}{2\sqrt{2}M}\int_{\mathds{R}^3}^{}\Bigl(a\bigl(r_{\sigma _m }e^{-ik\cdot x}\bigr)\Bigr)^2\bar{\xi} _1(x)\psi (x)  dx\; ,
  \end{equation*}
  with $(\sigma _m)_{m\in \mathds{N}}\subset \mathds{R}$ such that
  $\lim_{m\to \infty }\sigma _m=\infty $. First of all, we prove that
  \eqref{eq:99} holds for
  $\bar{\xi}_1 aa\psi-\bar{\xi}_1 aa\psi^{(m)}$:
  \begin{equation*}
    \begin{split}
      \Bigl\lvert  \Bigl\langle (H_0+1)^{-1/2}\Phi   ,  (\bar{\xi}_1 aa\psi-\bar{\xi}_1 aa\psi^{(m)}) (H_0+1)^{-1/2}\Psi \Bigr\rangle_{} \Bigr\rvert_{}^{}\leq \tfrac{1}{2\sqrt{2}M}\lVert \xi_1   \rVert_2^{}\bigl\lVert  (d\Gamma (\omega )+1)^{1/2}\\(H_0+1)^{-1/2}\Phi \bigr\rVert_{}^{}\\\cdot \sup_{x\in \mathds{R}^3}\bigl\lVert (d\Gamma (\omega )+1)^{-1/2}\Bigl(a\bigl((r_{\infty }-r_{\sigma _m })e^{-ik\cdot x}\bigr)\Bigr)^2 (d\Gamma (\omega )+1)^{-1/2}\\ (d\Gamma (\omega )+1)^{1/2}(H_0+1)^{-1/2} (N_1+\varepsilon )^{1/2}\Psi \bigr\rVert_{}^{}\; .
    \end{split}
  \end{equation*}
  We use \eqref{eq:23} of Lemma~\ref{lemma:4} and the fact
  that $(d\Gamma (\omega )+1)^{1/2}(H_0+1)^{-1/2}$ is bounded with
  norm smaller than one to obtain
  \begin{equation*}
    \begin{split}
      \Bigl\lvert  \Bigl\langle (H_0+1)^{-1/2}\Phi   ,  (\bar{\xi}_1 aa\psi-\bar{\xi}_1 aa\psi^{(m)}) (H_0+1)^{-1/2}\Psi \Bigr\rangle_{} \Bigr\rvert_{}^{}\leq \tfrac{c}{2\sqrt{2}M}\lVert \xi_1   \rVert_2^{}\lVert \omega ^{-1/4}(r_{\infty }-r_{\sigma_m })  \rVert_2^2\bigl\lVert \Phi   \bigr\rVert_{}^{}\\\cdot \,\bigl\lVert (N_1+\bar{\varepsilon} )^{1/2}\Psi   \bigr\rVert_{}^{}\\
      \leq C^{(m)}(\xi_1 )\bigl\lVert (N_1+\bar{\varepsilon} )^{1/2}\Phi   \bigr\rVert_{}^{}\cdot \bigl\lVert (N_1+\bar{\varepsilon} )^{1/2}\Psi  \bigr\rVert_{}^{}\; ,
    \end{split}
  \end{equation*}
  with
  $C^{(m)}(\xi_1 )=C(\bar{\varepsilon},\xi_1 ) \lVert \omega
  ^{-1/4}(r_{\infty }-r_{\sigma_m }) \rVert_2^2$
  for some $C(\bar{\varepsilon},\xi_1 )>0$. The sequence
  $\bigl(C^{(m)}(\xi_1 )\bigr)_{m\in \mathds{N}}$ converges to zero
  since by our choice of $(\sigma _m)_{m\in \mathds{N}}$:
  \begin{equation*}
    \lim_{m\to \infty }\lVert \omega^{-1/4}(r_{\infty }-r_{\sigma_m }) \rVert_2^2=0\; .
  \end{equation*}
  It remains to show that $\bar{\xi}_1 \alpha \alpha u^{(m)}$ is a
  compact symbol. Such symbol can be written as
  \begin{equation*}
    \bar{\xi}_1 \alpha \alpha u^{(m)}=-\tfrac{1}{2\sqrt{2}M}\int_{\mathds{R}^9}^{}\bar{\xi} _1(x)\bar{r}_{\sigma _m }(k)\bar{r}_{\sigma _m }(k')e^{i(k+k')\cdot x}  \alpha(k)\alpha(k') u (x)  dxdkdk'\; .
  \end{equation*}
  Now we can define an operator
  $\tilde{b}_{\alpha \alpha u}:\bigl(L^2 \oplus L^2\bigr)^{\otimes_s
    3}\to \mathds{C}$
  in the following way. Let the maps
  $\pi _1,\pi _2:L^2 (\mathds{R}^3 )\oplus L^2 (\mathds{R}^3 )\to L^2
  (\mathds{R}^3 )$
  be the projections on the first and second space respectively.  Then
  we define the operator $\tilde{b}_{\alpha \alpha u}$ as:
  \begin{equation*}
    \tilde{b}_{\alpha \alpha u}:(u,\alpha )^{\otimes 3}\in \bigl(L^2\oplus L^2\bigr)^{\otimes_s 3}\underset{\pi _2\otimes \pi _2\otimes \pi _1}{\longrightarrow} \alpha(k) \alpha(k') u(x)\in L^2 (\mathds{R}^{9}   )\longrightarrow \langle f  , \alpha \alpha u \rangle_{L^2 (\mathds{R}^9  )}\in \mathds{C}\; ;
  \end{equation*}
  where
  $f(k,k',x)=-\tfrac{1}{2\sqrt{2}M}\bar{\xi} _1(x)\bar{r}_{\sigma _m
  }(k)\bar{r}_{\sigma _m }(k')e^{i(k+k')\cdot x}\in L^2 (\mathds{R}^9
  )$.
  Therefore $\tilde{b}_{\alpha \alpha u}$ is bounded and finite rank,
  and therefore compact. The proof for the corresponding adjoint term
  \begin{equation*}
    \textcolor{YellowOrange}{\dashuline{\textcolor{Black}{\xi_{1} a^{*}a^{*}\psi^{*}}}}=-\tfrac{1}{2\sqrt{2}M}\int_{\mathds{R}^3}^{}\Bigl(a^{*}\bigl(r_{\infty }e^{-ik\cdot x}\bigr)\Bigr)^2\xi _1(x)\psi^{*} (x)  dx
  \end{equation*}
  can be obtained directly from the above, using the following
  approximation with compact symbol:
  \begin{equation*}
    \xi_1 a^{*}a^{*}\psi^{*(m)}=(\xi_1 \bar{\alpha} \bar{\alpha} \bar{u}^{(m)})^{Wick}=-\tfrac{1}{2\sqrt{2}M}\int_{\mathds{R}^3}^{}\Bigl(a^{*}\bigl(r_{\sigma _m }e^{-ik\cdot x}\bigr)\Bigr)^2\xi _1(x)\psi^{*} (x)  dx\; .
  \end{equation*}

  \paragraph*{\textcolor{myblue}{2)} $\xi_{1} aa\psi^{*}$,
    $\bar{\xi}_{1} a^{*}a^{*}\psi$.}

  \begin{equation*}
    \textcolor{YellowOrange}{\dashuline{\textcolor{Black}{\xi_{1} aa\psi^{*}}}}=-\tfrac{1}{2\sqrt{2}M}\int_{\mathds{R}^3}^{}\Bigl(a\bigl(r_{\infty }e^{-ik\cdot x}\bigr)\Bigr)^2\xi _1(x)\psi^{*} (x)  dx\; .
  \end{equation*}
  Again we approximate this term by
  \begin{equation*}
    \xi_{1} aa\psi^{*(m)}=(\xi_{1} \alpha \alpha \bar{u}^{(m)})^{Wick}=-\tfrac{1}{2\sqrt{2}M}\int_{\mathds{R}^3}^{}\Bigl(a\bigl(r_{\sigma _m }e^{-ik\cdot x}\bigr)\Bigr)^2\xi _1(x)\psi^{*} (x)  dx
  \end{equation*}
  as above. The proof that it satisfies \eqref{eq:99} is
  perfectly analogous as the one for the previous term. Therefore we
  only prove that $\xi_{1} \alpha \alpha \bar{u}^{(m)}$ is a compact
  symbol. We define an operator
  $b_{\alpha \alpha \bar{u}}:\bigl(L^2 \oplus L^2\bigr)^{\otimes_s
    2}\to L^2 \oplus L^2$ by
  \begin{equation*}
    \begin{split}
      \tilde{b}_{\alpha \alpha \bar{u}}:(u,\alpha )^{\otimes 2}\in \bigl(L^2 \oplus L^2\bigr)^{\otimes_s 2}\underset{\pi _2\otimes \pi _2}{\longrightarrow} \alpha(k) \alpha(k') \in L^2 (\mathds{R}^{6}   )\\\underset{\tilde{c}_{\alpha \alpha \bar{u}}}{\longrightarrow} \biggl(\int_{\mathds{R}^6}^{}\bar{f}(k,k',\cdot )\alpha (k)\alpha(k')   dkdk'\,\oplus \,0\biggr)\in L^2 \oplus L^2\; ;
    \end{split}
  \end{equation*}
  where
  $f(k,k',x)=-\tfrac{1}{2\sqrt{2}M}\xi _1(x)r_{\sigma _m }(k)r_{\sigma
    _m }(k')e^{-i(k+k')\cdot x}$. By definition, we have that
  \begin{equation*}
    \xi_{1} \alpha \alpha \bar{u}^{(m)}=\langle (u, \alpha)  , \tilde{b}_{\alpha \alpha \bar{u}}(u,\alpha )^{\otimes 2} \rangle_{L^2 \oplus L^2}\; .
  \end{equation*}
  It is easily seen that the operator
  $\tilde{c}_{\alpha \alpha \bar{u}}$ is bounded. It is in fact
  compact: let $\beta _j\rightharpoonup \beta $ in
  $L^2 (\mathds{R}^6 )$ be a weakly convergent (bounded) sequence such
  that
  $\max\bigl\{\bigl(\sup_{j}\lVert \beta_{j} \rVert_{L^2 (\mathds{R}^6
    )}^{}\bigr),\lVert \beta \rVert_{L^2 (\mathds{R}^6
    )}^{}\bigr\}=X<\infty $; then
  \begin{equation*}
    \begin{split}
      \bigl\lVert \tilde{c}_{\alpha \alpha \bar{u}}(\beta -\beta _{j})  \bigr\rVert_{L^2 \oplus L^2 }^{}=\bigl\lVert \langle f(k,k',x)  , (\beta -\beta _{j})(k,k') \rangle_{L^2_{k,k'}(\mathds{R}^6)}  \bigr\rVert_{L^2_{x} (\mathds{R}^3  )}^{}\underset{j\to \infty }{\longrightarrow}0\; ,
    \end{split}
  \end{equation*}
  by Lebesgue's dominated convergence theorem, using the uniform bound
  \begin{equation*}
    \begin{split}
      \Bigl\lvert \langle f(k,k',x)  , (\beta -\beta _{j})(k,k') \rangle_{L^2_{k,k'}(\mathds{R}^6)}^2  \Bigr\rvert_{}^{}\leq \lVert f(k,k',x)  \rVert_{L^2_{k,k'}(\mathds{R}^6)}^2\Bigl(\lVert \beta\rVert_{L^2(\mathds{R}^6)}^2 +\lVert \beta_j  \rVert_{L^2(\mathds{R}^6)}^2\Bigr)\\\leq \tfrac{2X}{8M^2}\lVert r_{\sigma _m}  \rVert_2^4\lvert \xi _1(x)  \rvert_{}^2\in L^1_x (\mathds{R}^3  )\; .
    \end{split}
  \end{equation*}
  Therefore, since $\tilde{c}_{\alpha \alpha \bar{u}}$ is compact and
  $\pi _2\otimes \pi _2$ is bounded, it follows that
  $\tilde{b}_{\alpha \alpha \bar{u}}$ is compact. Again, that implies
  the result holds also for the adjoint term
  \begin{equation*}
    \textcolor{YellowOrange}{\dashuline{\textcolor{Black}{\bar{\xi}_{1} a^{*}a^{*}\psi}}}=-\tfrac{1}{2\sqrt{2}M}\int_{\mathds{R}^3}^{}\Bigl(a^{*}\bigl(r_{\infty }e^{-ik\cdot x}\bigr)\Bigr)^2\bar{\xi} _1(x)\psi (x)  dx\; .
  \end{equation*}

  \paragraph*{\textcolor{myblue}{3)} $\bar{\xi}_{1} a^{*}a\psi$,
    $\xi_{1} a^{*}a\psi^{*}$.}

  \begin{gather*}
    \textcolor{YellowOrange}{\dashuline{\textcolor{Black}{\bar{\xi}_{1} a^{*}a\psi}}}=-\tfrac{1}{\sqrt{2}M}\int_{\mathds{R}^3}^{}a^{*}\bigl(r_{\infty }e^{-ik\cdot x}\bigr) a\bigl(r_{\infty }e^{-ik\cdot x}\bigr)\bar{\xi} _1(x)\psi (x)  dx\; ,\\
    \textcolor{YellowOrange}{\dashuline{\textcolor{Black}{\xi_{1} a^{*}a\psi^{*}}}}=-\tfrac{1}{\sqrt{2}M}\int_{\mathds{R}^3}^{}a^{*}\bigl(r_{\infty }e^{-ik\cdot x}\bigr) a\bigl(r_{\infty }e^{-ik\cdot x}\bigr)\xi _1(x)\psi^{*} (x)  dx\; .
  \end{gather*}
  The proof for this couple of terms goes on exactly like the previous
  one, i.e. approximating $r_{\infty }$ with $r_{\sigma _m}$ and
  showing that the corresponding operator
  $\tilde{c}_{\bar{\alpha}\alpha u }$ is compact, for it maps weakly
  convergent  sequences into strongly convergent ones.

  \paragraph*{\textcolor{myblue}{4)} $\bar{\xi}_2a \psi ^{*}\psi$,
    $\xi_2a^{*} \psi ^{*}\psi$.}

  \begin{equation*}
    \textcolor{YellowOrange}{\dashuline{\textcolor{Black}{\bar{\xi}_2a \psi ^{*}\psi }}}=-\tfrac{\sqrt{2} i}{M}\int_{\mathds{R}^6}^{}\Im\bigl(\xi_2(k')\bar{r}_{\infty }(k')e^{ik'\cdot x} \bigr) a\bigl(r_{\infty }e^{-ik\cdot x}\bigr) \psi ^{*}(x)\psi (x)  dxdk'\; .
  \end{equation*}
  We approximate it by the symbol $\bar{\xi}_2\alpha \bar{u}u^{(m)} $
  defined by:
  \begin{equation*}
    \begin{split}
          \bar{\xi }_2a\psi ^{*}\psi ^{(m)}=(\bar{\xi}_2\alpha \bar{u}u^{(m)} )^{Wick}=-\tfrac{\sqrt{2} i}{M}\int_{\mathds{R}^6}^{}\psi ^{*}(x)\chi _m(D_x) \Im\bigl(\xi_2(k')\bar{r}_{\infty }(k')e^{ik'\cdot x} \bigr) a\bigl(r_{\sigma _m }e^{-ik\cdot x}\bigr)\\ \psi (x)  dxdk'\; ;
    \end{split}
  \end{equation*}
  where $\chi _m$ is the smooth cut-off function defined at the
  beginning of Section~\ref{sec:nelson-hamiltonian}, while
  $r_{\sigma _m}$ is the usual regularization of $r_{\infty }$ defined
  above. First of all we check that the approximation satisfies
  \eqref{eq:99}. By the chain rule, two parts have to be
  checked:
  \begin{eqnarray*}
      &&\Bigl\lvert \Bigl\langle (H_0+1)^{-1/2}\Phi   , (\bar{\xi }_2a\psi ^{*}\psi-\bar{\xi }_2a\psi ^{*}\psi ^{(m)})(H_0+1)^{-1/2}\Psi  \Bigr\rangle_{}  \Bigr\rvert_{}^{}\leq \tfrac{\sqrt{2}(2\pi )^{3/2}}{M}  \biggl(\\&&\hspace{.2in}\Bigl\lvert \Bigl\langle (H_0+1)^{-1/2}\Phi   , \int_{\mathds{R}^3}^{}dx\:\psi ^{*}(x)\bigl(1-\chi _m(D_x)\bigr) \Im\mathcal{F}^{-1}\bigl(\xi_2\bar{r}_{\infty }\bigr)(x) a\bigl(r_{\infty }e^{-ik\cdot x}\bigr) \psi (x)  (H_0+1)^{-1/2}\Psi  \Bigr\rangle_{}  \Bigr\rvert_{}^{}\\&&+\Bigl\lvert \Bigl\langle (H_0+1)^{-1/2}\Phi   , \int_{\mathds{R}^3}^{}dx\:\psi^{*}(x) \chi_m(D_x) \Im\mathcal{F}^{-1}\bigl(\xi_2\bar{r}_{\infty } \bigr)(x) a\bigl((r_{\infty }-r_{\sigma _m })e^{-ik\cdot x}\bigr) \psi (x)  (H_0+1)^{-1/2}\Psi  \Bigr\rangle_{}  \Bigr\rvert_{}^{}\biggr)\; ;
  \end{eqnarray*}
  and we will consider them separately. For the first part we have:
  \begin{eqnarray*}
     && \Bigl\lvert \Bigl\langle (H_0+1)^{-1/2}\Phi   , \int_{\mathds{R}^3}^{}dx\:\psi ^{*}(x)\bigl(1-\chi _m(D_x)\bigr) \Im\mathcal{F}^{-1}\bigl(\xi_2\bar{r}_{\infty } \bigr)(x) a\bigl(r_{\infty }e^{-ik\cdot x}\bigr) \psi (x) (H_0+1)^{-1/2}\Psi  \Bigr\rangle_{}  \Bigr\rvert_{}^{}\\
      &&\leq \sum_{n=0}^{\infty }n\varepsilon \Bigl\lvert \Bigl\langle (H^{(n)}_0+1)^{-1/2}\Phi_n   , (1-\chi _m(D_{x_1})) \Im\mathcal{F}^{-1}\bigl(\xi_2\bar{r}_{\infty } \bigr)(x_1) a\bigl(r_{\infty }e^{-ik\cdot x_{1}}\bigr) (H^{(n)}_0+1)^{-1/2}\Psi_n  \Bigr\rangle_{\mathcal{H}_n}  \Bigr\rvert_{}^{}\\
      &&\leq \sum_{n=0}^{\infty }n\varepsilon \bigl\lVert (1-D_x^2 )^{-1/2}\bigl(1-\chi _m(D_x)\bigr)  \bigr\rVert_{\mathcal{L}(L^2 (\mathds{R}^3  ))}^{}\cdot \bigl\lVert \mathcal{F}^{-1}(\xi _2\bar{r}_{\infty })  \bigr\rVert_{\infty }^{}\cdot \bigl\lVert \omega ^{-1/2}r_{\infty }  \bigr\rVert_2^{}\\&&\hspace{.5in}\cdot \bigl\lVert (1-D_{x_1}^2 )^{1/2}(H_0^{(n)}+1)^{-1/2}\Phi _n  \bigr\rVert_{\mathcal{H}_n}^{}\cdot \bigl\lVert d\Gamma (\omega )^{1/2}(H_0^{(n)}+1)^{-1/2}\Psi  _n  \bigr\rVert_{\mathcal{H}_n}^{}\\
      &&\leq (1+\bar{\varepsilon} ) \lVert \xi _2  \rVert_{\mathcal{F}H^{1/2}}^{}\cdot \bigl\lVert \omega ^{-1/2}r_{\infty }  \bigr\rVert_2^{2}\cdot \bigl\lVert (1-D_x^2 )^{-1/2}\bigl(1-\chi _m(D_x)\bigr)  \bigr\rVert_{\mathcal{L}(L^2 (\mathds{R}^3  ))}^{}\cdot \bigl\lVert (N_1+\bar{\varepsilon} )^{1/2}\Phi   \bigr\rVert_{}^{}\cdot \bigl\lVert N_1^{1/2}\Psi  \bigr\rVert_{}^{}\; ;
  \end{eqnarray*}
  where in the last inequality we have utilized the following bound:
  \begin{equation*}
    \begin{split}
      n\varepsilon \bigl\lVert (1-D_{x_1}^2 )^{1/2}(H_0^{(n)}+1)^{-1/2}\Phi _n  \bigr\rVert_{\mathcal{H}_n}^2=\bigl\langle \Phi _n  , (H_0^{(n)}+1)^{-1/2}d\Gamma (1-\Delta ) (H_0^{(n)}+1)^{-1/2}\Phi _n \bigr\rangle_{\mathcal{H}_n}\\
      \leq \bigl\lVert N_1^{1/2}\Phi _n  \bigr\rVert_{\mathcal{H}_n}^{}+\bigl\lVert d\Gamma (-\Delta )^{1/2}(H_0^{(n)}+1)^{-1/2}\Phi _n  \bigr\rVert_{\mathcal{H}_n}^{}\leq (1+\bar{\varepsilon} )\bigl\lVert (N_1+\bar{\varepsilon} )^{1/2}\Phi_n   \bigr\rVert_{\mathcal{H}_n}^{}\; .
    \end{split}
  \end{equation*}
  So the first part satisfies \eqref{eq:99}, since
  \begin{equation*}
    \lim_{m\to \infty }\bigl\lVert (1-D_x^2 )^{-1/2}\bigl(1-\chi _m(D_x)\bigr)  \bigr\rVert_{\mathcal{L}(L^2 (\mathds{R}^3  ))}^{}=0\; .
  \end{equation*}
  A similar procedure for the second part yields
  \begin{equation*}
    \begin{split}
      \Bigl\lvert \Bigl\langle (H_0+1)^{-1/2}\Phi   , \int_{\mathds{R}^3}^{}dx\:\psi ^{*}(x)\chi_m(D_x) \Im\mathcal{F}^{-1}\bigl(\xi_2\bar{r}_{\infty } \bigr)(x) a\bigl((r_{\infty }-r_{\sigma _m })e^{-ik\cdot x}\bigr) \psi (x)  (H_0+1)^{-1/2}\Psi  \Bigr\rangle_{}  \Bigr\rvert\\
      \leq \lVert \xi _2  \rVert_{\mathcal{F}H^{1/2}}^{}\cdot \bigl\lVert \omega ^{-1/2}r_{\infty }  \bigr\rVert_2^{}\cdot \bigl\lVert \omega ^{-1/2}(r_{\infty }-r_{\sigma _m})  \bigr\rVert_2^{}\bigl\lVert N_1^{1/2}\Phi   \bigr\rVert_{}^{}\cdot \bigl\lVert N_1^{1/2}\Psi  \bigr\rVert_{}^{}\; ;
    \end{split}
  \end{equation*}
  i.e. it satisfies \eqref{eq:99}, for
  $\lim_{m\to \infty }\lVert \omega ^{-1/2}(r_{\infty }-r_{\sigma _m})
  \rVert_2=0$.
  Now it remains to show that $\bar{\xi}_2\alpha \bar{u}u^{(m)} $ is a
  compact symbol:
  \begin{equation*}
    \bar{\xi}_2\alpha \bar{u}u^{(m)}=-\tfrac{(2\pi )^{3/2}\sqrt{2} i}{M}\int_{\mathds{R}^6}^{}\bar{u}(x)\chi _m(D_x) \Im\mathcal{F}^{-1}\bigl(\xi_2\bar{r}_{\infty }\bigr)(x) \bar{r}_{\sigma _m }(k)e^{ik\cdot x} \alpha(k) u (x)  dxdk\; .
  \end{equation*}
  As for the previous terms, we define an operator
  $b_{\alpha \bar{u}u}:\bigl(L^2 \oplus L^2 \bigr)^{\otimes_s 2}\to
  L^2 \oplus L^2 $ by
  \begin{equation*}
    \begin{split}
      \tilde{b}_{\alpha \bar{u}u}:(u,\alpha )^{\otimes 2}\in \bigl(L^2 \oplus L^2\bigr)^{\otimes_s 2}\underset{\pi _2\otimes \pi _1}{\longrightarrow} \alpha(k) u(x) \in L^2 (\mathds{R}^{6}   )\underset{\tilde{c}_{\alpha \bar{u}u}}{\longrightarrow} \Bigl(f'(x,D_{x})u(x)  \oplus \,0\Bigr)\in L^2 \oplus L^2\; ;
    \end{split}
  \end{equation*}
  where
  $f'(x,D_{x})=-\tfrac{(2\pi )^3\sqrt{2}i}{M} \chi_m(D_x)
  \mathcal{F}^{-1}\bigl(\bar{r}_{\sigma _m }\alpha
  \bigr)(x)\Im\mathcal{F}^{-1}\bigl(\xi_2\bar{r}_{\infty }\bigr)(x)$.
  We can easily prove that
  $f':L^2 (\mathds{R}^3 )\to L^2 (\mathds{R}^3 )$ is a compact
  operator. The cutoff function
  $\chi _m\in L^{\infty }_0 (\mathds{R}^3 )$ by hypothesis\footnote{We
    denote by $L_0^{\infty }(\mathds{R}^{3})$ the set of bounded
    functions on $\mathds{R}^{3}$  that vanish at infinity.}. Now both $\bar{r}_{\sigma _m }\alpha$
  and $\xi_2\bar{r}_{\infty }$ belong to $L^1 (\mathds{R}^3 )$, since
  $r_{\sigma_m },\alpha, \omega^{1/2}\xi_2,\omega ^{-1/2}r_{\infty }
  \in L^2 (\mathds{R}^3 )$.
  Therefore
  $\mathcal{F}^{-1}\bigl(\bar{r}_{\sigma _m }\alpha
  \bigr)\Im\mathcal{F}^{-1}\bigl(\xi_2\bar{r}_{\infty }\bigr)\in
  L^{\infty }_0 (\mathds{R}^3 )$,
  hence $f'(x,D_{x})\in \mathcal{K}(L^2 (\mathds{R}^3 ))$. It
  immediately follows that $\tilde{b}_{\alpha \bar{u}u}$ is compact,
  and the proof is complete. As usual, this result implies the one for
  the adjoint term
  \begin{equation*}
    \textcolor{YellowOrange}{\dashuline{\textcolor{Black}{\xi_2a^{*} \psi ^{*}\psi }}}=-\tfrac{\sqrt{2} i}{M}\int_{\mathds{R}^6}^{}\Im\bigl(\xi_2(k')\bar{r}_{\infty }(k')e^{ik'\cdot x} \bigr) a^{*}\bigl(r_{\infty }e^{-ik\cdot x}\bigr) \psi^{*}(x)\psi (x)  dxdk'\; .
  \end{equation*}

  \paragraph*{\textcolor{myblue}{5)} $D_xa\bar{\xi} _{1}\psi $,
    $a^{*}D_x\psi ^{*}\xi _1$, $D_xa\psi^{*}\xi _{1} $,
    $a^{*}D_x\bar{\xi} _1\psi$.}

  \begin{equation*}
    \textcolor{YellowOrange}{\dashuline{\textcolor{Black}{D_xa\bar{\xi} _{1}\psi}}}=\tfrac{1}{\sqrt{2}M}\int_{\mathds{R}^3}^{}\bar{\xi}_1(x)D_xa\bigl(r_{\infty }e^{-ik\cdot x}\bigr)\psi (x)   dx\; .
  \end{equation*}
  The approximated symbol $D_xa\bar{\xi} _{1}\psi^{(m)}$ is given by
  \begin{equation*}
    D_xa\bar{\xi} _{1}\psi^{(m)}=\tfrac{1}{\sqrt{2}M}\int_{\mathds{R}^3}^{}\bar{\xi}_1(x)D_xa\bigl(r_{\sigma _m }e^{-ik\cdot x}\bigr)\psi (x)   dx\; .
  \end{equation*}
  First of all we prove that \eqref{eq:99} is
  satisfied. Given $\Phi \in \mathcal{H}$, we denote by $\Phi _{n,p}$
  its restriction to the subspace
  $\mathcal{H}_{n,p}=\Bigl(L^2 (\mathds{R}^3 )\Bigr)^{\otimes _s
    n}\otimes \Bigl(L^2 (\mathds{R}^3 )\Bigr)^{\otimes _s p }$
  with $n$ nucleons and $p $ mesons. We also denote by
  $X_n=\{x_1,\dotsc, x_n\}$ a set of variables,
  $dX_n=dx_{1}\dotsm dx_n$ the corresponding Lebesgue measure (and
  analogously for $K_p$, $dK_p$). The proof is obtained by a direct
  calculation on the Fock space as follows:
  \begin{equation*}
    \begin{split}
      \biggl\lvert \Bigl\langle (H_0+1)^{-1/2}\Phi   , \int_{\mathds{R}^3}^{}\bar{\xi}_1(x)D_xa\bigl((r_{\infty }-r_{\sigma _m})e^{-ik\cdot x}\bigr)\psi (x)   d x (H_0+1)^{-1/2}\Psi  \Bigr\rangle_{} \biggr\rvert_{}^{}\\
      =\biggl\lvert \sum_{n,p=0}^{\infty }\varepsilon \sqrt{(n+1)(p+1)}\int_{\mathds{R}^{(n+p+2)d}}^{}\overline{\Bigl((H_0+1)^{-1/2}\Phi \Bigr)}_{n,p}(X_n;K_p)\bar{\xi}_{1}(x)D_x(\bar{r}_{\infty }-\bar{r}_{\sigma _m})(k)e^{ik\cdot x}\\\overline{\Bigl((H_0+1)^{-1/2}\Psi  \Bigr)}_{n+1,p+1}(x,X_n;k,K_p)   dxdX_ndkdK_p  \biggr\rvert_{}^{}\\
      \leq \sum_{n,p=0}^{\infty }\sqrt{\varepsilon(n+1)}\biggl\lvert\int_{\mathds{R}^{(n+p+2)d}}^{}\overline{\Bigl((H_0+1)^{-1/2}\Phi \Bigr)}_{n,p}(X_n;K_p)\overline{D_x\xi_1 }(x)\overline{\tfrac{r_{\infty }-r_{\sigma _m}}{\sqrt{\omega }}}(k)e^{ik\cdot x}\sqrt{\varepsilon(p+1)\omega(k) }\\\overline{\Bigl((H_0+1)^{-1/2}\Psi  \Bigr)}_{n+1,p+1}(x,X_n;k,K_p)   dxdX_ndkdK_p  \biggr\rvert_{}^{}\\
      \leq \sum_{n,p=0}^{\infty }\sqrt{\varepsilon(n+1)}\bigl\lVert (-\Delta +V)^{1/2}\xi _1  \bigr\rVert_2^{}\cdot \bigl\lVert \omega ^{-1/2}(r_{\infty }-r_{\sigma _m})  \bigr\rVert_2^{}\cdot \bigl\lVert (H_0+1)^{-1/2}\Phi_{n,p}   \bigr\rVert_{\mathcal{H}_{n,p}}^{}\\\cdot \,\bigl\lVert e^{ik\cdot x}\sqrt{\varepsilon(p+1)\omega(k_1) } (H_0+1)^{-1/2}\Psi_{n+1,p+1}(X_{n+1};K_{p+1})   \bigr\rVert_{\mathcal{H}_{n+1,p+1}}^{}\\
      \leq \bigl\lVert (-\Delta +V)^{1/2}\xi _1  \bigr\rVert_2^{}\cdot \bigl\lVert \omega ^{-1/2}(r_{\infty }-r_{\sigma _m})  \bigr\rVert_2^{}\cdot \bigl\lVert (N_1+\bar{\varepsilon} )^{1/2}\Phi   \bigr\rVert_{}^{}\cdot \bigl\lVert \Psi   \bigr\rVert_{}^{}\; ;
    \end{split}
  \end{equation*}
  where in the last bound we have used Schwartz's inequality and the
  fact that $p \omega (k_1)\equiv\sum_{j=1}^p\omega (k_j)$ when acting on
  vectors of $\mathcal{H}_{n,p}$. Now, since
  $\lim_{m\to \infty }\bigl\lVert \omega ^{-1/2}(r_{\infty }-r_{\sigma
    _m}) \bigr\rVert_2=0$,
  Equation~\eqref{eq:99} holds with
  $C^{(m)}(\xi _1)=\tfrac{1}{\sqrt{2}M}\bigl\lVert (-\Delta
  +V)^{1/2}\xi _1 \bigr\rVert_2^{}\cdot \bigl\lVert \omega
  ^{-1/2}(r_{\infty }-r_{\sigma _m}) \bigr\rVert_2$.
  It remains to show that the classical symbol
  \begin{equation*}
    D_x\alpha \bar{\xi} _{1}u^{(m)}=\tfrac{1}{\sqrt{2}M}\int_{\mathds{R}^6}^{}\bar{\xi}_1(x)D_x\alpha(k) \bar{r}_{\sigma _m }(k)e^{ik\cdot x}u (x)   dxdk
  \end{equation*}
  is compact. Here we have written
  $D_x\alpha \bar{\xi} _{1}u^{(m)}=\langle \xi _1 , D_xv \rangle_2$,
  with
  $v_x(x)=\tfrac{(2\pi )^{3/2}}{\sqrt{2}M}\mathcal{F}^{-1}\bigl(\alpha
  \bar{r}_{\sigma _m}\bigr)(x)u (x)$;
  and that is defined for any $v\in \dot{H}^1(\mathds{R}^3)$. However,
  since $\xi _1\in Q(-\Delta +V)\subset H^1(\mathds{R}^3)$ and $D_x$
  is self-adjoint, we can write
  $D_x\alpha \bar{\xi} _{1}u^{(m)}=\langle D_x\xi _1 , v \rangle_2$
  for any $v\in L^2 (\mathds{R}^3 )$. It follows that
  $D_x\alpha \bar{\xi} _{1}u^{(m)}$ is defined for any
  $u,\alpha \in L^2 (\mathds{R}^3)$, since
  $\alpha ,r_{\sigma _m}\in L^2$ implies
  $\alpha \bar{r}_{\sigma _m}\in L^1$, and therefore
  $\mathcal{F}^{-1}\bigl(\alpha \bar{r}_{\sigma _m}\bigr)\in L^{\infty
  }$.
  It follows that the operator
  $\tilde{b}_{D_{x}\alpha u}:\bigl(L^2 \oplus L^2 \bigr)^{\otimes_s
    2}\to \mathds{C}$ defined as
  \begin{equation*}
    \tilde{b}_{D_{x}\alpha u}:(u,\alpha )^{\otimes 2}\in \bigl(L^2 \oplus L^2\bigr)^{\otimes_s 2}\underset{\pi _2\otimes \pi _1}{\longrightarrow} \alpha(k) u(x)\in L^2 (\mathds{R}^{6}   )\longrightarrow \langle f''  , \alpha u \rangle_{L^2 (\mathds{R}^6  )}\in \mathds{C}\; ,
  \end{equation*}
  with
  $f''(x,k)=\tfrac{1}{\sqrt{2}M}(D_{x}\xi_1 )(x)r_{\sigma
    _m}(k)e^{-ik\cdot x}$,
  is bounded and finite rank, and therefore compact.
  \begin{equation*}
    \textcolor{YellowOrange}{\dashuline{\textcolor{Black}{a^{*}D_x\bar{\xi}_{1}\psi }}}=\tfrac{1}{\sqrt{2}M}\int_{\mathds{R}^3}^{}\bar{\xi}_1(x) a^{*}\bigl(r_{\infty }e^{-ik\cdot x}\bigr)D_x\psi  (x)  dx\; .
  \end{equation*}
  Again, the approximated symbol $a^{*}D_x\bar{\xi}_{1}\psi$ is given
  by
  \begin{equation*}
    a^{*}D_x\bar{\xi}_{1}\psi^{(m)}=\tfrac{1}{\sqrt{2}M}\int_{\mathds{R}^3}^{}\bar{\xi}_1(x) a^{*}\bigl(r_{\sigma_m  }e^{-ik\cdot x}\bigr)D_x\psi  (x)  dx\; .
  \end{equation*}
  Equation~\eqref{eq:99} is satisfied, and the proof follows the same
  guidelines as the one for the previous term
  $D_xa\bar{\xi} _{1}\psi$. We give the compactness proof for the
  symbol
  \begin{equation*}
    \bar{\alpha} D_x\bar{\xi}_{1}u^{(m)}=\tfrac{1}{\sqrt{2}M}\int_{\mathds{R}^6}^{}\bar{\xi}_1(x) \bar{\alpha}(k) r_{\sigma_m  }(k)e^{-ik\cdot x}D_xu  (x)  dxdk\; .
  \end{equation*}
  We rewrite it as
  $\bar{\alpha} D_x\bar{\xi}_{1}u^{(m)}=\langle (u,\alpha ) ,
  \tilde{b}_{\bar{\alpha}D_x u }(u,\alpha ) \rangle_{L^2\oplus L^2}$,
  with
  $\tilde{b}_{\bar{\alpha}D_x u }:L^2 \oplus L^2 \to L^2 \oplus L^2 $
  defined as
  \begin{equation*}
    \begin{split}
      \tilde{b}_{\bar{\alpha} D_x u}:(u,\alpha )\in L^2 \oplus L^2\underset{\pi _1}{\longrightarrow} u(x) \in L^2 (\mathds{R}^{3}   )\underset{\tilde{c}_{\bar{\alpha} D_x u}}{\longrightarrow} \Bigl( 0 \, \oplus \, f'''(k) \Bigr)\in L^2 \oplus L^2\; ,
    \end{split}
  \end{equation*}
  where
  $f'''(k)=\tfrac{1}{\sqrt{2}M}r_{\sigma _m}(k)\Bigl(k\langle
  e^{ik\cdot x}\xi_1 , u \rangle_{L^2_x}+\langle e^{ik\cdot
    x}D_{x}\xi_1 , u \rangle_{L^2_x}\Bigr)$.
  Now suppose that $u_j\rightharpoonup u$ is a weakly convergent
  (bounded) sequence with bound $X$. It follows that, uniformly in $j$,
  \begin{equation*}
    \begin{split}
      \lvert f_j'''(k)  \rvert_{}^2=\Bigl\lvert \tfrac{1}{\sqrt{2}M}r_{\sigma _m}(k)\Bigl(k\langle
      e^{ik\cdot x}\xi_1 , u_j \rangle_{L^2_x}+\langle e^{ik\cdot
        x}D_{x}\xi_1 , u_j \rangle_{L^2_x}\Bigr)  \Bigr\rvert_{}^{2}\\\leq \tfrac{1}{2M^{2}}X^2\lvert r_{\sigma _m}(k)  \rvert_{}^2(k^2+1)\lVert \xi _1  \rVert_{H^1}^2\in L^1_k (\mathds{R}^3  )\; .
    \end{split}
  \end{equation*}
  In addition,
  $\lim_{j\to \infty }\lvert f'''(k)-f_j'''(k) \rvert_{}^2=0$;
  therefore $\tilde{c}_{\bar{\alpha} D_x u}$ is a compact operator by
  Lebesgue's dominated convergence theorem. So
  $\tilde{b}_{\bar{\alpha} D_x u}$ is compact. The proofs above extend
  immediately to the adjoint terms
  \begin{gather*}
    \textcolor{YellowOrange}{\dashuline{\textcolor{Black}{a^{*}D_x\psi ^{*}\xi _1}}}=\tfrac{1}{\sqrt{2}M}\int_{\mathds{R}^3}^{}\psi^{*}(x) a^{*}\bigl(r_{\infty }e^{-ik\cdot x}\bigr)D_x\xi_1(x)   dx\; ;\\
    \textcolor{YellowOrange}{\dashuline{\textcolor{Black}{D_xa\psi^{*}\xi _{1} }}}=\tfrac{1}{\sqrt{2}M}\int_{\mathds{R}^3}^{}\psi ^{*}(x)D_xa\bigl(r_{\infty }e^{-ik\cdot x}\bigr)\xi_1(x)   dx\; .
  \end{gather*}

  \paragraph*{\textcolor{myblue}{6)} $\psi ^{*}D_x\xi _2\psi$,
    $\psi ^{*}\bar{\xi}_2D_x\psi$.}

  \begin{equation*}
    \textcolor{YellowOrange}{\dashuline{\textcolor{Black}{\psi ^{*}D_x\xi _2\psi}}}=\tfrac{(2\pi )^{3/2}}{\sqrt{2}M}\int_{\mathds{R}^3}^{}\psi^{*}(x) D_x \mathcal{F}^{-1}\Bigl(\xi_2\bar{r}_{\infty }\Bigr)(x)\psi (x)   dx\; .
  \end{equation*}
  The approximated symbol, as for the terms of point 4, contains
  $\chi _m(D_x)$:
  \begin{equation*}
    \psi ^{*}D_x\xi _2\psi^{(m)}=\tfrac{(2\pi )^{3/2}}{\sqrt{2}M}\int_{\mathds{R}^3}^{}\psi^{*}(x) \chi _m(D_x) D_x \mathcal{F}^{-1}\Bigl(\xi_2\bar{r}_{\infty }\Bigr)(x)\psi (x)   dx\; .
  \end{equation*}
  As usual, we start proving that \eqref{eq:99} holds. We
  remark that this is the only term where we need
  $\xi _2\in D(\omega ^{3/4})$ instead of $D(\omega ^{1/2})$.
  \begin{equation*}
    \begin{split}
      \biggl\lvert \Bigl\langle (H_0+1)^{-1/2}\Phi   , \int_{\mathds{R}^3}^{}\psi ^{*}(x)\bigl(1-\chi _m(D_x)\bigr) D_{x}\mathcal{F}^{-1}\Bigl(\xi_2\bar{r}_{\infty }\Bigr)(x)\psi (x)   dx (H_0+1)^{-1/2}\Psi  \Bigr\rangle_{} \biggr\rvert_{}^{}\\
      \leq \sum_{n=0}^{\infty }n\varepsilon \biggl\lvert \Bigl\langle (H_0+1)^{-1/2}\Phi_{n}   , \bigl(1-\chi _m(D_{x_1})\bigr) D_{x_1}\mathcal{F}^{-1}\Bigl(\xi_2\bar{r}_{\infty }\Bigr)(x_1) (H_0+1)^{-1/2}\Psi_n  \Bigr\rangle_{} \biggr\rvert\\
      \leq \sum_{n=0}^{\infty }n\varepsilon \bigl\lVert (1-\Delta )^{-1/2}\bigl(1-\chi _m(D_{x})\bigr)  \bigr\rVert_{\mathcal{L}(L^2 (\mathds{R}^3  ))}^{}\cdot \Bigl(\bigl\lVert \mathcal{F}^{-1}\Bigl(\xi_2\bar{r}_{\infty }\Bigr)  \bigr\rVert_{\infty }^{}+\bigl\lVert \mathcal{F}^{-1}\Bigl(k\xi_2\bar{r}_{\infty }\Bigr)  \bigr\rVert_{\infty }^{}\Bigr)\\\cdot\, \bigl\lVert (1-\Delta_{x_1} )^{1/2}(H_0+1)^{-1/2}\Phi_{n}  \bigr\rVert_{\mathcal{H}_n}^{}\cdot \Bigl( \bigl\lVert D_{x_1}(H_0+1)^{-1/2}\Psi_{n}  \bigr\rVert_{\mathcal{H}_n}^{}+\bigl\lVert (H_0+1)^{-1/2}\Psi_{n}  \bigr\rVert_{\mathcal{H}_n}^{}\Bigr)\\
      \leq 2\bigl\lVert (1-D_x^2 )^{-1/2}\bigl(1-\chi _m(D_{x})\bigr)  \bigr\rVert_{\mathcal{L}(L^2 (\mathds{R}^3  ))}^{}\cdot \bigl\lVert \omega ^{3/4}\xi _2\bigr  \rVert_2^{}\cdot \bigl\lVert \omega ^{-1/4}r_{\infty }\bigr  \rVert_2^{}\cdot \bigl\lVert \Phi   \bigr\rVert_{}^{}\cdot \\\Bigl(\bigl\lVert (N_1+\bar{\varepsilon} )^{1/2}\Psi   \bigr\rVert_{}^{}+\bigl\lVert \Psi   \bigr\rVert_{}^{}\Bigr)\; ;
    \end{split}
  \end{equation*}
  hence the result follows with
  \begin{equation*}
    C^{(m)}(\xi _2)=\tfrac{2 \sqrt{2}(2\pi )^{3/2}}{M}\bigl\lVert
    (1-D_x^2 )^{-1/2}\bigl(1-\chi _m(D_{x})\bigr)
    \bigr\rVert_{\mathcal{L}(L^2 (\mathds{R}^3 ))}^{} \bigl\lVert \omega
    ^{3/4}\xi _2\bigr \rVert_2^{} \bigl\lVert \omega ^{-1/4}r_{\infty
    }\bigr \rVert_2^{}\; ,
  \end{equation*}
  since
  $\lim_{m\to \infty }\bigl\lVert (1-D_x^2 )^{-1/2}\bigl(1-\chi
  _m(D_x)\bigr) \bigr\rVert_{\mathcal{L}(L^2 (\mathds{R}^3
    ))}^{}=0$. It remains to show that the symbol
  \begin{equation*}
    \bar{u}D_x\xi _2u^{(m)}=\tfrac{(2\pi )^{3/2}}{\sqrt{2}M}\int_{\mathds{R}^3}^{}\bar{u}(x) \chi _m(D_x) D_x \mathcal{F}^{-1}\Bigl(\xi_2\bar{r}_{\infty }\Bigr)(x)u (x)   dx
  \end{equation*}
  is compact. We introduce the operator
  $\tilde{b}_{\bar{u}D_x u}:L^2 \oplus L^2 \to L^2 \oplus L^2 $ such
  that
  $\bar{u}D_x\xi _2u^{(m)}=\langle (u,\alpha ) , \tilde{b}_{\bar{u}D_x
    u}(u,\alpha ) \rangle_{L^2\oplus L^2}$:
  \begin{equation*}
    \begin{split}
      \tilde{b}_{\bar{u}D_x u}:(u,\alpha )\in L^2 \oplus L^2\underset{\pi _1}{\longrightarrow} u(x) \in L^2 (\mathds{R}^{3}   )\underset{\tilde{c}_{\bar{u}D_x u}}{\longrightarrow} \Bigl( f''''(x,D_x)u(x) \,\oplus \,0\Bigr)\in L^2 \oplus L^2\; ,
    \end{split}
  \end{equation*}
  where
  $f''''(x,D_x)=\tfrac{(2\pi )^{3/2}}{\sqrt{2}M} D_x\,\chi _m(D_x)
  \mathcal{F}^{-1}\Bigl(\xi_2\bar{r}_{\infty }\Bigr)(x)$.
  Now $f''''(x,D_x)$ is a compact operator: both $x\,\chi _m(x)$ and
  $\mathcal{F}^{-1}\Bigl(\xi_2\bar{r}_{\infty }\Bigr)(x)$ are in
  $L^{\infty }_0(\mathds{R}^3)$. Therefore $\tilde{b}_{\bar{u}D_x u}$
  is compact. The proof extends immediately to the adjoint term
  \begin{equation*}
    \textcolor{YellowOrange}{\dashuline{\textcolor{Black}{\psi ^{*}\bar{\xi} _2D_x\psi}}}=\tfrac{(2\pi )^{3/2}}{\sqrt{2}M}\int_{\mathds{R}^3}^{}\psi^{*}(x) \mathcal{F}\Bigl(\bar{\xi}_2r_{\infty }\Bigr)(x) D_x\psi (x)   dx\; .
  \end{equation*}
\end{proof}

\subsection{Defining the time-dependent family of Wigner measures.}
\label{sec:defin-wign-meas}

The last tool we need in order to take the limit $\varepsilon \to 0$ of the
integral formula~\eqref{eq:92} are Wigner measures. Throughout this
section, we will leave some statements unproven; the reader may refer
to \citep[][Section
6]{ammari:nier:2008
} for the proofs, and a detailed discussion of Wigner measures
properties. We recall the definition of a Wigner measure
associated with a family of states on
$\mathcal{H}=\Gamma _s\bigl(L^2 (\mathds{R}^3 )\oplus L^2 (\mathds{R}^3 ) \bigr)$.
\begin{definition}
  Let
  $(\varrho _{\varepsilon })_{\varepsilon \in (0,\bar{\varepsilon}
    )}\subset \mathcal{L}^1\bigl(\mathcal{H}\bigr)$
  be a family of normal  states;
  $\mu \in \mathfrak{P}\bigl(L^2 \oplus L^2\bigr)$ a Borel probability
  measure. We say that $\mu $ is a Wigner (or semiclassical) measure
  associated to
  $(\varrho _{\varepsilon })_{\varepsilon \in (0,\bar{\varepsilon}
    )}$,
  or in symbols
  $\mu \in \mathcal{M}\bigl(\varrho _{\varepsilon },\varepsilon \in
  (0,\bar{\varepsilon} )\bigr)$,
  if there exist a sequence
  $(\varepsilon _k)_{k\in \mathds{N}}\subset (0,\bar{\varepsilon} )$
  such that $\lim_{k\to \infty }\varepsilon _k= 0$ and
  \begin{equation}
    \label{eq:105}
    \lim_{k\to \infty }\Tr\Bigl[\varrho _{\varepsilon _k}W(\xi )\Bigr]=\int_{L^2\oplus L^2}^{}e^{i \sqrt{2}\Re\langle \xi   , z \rangle_{L^2\oplus L^2}}  d\mu (z) \; ,\; \forall \xi \in L^2\oplus L^2\; .
  \end{equation}
  We remark that the right-hand side is essentially the Fourier transform of the
  measure $\mu $, so considering  the sequence
  $(\varepsilon _k)_{k\in \mathds{N}}$ there is at most one
  probability measure that could satisfy \eqref{eq:105}. If
  \eqref{eq:105} is satisfied, we say that to the sequence
  $(\varrho _{\varepsilon _k})_{k\in \mathds{N}}$ corresponds a single Wigner (or semiclassical) measure   $\mu $, or simply $\varrho _{\varepsilon _k}\rightarrow \mu $.
\end{definition}
First of all, it is necessary to ensure that such a definition of
Wigner measures is meaningful, i.e. that under suitable conditions
the set of Wigner measures $\mathcal{M}$ associated to a family of
states is not empty. Since $m_0>0$, it turns out that Assumption
\eqref{eq:arho} is sufficient. Assumption~\eqref{eq:86} would be
sufficient as well, even if we will not use it for the moment.
\begin{lemma}\label{lemma:14}
  Let
  $(\varrho _{\varepsilon })_{\varepsilon \in (0,\bar{\varepsilon} )}$
  be a family of normal states on $\mathcal{H}$, that satisfies
  Assumptions~\eqref{eq:arho} and \eqref{eq:84}. Then for any
  $t\in \mathds{R}$:
\begin{itemize}
\item [(i)] $ \mathcal{M}\bigl(\varrho _{\varepsilon }(t), \varepsilon \in (0,\bar{\varepsilon} )\bigr)\neq\emptyset\; ;\; \mathcal{M}\bigl(\tilde{\varrho} _{\varepsilon }(t), \varepsilon \in (0,\bar{\varepsilon} )\bigr)\neq\emptyset\; .$
\item [(ii)]  Any $\mu\in\mathcal{M}\bigl(\varrho _{\varepsilon }(t), \varepsilon \in (0,\bar{\varepsilon} )\bigr)$ or in $ \mathcal{M}\bigl(\tilde{\varrho} _{\varepsilon }(t), \varepsilon \in (0,\bar{\varepsilon} )\bigr)$\footnote{In this
    section, we have used mostly the notation $D(\omega ^{1/2})$;
    however $D(\omega^{1/2} )=\mathcal{F}H^{1/2}$, where the latter is
    defined in Definition~\ref{def:5}.} satisfies:
  \begin{equation*}
     \mu\Bigl(B_u(0,\sqrt{\mathfrak{C}})\cap Q(-\Delta +V)\oplus D(\omega ^{1/2})\Bigr)=1
  \end{equation*}
 \item  [(iii)] Moreover \begin{equation*}
     \int_{z=(u,\alpha )\in L^2\oplus L^2} \lVert (-\Delta +V)^{1/2}u  \rVert_2^2 + \lVert \alpha   \rVert_{\mathcal{F}H^{1/2}}^{2}\,d\mu(z) <+\infty\,\; .
  \end{equation*}
\end{itemize}
  We recall that
  $B_u(0,\sqrt{\mathfrak{C}})=\Bigl\{(u,\alpha )\in L^2\oplus
  L^2,\lVert u \rVert_2^{}\leq \sqrt{\mathfrak{C}} \Bigr\}$.
\end{lemma}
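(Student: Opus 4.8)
The plan is to extract the three assertions from the uniform-in-$\varepsilon$ bounds on the dressed evolved states furnished by Lemma~\ref{lemma:13}, together with the general a priori/semicontinuity theory of Wigner measures of \citep{ammari:nier:2008,2011arXiv1111.5918A,MR2802894,Ammari:2014ab}. First I would record the operator inequality that, since $m_0>0$ and $V\geq 0$,
\begin{equation*}
  H_0=d\Gamma\bigl(-\tfrac{\Delta}{2M}+V\bigr)+d\Gamma(\omega)\geq d\Gamma(\omega)\geq m_0\,N_2\,,\qquad\text{hence}\qquad N=N_1+N_2\leq N_1+m_0^{-1}H_0\,.
\end{equation*}
By Lemma~\ref{lemma:13}, both $\bigl(\varrho_\varepsilon(t)\bigr)_{\varepsilon\in(0,\bar\varepsilon)}$ and $\bigl(\tilde\varrho_\varepsilon(t)\bigr)_{\varepsilon\in(0,\bar\varepsilon)}$ satisfy Assumptions~\eqref{eq:84} and~\eqref{eq:arho} with constants independent of $\varepsilon$; in particular, by~\eqref{eq:106} and the inequality above, $\sup_{\varepsilon\in(0,\bar\varepsilon)}\Tr[\varrho_\varepsilon(t)N]<\infty$, and likewise for $\tilde\varrho_\varepsilon(t)$. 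The criterion recalled after~\eqref{eq:83} (a uniform bound $\Tr[\varrho_\varepsilon N^{\delta}]<C$ with here $\delta=1$; see \citep[Section~6]{ammari:nier:2008}) then guarantees that $\mathcal{M}\bigl(\varrho_\varepsilon(t),\varepsilon\in(0,\bar\varepsilon)\bigr)$ and $\mathcal{M}\bigl(\tilde\varrho_\varepsilon(t),\varepsilon\in(0,\bar\varepsilon)\bigr)$ are non-empty, which is assertion~(i).

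Next, fix $t\in\mathds{R}$ and a measure $\mu$ in either set, associated to a sequence $\varepsilon_k\to 0$ along which \eqref{eq:105} holds for the corresponding evolved states. I would use the lower-semicontinuity of Wick-quantized non-negative polynomial symbols along Wigner sequences \citep[cf.][Section~6]{ammari:nier:2008}, together with $N_1=(\lVert u\rVert_2^2)^{\mathrm{Wick}}$ and $H_0=(\mathscr{E}_0)^{\mathrm{Wick}}$ where $\mathscr{E}_0$ is as in Definition~\ref{def:9}, to pass the bounds \eqref{eq:103} and \eqref{eq:106} to the limit:
\begin{equation*}
  \int_{L^2\oplus L^2}\lVert u\rVert_2^{2k}\,d\mu\leq\mathfrak{C}^k\quad(\forall\,k\in\mathds{N})\,,\qquad\int_{L^2\oplus L^2}\mathscr{E}_0(u,\alpha)\,d\mu<+\infty\,.
\end{equation*}
From the first family of bounds the Chebyshev-type argument already carried out in the proof of Lemma~\ref{lemma:10} gives, for every $\eta>0$, $\mu\bigl(\lVert u\rVert_2>\sqrt{\mathfrak{C}(1+\eta)}\bigr)\leq(1+\eta)^{-k}\to 0$, hence $\mu\bigl(B_u(0,\sqrt{\mathfrak{C}})\bigr)=1$ (this is also precisely \citep[Lemma~2.14]{MR2802894} or \citep[Theorem~3.1]{Ammari:2014ab}). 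From $\int\mathscr{E}_0\,d\mu<\infty$, writing $\mathscr{E}_0(u,\alpha)=\langle u,(-\tfrac{\Delta}{2M}+V)u\rangle_2+\langle\alpha,\omega\alpha\rangle_2$ as a sum of two non-negative terms, each has finite $\mu$-integral; using $-\tfrac{\Delta}{2M}+V\geq c(-\Delta+V)$ for some $c=\min(1,\tfrac{1}{2M})>0$, and $m_0\leq\omega(k)\leq\lvert k\rvert+m_0$ so that $\lVert\alpha\rVert_{\mathcal{F}H^{1/2}}^2\leq(1+m_0^{-1})\langle\alpha,\omega\alpha\rangle_2$ and $D(\omega^{1/2})=\mathcal{F}H^{1/2}$, one obtains
\begin{equation*}
  \int_{z=(u,\alpha)\in L^2\oplus L^2}\lVert(-\Delta+V)^{1/2}u\rVert_2^2+\lVert\alpha\rVert_{\mathcal{F}H^{1/2}}^2\,d\mu(z)<+\infty\,,
\end{equation*}
which is assertion~(iii). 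In turn, (iii) forces $u\in Q(-\Delta+V)$ and $\alpha\in\mathcal{F}H^{1/2}=D(\omega^{1/2})$ for $\mu$-almost every $(u,\alpha)$; combining this with $\mu\bigl(B_u(0,\sqrt{\mathfrak{C}})\bigr)=1$ yields $\mu\bigl(B_u(0,\sqrt{\mathfrak{C}})\cap(Q(-\Delta+V)\oplus D(\omega^{1/2}))\bigr)=1$, which is assertion~(ii).

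The only genuinely delicate point is the second step: transferring the uniform operator-expectation bounds for the \emph{unbounded} quantities $N_1^k$ and $H_0$ into integral bounds against $\mu$. This does not follow directly from~\eqref{eq:105}, which only controls the bounded Weyl observables; it requires the a priori estimate technique of \citep{ammari:nier:2008,2011arXiv1111.5918A} — spectral truncation of $N_1^k$ and $H_0$, weak-$*$ convergence of the associated cylindrical projections of the states, and a monotone/Fatou passage to the limit — but all the statements needed are available in the cited references, so this reduces to bookkeeping. The argument for $\bigl(\tilde\varrho_\varepsilon(t)\bigr)_\varepsilon$ is identical, since by Lemma~\ref{lemma:13} it satisfies the same uniform bounds (the interaction-picture conjugation by $e^{i\frac{t}{\varepsilon}H_0}$ commutes with both $N_1$ and $H_0$).
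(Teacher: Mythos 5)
Your overall plan — uniform bounds via Lemma~\ref{lemma:13}, non-emptiness from the $\Tr[\varrho_\varepsilon N^\delta]$-criterion, then passage of the bounds to the Wigner measures — is the same as the paper's, and your proofs of (i) and of the implication (iii)$\Rightarrow$(ii) are fine (the paper likewise derives (ii) from (iii) plus \citep[Lemma 2.14]{MR2802894}). The algebraic observations comparing $-\tfrac{\Delta}{2M}+V$ to $-\Delta+V$ and $\omega$ to $(1+|k|^2)^{1/2}$ are also correct.

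The gap is in (iii). You assert that transferring $\sup_\varepsilon\Tr[\varrho_\varepsilon(t)H_0]<\infty$ to $\int\mathscr{E}_0\,d\mu<\infty$ ``reduces to bookkeeping'' using a lower-semicontinuity principle ``available in the cited references''. The paper explicitly contradicts this: it states that the relevant statement, \citep[Lemma 3.12]{2011arXiv1111.5918A}, \emph{requires more regularity on the states than is available here}, and therefore writes out a non-trivial adaptation. The obstruction is precisely that $N_1^k$ and $H_0$ are unbounded polynomial symbols whose Wick quantizations cannot be controlled by the mere Weyl-observable convergence in~\eqref{eq:105}; the a priori estimate machinery in \citep[Section 6]{ammari:nier:2008} applies directly to compact (or suitably truncated) symbols, and getting from there to $H_0$ requires a monotone-approximation argument that the paper carries out by (a) approximating $-\tfrac{\Delta}{2M}+V$ and $\omega$ from below by increasing sequences of finite-rank operators $A_k,B_k$; (b) reducing to finite-dimensional Fock subspaces $\Gamma_s(P_kL^2\oplus Q_kL^2)$ via reduced density matrices; (c) replacing Wick by Anti-Wick quantization at the cost of an $O(\varepsilon)$ error controlled by $(d\Gamma(P_k\oplus Q_k)+1)^{-1}$; (d) exploiting \emph{positivity} of the Anti-Wick quantization to insert a cut-off $\chi$ before passing $\varepsilon\to 0$ via \citep[Theorem 6.2]{ammari:nier:2008}; and (e) closing with monotone convergence in $k$ and $\chi$. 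Steps (b)--(d) are the content that your ``bookkeeping'' gloss elides, and they are not a direct quote of any cited lemma; your proof is therefore incomplete at exactly the point the authors flagged as requiring care.

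A secondary omission: for the interaction-picture states $\tilde\varrho_\varepsilon(t)$ you remark that conjugation by $e^{i\frac t\varepsilon H_0}$ commutes with $N_1$ and $H_0$, which justifies the uniform bounds, but you do not note (as the paper does implicitly via \eqref{eq:108}) that this conjugation also has a transparent effect on the Wigner measures (push-forward by $\mathbf{E}_0(-t)$), which is what allows the conclusions for $\mu\in\mathcal{M}(\tilde\varrho_\varepsilon(t),\cdot)$ to be read off from those for $\mu\in\mathcal{M}(\varrho_\varepsilon(t),\cdot)$. This is minor since you do obtain the uniform bounds for $\tilde\varrho_\varepsilon(t)$ directly, but it is worth making explicit that the conclusions of (ii), (iii) are invariant under $\mathbf{E}_0(-t)_\#$ because $\mathbf{E}_0$ preserves both $\lVert u\rVert_2$ and $\mathscr{E}_0$.
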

\begin{proof}
By \eqref{eq:106} of Lemma~\ref{lemma:13}, we see that  $\varrho
_{\varepsilon }(t)$ and $\tilde\varrho _{\varepsilon }(t)$ satisfy
\eqref{eq:84} and  \eqref{eq:arho} at any time.  Now $(i)$ follows
by \citep[Theorem 6.2]{ammari:nier:2008} and $(ii)$ by  $(iii)$ and \citep[Lemma 2.14]{MR2802894}. The third point is essentially a consequence of \citep[Lemma
  3.12]{2011arXiv1111.5918A}. However the latter result requires more
  regularity on the states $\varrho
_{\varepsilon }$. So we indicate here how to adapt the argument to our case.
It is enough to assume $t=0$ and $\{\mu\}=\mathcal{M}\bigl({\varrho} _{\varepsilon }, \varepsilon \in (0,\bar{\varepsilon} )\bigr)$.
The operators $-\frac{\Delta}{2M}+V$ and $\omega$ are positive (self-adjoint). So one can found
 non-decreasing sequences of finite rank operators $A_k$ and $B_k$ that converge weakly to
 $-\frac{\Delta}{2M}+V$ and $\omega$ respectively. In particular
\begin{equation*}
b_k^{wick}=d\Gamma(A_k) \otimes 1+1\otimes d\Gamma(B_k)\leq d\Gamma(-\frac{\Delta}{2M}+V)\otimes 1+1\otimes d\Gamma(\omega)=H_0\,,
\end{equation*}
where $b_k(u,\alpha)=\langle u, A_k u\rangle + \langle \alpha, B_k \alpha\rangle\in \mathcal{P}^\infty_{1,1}(L^2\oplus L^2)$. Let $P_k$ and $Q_k$ be the orthogonal projections
on ${\rm Ran}(A_k)$ and ${\rm Ran}(B_k)$ respectively. Using the Fock space decomposition
$\Gamma_s(L^2\oplus L^2)\equiv\Gamma_s(P_k L^2\oplus Q_k L^2)\otimes \Gamma_s(P_k^{\perp}L^2\oplus Q_k^{\perp}L^2)$ where $P_k^{\perp}=1-P_k$ and $Q_k^{\perp}=1-Q_k$; one can writes
$b_k^{Wick}\equiv(b_k)^{Wick}_{|\Gamma_s(P_kL^2\oplus Q_kL^2)}\otimes 1_{
\Gamma_s(P_k^{\perp}L^2\oplus Q_k^{\perp}L^2)}$ and $\varrho _{\varepsilon}\equiv
\hat\varrho _{\varepsilon}$. Hence
\begin{eqnarray*}
\Tr\Bigl[\varrho _{\varepsilon} b_k^{Wick}\Bigr]= \Tr\Bigl[\hat\varrho _{\varepsilon} b^{Wick}_{|\Gamma_s(P_kL^2\oplus Q_kL^2)}\otimes 1_{
\Gamma_s(P_k^{\perp}L^2\oplus Q_k^{\perp}L^2)}\Bigr]=
\Tr_{\Gamma_s(P_kL^2\oplus Q_kL^2)}\Bigl[\varrho^k _{\varepsilon} b_k^{Wick}\Bigr]\,,
\end{eqnarray*}
where $\varrho^k _{\varepsilon_j}$ is a given reduced density matrix which is  trace-class in $\Gamma_s(P_kL^2\oplus Q_kL^2)$. So the problem is in some sense reduced to finite dimension. Now using Wick calculus (in finite dimension) $b_k^{Wick}$ can be written as an Anti-Wick operator by moving all the
$a^*$ to the right of $a$. So, one obtains  $b_k^{Wick}= b_k^{A-Wick}+\varepsilon T$ with $T (d\Gamma(P_k\oplus Q_k)+1)^{-1}$ is bounded uniformly with respect to $\varepsilon\in(0,\bar\varepsilon)$.
Hence
\begin{equation*}
\overline{\lim_{\varepsilon\to 0}}\Tr_{\Gamma_s(P_kL^2\oplus Q_kL^2)}\Bigl[\varrho^k _{\varepsilon} b_k^{A-Wick}\Bigr]=\overline{\lim_{\varepsilon\to 0}}\Tr_{\Gamma_s(P_kL^2\oplus Q_kL^2)}\Bigl[\varrho^k _{\varepsilon} b_k^{Wick}\Bigr]
\leq \overline{\lim_{\varepsilon\to 0}}\Tr\Bigl[\varrho_{\varepsilon} H_0\Bigr]\leq C\,.
\end{equation*}
For details on the Anti-Wick quantization we refer the reader to \cite{ammari:nier:2008}; in particular it is a positive quantization (see e.g. \cite[Proposition 3.6]{ammari:nier:2008}). Hence, we see that
\begin{equation*}
\Tr_{\Gamma_s(P_kL^2\oplus Q_kL^2)}\Bigl[\varrho^k _{\varepsilon} (b_{k,\chi})^{A-Wick}\Bigr]\leq \Tr_{\Gamma_s(P_kL^2\oplus Q_kL^2)}\Bigl[\varrho^k _{\varepsilon} b_k^{A-Wick}\Bigr]
 \end{equation*}
where $b_{k,\chi}(u,\alpha)=\chi(u) \langle u, A_k u\rangle + \chi(\alpha)\langle \alpha, B_k \alpha\rangle$  for any cutoff function $\chi\in C_0^\infty(\mathds{R}^3)$, $0\leq \chi\leq 1$.
Finally \cite[Theorem 6.2]{ammari:nier:2008} gives
\begin{equation*}
\int_{z=(u,\alpha )\in L^2\oplus L^2} b_{k,\chi}(u,\alpha)\,d\mu(z) =\lim_{\varepsilon\to 0}\Tr\Bigl[\varrho_{\varepsilon} (b_{k,\chi})^{A-Wick}\Bigr]=
\lim_{\varepsilon\to 0}\Tr_{\Gamma_s(P_kL^2\oplus Q_kL^2)}\Bigl[\varrho^k _{\varepsilon} b_k^{A-Wick}\Bigr]\leq C\,,
\end{equation*}
and the monotone convergence theorem proves $(iii)$.
\end{proof}
As we said above, our aim is to take the limit $\varepsilon_k \to 0$
on the integral equation~\eqref{eq:92}, for a suitable sequence
contained in $(0,\bar{\varepsilon} )$. We may suppose that the
sequence $(\varepsilon_k)_{k\in \mathds{N}}$ is chosen in such a way
that there exist
$\mu _0\in \mathcal{M}\bigl(\varrho _{\varepsilon },\varepsilon \in
(0,\bar{\varepsilon} )\bigr)$
such that \eqref{eq:105} holds, i.e.
$\mathcal{M}\bigl(\varrho _{\varepsilon_k },k \in
\mathds{N}\bigr)=\{\mu _0\}$.
However, nothing a priori ensures that the sequence, or one of its
subsequences
$(\varepsilon_{k_i})_{i\in \mathds{N}}\subset (\varepsilon_k)_{k\in
  \mathds{N}}$, is such that \emph{for any $t\in \mathds{R}$:}
\begin{equation*}
  \lim_{i\to \infty }\Tr\Bigl[\tilde{\varrho} _{\varepsilon _{k_i}}(t)W(\xi )\Bigr]=\int_{L^2\oplus L^2}^{}e^{i \sqrt{2}\Re\langle \xi   , z \rangle}  d\tilde{\mu}_{t} (z) \; ,\; \forall \xi \in L^2 (\mathds{R}^3 )\oplus L^2(\mathds{R}^3 )\; ;
\end{equation*}
where
$\tilde{\mu}_t:\mathds{R}\to \mathfrak{P} \bigl(L^2 \oplus L^2\bigr)$
is a map such that $\tilde{\mu}_0=\mu_0 $. The possibility of
extracting such a common subsequence is crucial, since the integral
equation involves all measures from zero to an arbitrary time $t$. To
prove it is possible, we exploit the uniform continuity
properties of $\Tr\Bigl[\tilde{\varrho} _{\varepsilon}(t)W(\xi )\Bigr]$ in both
$t$ and $\xi $, proved in the following lemma.
\begin{lemma}\label{lemma:15}
  Let
  $(\varrho _{\varepsilon })_{\varepsilon \in (0,\bar{\varepsilon} )}$
  be a family of quantum states on $\mathcal{H}$ that satisfies
  Assumptions \eqref{eq:84} and \eqref{eq:arho}. Then the family of functions  
  $(t,\xi)\mapsto\tilde{G}_{\varepsilon }(t,\xi ):=\Tr\Bigl[\tilde{\varrho}
  _{\varepsilon}(t)W(\xi )\Bigr]$
  is uniformly equicontinuous on bounded subsets of
  $\mathds{R}\times \bigl( Q(-\Delta +V)\oplus D(\omega ^{1/2})\bigr) $.
\end{lemma}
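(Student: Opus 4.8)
The plan is to bound the $t$-variation and the $\xi$-variation of $\tilde G_\varepsilon$ separately, and in each case to produce a Lipschitz estimate whose constant depends only on the chosen bounded set and, crucially, \emph{not} on $\varepsilon$. Fix a bounded set contained in $\{\lvert t\rvert\le T\}\times\{\lVert\xi\rVert_{Q(-\Delta+V)\oplus D(\omega^{1/2})}\le M\}$. I will use throughout that, by Lemma~\ref{lemma:13}, the families $(\varrho_\varepsilon(s))_\varepsilon$ and $(\tilde\varrho_\varepsilon(s))_\varepsilon$ satisfy Assumptions~\eqref{eq:84} and~\eqref{eq:arho} uniformly in $s\in\mathds{R}$, and hence, since $m_0>0$ forces $H_0\ge m_0 N_2$, that $\Tr[\varrho_\varepsilon(s)(N_1+N_2+H_0)]\le C$ uniformly in $s$ and $\varepsilon$; moreover each such state decomposes as $\sum_i\lambda_i\lvert\Psi_i\rangle\langle\Psi_i\rvert$ with orthonormal $\Psi_i$ supported in $\bigoplus_{n\le[\mathfrak{C}/\varepsilon]}\mathcal{H}_n\cap Q(H_0)\cap Q(N_1)$ and $\sum_i\lambda_i=1$.

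For continuity in $t$ I would use the integral formula of Lemma~\ref{lemma:12}, so that $\tilde G_\varepsilon(t_2,\xi)-\tilde G_\varepsilon(t_1,\xi)=\sum_{j=0}^3\varepsilon^j\int_{t_1}^{t_2}\Tr[\varrho_\varepsilon(s)W(\tilde\xi_s)B_j(\tilde\xi_s)]\,ds$, and then bound each integrand uniformly. For fixed $s$, decompose $\varrho_\varepsilon(s)$ into its eigenbasis and write the trace as $\sum_i\lambda_i\langle W^*(\tilde\xi_s)\Psi_i,B_j(\tilde\xi_s)\Psi_i\rangle$; apply Lemma~\ref{lemma:11} (legitimate since each $\Psi_i$ lies in the truncated subspace), then move $W^*(\tilde\xi_s)$ across the weight $(N_1+H_0+\bar\varepsilon)^{1/2}$ by Lemma~\ref{lemma:9} (using $N_1+H_0\le(N_1+1)(H_0+1)$), and finally sum in $i$ by Cauchy--Schwarz against $\Tr[\varrho_\varepsilon(s)(N_1+H_0+\bar\varepsilon)]\le C$. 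Because the unitary groups $e^{is(-\Delta+V)}$ and $e^{-is\omega}$ preserve the $Q(-\Delta+V)\oplus D(\omega^{1/2})$-norm, the constants $C_j(\tilde\xi_s)$ from Lemma~\ref{lemma:11} are bounded by a constant depending only on $M$, uniformly in $s$. This produces $\lvert\tilde G_\varepsilon(t_2,\xi)-\tilde G_\varepsilon(t_1,\xi)\rvert\le C_{T,M}\,\lvert t_2-t_1\rvert$, uniformly in $\varepsilon$.

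For continuity in $\xi$ the integral formula is not needed: $\tilde G_\varepsilon(t,\xi_2)-\tilde G_\varepsilon(t,\xi_1)=\Tr[\tilde\varrho_\varepsilon(t)(W(\xi_2)-W(\xi_1))]$, and by the Weyl relations $W(\xi_2)-W(\xi_1)=W(\xi_1)\bigl(e^{i\phi}W(\xi_2-\xi_1)-1\bigr)$ with $\lvert\phi\rvert=\tfrac12\lvert\Im\langle\xi_1,\xi_2-\xi_1\rangle\rvert\le\tfrac12 M\lVert\xi_2-\xi_1\rVert$. Splitting $e^{i\phi}W(\xi_2-\xi_1)-1=(e^{i\phi}-1)W(\xi_2-\xi_1)+(W(\xi_2-\xi_1)-1)$, using $W(\eta)-1=i\int_0^1 R(\eta)W(t\eta)\,dt$ together with the standard field bound $\lVert R(\eta)\Phi\rVert\le\lVert\eta\rVert\,\lVert(N+\bar\varepsilon)^{1/2}\Phi\rVert$ and Lemma~\ref{lemma:9}, one gets $\lVert(W(\xi_2)-W(\xi_1))\Psi\rVert\le C_M\lVert\xi_2-\xi_1\rVert\,\lVert(N+\bar\varepsilon)^{1/2}\Psi\rVert$ for $\lVert\xi_2-\xi_1\rVert\le1$; decomposing $\tilde\varrho_\varepsilon(t)$ and using $\Tr[\tilde\varrho_\varepsilon(t)(N+\bar\varepsilon)]\le C$ then gives $\lvert\tilde G_\varepsilon(t,\xi_2)-\tilde G_\varepsilon(t,\xi_1)\rvert\le C_M\lVert\xi_2-\xi_1\rVert$, again uniformly in $\varepsilon$ and $t$. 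Combining the two estimates by the triangle inequality yields the asserted uniform equicontinuity.

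The substance here is bookkeeping rather than a new idea: the real work is to verify that every constant produced along the way is independent of $\varepsilon$ and of $s$ — which is precisely why Lemmas~\ref{lemma:9}, \ref{lemma:11} and~\ref{lemma:13} were stated with $\varepsilon$-uniform bounds and $\bar\varepsilon$-shifted weights — and to keep track that the vectors $\Psi_i$ remain in $\bigoplus_{n\le[\mathfrak{C}/\varepsilon]}\mathcal{H}_n$ so that Lemma~\ref{lemma:11} applies. I expect the only genuine obstacle, modest as it is, to be the $t$-continuity step, where one must chain Lemma~\ref{lemma:12}, Lemma~\ref{lemma:11} and Lemma~\ref{lemma:9} without losing uniformity in $\varepsilon$ and $s$; the $\xi$-continuity is a soft consequence of the Weyl relations and the uniform first-moment bound on the states.
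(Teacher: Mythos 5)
Your proposal is correct and follows essentially the same route as the paper: the same splitting of the increment into a time-variation part (handled via the integral formula of Lemma~\ref{lemma:12}, the form bound of Lemma~\ref{lemma:11}, the Weyl-operator estimates of Lemma~\ref{lemma:9}, and the uniform bounds on the evolved states from Lemma~\ref{lemma:13}) and a $\xi$-variation part controlled by the Weyl relations and a number-operator weight. The only cosmetic difference is that for the $\xi$-step the paper quotes the Weyl-difference estimate of \citep[Lemma 3.1]{ammari:nier:2008}, while you re-derive an equivalent bound from the representation $W(\eta)-1=i\int_0^1 R(\eta)W(\tau\eta)\,d\tau$ together with the standard field estimates; this does not change the argument.
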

\begin{proof}
  Let   $(t,\xi),(s,\eta)\in \mathds{R}\times
  \bigl(Q(-\Delta +V)\oplus D(\omega ^{1/2})\bigr)$. Without loss of generality, we
  may suppose that $s\leq t$. We write
  \begin{equation*}
    \Bigl\lvert  \tilde{G}_{\varepsilon }(t,\xi )- \tilde{G}_{\varepsilon }(s,\eta )\Bigr\rvert_{}^{}\leq \Bigl\lvert  \tilde{G}_{\varepsilon }(t,\eta )- \tilde{G}_{\varepsilon }(s,\eta)\Bigr\rvert_{}^{}+\Bigl\lvert  \tilde{G}_{\varepsilon }(t,\xi )- \tilde{G}_{\varepsilon }(t,\eta)\Bigr\rvert_{}^{}\; ;
  \end{equation*}
  and define
  $X_1:=\Bigl\lvert \tilde{G}_{\varepsilon }(t,\eta )-
  \tilde{G}_{\varepsilon }(s,\eta)\Bigr\rvert_{}^{}$,
  $X_2:=\Bigl\lvert \tilde{G}_{\varepsilon }(t,\xi )-
  \tilde{G}_{\varepsilon }(t,\eta)\Bigr\rvert_{}^{}$.
  Consider $X_1$; we get by standard manipulations and
  Lemma~\ref{lemma:10}:
  \begin{equation*}
    \begin{split}
      X_1\leq \sum_{j=0}^3 \varepsilon ^j\sum_{i\in \mathds{N}}^{}\lambda _i\int_{s}^t \Bigl\lvert \bigl\langle e^{-i \frac{s}{\varepsilon }\hat{H}_{ren}}\Psi _i  , W\bigl(\tilde{(\eta)}_s \bigr)B_j\bigl(\tilde{(\eta)}_s \bigr) e^{-i \frac{s}{\varepsilon }\hat{H}_{ren}}\Psi _i\bigr\rangle_{}  \Bigr\rvert_{}^{} ds\; .
    \end{split}
  \end{equation*}
  Now using Lemma~\ref{lemma:11} we obtain
  \begin{equation*}
    \begin{split}
            X_1\leq \sum_{j=0}^3 \varepsilon ^jC_j(\eta )\sum_{i\in \mathds{N}}^{}\lambda _i\int_{s}^t \Bigl\lVert (N_1+H_0+\bar{\varepsilon} )^{1/2}W^{*}\bigl(\tilde{(\eta)}_s \bigr)  e^{-i \frac{s}{\varepsilon }\hat{H}_{ren}}\Psi _i\Bigr\rVert_{}^{}\\\cdot \Bigl\lVert (N_1+H_0+\bar{\varepsilon} )^{1/2}  e^{-i \frac{s}{\varepsilon }\hat{H}_{ren}}\Psi _i\Bigr\rVert_{}^{} ds\; ;
    \end{split}
  \end{equation*}
  then using Lemma~\ref{lemma:9}, and the fact that
  $\bigl\lVert \tilde{(\eta_1)}_s \bigr\rVert_{H^1}^{}=\bigl\lVert
  \eta_1 \bigr\rVert_{H^1}^{}$,
  $\bigl\lVert \tilde{(\eta_2)}_s
  \bigr\rVert_{\mathcal{F}H^{1/2}}^{}=\bigl\lVert \eta_2
  \bigr\rVert_{\mathcal{F}H^{1/2}}^{}$ we get
  \begin{equation*}
    \begin{split}
            X_1\leq C(\eta )\sum_{j=0}^3 \varepsilon ^jC_j(\eta )\int_{s}^t \Tr\Bigl[\varrho _{\varepsilon }(s)(N_1+H_0+\bar{\varepsilon} ) \Bigr] ds\\\leq \lvert t-s  \rvert_{}^{}C(\eta) \sum_{j=0}^3 \bar{\varepsilon}^jC_j(\eta )\Bigl(\tfrac{C}{1-a(\mathfrak{C})}+\tfrac{2b(\mathfrak{C})}{1-a(\mathfrak{C})}+\bar{\varepsilon}  \Bigr)\; ;
          \end{split}
  \end{equation*}
  where in the last inequality we used Equation~\eqref{eq:106} of
  Lemma~\ref{lemma:13}. Now let's consider $X_2$; a standard
  manipulation using Weyl's relation yields
  \begin{equation*}
    \begin{split}
      X_2\leq \Bigl\lVert \bigl(e^{i \frac{\varepsilon }{2}\Im\langle \xi   , \eta \rangle_{L^2\oplus L^2}}W(\xi -\eta )-1\bigr)(N_1+N_2+1)^{-1}  \Bigr\rVert_{\mathcal{L}\bigl(\Gamma _s(L^2\oplus L^2)\bigr)}^{}\Tr\Bigl[\tilde{\varrho}_{\varepsilon }(t)(N_1+N_2+1)\Bigr]\; .
    \end{split}
  \end{equation*}
  Now we use the estimate in  \citep[][Lemma 3.1]{ammari:nier:2008} and obtain
  \begin{equation*}
    \begin{split}
      X_2\leq \lVert \xi -\eta  \rVert_{L^2\oplus L^2}^{}\Bigl(\bar{\varepsilon} \lVert \eta   \rVert_{L^2\oplus L^2}^{}+1\Bigr)\Tr\Bigl[\tilde{\varrho}_{\varepsilon }(t)(N_1+N_2+1)\Bigr]\\
      \leq \lVert \xi -\eta  \rVert_{L^2\oplus L^2}^{}\Bigl(\bar{\varepsilon} \lVert \eta   \rVert_{L^2\oplus L^2}^{}+1\Bigr)\Bigl(\tfrac{C}{1-a(\mathfrak{C})}+\tfrac{2b(\mathfrak{C})}{1-a(\mathfrak{C})}+1\Bigr)\; ,
    \end{split}
  \end{equation*}
  where in the last inequality we used again Equation~\eqref{eq:106}
  of Lemma~\ref{lemma:13}, keeping in mind that
  $N_2\leq d\Gamma (\omega )\leq H_0$.
\end{proof}
Now using Lemma~\ref{lemma:15} with the estimates on $X_1, X_2$ above and a diagonal extraction argument, we prove the following proposition. We omit the proof since it is similar  to
 \cite[Proposition 3.9]{2011arXiv1111.5918A}.
\begin{proposition}\label{prop:13}
  Let
  $(\varrho _{\varepsilon })_{\varepsilon \in (0,\bar{\varepsilon} )}$
  be a family of quantum states on $\mathcal{H}$ that satisfies
  Assumptions \eqref{eq:84} and \eqref{eq:arho}. Then for any sequence
  $(\varepsilon_k)_{k\in \mathds{N}}\subset (0,\bar{\varepsilon} ) $
  with $\lim_{k\to \infty }\varepsilon _k=0$, there exists a
  subsequence $(\varepsilon _{k_i})_{i\in \mathds{N}}$ such that there
  exists a map
  $\mu_t:\mathds{R}\to \mathfrak{P}\bigl(L^2\oplus L^2 \bigr) $
  verifying the following statements:
  \begin{gather}
    \label{eq:107}
    \varrho _{\varepsilon _{k_i}}(t) \rightarrow \mu _t\; ,\; \forall t\in \mathds{R}\; ;\\
    \label{eq:108}
    \tilde{\varrho} _{\varepsilon _{k_i}}(t) \rightarrow \tilde{\mu} _t\; ,\; \forall t\in \mathds{R}\;,\; \text{with}\; \tilde{\mu}_t=\mathbf{E}_0(-t)_{\#}\mu_{t}\; ;\\
    \label{eq:109}
    \varrho _{\varepsilon _{k_i}}(t)W(\tilde{\xi}_t ) \rightarrow \mu _{\xi ,t}\; ,\; \forall t\in \mathds{R}\;\text{and}\; \forall \xi \in L^2\oplus L^2\;,\;  \text{with}\; d\mu_{\xi ,t}(z)=e^{i \sqrt{2}\Re\langle \tilde{\xi }_{t}   , z \rangle}d\mu_{t}(z)\; ;
  \end{gather}
  where
  $\mathbf{E}_0(t)z =e^{-it(-\Delta +V)}u \oplus e^{-it\omega }\alpha
  $
  is the Hamiltonian flow associated with the free classical energy
  $\mathscr{E}_0$, and $\tilde{\xi }_t=\mathbf{E}_0(-t)\xi $. Moreover,
  $\mu_t$ and $\tilde{\mu}_t$ are both Borel probability measures on $Q(-\Delta +V)\oplus D(\omega ^{1/2})$.

\end{proposition}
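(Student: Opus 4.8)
The plan is to run the diagonal‑extraction scheme of \cite[Proposition 3.9]{2011arXiv1111.5918A}, now applied to the time‑dependent functions $\tilde G_\varepsilon(t,\xi)=\Tr[\tilde\varrho_\varepsilon(t)W(\xi)]$, using two facts already established: by Lemma~\ref{lemma:13} the evolved families $(\varrho_\varepsilon(t))$ and $(\tilde\varrho_\varepsilon(t))$ still satisfy Assumptions~\eqref{eq:84} and~\eqref{eq:arho} uniformly in $t\in\mathds{R}$, and by Lemma~\ref{lemma:15} the functions $(t,\xi)\mapsto\tilde G_\varepsilon(t,\xi)$ are $\varepsilon$‑uniformly equicontinuous on every bounded subset of $\mathds{R}\times\bigl(Q(-\Delta+V)\oplus D(\omega^{1/2})\bigr)$. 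First I would fix a countable norm‑dense subset of $Q(-\Delta+V)\oplus D(\omega^{1/2})$; since $\lvert\tilde G_\varepsilon\rvert\le1$, a diagonal argument over the countable product of this set with $\mathds{Q}$ extracts from $(\varepsilon_k)$ a subsequence $(\varepsilon_{k_i})$ along which $\tilde G_{\varepsilon_{k_i}}$ converges on a dense subset of $\mathds{R}\times\bigl(Q(-\Delta+V)\oplus D(\omega^{1/2})\bigr)$; Lemma~\ref{lemma:15} then upgrades this to convergence, at \emph{every} point of that set, towards a jointly continuous limit $\tilde G(t,\xi)$. Because $e^{-i\frac{t}{\varepsilon}H_0}=\Gamma(\mathbf{E}_0(t))$ second‑quantizes the $\varepsilon$‑independent free one‑particle flow $\mathbf{E}_0(t)$ on $L^2\oplus L^2$, and $\Gamma(U)^{*}W(\xi)\Gamma(U)=W(U^{-1}\xi)$ for unitary $U$, one has $\Tr[\varrho_\varepsilon(t)W(\xi)]=\tilde G_\varepsilon(t,\mathbf{E}_0(-t)\xi)$ with $\mathbf{E}_0(-t)\xi$ again in $Q(-\Delta+V)\oplus D(\omega^{1/2})$; hence $G_{\varepsilon_{k_i}}(t,\xi):=\Tr[\varrho_{\varepsilon_{k_i}}(t)W(\xi)]$ converges everywhere along the \emph{same} subsequence.

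Next I would promote these limits to probability measures. Fix $t$. Since $(\tilde\varrho_\varepsilon(t))$ satisfies~\eqref{eq:84} and~\eqref{eq:arho}, Lemma~\ref{lemma:14} (through \citep[Theorem 6.2]{ammari:nier:2008}) gives that every subsequence of $(\tilde\varrho_{\varepsilon_{k_i}}(t))$ has a further subsequence admitting a Wigner measure, and the tightness supplied there rules out loss of mass, so any such limit $\nu$ is a Borel probability measure whose characteristic function agrees with $\tilde G(t,\cdot)$ on the norm‑dense set $Q(-\Delta+V)\oplus D(\omega^{1/2})$; as characteristic functions of Borel probability measures on $L^2\oplus L^2$ are norm‑continuous, $\nu$ is uniquely determined, and we name it $\tilde\mu_t$. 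The usual subsequence‑of‑subsequence principle then yields $\tilde\varrho_{\varepsilon_{k_i}}(t)\to\tilde\mu_t$ for \emph{all} $\xi\in L^2\oplus L^2$, which is~\eqref{eq:108}; the identical reasoning applied to $(\varrho_{\varepsilon_{k_i}}(t))$ produces $\mu_t$ and~\eqref{eq:107}. The relation $\tilde\mu_t=\mathbf{E}_0(-t)_{\#}\mu_t$ follows by passing to the limit in $G_{\varepsilon_{k_i}}(t,\xi)=\tilde G_{\varepsilon_{k_i}}(t,\mathbf{E}_0(-t)\xi)$ and using $\Re\langle\mathbf{E}_0(-t)\xi,z\rangle=\Re\langle\xi,\mathbf{E}_0(t)z\rangle$ — equivalently, from $\varrho_\varepsilon(t)=\Gamma(\mathbf{E}_0(t))\tilde\varrho_\varepsilon(t)\Gamma(\mathbf{E}_0(t))^{*}$ together with the fact that the operation $\Gamma(U)\,\cdot\,\Gamma(U)^{*}$ acts on Wigner measures as push‑forward by $U$. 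For~\eqref{eq:109} I would combine the Weyl relation $W(\tilde\xi_t)W(\eta)=e^{-\frac{i\varepsilon}{2}\Im\langle\tilde\xi_t,\eta\rangle}W(\tilde\xi_t+\eta)$ with~\eqref{eq:107}: the scalar phase tends to $1$ as $\varepsilon_{k_i}\to0$, so $\Tr[\varrho_{\varepsilon_{k_i}}(t)W(\tilde\xi_t)W(\eta)]\to\int_{L^2\oplus L^2}e^{i\sqrt2\Re\langle\tilde\xi_t,z\rangle}e^{i\sqrt2\Re\langle\eta,z\rangle}\,d\mu_t(z)$ for every $\eta\in L^2\oplus L^2$, which is exactly the assertion $\varrho_{\varepsilon_{k_i}}(t)W(\tilde\xi_t)\to\mu_{\xi,t}$. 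Finally, that $\mu_t$ and $\tilde\mu_t$ are Borel probability measures carried by $Q(-\Delta+V)\oplus D(\omega^{1/2})$ is Lemma~\ref{lemma:14}(ii)--(iii) applied to the evolved families, legitimate by Lemma~\ref{lemma:13}.

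I expect the real difficulty to lie not in any single estimate but in the uniformity in time: the integral identity~\eqref{eq:92} couples the measure at time $t$ with all the measures on $[0,t]$, so a subsequence converging only at rational times would be useless, and it is precisely the $\varepsilon$‑uniform equicontinuity of Lemma~\ref{lemma:15} that makes the diagonal extraction carry over to every $t$. A secondary delicate point is excluding escape of mass to infinity when identifying the limit of the characteristic functions with the Fourier transform of an honest probability measure; this rests on the $\varepsilon$‑uniform number bound behind Lemma~\ref{lemma:13} (available because $m_0>0$, via $N_2\le m_0^{-1}d\Gamma(\omega)\le m_0^{-1}H_0$) together with the tightness statement in \citep[Theorem 6.2]{ammari:nier:2008}.
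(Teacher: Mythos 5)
Your argument is correct and follows the same route the paper indicates for its (omitted) proof: diagonal extraction over a countable dense set of times and test vectors, upgraded to all $(t,\xi)$ by the uniform equicontinuity of Lemma~\ref{lemma:15}, identification of the limits as characteristic functions of Borel probability measures via Lemma~\ref{lemma:14} (i.e. the existence theory behind \citep[Theorem 6.2]{ammari:nier:2008} applied to the evolved states, admissible by Lemma~\ref{lemma:13}), and then \eqref{eq:108}, \eqref{eq:109} from the Weyl covariance under the free flow and the $\varepsilon$-scaled Weyl relations, exactly as in \citep[Proposition 3.9]{2011arXiv1111.5918A}. No gaps; the sub-subsequence argument extending convergence from the dense set to all $\xi\in L^2\oplus L^2$ and the push-forward identity $\tilde{\mu}_t=\mathbf{E}_0(-t)_{\#}\mu_t$ are handled correctly.
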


\subsection{The classical limit of the integral formula.}
\label{sec:class-limit-integr}

We are finally ready to discuss the limit $\varepsilon \to 0$ of the
integral formula~\eqref{eq:92}. As a final preparation, we state  a
couple of preliminary lemmas. The first is a slight improvement of \citep[][Theorem
6.13]{ammari:nier:2008
}. The second can be easily proved by standard estimates on the symbol
$\mathscr{B}_0^{(m)}(\xi )$ which we recall for convenience:
\begin{equation}
  \label{eq:111}
  \begin{split}
    \mathscr{B}_0^{(m)}(\xi )(u,\alpha )=2 i\sqrt{2}\Bigl\langle \Re\mathcal{F}\bigl(\tfrac{\chi _{\sigma _0}\bar{\alpha} }{\sqrt{2\omega }}\bigr)(x)  , \Im\bigl(\bar{\xi}_1u \bigr)(x) \Bigr\rangle_2+ i\sqrt{2}\Bigl\langle u(x)  , \chi _m(D_{x})\Im\Bigl(\mathcal{F}\bigl(\tfrac{\chi _{\sigma _0}\bar{\xi }_2 }{\sqrt{2\omega }}\bigr)\Bigr)(x) u(x)\Bigr\rangle_2\\
    +i\sqrt{2}\Im\Bigl\langle u(x)  , \bigl(\chi _m(D_{(\cdot )})V_{\infty }*\bar{\xi}_{1}u\bigr)(x) u(x)\Bigr\rangle_2+\tfrac{i(2\pi )^{3/2}}{2M}\Im\Bigl\langle \xi_1 (x)  , \Bigl(\mathcal{F}^{-1}(\bar{r}_{\sigma _m}\alpha )^2+\mathcal{F}(r_{\sigma _m}\bar{\alpha} )^2\\+\mathcal{F}^{-1}(\bar{r}_{\sigma _m}\alpha ) \mathcal{F}(r_{\sigma _m}\bar{\alpha} )\Bigr)(x) u(x)\Bigr\rangle_2\\
    -\tfrac{2 \sqrt{2}(2\pi )^3}{M}\Im \Bigl\langle u(x)  , \chi _m(D_{x})\Im\Bigl(\mathcal{F}^{-1}(\bar{r }_{\infty } \xi _2)\Bigr)(x) \mathcal{F}^{-1}(\bar{r }_{\sigma _m }  \alpha )(x)u(x)\Bigr\rangle_2\\
    -\tfrac{i \sqrt{2}(2\pi )^{3/2}}{M}\Im\Bigl\langle \xi_1 (x)  , D_x\mathcal{F}^{-1}(\bar{r}_{\sigma _m}\alpha )(x) u(x)\Bigr\rangle_2  -\tfrac{i \sqrt{2}(2\pi )^{3/2}}{M}\Im\Bigl\langle \xi_1 (x)  , \mathcal{F}(r_{\sigma _m}\bar{\alpha} )(x) D_x u(x)\Bigr\rangle_2 \\
    +\tfrac{i \sqrt{2}(2\pi )^{3/2}}{M}\Im\Bigl\langle u (x)  , \chi_m(D_x)D_{x} \mathcal{F}^{-1}(\bar{r}_{\infty }\xi_2  )(x) u(x)\Bigr\rangle_2\; .
  \end{split}
\end{equation}
\begin{lemma}\label{lemma:16}
  Let
  $(\varepsilon_j )_{j\in \mathds{N}}\subset (0,\bar{\varepsilon} )$,
  $\lim_{j\to \infty }\varepsilon_j=0 $, and
  $\delta>0$.
  Furthermore, let $(\varrho _{\varepsilon_j })_{j\in \mathds{N}}$ be
  a sequence of normal states in $\mathcal{H}$   such that for some $C(\delta )>0$,
  \begin{equation}\label{ass:lem15}
  \Bigl\|(N_1+N_2)^{\delta/2 }\varrho _{\varepsilon_j } (N_1+N_2)^{\delta/2 }\Bigr\|_{\mathcal{L}^1(L^2\oplus L^2)}\leq
  C(\delta )\,,
  \end{equation}
   uniformly in $\varepsilon \in (0,\bar{\varepsilon} )$. Suppose that
  $\varrho_{\varepsilon_j }\rightarrow \mu\in \mathfrak{P}(L^2\oplus L^2)$ then the
  following statement is true:
\begin{equation*}
   \biggl(\forall \mathscr{A}\in \bigoplus_{\substack{(p,q)\in \mathds{N}^2\\ p+q< 2\delta}  }\mathcal{P}_{p,q}^{\infty }\bigl(L^2\oplus L^2 \bigr)\; ,\; \lim_{j\to \infty }\Tr\Bigl[\varrho _{\varepsilon_j}(\mathscr{A})^{Wick}\Bigr]=\int_{L^2\oplus L^2}^{}\mathscr{A}(z)d\mu (z)\biggr)\; .
\end{equation*}
\end{lemma}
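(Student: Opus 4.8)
The plan is to reduce the assertion, by linearity and a density argument, to the convergence of moments for finite rank symbols, and then to invoke \citep[Theorem 6.13]{ammari:nier:2008} for that last step; the new point with respect to that reference is that the single bound~\eqref{ass:lem15} already controls every symbol of bidegree $p+q<2\delta$. First I would observe that, each $\varrho_{\varepsilon_j}$ being a state, $(N_1+N_2)^{\delta/2}\varrho_{\varepsilon_j}(N_1+N_2)^{\delta/2}\geq 0$ and its trace norm equals $\Tr[\varrho_{\varepsilon_j}(N_1+N_2)^{\delta}]$, so \eqref{ass:lem15} is equivalent to $\widetilde C:=\sup_j\Tr[\varrho_{\varepsilon_j}(N+\bar\varepsilon)^{\delta}]<\infty$ with $N:=N_1+N_2$. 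A polynomial symbol being a finite sum of bihomogeneous ones, it suffices to treat a single $\mathscr{A}\in\mathcal{P}^{\infty}_{p,q}(L^2\oplus L^2)$ with $p+q<2\delta$, written $\mathscr{A}(z)=\langle z^{\otimes q},\widetilde b\,z^{\otimes p}\rangle$ with $\widetilde b$ compact; and since $\widetilde b$ is an operator-norm limit of finite rank operators $\widetilde b_n$, it will be enough to treat finite rank symbols together with a control of the approximation error that is uniform in $j$.

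The quantitative heart of the argument is this uniform bound. From the standard Fock space estimates on the Wick quantization of $\mathcal{P}_{p,q}$-symbols of the type used throughout Section~\ref{sec:renorm-hamilt} (see also \citep[Lemma 2.1]{Ammari:2014aa}) one has, for any $\mathscr{B}\in\mathcal{P}_{p,q}(L^2\oplus L^2)$ with symbol operator $\widetilde b_{\mathscr{B}}$,
\begin{equation*}
  \bigl\|(N+\varepsilon)^{-\frac{p+q}{4}}(\mathscr{B})^{Wick}(N+\varepsilon)^{-\frac{p+q}{4}}\bigr\|_{\mathcal{L}}\leq C_{p,q}\,\|\widetilde b_{\mathscr{B}}\|_{\mathcal{L}}\; ,
\end{equation*}
with $C_{p,q}$ independent of $\varepsilon$. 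Using the cyclicity of the trace, Hölder's inequality for the Schatten norms, and then $\Tr[\varrho_{\varepsilon_j}(N+\varepsilon_j)^{\frac{p+q}{2}}]\leq\bigl(\Tr[\varrho_{\varepsilon_j}(N+\bar\varepsilon)^{\delta}]\bigr)^{\frac{p+q}{2\delta}}\leq 1+\widetilde C$ (Jensen's inequality, since $\frac{p+q}{2\delta}<1$), this yields $|\Tr[\varrho_{\varepsilon_j}(\mathscr{B})^{Wick}]|\leq C_{p,q}(1+\widetilde C)\|\widetilde b_{\mathscr{B}}\|_{\mathcal{L}}$ \emph{uniformly in $j$}. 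On the classical side, $\int_{L^2\oplus L^2}\|z\|^{2\delta}\,d\mu(z)\leq\widetilde C$ — which follows from \eqref{ass:lem15} by a lower semicontinuity argument as in \citep[Section 6]{ammari:nier:2008}, cf.\ \citep[Lemma 2.14]{MR2802894} — gives $|\int\mathscr{B}\,d\mu|\leq\|\widetilde b_{\mathscr{B}}\|_{\mathcal{L}}\int\|z\|^{p+q}\,d\mu<\infty$. Applying both estimates to $\mathscr{B}=\mathscr{A}-\mathscr{A}_n$ (whose symbol operator is $\widetilde b-\widetilde b_n$), the quantum-side error (uniform in $j$) and the measure-side error both tend to $0$ as $n\to\infty$, so by a three-term splitting the lemma reduces to finite rank symbols.

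Finally, for a finite rank symbol $\mathscr{A}_n$ the operator $(\mathscr{A}_n)^{Wick}$ is a Wick-ordered polynomial in $a^{\#}(e_1),\dots,a^{\#}(e_L)$ involving only finitely many modes $e_i\in L^2\oplus L^2$, so $\Tr[\varrho_{\varepsilon_j}(\mathscr{A}_n)^{Wick}]$ depends only on the reduced state of $\varrho_{\varepsilon_j}$ on $\Gamma_s(\mathrm{span}\{e_i\})$, and its convergence to $\int\mathscr{A}_n\,d\mu$ is obtained by differentiating the defining relation $\Tr[\varrho_{\varepsilon_j}W(\xi)]\to\int e^{i\sqrt2\Re\langle\xi,z\rangle}\,d\mu(z)$ in $\xi$ along the $e_i$, the interchange of limit and differentiation being licensed by the uniform moment bound~\eqref{ass:lem15}; this is precisely \citep[Theorem 6.13]{ammari:nier:2008}, which I would simply quote here. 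The main obstacle is therefore not this (classical, borrowed) finite dimensional limit, but the passage from compact to finite rank symbols made uniform in the semiclassical parameter $\varepsilon$: that is exactly what the Fock space number estimate combined with the Jensen interpolation step provides, and it is the threshold $p+q<2\delta$ that makes the single bound~\eqref{ass:lem15} sufficient to carry it out.
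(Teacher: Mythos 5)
Your reduction from compact to finite--rank kernels is sound and in fact mirrors what the paper does: the symmetric number estimate for Wick monomials, combined with the Jensen/interpolation step $\Tr[\varrho_{\varepsilon_j}(N+\varepsilon_j)^{(p+q)/2}]\lesssim \bigl(\Tr[\varrho_{\varepsilon_j}N^{\delta}]\bigr)^{(p+q)/2\delta}$ and the bound $\int \lVert z\rVert^{2\delta}d\mu<\infty$, is exactly how the terms \eqref{eq:124} and \eqref{eq:126} are controlled (the paper uses $\mathscr{A}_R(z)=\mathscr{A}(P_R\oplus P_R\,z)$, i.e.\ $\tilde{\mathscr{A}}_R=(P_R\oplus P_R)^{\otimes q}\tilde{\mathscr{A}}(P_R\oplus P_R)^{\otimes p}$, rather than an abstract finite--rank approximation, but this is the same mechanism). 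So up to that point the proposal is fine.

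The gap is the step you dismiss as ``classical, borrowed'': the convergence $\Tr[\varrho_{\varepsilon_j}(\mathscr{A}_n)^{Wick}]\to\int\mathscr{A}_n\,d\mu$ for the finite--rank (cylindrical) polynomial symbol. You propose to ``simply quote'' \citep[Theorem 6.13]{ammari:nier:2008}, but that theorem's hypothesis is on the \emph{states}, not on the rank of the kernel: it requires uniform bounds on higher moments of $N$ (which is precisely why the present lemma is announced as an improvement of it), whereas here you only have the single moment of order $\delta$ with $p+q<2\delta$ --- and in the intended application (Proposition \ref{prop:14}, via Assumption \eqref{eq:130}) one has $\delta=2$ and symbols of degree up to $3$, with no bounds on higher powers of $N$. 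So the citation is either inapplicable or circular. Your fallback sketch --- differentiate $\Tr[\varrho_{\varepsilon_j}W(\xi)]$ along finitely many modes and interchange limit and differentiation ``licensed by the uniform moment bound'' --- is exactly where the real work lies: derivatives of the quantum characteristic function produce Weyl/symmetric--ordered moments that must be compared to Wick order with $\varepsilon$-dependent corrections, and the interchange of $j\to\infty$ with differentiation of an unbounded polynomial observable requires a uniform integrability argument, not mere boundedness. This is what the paper's treatment of the term \eqref{eq:125} supplies: passing to reduced density matrices on $\Gamma_s(P_RL^2\oplus P_RL^2)$, rewriting the Wick operator as an Anti--Wick (hence positive) quantization up to an $\varepsilon T$ error, inserting the cut-offs $\chi_{m,R}^{Weyl}$, and controlling all remainders by the Weyl--H\"ormander calculus before invoking the Weyl convergence result \citep[Theorem 6.2]{ammari:nier:2008}. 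In short, your assessment that ``the main obstacle is the passage from compact to finite--rank symbols'' inverts the actual difficulty: that passage is routine, while the finite--dimensional limit under the minimal moment hypothesis is the heart of the proof and is missing from your argument.
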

\begin{proof}
By linearity it is enough to assume $\mathscr{A}\in\mathcal{P}_{p,q}^{\infty }\bigl(L^2\oplus L^2 \bigr)$ for $(p,q)\in \mathds{N}^2$ with $p+q< 2\delta$.
Let $(P_R)_{R>0}$ be an increasing family of finite rank orthogonal projections on
 $L^2$ such that the strong limit $s-\lim_{R\to+\infty}P_R=1$ holds.
Let $ \mathscr{A}_R(z):=\mathscr{A}(P_R\oplus P_R z)$ for any $z\in L^2\oplus L^2$.
One writes
\begin{eqnarray}
\left|\Tr\Bigl[\varrho _{\varepsilon_j} (\mathscr{A})^{Wick}\Bigr]-
\int_{L^2\oplus L^2}^{}\mathscr{A}(z)d\mu (z)\right|&\leq&
\left|\Tr\Bigl[\varrho _{\varepsilon_j} (\mathscr{A})^{Wick}\Bigr]-\Tr\Bigl[\varrho _{\varepsilon_j} (\mathscr{A}_R)^{Wick}\Bigr]
\right| \label{eq:124}\\
&+&\left|\Tr\Bigl[\varrho _{\varepsilon_j} (\mathscr{A}_R)^{Wick}\Bigr]-
\int_{L^2\oplus L^2}^{}\mathscr{A}_R(z)d\mu (z)\right| \label{eq:125}\\
&+&\left|\int_{L^2\oplus L^2}^{}\mathscr{A}_R(z)d\mu (z)-
\int_{L^2\oplus L^2}^{}\mathscr{A}(z)d\mu (z)\right|\,.\label{eq:126}
\end{eqnarray}
Using standard number estimates and the regularity of the states $(\varrho_{\varepsilon_j})_j$, one shows
\begin{eqnarray*}
\left|\Tr\Bigl[\varrho _{\varepsilon_j} (\mathscr{A}-\mathscr{A}_R)^{Wick}\Bigr]
\right|\leq ||(N_1+N_2)^{\delta/2 }\varrho _{\varepsilon_j } (N_1+N_2)^{\delta/2}||_{\mathcal{L}^1(L^2\oplus L^2)}\;\;||\mathscr{\tilde A}-\mathscr{\tilde A}_R||\,,
\end{eqnarray*}
where $\mathscr{\tilde A}$ and $\mathscr{\tilde A}_R$ denote the compact operators
satisfying $\mathscr{A}(z)=\langle z^{\otimes q}, \mathscr{\tilde A} z^{\otimes p}\rangle$
and $\mathscr{ A}_R(z)=\langle z^{\otimes q}, \mathscr{\tilde A}_R z^{\otimes p}\rangle$
  respectively. Since $\mathscr{\tilde A}_R=(P_R\oplus P_R)^{\otimes q} \mathscr{\tilde A}
  (P_R\oplus P_R)^{\otimes p} $ and $\mathscr{\tilde A}$ is compact, one shows that
  $\lim_{R\to+\infty}||\mathscr{\tilde A}-\mathscr{\tilde A}_R||= 0$.
  So the right hand side of \eqref{eq:124} can be made arbitrary small by choosing $R$ large enough. \\
  According to \cite[Theorem 6.2]{ammari:nier:2008}, the regularity of $(\varrho _{\varepsilon_j})_j$ insures the bound
  \begin{equation*}
  \int_{L^2\oplus L^2}^{} ||z||_{L^2\oplus L^2}^{2\delta} \,d\mu (z)\leq C(\delta)\,.
  \end{equation*}
Hence by dominated convergence the right hand side of \eqref{eq:126} can also be made arbitrary small when $R$ is large enough since $\mathscr{A}(z)$ and $\mathscr{A}_R(z)$ are both bounded by
$c ||z||_{L^2\oplus L^2}^{p+q}$ and $\mathscr{A}_R(z)$ converges pointwise to
$\mathscr{A}(z)$.\\
To handle the right hand side of \eqref{eq:125}, we use a further regularization.
Let $\chi\in C_0^\infty(\mathds{R})$, $0\leq\chi\leq 1$, $\chi(x)=1$ in a neighborhood of  $0$ and $\chi_m(x)=\chi(\frac{x}{m})$ for $m>0$. Recall that the Fock space has the decomposition
$\Gamma_s(L^2\oplus L^2)\equiv\Gamma_s(P_RL^2\oplus P_RL^2)\otimes \Gamma_s(P_R^{\perp}L^2\oplus P_R^{\perp}L^2)$ where $P_R^{\perp}=1-P_R$. In this representation
$\mathscr{A}_R^{Wick}\equiv(\mathscr{A}_R)^{Wick}_{|\Gamma_s(P_RL^2\oplus P_RL^2)}\otimes 1_{
\Gamma_s(P_R^{\perp}L^2\oplus P_R^{\perp}L^2)}$ and
$\varrho _{\varepsilon_j}\equiv \hat\varrho _{\varepsilon_j}$. Hence using reduced density matrices $\varrho^R _{\varepsilon_j}$ that are normalized positive trace-class operators in $\Gamma_s(P_RL^2\oplus P_RL^2)$, one writes
\begin{eqnarray*}
\Tr\Bigl[\varrho _{\varepsilon_j} (\mathscr{A}_R)^{Wick}\Bigr]= \Tr\Bigl[\hat\varrho _{\varepsilon_j} (\mathscr{A}_R)^{Wick}_{|\Gamma_s(P_RL^2\oplus P_RL^2)}\otimes 1_{
\Gamma_s(P_R^{\perp}L^2\oplus P_R^{\perp}L^2)}\Bigr]=
\Tr_{\Gamma_s(P_RL^2\oplus P_RL^2)}\Bigl[\varrho^R _{\varepsilon_j} (\mathscr{A}_R)^{Wick}\Bigr]\,.
\end{eqnarray*}
As in the proof of Lemma \ref{lemma:14}, the Wick calculus  gives that $(\mathscr{A}_R)^{Wick}$ can be written as an Anti-Wick operator by moving all the
$a^*$ to the right of $a$. So, one obtains  $(\mathscr{A}_R)^{Wick}=
(\mathscr{A}_R)^{A-Wick}+\varepsilon T$ with $T
(d\Gamma(P_R\oplus P_R)+1)^{-\frac{p+q}{2}}$ is bounded uniformly with
respect to $\varepsilon\in(0,\bar\varepsilon)$. We refer the reader
to \cite{ammari:nier:2008} where Weyl and Anti-Wick
quantization are explained for ``cylindrical'' symbols.
Hence
\begin{eqnarray*}
\lim_{j\to\infty}\Tr\Bigl[\varrho _{\varepsilon_j} (\mathscr{A}_R)^{Wick}\Bigr]=\lim_{j\to\infty}\Tr_{\Gamma_s(P_RL^2\oplus P_RL^2)}\Bigl[\varrho^R _{\varepsilon_j} (\mathscr{A}_R)^{Wick}\Bigr]= \lim_{j\to\infty}\Tr_{\Gamma_s(P_RL^2\oplus P_RL^2)}\Bigl[\varrho^R _{\varepsilon_j} (\mathscr{A}_R)^{A-Wick}\Bigr]\,.
\end{eqnarray*}
Now we define $\chi_{m,R}(z):=\chi_m(|P_R\oplus P_R z|^2)$ and
$\varrho^{R,m} _{\varepsilon_j} :=\chi_{m,R}(z)^{Weyl} \,
\varrho^{R} _{\varepsilon_j} \,\chi_{m,R}(z)^{Weyl}$. So one writes
\begin{eqnarray}
\left|\Tr\Bigl[\varrho^R_{\varepsilon_j} (\mathscr{A})^{A-Wick}\Bigr]-
\int_{L^2\oplus L^2}^{}\mathscr{A}(z)d\mu (z)\right|&\leq&
\left|\Tr\Bigl[(\varrho^R _{\varepsilon_j}-\varrho^{R,m} _{\varepsilon_j}) (\mathscr{A})^{A-Wick}\Bigr]
\right| \label{eq:127}\\
&&\hspace{-.3in}+\left|\Tr\Bigl[\varrho^{R,m} _{\varepsilon_j} (\mathscr{A}_R)^{A-Wick}\Bigr]-
\int \chi_{m,R}^2( z)\mathscr{A}_R(z)d\mu (z)\right| \label{eq:128}\\
&&\hspace{-.3in}+\left|\int \chi_{m,R}^2(z)\mathscr{A}_R(z)d\mu (z)-
\int\mathscr{A}_{R}(z)d\mu (z)\right|\,,\label{eq:129}
\end{eqnarray}
where the traces are on the Fock space $\Gamma_s(P_RL^2\oplus P_RL^2)$
and the integrals are over $L^2\oplus L^2$. By dominated convergence the right hand side of
\eqref{eq:129} tends  to $0$ when $m\to \infty$ at fixed $R$.
The right hand side of \eqref{eq:127} can be made arbitrary small when
$m\to\infty$ using the following decomposition
\begin{eqnarray*}
(\varrho^{R,m} _{\varepsilon_j}-\varrho^{R} _{\varepsilon_j})
=\underbrace{ (\chi_{m,R}^{Weyl}-1)\, \varrho^R _{\varepsilon_j}\, \chi_{m,R}^{Weyl}}_{(A)}
+\underbrace{ \varrho^R _{\varepsilon_j} \,(\chi_{m,R}^{Weyl}-1)}_{(B)}
\,,
\end{eqnarray*}
which gives   $\Tr\Bigl[(A)\,
(\mathscr{A}_R)^{A-Wick}\Bigr]=\Tr\Bigl[ T_{1} T_{2} T_{3}
T_{4}\Bigr]$ and a similar expression for $(B)$ with
\begin{eqnarray*}
&T_{1}=(N_{R}+1)^{\frac{p+q}{4}} (\chi_{m,R}^{Weyl}-1)
(N_{R}+1)^{-\frac{\delta}{2}} \,,
& \quad T_{2}=
(N_{R}+1)^{\frac{\delta}{2}} \varrho^R _{\varepsilon_j}
(N_{R}+1)^{\frac{\delta}{2}} \\
&T_{3}=(N_{R}+1)^{-\frac{\delta}{2}}
\chi_{m,R}^{Weyl} (N_{R}+1)^{\frac{p+q}{4}}\,,  & \quad
T_{4}=  (N_{R}+1)^{-\frac{p+q}{4}} (\mathscr{A}_R)^{A-Wick} (N_{R}+1)^{-\frac{p+q}{4}}\,,
\end{eqnarray*}
where $N_{R}=d\Gamma(P_{R}\oplus P_{R})$. The
Weyl-H\"ormander Pseudo-differential calculus gives that
$T_{1}\to_{m\to\infty} 0$  in norm (since $\delta >p+q$) and that
$T_{i}$, $i=2,3,4$, are uniformly bounded with respect $j\in \mathds{N}$ and $m>0$
at fixed $R$ (see e.g.
\cite[Proposition 3.2 and 3.3]{ammari:nier:2008}).\\
To complete the proof, we remark
that $ \Tr\Bigl[\varrho^{R,m} _{\varepsilon_j}
(\mathscr{A}_R)^{A-Wick}\Bigr]=
\Tr\Bigl[\varrho^{R} _{\varepsilon_j} \;\chi_{m,R}^{Weyl}\;
(\mathscr{A}_R)^{A-Wick} \;\chi_{m,R}^{Weyl}\Bigr]$. So again by
pseudo-differential
calculus we know  that $(\mathscr{A}_R)^{A-Wick}=
(\mathscr{A}_R)^{Weyl}+\varepsilon \;b(\varepsilon) ^{Weyl}$ with
$b(\varepsilon)$ belonging to the  Weyl–H\"ormander class symbol
$S_{P_{R}\oplus P_{R}}(\langle z\rangle^{p+q-2},
\frac{dz^{2}}{\langle z\rangle^{2}})$
uniformly in $\varepsilon$ (see \cite[Section 3.2 and
3.4]{ammari:nier:2008}). Therefore
\begin{equation*}
\lim_{j\to \infty}\Tr\Bigl[\varrho^{R,m} _{\varepsilon_j}
(\mathscr{A}_R)^{A-Wick}\Bigr]=\lim_{j\to \infty}\Tr\Bigl[
\varrho^{R} _{\varepsilon_j} \;\chi_{m,R}^{Weyl}
(\mathscr{A}_R)^{Weyl} \;\chi_{m,R}^{Weyl}\Bigr]\,,
\end{equation*}
since $ (d\Gamma(P_{R}\oplus P_{R})+1)^{-(q+p)/2}  \;b(\varepsilon)^{Weyl}\;
(d\Gamma(P_{R}\oplus P_{R})+1)^{-(p+q)/2} $ is uniformly bounded with
respect to $\varepsilon$. The Weyl-H\"ormander pseudo-differential
calculus gives $ \;\chi_{m,R}^{Weyl}
(\mathscr{A}_R)^{Weyl}
\;\chi_{m,R}^{Weyl}=(\chi_{m,R}^{2}\mathscr{A}_R )^{Weyl}+\varepsilon \,
c(\varepsilon)^{Weyl} $ with
$c(\varepsilon)\in S_{P_{R}\oplus P_{R}}(1,dz^{2})$
uniformly in $\varepsilon$
(see e.g. \cite[Proposition 3.2]{ammari:nier:2008}). Hence,
according to \cite[Theorem 6.2]{ammari:nier:2008} one obtains
\begin{equation*}
\lim_{j\to \infty}\Tr\Bigl[\varrho^{R,m} _{\varepsilon_j}
(\mathscr{A}_R)^{A-Wick}\Bigr]=\lim_{j\to \infty}\Tr\Bigl[
\varrho _{\varepsilon_j} \;
(\chi_{m,R}^{2}\,\mathscr{A}_R)^{Weyl} \Bigr]= \int_{L^{2}\oplus L^{2}} \chi_{m,R}^2( z)\mathscr{A}_R(z)d\mu (z)\,.
\end{equation*}
This yields the intended bound on \eqref{eq:125} and completes the proof.
\end{proof}

\begin{lemma}\label{lemma:17}
  There exists  $C(\sigma _0)>0$ depending only on $\sigma _0\in\mathds{R}_+$ such that the
  following bound holds for $\mathscr{B}_0^{(m)}$ uniformly in
  $m\in \mathds{N}$:
  \begin{equation}\label{eq:112}
    \begin{split}
      \Bigl\lvert \mathscr{B}_0^{(m)}(\xi )(u,\alpha )\Bigr  \rvert_{}^{}\leq C(\sigma _0)\lVert \xi   \rVert_{L^2\oplus L^2}^{}\Bigl(\lVert u  \rVert_2^{2}+ \lVert (-\Delta +V)^{1/2}u  \rVert_2^2+\lVert \alpha   \rVert_{\mathcal{F}H^{1/2}}^{2}+\lVert u  \rVert_2\cdot \lVert (-\Delta +V)^{1/2}u  \rVert_2^2\\
      +\lVert u  \rVert_2\cdot \lVert \alpha   \rVert_{\mathcal{F}H^{1/2}}^2 +\lVert u  \rVert_2\cdot \lVert (-\Delta +V)^{1/2}u  \rVert_2^{}\cdot  \lVert \alpha   \rVert_{\mathcal{F}H^{1/2}}\Bigr)\; .
    \end{split}
  \end{equation}
  It follows that:
  \begin{itemize}[label=\color{myblue}*]
  \item For any $\xi \in L^2\oplus L^2$, for any
    $(u,\alpha )\in Q(-\Delta +V)\oplus D(\omega^{1/2} )$, $
      \lim_{m\to \infty }\mathscr{B}_0^{(m)}(\xi )(u,\alpha )=\mathscr{B}_0(\xi )(u,\alpha )\;; $ and therefore the bound \eqref{eq:112} holds also for
    $\mathscr{B}_0$.
  \item For any $m\in \mathds{N}$,
    $\mathscr{B}_0^{(m)}(\cdot ),\mathscr{B}_0(\cdot )$ are are jointly continuous with respect to $\xi\in L^2\oplus L^2 $ and
    $(u,\alpha )\in Q(-\Delta
    +V)\oplus D(\omega ^{1/2})$.
  \end{itemize}
\end{lemma}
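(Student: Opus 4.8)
The plan for \eqref{eq:112} is a term‑by‑term estimate on the explicit expression \eqref{eq:111}: one bounds each of its finitely many scalar products by $\lVert\xi\rVert_{L^2\oplus L^2}$ times a polynomial of degree at most $3$ in $\lVert u\rVert_2$, $\lVert(-\Delta+V)^{1/2}u\rVert_2$ and $\lVert\alpha\rVert_{\mathcal{F}H^{1/2}}$, with a constant depending only on $\sigma_0$, and then dominates each monomial so produced by the right‑hand side of \eqref{eq:112} using elementary interpolation and Young inequalities. The basic tools are: the Sobolev embeddings $H^{1}(\mathds{R}^3)\hookrightarrow L^6$ and $H^{1/2}(\mathds{R}^3)\hookrightarrow L^3$ together with $\lVert(-\Delta)^{1/2}f\rVert_2\leq\lVert(-\Delta+V)^{1/2}f\rVert_2$ (valid since $V\geq0$); Hölder's and the Hausdorff--Young inequalities; the weak Young inequality applied to $V_\infty\in L^{3,\infty}(\mathds{R}^3)$, available by Lemma~\ref{lemma:2} (and $V_\infty^{(1)}\in L^\infty\cap\mathcal{C}_0(\mathds{R}^3)$); the uniform bound $\sup_m\lVert\chi_m(D_x)\rVert_{\mathcal{L}(L^p)}\leq\lVert\mathcal{F}^{-1}\chi\rVert_{L^1}$ for $1\leq p\leq\infty$ (since $\chi_m(D_x)$ is convolution against a function whose $L^1$‑norm does not depend on $m$); and the pointwise bounds $\lvert r_{\sigma_m}(k)\rvert\leq C(1+\lvert k\rvert)^{-3/2}$ and $\lvert k\rvert\lvert r_{\sigma_m}(k)\rvert\leq C(1+\lvert k\rvert)^{-1/2}$, uniform in $m$ and also valid for $r_\infty$, which with \eqref{eq:29} give $\sup_m\bigl(\lVert\omega^{-1/2}r_{\sigma_m}\rVert_2+\lVert\omega^{-1/4}r_{\sigma_m}\rVert_2+\lVert r_{\sigma_m}\rVert_{L^p}+\lVert\,\lvert\cdot\rvert\,r_{\sigma_m}\rVert_{L^q}\bigr)\leq C(\sigma_0)$ for every $p>2$, $q>6$. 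With these, the ``Nelson'' terms of \eqref{eq:111} are controlled by $C(\sigma_0)\lVert\xi\rVert_2(\lVert u\rVert_2^2+\lVert\alpha\rVert_2\lVert u\rVert_2)$ and the ``mean‑field'' term by $C(\sigma_0)\lVert\xi_1\rVert_2\lVert u\rVert_2\lVert u\rVert_{H^{1/2}}^2$.

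The delicate contributions, as anticipated in Remark~\ref{rem:9}, are the ``Pauli--Fierz'' terms containing a momentum $D_x$ together with $r_{\sigma_m}$ (or $r_\infty$): $\chi_m(D_x)D_x$ is not bounded uniformly in $m$, and since \eqref{eq:112} involves only $\lVert\xi\rVert_{L^2\oplus L^2}$ one must avoid producing a factor $\lVert(-\Delta+V)^{1/2}\xi_1\rVert_2$. The key maneuver is to integrate by parts, i.e. to use the self‑adjointness of $D_x$ and of $D_x\chi_m(D_x)$ together with Leibniz' rule, so as to move every derivative off the $\xi_1$‑slot and off the $r_{\sigma_m}$‑multiplier and onto a $u$‑slot, where it is absorbed into $\lVert(-\Delta+V)^{1/2}u\rVert_2$. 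What remains is of the form $\lVert\xi_1\rVert_2\,\lVert\mathcal{F}^{-1}(\lvert k\rvert\bar r_{\sigma_m}\alpha)\,u\rVert_2$ or $\lVert\xi_1\rVert_2\,\lVert\mathcal{F}^{-1}(\bar r_{\sigma_m}\alpha)\rVert_\infty\,\lVert(-\Delta+V)^{1/2}u\rVert_2$ (and analogues for the $a^*$‑terms); the first is handled by Hausdorff--Young ($\lvert k\rvert\bar r_{\sigma_m}\alpha\in L^{q_0}$ for some $q_0<\tfrac32$, since $\lvert\cdot\rvert r_{\sigma_m}\in L^p$, $p>6$, uniformly, whence $\mathcal{F}^{-1}(\lvert k\rvert\bar r_{\sigma_m}\alpha)\in L^{q_0'}$, $q_0'>3$), Hölder and $H^1\hookrightarrow L^6$; the second uses $\bar r_{\sigma_m}\alpha=(\omega^{-1/2}\bar r_{\sigma_m})(\omega^{1/2}\alpha)\in L^1$, the only place where $\alpha\in D(\omega^{1/2})$ is needed, giving $\lVert\mathcal{F}^{-1}(\bar r_{\sigma_m}\alpha)\rVert_\infty\leq C(\sigma_0)\lVert\alpha\rVert_{\mathcal{F}H^{1/2}}$. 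The last term of \eqref{eq:111}, $\psi^*D_x\xi_2\psi^{(m)}$, which additionally carries $\chi_m(D_x)$, is treated the same way: one writes $\langle u,\chi_m(D_x)D_x[\mathcal{F}^{-1}(\bar r_\infty\xi_2)u]\rangle=\langle\chi_m(D_x)D_xu,\mathcal{F}^{-1}(\bar r_\infty\xi_2)u\rangle$, uses $\lVert\chi_m(D_x)\rVert_{\mathcal{L}(L^2)}\leq1$ and bounds $\mathcal{F}^{-1}(\bar r_\infty\xi_2)$ in $L^6$ via $r_\infty\in L^3$. Collecting all contributions and dominating the lower‑degree monomials by those listed in \eqref{eq:112} gives the bound, uniformly in $m$.

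For the first bullet, recall from the proof of Proposition~\ref{prop:12} that $\mathscr{B}_0^{(m)}$ differs from $\mathscr{B}_0$ only by replacing $r_\infty$ with $r_{\sigma_m}$ in several slots and by inserting the Fourier cut‑off $\chi_m(D_x)$ in several terms. Since $\sigma_m\to\infty$ one has $\lVert\omega^{-1/2}(r_\infty-r_{\sigma_m})\rVert_2\to0$ and $\lVert\omega^{-1/4}(r_\infty-r_{\sigma_m})\rVert_2\to0$, and $\chi_m(D_x)\to1$ strongly on $L^2$; hence, for fixed $\xi\in L^2\oplus L^2$ and $(u,\alpha)\in Q(-\Delta+V)\oplus D(\omega^{1/2})$, each scalar product in \eqref{eq:111} converges to the one defining the corresponding term of $\mathscr{B}_0(\xi)(u,\alpha)$ — for the $\chi_m(D_x)$‑terms one first performs the rewriting of the previous paragraph, so that $\chi_m(D_x)$ acts on the fixed $L^2$‑vector $D_xu$, and in the slots containing $r_{\sigma_m}^2$ one keeps one factor bounded by the uniform estimate and lets the other converge. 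This proves $\lim_{m\to\infty}\mathscr{B}_0^{(m)}(\xi)(u,\alpha)=\mathscr{B}_0(\xi)(u,\alpha)$, and passing to the limit in the $m$‑uniform bound \eqref{eq:112} gives the same bound for $\mathscr{B}_0$.

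For the second bullet, fix $m$ (the argument for $\mathscr{B}_0$ is identical, using the bound just obtained). For each $\xi$, $\mathscr{B}_0^{(m)}(\xi)(\cdot)$ is a finite sum of $(p,q)$‑multilinear forms in $(u,\alpha)$ with $2\leq p+q\leq3$, each of which is continuous on $Q(-\Delta+V)\oplus D(\omega^{1/2})$ because it is bounded on balls by \eqref{eq:112}, while $\mathscr{B}_0^{(m)}(\cdot)(u,\alpha)$ is linear in $\xi$; joint continuity on $(L^2\oplus L^2)\times(Q(-\Delta+V)\oplus D(\omega^{1/2}))$ then follows by the standard telescoping argument, each increment in a single slot being estimated by \eqref{eq:112} applied with that slot replaced by the increment and the other slots in a fixed ball. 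The main work, and the only genuinely delicate point, is the integration‑by‑parts control of the Pauli--Fierz $D_x$‑terms described above.
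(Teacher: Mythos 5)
Your overall strategy — a term‑by‑term estimate on the explicit expression \eqref{eq:111}, using Hölder, Hausdorff--Young, Sobolev embeddings, the uniform $L^p$‑boundedness of $\chi_m(D_x)$, and moving $D_x$ onto the $u$‑slot via the self‑adjointness of $\chi_m(D_x)D_x$ — is the natural one; the paper gives no proof of this lemma (it just says it "can be easily proved by standard estimates"). Your treatment of the Pauli--Fierz terms and your arguments for the two bullet points are sound.

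There is, however, a gap in the concluding step, "collecting all contributions and dominating the lower‑degree monomials by those listed in \eqref{eq:112}." Your own (correct) bound on the mean‑field term, $C(\sigma_0)\lVert\xi_1\rVert_2\,\lVert u\rVert_2\,\lVert u\rVert_{H^{1/2}}^2$, after the interpolation $\lVert u\rVert_{H^{1/2}}^2\leq\lVert u\rVert_2\lVert u\rVert_{H^1}\leq\lVert u\rVert_2\bigl(\lVert u\rVert_2+\lVert(-\Delta+V)^{1/2}u\rVert_2\bigr)$, produces the monomials $\lVert u\rVert_2^3$ and $\lVert u\rVert_2^2\lVert(-\Delta+V)^{1/2}u\rVert_2$, in which $\lVert u\rVert_2$ appears with exponent at least $2$. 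None of the six monomials on the right of \eqref{eq:112} has $\lVert u\rVert_2$‑exponent greater than one, and the degree‑$3$ ones with $\lVert u\rVert_2$‑exponent exceeding $1$ are \emph{not} dominated by that list: with $V=0$ and $u(x)=\phi(\varepsilon x)$, $\varepsilon\to 0$, one has $\lVert u\rVert_2^3\sim\varepsilon^{-9/2}$ while every listed monomial is $O(\varepsilon^{-3})$. A scaling argument (the worst part of $V_\infty$ behaves like $\lvert x\rvert^{-1}$) shows that an $\lVert u\rVert_2$‑exponent of at least $3/2$ in the mean‑field term is unavoidable; it cannot be bargained down to $1$ by a cleverer Hölder split. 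The discrepancy is really an imprecision in the \emph{statement} of the lemma — the right‑hand side of \eqref{eq:112} should also contain $\lVert u\rVert_2^3$ and $\lVert u\rVert_2^2\lVert(-\Delta+V)^{1/2}u\rVert_2$ (or equivalently a prefactor $(1+\lVert u\rVert_2)^3$) — and it is harmless downstream: in Proposition~\ref{prop:14} and Lemma~\ref{lemma:20} the bound is only integrated against Wigner measures supported on $\{\lVert u\rVert_2\leq\sqrt{\mathfrak{C}}\}$, so any extra powers of $\lVert u\rVert_2$ are absorbed into a constant depending on $\mathfrak{C}$. Still, your final sentence claiming that all the monomials you produced are dominated by those of \eqref{eq:112} is, as written, false, and you should either enlarge the stated polynomial or explicitly flag that the inequality holds uniformly on any ball $\lVert u\rVert_2\leq R$ with the constant depending on $R$.
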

Recall that for any $\sigma_0\geq2K (\mathfrak{C}+1+\bar\varepsilon)$ there exists $b>0$ such that the operator $\hat H_{ren}(\sigma_0)+b$ is non-negative uniformly for $\varepsilon\in(0,\bar \varepsilon)$. Let
$(\varrho_{\varepsilon } )_{\varepsilon \in (0,\bar{\varepsilon} )}$
be a family of normal states on
$\Gamma _s(L^2 (\mathds{R}^3 )\oplus L^2 (\mathds{R}^3 ))$, we
consider the additional assumption:
\begin{gather}
  \label{eq:130}\tag{$A''_\rho$}
  \exists C>0  \, ,\, \forall \varepsilon \in (0,\bar{\varepsilon} )\, ,\, \Tr[\varrho _{\varepsilon }\,(\hat H_{ren}(\sigma_0)+b)^2]\leq C\; ;
\end{gather}
\begin{proposition}\label{prop:14}
  Let
  $(\varrho _{\varepsilon })_{\varepsilon \in (0,\bar{\varepsilon}
    )}\subset \mathcal{L}^1(\mathcal{H})$
  be a family of normal states that satisfy Assumptions~\eqref{eq:84},
  \eqref{eq:arho} and~\eqref{eq:130} such that\footnote{We recall that $\mathfrak{C}$
  appears in Assumption \eqref{eq:84} and $\sigma_{0}$ in
  Definition \ref{def:7} of $\hat
  H_{ren}(\sigma_{0})$. The condition $\sigma_{0}\geq K (\mathfrak{C}+1)$ ensures  that the dressed dynamics is non-trivial on $\bigoplus _{n=0}^{[\mathfrak{C}/\varepsilon ]}\mathcal{H}_n$ and hence non-trivial on the state $\varrho_\varepsilon$ according to Lemma \ref{lemma:10}.
  } $\sigma_{0}\geq 2K (\mathfrak{C}+1+\bar\varepsilon)$. Then:\\
$(i)$ For any sequence
  $(\varepsilon _k)_{k\in \mathds{N}}\subset (0,\bar{\varepsilon} )$
  converging to zero, there exist a subsequence
  $(\varepsilon_{k_{\iota}} )_{\iota\in \mathds{N}}$ and a map $\mu_t:\mathds{R}\to \mathfrak{P}\bigl(L^2\oplus L^2 \bigr) $ such
  that $\varrho _{\varepsilon _{k_\iota}}(t)\rightarrow \mu _t$ and
  $\tilde{\varrho}_{\varepsilon _{k_\iota}}(t)\rightarrow
  \tilde{\mu}_t=\mathbf{E}_0(-t)_{\#}\mu _t$
  , for any $t\in \mathds{R}$ . \\
$(ii)$ The action of $e^{-i\frac{t}{\varepsilon} \hat
  H_{ren}(\sigma_{0})}$ is non-trival on the states $\varrho_{\varepsilon}$.\\
$(iii)$ The Fourier transform of
  $\tilde{\mu}_{(\cdot )} $ satisfies the following transport equation
  $\forall \xi \in L^2\oplus L^2$:
  \begin{equation}
    \label{eq:113}
    \int_{L^2\oplus L^2}^{}e^{i \sqrt{2}\Re\langle \xi   , z \rangle_{}}  d\tilde{\mu}_t(z)= \int_{L^2\oplus L^2}^{}e^{i \sqrt{2}\Re\langle \xi   , z \rangle_{}}  d\mu_0(z)+\int_0^t  \biggl(\int_{L^2\oplus L^2}^{}\mathscr{B}_0(\tilde{\xi}_s )(z)e^{i \sqrt{2}\Re\langle \tilde{\xi}_s   , z \rangle_{}}  d\mu_s(z)\biggr)ds\; ;
  \end{equation}
  where the right hand side makes sense since
  $\mathscr{B}_0(\tilde{\xi}_t )(z)e^{i \sqrt{2}\Re\langle
    \tilde{\xi}_t , z \rangle_{}}\in L^{\infty }_{t}\Bigl(\mathds{R},
  L^1_{z}\bigl[L^2\oplus L^2,d\mu_t(z)\bigr]\Bigr)$
  for any $\xi \in L^2\oplus L^2$.
\end{proposition}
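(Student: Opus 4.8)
The plan is to read off $(i)$ and $(ii)$ from results already proved and to obtain $(iii)$ by letting $\varepsilon\to0$ in the integral identity~\eqref{eq:92} of Lemma~\ref{lemma:12}. For $(i)$: since $(\varrho_\varepsilon)_{\varepsilon\in(0,\bar\varepsilon)}$ satisfies $(A_0)$ and~\eqref{eq:arho}, Proposition~\ref{prop:13} applies verbatim and produces the subsequence $(\varepsilon_{k_\iota})$ and the map $\mu_t$ with $\varrho_{\varepsilon_{k_\iota}}(t)\to\mu_t$, $\tilde\varrho_{\varepsilon_{k_\iota}}(t)\to\tilde\mu_t=\mathbf{E}_0(-t)_\#\mu_t$ for every $t$, both $\mu_t$ and $\tilde\mu_t$ being Borel probability measures on $Q(-\Delta+V)\oplus D(\omega^{1/2})$. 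For $(ii)$: the hypothesis $\sigma_0\geq 2K(\mathfrak{C}+1+\bar\varepsilon)$ gives $\mathfrak{N}(\varepsilon,\sigma_0)=\bigl[\tfrac{\sigma_0-2K}{2K\varepsilon}-1\bigr]\geq[\mathfrak{C}/\varepsilon]$ for every $\varepsilon\in(0,\bar\varepsilon)$ (using $\bar\varepsilon\geq\varepsilon$); by Lemma~\ref{lemma:10} the state $\varrho_\varepsilon$ is supported in $\bigoplus_{n\leq\mathfrak{N}(\varepsilon,\sigma_0)}\mathcal{H}_n$, where by Definition~\ref{def:7} the operator $\hat H_{ren}(\sigma_0)$ restricts to the non-zero $\hat{H}_{\infty}^{(n)}$, so $e^{-i\frac{t}{\varepsilon}\hat H_{ren}(\sigma_0)}$ does not act trivially on $\varrho_\varepsilon$.

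For $(iii)$ I would let $\varepsilon=\varepsilon_{k_\iota}\to0$ in~\eqref{eq:92}. By Proposition~\ref{prop:11} the terms $j=1,2,3$ tend to $0$, and by $(i)$ the left side converges to $\int e^{i\sqrt2\Re\langle\xi,z\rangle}\,d\tilde\mu_t(z)$ while $\Tr[\varrho_{\varepsilon_{k_\iota}}W(\xi)]\to\int e^{i\sqrt2\Re\langle\xi,z\rangle}\,d\mu_0(z)$; so everything reduces to identifying the limit of $\int_0^t\Tr[\varrho_{\varepsilon_{k_\iota}}(s)W(\tilde\xi_s)B_0(\tilde\xi_s)]\,ds$. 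First I would dominate it: writing $\varrho_\varepsilon(s)=\sum_i\lambda_i|\Phi_i\rangle\langle\Phi_i|$ with $\Phi_i=e^{-i\frac{s}{\varepsilon}\hat H_{ren}}\Psi_i$, Lemma~\ref{lemma:11} bounds $|\langle W^*(\tilde\xi_s)\Phi_i,B_0(\tilde\xi_s)\Phi_i\rangle|$ by $(N_1+H_0+\bar\varepsilon)$-norms, Lemma~\ref{lemma:9} lets me move $W^*(\tilde\xi_s)$ across those norms, and Lemma~\ref{lemma:13} (with $N_1\leq\mathfrak{C}$ on the support of $\varrho_\varepsilon(s)$) bounds $\Tr[\varrho_\varepsilon(s)(N_1+H_0)]$ uniformly, so that $|\Tr[\varrho_\varepsilon(s)W(\tilde\xi_s)B_0(\tilde\xi_s)]|\leq \mathrm{const}(\xi)$ uniformly in $\varepsilon$ and $s\in[0,t]$. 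Dominated convergence in $s$ then reduces the problem to the pointwise-in-$s$ limit $\Tr[\varrho_{\varepsilon_{k_\iota}}(s)W(\tilde\xi_s)B_0(\tilde\xi_s)]\to\int\mathscr{B}_0(\tilde\xi_s)(z)e^{i\sqrt2\Re\langle\tilde\xi_s,z\rangle}\,d\mu_s(z)$.

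For that pointwise limit I would split $B_0=B_0^{(m)}+(B_0-B_0^{(m)})$ with $B_0^{(m)}=(\mathscr{B}_0^{(m)})^{Wick}$ the compact approximation of Proposition~\ref{prop:12}: the remainder contributes at most $C^{(m)}(\xi)\,\mathrm{const}$ uniformly in $\varepsilon,s$ by~\eqref{eq:99} (inserting $(H_0+1)^{\pm1/2}$ and using Lemmas~\ref{lemma:9} and~\ref{lemma:13} again), with $C^{(m)}(\xi)\to0$; for the compact part, $\varrho_{\varepsilon_{k_\iota}}(s)W(\tilde\xi_s)$ has Wigner measure $\mu_{\xi,s}$ with $d\mu_{\xi,s}(z)=e^{i\sqrt2\Re\langle\tilde\xi_s,z\rangle}d\mu_s(z)$ by Proposition~\ref{prop:13} (relation~\eqref{eq:109}), and Assumptions $(A_0)$, \eqref{eq:arho}, \eqref{eq:130} — all conserved in time, the last because $\hat H_{ren}(\sigma_0)$ commutes with itself — yield a uniform bound $\|(N_1+N_2)^{\delta/2}\varrho_\varepsilon(s)W(\tilde\xi_s)(N_1+N_2)^{\delta/2}\|_{\mathcal{L}^1}\leq C$ with $\delta=2$, which covers the polynomial degree ($\leq3$) of $\mathscr{B}_0^{(m)}$; Lemma~\ref{lemma:16} then gives $\Tr[\varrho_{\varepsilon_{k_\iota}}(s)W(\tilde\xi_s)B_0^{(m)}(\tilde\xi_s)]\to\int\mathscr{B}_0^{(m)}(\tilde\xi_s)(z)e^{i\sqrt2\Re\langle\tilde\xi_s,z\rangle}d\mu_s(z)$. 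Finally $\mathscr{B}_0^{(m)}(\tilde\xi_s)\to\mathscr{B}_0(\tilde\xi_s)$ pointwise on $Q(-\Delta+V)\oplus D(\omega^{1/2})$ with the domination~\eqref{eq:112}, which is $\mu_s$-integrable by Lemma~\ref{lemma:14}$(iii)$ and $\|u\|_2\leq\sqrt{\mathfrak{C}}$ $\mu_s$-a.e. (Lemma~\ref{lemma:14}$(ii)$), so dominated convergence in $z$ closes the argument and incidentally proves the stated $L^\infty_tL^1_z$ integrability of the right side of~\eqref{eq:113}. Combining this with $\tilde\mu_t=\mathbf{E}_0(-t)_\#\mu_t$ gives~\eqref{eq:113} for $\xi\in Q(-\Delta+V)\oplus D(\omega^{3/4})$, and a density argument in $\xi$, based on the linearity of $\mathscr{B}_0(\cdot)(z)$ in $\xi$ and the bound~\eqref{eq:112}, extends it to all $\xi\in L^2\oplus L^2$.

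The hard part will be the uniform second-order number bound $\Tr[\varrho_\varepsilon(s)(N_1+N_2)^2]\leq C$ needed to run Lemma~\ref{lemma:16} on the degree-$3$ symbol $\mathscr{B}_0^{(m)}$: it does \emph{not} follow from $H_0\leq\kappa(\hat H_{ren}(\sigma_0)+b)$ by squaring, since operator monotonicity fails for squares. Instead one has to combine the conservation of $\Tr[\varrho_\varepsilon(s)(\hat H_{ren}(\sigma_0)+b)^2]$ (this being the role of Assumption~\eqref{eq:130}), the reduction $(N_1+N_2)^2\lesssim 1+d\Gamma(\omega)^2\leq 1+H_0^2$ coming from $N_1\leq\mathfrak{C}$ on the support together with the commuting splitting $H_0=d\Gamma(-\tfrac{\Delta}{2M}+V)+d\Gamma(\omega)$, and a higher-order number--energy estimate $(\hat{H}_{\infty}^{(n)}+b)^2\gtrsim c\,(H_0^{(n)})^2-C$ uniform in $n\leq\mathfrak{N}(\varepsilon,\sigma_0)$ and $\varepsilon$, obtained by the same technique as in Section~\ref{sec:renorm-hamilt}. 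A secondary technical point is the use of Lemma~\ref{lemma:16} for the non-positive operator $\varrho_\varepsilon(s)W(\tilde\xi_s)$, which is justified either through Proposition~\ref{prop:13}\eqref{eq:109} or by observing that the proof of Lemma~\ref{lemma:16} uses only trace-norm and Weyl--H\"ormander bounds, not positivity.
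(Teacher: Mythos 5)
Your skeleton follows the paper closely: (i)--(ii) from Proposition~\ref{prop:13} and the choice of $\sigma_0$, and (iii) by passing to the limit in~\eqref{eq:92}, killing the $j\geq 1$ terms with Proposition~\ref{prop:11}, replacing $B_0$ by the compact approximation $B_0^{(m)}$ via~\eqref{eq:99}, applying Lemma~\ref{lemma:16} with $\delta=2$ together with the identification~\eqref{eq:109}, and removing the approximations by dominated convergence (bound~\eqref{eq:112}, Lemma~\ref{lemma:14}) and density in $\xi$. But there is a genuine gap exactly at the step you call the hard part, namely the uniform bound $\lVert (N_1+N_2)\,\varrho_\varepsilon(s)W(\tilde\xi_s)\,(N_1+N_2)\rVert_{\mathcal{L}^1}\leq C$ needed to run Lemma~\ref{lemma:16} on a degree-three symbol. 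Your route is $(N_1+N_2)^2\lesssim 1+ (H_0^{(n)})^2\lesssim (\hat H_{\infty}^{(n)}+b)^2$, with the second inequality to be ``obtained by the same technique as in Section~\ref{sec:renorm-hamilt}''. That technique only yields first-order KLMN \emph{form} bounds, and cannot produce the operator inequality $(H_0^{(n)})^2\lesssim(\hat H_{\infty}^{(n)}+b)^2$: such an inequality is equivalent to $\lVert H_0^{(n)}\psi\rVert\lesssim\lVert(\hat H_{\infty}^{(n)}+b)\psi\rVert$ on $D(\hat H_{\infty}^{(n)})$, i.e.\ to $D(\hat H_{\infty}^{(n)})\subset D(H_0^{(n)})$ with a bound. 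This is precisely what is not available for the renormalized Hamiltonian: $\hat H_{\infty}^{(n)}$ is defined only as the form sum $H_0^{(n)}\dotplus\hat H_I^{(n)}(\infty)$ (Remark~\ref{rem:1}), its domain is only known to lie in $D\bigl((H_0^{(n)})^{1/2}\bigr)$, and the singular $a^*a^*$ part of the dressed interaction (built from $r_\infty\notin L^2$) is only form-bounded, so $H_0^{(n)}\psi$ need not even make sense on $D(\hat H_{\infty}^{(n)})$. Your own observation that ``operator monotonicity fails for squares'' applies equally to this substitute, so the step as proposed would fail.

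What actually closes the argument (and is the paper's route, Appendix~\ref{sec:higher order}) is a second-order estimate comparing the Hamiltonian with $N_2$ \emph{directly}, bypassing $H_0$: the pull-through formula (Lemma~\ref{appA3}) together with the commutator bound of Lemma~\ref{appA2} yields $\lVert N_2\psi\rVert\leq c\,\lVert(\hat H_\sigma^{(n)}+b)\psi\rVert$ uniformly in $\varepsilon\in(0,\bar\varepsilon)$, $\sigma\in(\sigma_0,+\infty]$ and $n\varepsilon\leq\mathfrak{C}$ (Proposition~\ref{prop:A1}); only form bounds on $[a(k),\hat H_I(\sigma)^{(n)}]$ with an integrable $k$-weight are needed, never control of $H_0$ on the operator domain. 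Combined with $N_1\leq\mathfrak{C}$ on the support of the states (Lemma~\ref{lemma:10}), the conservation of Assumption~\eqref{eq:130} along $e^{-i\frac{t}{\varepsilon}\hat H_{ren}(\sigma_0)}$ (which you correctly noted) and Lemma~\ref{lemma:9} to absorb the Weyl operator, this gives the required trace bound, and the rest of your argument then goes through as in the paper. A secondary caveat: Lemma~\ref{lemma:16} is stated for (positive) states, and its proof does use positivity (reduced density matrices, Anti-Wick positivity), so for the non-self-adjoint object $\varrho_\varepsilon(s)W(\tilde\xi_s)$ one should, as the paper does, write the trace as $\Tr[W(\eta)\varrho_\varepsilon(s)W(\eta)\,\mathscr{A}^{Wick}]$ with $\eta=\tfrac12\tilde\xi_s$ and decompose $W(\eta)\varrho_\varepsilon(s)W(\eta)$ into a linear combination of non-negative trace-class operators satisfying~\eqref{ass:lem15}, rather than assert that positivity is not used.
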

\begin{proof}
  The first part of the proposition $(i)-(ii)$ is just a partial restatement of
  Proposition~\ref{prop:13}. We discuss the last assertion in $(iii)$ about
  $\mathscr{B}_0(\tilde{\xi}_t )(z)e^{i \sqrt{2}\Re\langle
    \tilde{\xi}_t , z \rangle_{}}$,
  before proving \eqref{eq:113}. Recall the fact that for any
  $\xi \in L^2\oplus L^2$ and for any $t\in \mathds{R}$,
  $\lVert \tilde{\xi}_{t} \rVert_{L^2\oplus L^2}^{}=\lVert \xi
  \rVert_{L^2\oplus L^2}^{}$.
  Using bound~\eqref{eq:112} of Lemma~\ref{lemma:17} we obtain,
  setting $Q(-\Delta +V)\oplus D(\omega ^{1/2}) \ni z=(u,\alpha )$:
  \begin{equation*}
    \begin{split}
          \Bigl\lvert \mathscr{B}_0(\tilde{\xi}_t )(z)e^{i \sqrt{2}\Re\langle\tilde{\xi}_t , z \rangle_{}}  \Bigr\rvert_{}^{}\leq C(\sigma _0)\lVert \xi   \rVert_{L^2\oplus L^2}^{}\Bigl(\lVert u  \rVert_2^{2}+ \lVert (-\Delta +V)^{1/2}u  \rVert_2^2+\lVert \alpha   \rVert_{\mathcal{F}H^{1/2}}^{2}\\+\lVert u  \rVert_2\cdot \lVert (-\Delta +V)^{1/2}u  \rVert_2^2
      +\lVert u  \rVert_2\cdot \lVert \alpha   \rVert_{\mathcal{F}H^{1/2}}^2 +\lVert u  \rVert_2\cdot \lVert (-\Delta +V)^{1/2}u  \rVert_2^{}\cdot  \lVert \alpha   \rVert_{\mathcal{F}H^{1/2}}\Bigr)\; .
    \end{split}
  \end{equation*}
  Now
  $\mu _t\in \mathcal{M}\bigl(\varrho _{\varepsilon }(t), \varepsilon
  \in (0,\bar{\varepsilon} )\bigr)$,
  therefore by Lemma~\ref{lemma:14},
  $\mu_t \Bigl(B_u(0,\sqrt{\mathfrak{C}})\cap
  Q(-\Delta +V)\oplus D(\omega ^{1/2})\Bigr)=1$
  for any $t\in \mathds{R}$. Then it follows that there
  exists
  $C(\mathfrak{C})>0$ such that
  \begin{equation*}
    \begin{split}
      \biggl\lvert\int_{L^2\oplus L^2}^{}  \mathscr{B}_0(\tilde{\xi}_t )(z)e^{i \sqrt{2}\Re\langle\tilde{\xi}_t , z \rangle_{}}d\mu_t(z)   \biggr\rvert_{}^{}\leq C(\mathfrak{C}) \lVert \xi   \rVert_{L^2\oplus L^2}^{}\int_{L^2\oplus L^2}^{}\Bigl(\lVert (-\Delta +V)^{1/2}u  \rVert_2^2+\lVert \alpha   \rVert_{\mathcal{F}H^{1/2}}^{2}\Bigr)  d\mu _t(z)\\\leq C(\mathfrak{C}) \lVert \xi   \rVert_{L^2\oplus L^2}^{}J(t)\; ;
    \end{split}
  \end{equation*}
  where $J(t)<\infty $ by Lemma~\ref{lemma:14}. Actually, using the
  fact that the bound~\eqref{eq:106} is independent of $t$, it is
  easily proved that $J(t)$ does not depend on $t$ as well, i.e.
  $J(t)\in L^{\infty }(\mathds{R})$.

  We prove \eqref{eq:113} by successive
  approximations. Consider
  $\Tr\Bigl[\tilde{\varrho}_{\varepsilon _{k_{\iota}}}(t)W(\xi )
  \Bigr]$,
  $\xi \in L^2\oplus L^2$. We can approximate $\xi $ with
  $(\xi^{(l)})_{l\in \mathds{N}} \subset Q(-\Delta +V)\oplus D(\omega
  ^{3/4})$,
  since the latter is dense in $L^2\oplus L^2$, and
  $\lim_{l\to \infty }\Tr\Bigl[\tilde{\varrho}_{\varepsilon
    _{k_{\iota}}}(t)\bigl(W(\xi ) -W(\xi ^{(l)})\bigr)\Bigr]=0$
  uniformly in $\varepsilon _{k_{\iota}}$ by
  Lemma~\ref{lemma:15}. Now, for
  $\Tr\Bigl[\tilde{\varrho}_{\varepsilon _{k_{\iota}}}(t)W(\xi^{(l)} )
  \Bigr]$
  the integral equation~\eqref{eq:92} holds. Proposition~\ref{prop:13}
  implies that
  $\tilde{\varrho}_{\varepsilon _{k_{\iota}}}(t)\rightarrow
  \tilde{\mu}_t=\mathbf{E}_0(t)_{\#}\mu _t $,
  for any $t\in \mathds{R}$. Therefore the left-hand side of
  \eqref{eq:92} converges when $\iota\to \infty $ to
  $\int_{L^2\oplus L^2}^{}e^{i \sqrt{2}\Re\langle \xi^{(l)} , z
    \rangle_{}} d\tilde{\mu}_t(z)$;
  and that in turn converges when $l\to \infty $ to
  $\int_{L^2\oplus L^2}^{}e^{i \sqrt{2}\Re\langle \xi , z \rangle_{}}
  d\tilde{\mu}_t(z)$ by dominated convergence theorem. In addition,
  \begin{equation*}
    \lim_{\iota\to \infty }\sum_{j=1}^3\varepsilon ^j\int_0^t\Tr\Bigl[\varrho_{\varepsilon _{k_{\iota}}}(s)W(\tilde{\xi^{(l)}}_s)B_j(\tilde{\xi^{(l)}}_s) \Bigr]  ds=0\; ;
  \end{equation*}
  by Proposition~\ref{prop:11}. It remains to show the convergence of
  the $B_0$ term in \eqref{eq:92}. We approximate
  $\mathscr{B}_0$ by the compact $\mathscr{B}_0^{(m)}$, because using
  Lemma~\ref{lemma:10} and \eqref{eq:99} of
  Proposition~\ref{prop:12} we obtain
  \begin{equation*}
    \begin{split}
      \Bigl\lvert \Tr\Bigl[\varrho_{\varepsilon _{k_{\iota}}}(s)W(\tilde{\xi^{(l)}}_s)\Bigl(B_0(\tilde{\xi^{(l)}}_s)-B^{(m)}_0(\tilde{\xi^{(l)}}_s)\Bigr) \Bigr]  \Bigr\rvert_{}^{}\leq \sum_{i\in \mathds{N}}^{}\lambda _i\Bigl\lvert \Bigl\langle W^{*}(\tilde{\xi^{(l)}}_s) e^{-i \frac{s}{\varepsilon_{k_{\iota}} }\hat{H}_{ren}}\Psi _i , \Bigl(B_0(\tilde{\xi^{(l)}}_s)\\-B^{(m)}_0(\tilde{\xi^{(l)}}_s)\Bigr) e^{-i \frac{s}{\varepsilon_{k_{\iota}} }\hat{H}_{ren}}\Psi _i \Bigr\rangle_{}  \Bigr\rvert_{}^{}\\
      \leq \sum_{i\in \mathds{N}}^{}\lambda _i C^{(m)}(\tilde{\xi^{(l)}}_s) \Bigl\lVert (H_0+1)^{1/2}(N_1+\bar{\varepsilon} )^{1/2}W^{*}(\tilde{\xi^{(l)}}_s) e^{-i \frac{s}{\varepsilon_{k_{\iota}} }\hat{H}_{ren}}\Psi _i  \Bigr\rVert_{}^{}\\\cdot \Bigl\lVert (H_0+1)^{1/2}(N_1+\bar{\varepsilon} )^{1/2}e^{-i \frac{s}{\varepsilon_{k_{\iota}} }\hat{H}_{ren}}\Psi _i  \Bigr\rVert_{}^{}\; .
    \end{split}
  \end{equation*}
  Now, using the fact that $C^{(m)}(\tilde{\xi^{(l)}}_s)$ depends only
  on
  $\lVert \tilde{\xi^{(l)}}_s \rVert_{Q(-\Delta +V)\oplus D(\omega
    ^{3/4})}^{}=\lVert \xi^{(l)} \rVert_{Q(-\Delta +V)\oplus D(\omega
    ^{3/4})}^{}$ and Lemma~\ref{lemma:9} we obtain
  \begin{equation*}
    \begin{split}
      \Bigl\lvert \Tr\Bigl[\varrho_{\varepsilon _{k_{\iota}}}(s)W(\tilde{\xi^{(l)}}_s)\Bigl(B_0(\tilde{\xi^{(l)}}_s)-B^{(m)}_0(\tilde{\xi^{(l)}}_s)\Bigr) \Bigr]  \Bigr\rvert_{}^{}\leq \sum_{i\in \mathds{N}}^{}\lambda _i C^{(m)}(\xi^{(l)})C(\xi^{(l)}) \Bigl\lVert (H_0+1)^{1/2}e^{-i \frac{s}{\varepsilon_{k_{\iota}} }\hat{H}_{ren}}\\(N_1+\bar{\varepsilon} )^{1/2}\Psi _i  \Bigr\rVert_{}^2\; .
    \end{split}
  \end{equation*}
  We then use Equation~\eqref{eq:106} of Lemma~\ref{lemma:13}:
  \begin{equation*}
    \begin{split}
      \Bigl\lvert \Tr\Bigl[\varrho_{\varepsilon _{k_{\iota}}}(s)W(\tilde{\xi^{(l)}}_s)\Bigl(B_0(\tilde{\xi^{(l)}}_s)-B^{(m)}_0(\tilde{\xi^{(l)}}_s)\Bigr) \Bigr]  \Bigr\rvert_{}^{}\leq \sum_{i\in \mathds{N}}^{}\lambda _i C^{(m)}(\xi^{(l)})C(\xi^{(l)}) (\mathfrak{C}+\bar{\varepsilon} ) \tfrac{1}{1-a(\mathfrak{C})}C\\+\tfrac{2b(\mathfrak{C})}{1-a(\mathfrak{C})}\; .
    \end{split}
  \end{equation*}
  The right hand side goes to zero when $m\to \infty $ uniformly with
  respect to $\varepsilon_{k_{\iota}} $ and $s$ by
  Proposition~\ref{prop:12}, and therefore
  \begin{equation*}
    \lim_{m\to \infty }\int_0^t \Tr\Bigl[\varrho_{\varepsilon _{k_{\iota}}}(s)W(\tilde{\xi^{(l)}}_s)\Bigl(B_0(\tilde{\xi^{(l)}}_s)-B^{(m)}_0(\tilde{\xi^{(l)}}_s)\Bigr) \Bigr]  ds=0\; .
  \end{equation*}
So the next step is to prove
\begin{eqnarray*}
\lim_{\iota\to \infty } \Tr\Bigl[\varrho_{\varepsilon _{k_{\iota}}}(s)W(\tilde{\xi^{(l)}}_s)\Bigl(\mathscr{B}_0^{(m)}(\tilde{\xi^{(l)}}_s)\Bigr)^{Wick} \Bigr] =
\int_{L^2\oplus L^2}^{}\mathscr{B}_0^{(m)}(\tilde{\xi}^{(l)}_s )(z)\,e^{i \sqrt{2}\Re\langle \tilde{\xi}^{(l)}_s   , z \rangle_{}}  d\mu_s(z).
\end{eqnarray*}
This statement follows by applying  Lemma~\ref{lemma:16} with $\delta=2$ and by checking
the assumption
\begin{equation}
\label{ass:wlem15}
||(N_1+N_2)\,\varrho_{\varepsilon _{k_{\iota}}}(s) W(\tilde{\xi^{(l)}}_s)\,(N_1+N_2)||_{\mathcal{L}^1(L^2\oplus L^2)}\leq C\,,
\end{equation}
uniformly in $k_{\iota}$ for some $C>0$. In fact \eqref{ass:wlem15} holds true  by
Assumptions~\eqref{eq:84}-\eqref{eq:130}, the Higher order estimate of Proposition \ref{prop:A1} and Lemma \ref{lemma:9}. Remark that while $\varrho_{\varepsilon _{k_{\iota}}}(s)W(\tilde{\xi^{(l)}}_s)$ is not a non-negative trace-class operator, one can still apply Lemma~\ref{lemma:16}. In fact, one can write
 \begin{equation*}
    \Tr\Bigl[\varrho_{\varepsilon _{k_{\iota}}}(s)W(\tilde{\xi^{(l)}}_s)\;B^{(m)}_0(\tilde{\xi^{(l)}}_s) \Bigr] =
    \Tr\Bigl[W(\eta)\varrho_{\varepsilon _{k_{\iota}}}(s) W(\eta) \;\mathscr{A}^{Wick} \Bigr]\, ,
  \end{equation*}
 for some $\mathscr{A}\in \bigoplus_{p+q< 4}  \mathcal{P}_{p,q}^{\infty }\bigl(L^2\oplus L^2 \bigr)$ and with  $\eta=\frac{1}{2} \tilde{\xi^{(l)}}_s $. Remark now that
  $W(\eta)\varrho_{\varepsilon _{k_{\iota}}}(s) W(\eta)$ decomposes explicitly  into a linear combination of non-negative trace-class operators  satisfying all the assumption \eqref{ass:lem15} of Lemma~\ref{lemma:16}. Note that the Wigner measures of $\varrho_{\varepsilon _{k_{\iota}}}(s)W(\tilde{\xi^{(l)}}_s)$ are identified through \eqref{eq:109}. Hence the dominated  convergence theorem yields:
  \begin{equation*}
    \lim_{\iota\to \infty }\int_0^t \Tr\Bigl[\varrho_{\varepsilon _{k_{\iota}}}(s)W(\tilde{\xi^{(l)}}_s)B^{(m)}_0(\tilde{\xi^{(l)}}_s) \Bigr]  ds=\int_0^t  \biggl(\int_{L^2\oplus L^2}^{}\mathscr{B}_0^{(m)}(\tilde{\xi^{(l)}}_s )(z)e^{i \sqrt{2}\Re\langle \tilde{\xi^{(l)}}_s   , z \rangle_{}}  d\mu_s(z)\biggr)ds\; .
  \end{equation*}
  By Lemma~\ref{lemma:17},
  $\lim_{m\to \infty }\mathscr{B}_0^{(m)}(\tilde{\xi^{(l)}}_s
  )(z)=\mathscr{B}_0(\tilde{\xi^{(l)}}_s )(z)$,
  so by dominated convergence theorem
  \begin{equation*}
    \lim_{m\to \infty }\int_0^t  \biggl(\int_{L^2\oplus L^2}^{}\mathscr{B}_0^{(m)}(\tilde{\xi^{(l)}}_s )(z)e^{i \sqrt{2}\Re\langle \tilde{\xi^{(l)}}_s   , z \rangle_{}}  d\mu_s(z)\biggr)ds=\int_0^t  \biggl(\int_{L^2\oplus L^2}^{}\mathscr{B}_0(\tilde{\xi^{(l)}}_s )(z)e^{i \sqrt{2}\Re\langle \tilde{\xi^{(l)}}_s   , z \rangle_{}}  d\mu_s(z)\biggr)ds\; .
  \end{equation*}
  Above it is possible to apply the dominated convergence theorem due
  to a reasoning analogous to the one done at the beginning of this
  proof: roughly speaking, we have that
  $\mathscr{B}_0^{(m)}(\tilde{\xi^{(l)}}_t)(z) e^{i \sqrt{2}\Re\langle
    \tilde{\xi^{(l)}}_t , z \rangle_{}}\in L^{\infty
  }_{t}\Bigl(\mathds{R}, L^1_{z}\bigl[L^2\oplus
  L^2,d\mu_t(z)\bigr]\Bigr) $
  uniformly with respect to $m\in \mathds{N}$. In an analogous fashion
  we finally obtain
  \begin{equation*}
    \lim_{l\to \infty }\int_0^t  \biggl(\int_{L^2\oplus L^2}^{}\mathscr{B}_0(\tilde{\xi^{(l)}}_s )(z)e^{i \sqrt{2}\Re\langle \tilde{\xi^{(l)}}_s   , z \rangle_{}}  d\mu_s(z)\biggr)ds=\int_0^t  \biggl(\int_{L^2\oplus L^2}^{}\mathscr{B}_0(\tilde{\xi}_s )(z)e^{i \sqrt{2}\Re\langle \tilde{\xi}_s   , z \rangle_{}}  d\mu_s(z)\biggr)ds\; .
  \end{equation*}
\end{proof}
\begin{corollary}\label{cor:3}
  The transport equation~\eqref{eq:113} may be rewritten as
  \begin{equation}
    \label{eq:114}
    \int_{L^2\oplus L^2}^{}e^{i \sqrt{2}\Re\langle \xi   , z \rangle_{}}  d\tilde{\mu}_t(z)= \int_{L^2\oplus L^2}^{}e^{i \sqrt{2}\Re\langle \xi   , z \rangle_{}}  d\mu_0(z)+i\sqrt{2}\int_0^t  \biggl(\int_{L^2\oplus L^2}^{}e^{i \sqrt{2}\Re\langle \xi  , z \rangle_{}}\Re\langle \xi   , \mathbf{V}(s)(z) \rangle_{}  d\tilde{\mu}_s(z)\biggr)ds\; ;
  \end{equation}
  with the velocity vector field
  $\mathbf{V}(t)(z)=-i\mathbf{E}_0(-t)\circ\partial_{\bar{z}}\bigl(\hat{\mathscr{E}}-\mathscr{E}_0
  \bigr)\circ\mathbf{E}_0(t)(z)$.   In addition
    $\tilde{\mu}_t=\mathbf{E}_0(-t)_{\#}\hat{\mathbf{E}}(t)_{\#}\mu
    _0$ is a solution of Equation~\eqref{eq:114}.
\end{corollary}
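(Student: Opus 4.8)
Here is how I would organize the argument.

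\medskip

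\emph{Rewriting \eqref{eq:113} as \eqref{eq:114}.} The plan is to go back to the very definition of the map $\mathscr{B}_0$ produced in the proof of Lemma~\ref{lemma:12}: there one sets $\sum_{j=0}^{3}\varepsilon^{j}\mathscr{B}_{j}(\xi)(z)=\tfrac{i}{\varepsilon}\bigl(\hat{\mathscr{E}}_{I}(z+\tfrac{i\varepsilon}{\sqrt{2}}\xi)-\hat{\mathscr{E}}_{I}(z)\bigr)$, where $\hat{\mathscr{E}}_{I}$ is the interaction polynomial symbol of Eq.~\eqref{eq:95}, which by Definition~\ref{def:9} is exactly $\hat{\mathscr{E}}-\mathscr{E}_{0}$. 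Reading off the $\varepsilon^{0}$ coefficient of this finite difference identifies $\mathscr{B}_{0}(\xi)(z)$ with the G\^ateaux derivative of $\hat{\mathscr{E}}_{I}$ in the direction $\tfrac{i}{\sqrt{2}}\xi$, i.e. $\mathscr{B}_{0}(\xi)(z)=i\sqrt{2}\,\Re\langle\xi,-i\,\partial_{\bar z}\hat{\mathscr{E}}_{I}(z)\rangle$ in the pairing of Subsection~\ref{sec:notat-defin}; in words, $\mathscr{B}_{0}(\xi)$ pairs $\xi$ against the classical Hamiltonian vector field of the interaction. Then I would substitute $\xi\mapsto\tilde{\xi}_{s}$ in \eqref{eq:113} and perform the change of variables $z=\mathbf{E}_{0}(s)w$ in the $\mu_{s}$-integral, which is legitimate since $\mathbf{E}_{0}(\cdot)$ is unitary and $\tilde{\mu}_{s}=\mathbf{E}_{0}(-s)_{\#}\mu_{s}$ by Proposition~\ref{prop:13}. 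Using unitarity together with the intertwining $W(\tilde{\xi}_{s})=e^{-i\frac{s}{\varepsilon}H_{0}}W(\xi)e^{i\frac{s}{\varepsilon}H_{0}}$ (so that $\Re\langle\tilde{\xi}_{s},\mathbf{E}_{0}(s)w\rangle=\Re\langle\xi,w\rangle$ and the gradient is conjugated into $-i\,\mathbf{E}_{0}(-s)\circ\partial_{\bar z}(\hat{\mathscr{E}}-\mathscr{E}_{0})\circ\mathbf{E}_{0}(s)=\mathbf{V}(s)$), the integrand becomes $i\sqrt{2}\,e^{i\sqrt{2}\Re\langle\xi,w\rangle}\Re\langle\xi,\mathbf{V}(s)(w)\rangle$ integrated against $d\tilde{\mu}_{s}(w)$. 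The $L^{\infty}_{t}L^{1}_{z}$ bound needed to interchange the $s$-integral with the push-forward is already recorded in Proposition~\ref{prop:14}, so Fubini applies and \eqref{eq:113} turns into \eqref{eq:114}.

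\medskip

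\emph{The explicit solution.} Here I would write $\mu_{t}:=\hat{\mathbf{E}}(t)_{\#}\mu_{0}$, so that $\int f\,d\tilde{\mu}_{t}=\int f\bigl(\mathbf{E}_{0}(-t)\hat{\mathbf{E}}(t)z_{0}\bigr)\,d\mu_{0}(z_{0})$, and recall from Lemma~\ref{lemma:14} that $\mu_{0}$ is carried by $\mathcal{D}=Q(-\Delta+V)\oplus\mathcal{F}H^{1/2}$, on which both $\mathbf{E}_{0}$ and $\hat{\mathbf{E}}$ are globally defined (Theorems~\ref{prop:8} and~\ref{prop:9}) and the vector fields are meaningful. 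Fixing $z_{0}\in\mathcal{D}$ and putting $w(t):=\mathbf{E}_{0}(-t)\hat{\mathbf{E}}(t)z_{0}$, a product-rule computation — using that $\hat{\mathbf{E}}(\cdot)$ solves $\partial_{t}\hat{\mathbf{E}}(t)z_{0}=-i\partial_{\bar z}\hat{\mathscr{E}}(\hat{\mathbf{E}}(t)z_{0})$ while $\mathbf{E}_{0}(\cdot)$ is the linear flow of the quadratic $\mathscr{E}_{0}$, whose generator commutes with the group — gives
\begin{equation*}
  \dot{w}(t)=-i\,\mathbf{E}_{0}(-t)\,\partial_{\bar z}\bigl(\hat{\mathscr{E}}-\mathscr{E}_{0}\bigr)\bigl(\mathbf{E}_{0}(t)w(t)\bigr)=\mathbf{V}(t)(w(t))\;,
\end{equation*}
since $\hat{\mathbf{E}}(t)z_{0}=\mathbf{E}_{0}(t)w(t)$. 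Hence $\tfrac{d}{dt}e^{i\sqrt{2}\Re\langle\xi,w(t)\rangle}=i\sqrt{2}\,e^{i\sqrt{2}\Re\langle\xi,w(t)\rangle}\Re\langle\xi,\mathbf{V}(t)(w(t))\rangle$; integrating in $t$, then against $d\mu_{0}(z_{0})$, and invoking Fubini once more (again through Proposition~\ref{prop:14}) produces exactly \eqref{eq:114} with $\tilde{\mu}_{t}=\mathbf{E}_{0}(-t)_{\#}\hat{\mathbf{E}}(t)_{\#}\mu_{0}$.

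\medskip

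\textbf{The hard part} will be making the identification of $\mathscr{B}_{0}$ with $-i\,\partial_{\bar z}\hat{\mathscr{E}}_{I}$ — hence the definition of $\mathbf{V}(t)$ — fully rigorous: $\hat{\mathscr{E}}_{I}$ contains genuinely singular contributions (recall $r_{\infty}\notin L^{2}$), so $\partial_{\bar z}\hat{\mathscr{E}}_{I}$ and $\mathbf{V}(t)$ must be read as maps valued in the dual of $\mathcal{D}$ — equivalently only the scalar $\Re\langle\xi,\mathbf{V}(t)(z)\rangle$ tested against sufficiently regular $\xi$ is controlled, which is precisely what \eqref{eq:114} requires — and one has to keep careful track of all the $\mathbf{E}_{0}(\pm t)$ and conjugation signs through the successive changes of variables. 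Accordingly, the differentiability of $t\mapsto\mathbf{E}_{0}(-t)\hat{\mathbf{E}}(t)z_{0}$ in the second part should be justified in the weak sense tested against $\xi\in Q(-\Delta+V)\oplus D(\omega^{1/2})$, which is all that \eqref{eq:114} needs.
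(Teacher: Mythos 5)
Your proposal is correct and follows essentially the same route as the paper: the paper proves \eqref{eq:114} by the same ``direct calculation'' (identifying $\mathscr{B}_0$ with the pairing against the Hamiltonian vector field of $\hat{\mathscr{E}}-\mathscr{E}_0$ and conjugating by the free flow, legitimate because the measures are concentrated on $Q(-\Delta+V)\oplus\mathcal{F}H^{1/2}$ where $\mathbf{E}_0$ and $\hat{\mathbf{E}}$ are globally defined), and it verifies that $\tilde{\mu}_t=\mathbf{E}_0(-t)_{\#}\hat{\mathbf{E}}(t)_{\#}\mu_0$ solves the equation by differentiating in time exactly as you do, with the integrability supplied by Lemma~\ref{lemma:17} and Lemma~\ref{lemma:14}~(iii). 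Your write-up merely makes explicit the steps the paper leaves implicit, including the weak/dual-space reading of $\partial_{\bar z}\hat{\mathscr{E}}_I$ and the bookkeeping of the $\mathbf{E}_0(\pm t)$ conjugations.
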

\begin{proof}
It is proved by direct calculation, since
  $\mu_t\bigl( Q(-\Delta +V)\oplus \mathcal{F}H^{1/2}\bigr)=1$ for any
  $t\in \mathds{R}$ by Lemma~\ref{lemma:14}; and
  $\hat{\mathbf{E}}(t),\mathbf{E}_0(t)$ are globally well-defined on
  this space (for $\hat{\mathbf{E}}(t)$, it is proved in
  Theorem~\ref{prop:9}; for $\mathbf{E}_0(t)$ it is trivial). The second point is proved by  differentiating with respect to time and using Lemma \ref{lemma:17}  and Lemma \ref{lemma:14}
   $(iii)$.
\end{proof}

\subsection{Uniqueness of the solution to the transport equation.}
\label{sec:uniq-solut-transp}

As discussed in Corollary~\ref{cor:3}, the dressed flow
yields in the classical limit a solution of the transport equation~\eqref{eq:114}. The second
part of the same corollary suggests that it is important to study uniqueness
properties of~\eqref{eq:114}: it is by means of uniqueness that we can
close the argument and reach a satisfactory characterization of the
dynamics of classical states (Wigner measures). This subsection is devoted to prove that
the family of Wigner measures $\tilde{\mu}_t $ of
Proposition~\ref{prop:14} satisfies sufficient conditions, induced by
the properties of
$(\varrho _{\varepsilon })_{\varepsilon \in (0,\bar{\varepsilon} )}$,
to be uniquely identified with
$\mathbf{E}_0(-t)_{\#} \hat{\mathbf{E}}(t)_{\#}\mu _0$. We use an
optimal transport technique initiated  by
\citet{AGS
} then extended by \citet{2011arXiv1111.5918A} to propagation of Wigner measures; and under minimal assumptions proved to be sufficient by
\citet{liard:2015
} (see also \cite{MR2335089}).

In order to do that, we need to introduce a suitable topology on
$\mathfrak{P}\bigl(L^2\oplus L^2\bigr)$. Let $(e_j)_{j\in \mathds{N}}\subset L^2\oplus L^2$
be an orthonormal basis. Then
\begin{equation}
  \label{eq:122}
  d_{w}(z_1,z_2)=\biggl(\sum_{j\in \mathds{N}}^{}\frac{\lvert \langle z_1-z_2  , e_j \rangle_{L^2\oplus L^2}  \rvert_{}^2}{(1+j)^2}\biggr)^{1/2}\; ,
\end{equation}
where $z_1,z_2\in L^2\oplus L^2$, defines a distance on
$L^2\oplus L^2$. The topology induced by
$\bigl(L^2\oplus L^2,d_w \bigr)$ is homeomorphic to the weak topology
on bounded sets.

\begin{definition}[Weak narrow convergence of probability measures]\label{def:11}
  Let $(\mu _i)_{i\in \mathds{N}}\subset \mathfrak{P}\bigl(L^2\oplus L^2 \bigr)$.
  Then $(\mu _i)_{i\in \mathds{N}}$ weakly narrowly converges to
  $\mu \in \mathfrak{P}\bigl(L^2\oplus L^2 \bigr)$, in symbols
  $\mu _i\overset{n}{\rightharpoonup}\mu $, if
  \begin{equation*}
    \forall f\in \mathcal{C}_b\Bigl(\bigl(L^2\oplus L^2,d_w\bigr),\mathds{R}\Bigr)\; , \; \lim_{i\to \infty }\int_{L^2\oplus L^2}^{}f(z)  d\mu _i(z)=\int_{L^2\oplus L^2}^{}f(z)  d\mu (z)\; ;
  \end{equation*}
  where
  $\mathcal{C}_b\Bigl(\bigl(L^2\oplus L^2,d_w\bigr),\mathds{R}\Bigr)$
  is the space of bounded continuous real-valued functions on
  $\bigl(L^2\oplus L^2,d_w\bigr)$.
\end{definition}
It is actually more convenient to use cylindrical functions to prove
narrow continuity properties. Therefore we also define the cylindrical
Schwartz and compact support space of functions on $L^2\oplus L^2$.
\begin{definition}[Spaces of cylindrical functions]\label{def:12}
  Let $f:L^2\oplus L^2\to \mathds{R}$. Then
  $f\in \mathcal{S}_{cyl}\bigl(L^2\oplus L^2 \bigr)$ if there exists
  an orthogonal projection
  $\mathbf{p}:L^2\oplus L^2\to L^2\oplus L^2$,
  $\mathrm{dim}(\ran \mathbf{p})=d<\infty $, and a rapid decrease
  function $g\in \mathcal{S}(\ran\mathbf{p})$ such that
  \begin{equation*}
    \forall z\in L^2\oplus L^2 \, ,\, f(z)=g(\mathbf{p}z)\;.
  \end{equation*}
  Analogously, if $g\in \mathcal{C}_0^{\infty }(\ran\mathbf{p})$, then
  $f\in \mathcal{C}^{\infty }_{0,cyl}\bigl(L^2\oplus L^2 \bigr)$, the
  cylindrical smooth functions with compact support.
\end{definition}
We remark that neither $\mathcal{S}_{cyl}\bigl(L^2\oplus L^2 \bigr)$
nor $\mathcal{C}^{\infty }_{0,cyl}\bigl(L^2\oplus L^2 \bigr)$ possess
a vector space structure. Finally, for cylindrical Schwartz functions
we define the Fourier transform:
\begin{equation*}
  \mathcal{F}[f](\eta )=\int_{\ran\mathbf{p}}^{}e^{-2\pi i\Re\langle \eta   , z \rangle_{L^2\oplus L^2}}f(z)  dL_{\mathbf{p}}(z)\; ,
\end{equation*}
where $dL_{\mathbf{p}}$ denotes integration with respect to the
Lebesgue measure on $\ran\mathbf{p}$. The inversion formula is then
\begin{equation*}
  f(z )=\int_{\ran\mathbf{p}}^{}e^{2\pi i\Re\langle \eta   , z \rangle_{L^2\oplus L^2}} \mathcal{F}[f](\eta )  dL_{\mathbf{p}}(\eta )\; .
\end{equation*}
With these definitions in mind, we can prove the following
lemma.
\begin{lemma}\label{lemma:19}
  Let
  $(\varrho _{\varepsilon })_{\varepsilon \in (0,\bar{\varepsilon}
    )}\subset \mathcal{L}^1(\mathcal{H})$
  be a family of normal states that satisfies Assumptions~\eqref{eq:84},
  \eqref{eq:arho} and~\eqref{eq:130};
  $\tilde{\mu}_t:\mathds{R}\to \mathfrak{P}\bigl(L^2\oplus L^2 \bigr)$ such that
  for any $t\in \mathds{R}$ ,
  $\tilde{\mu} _t\in \mathcal{M}\bigl(\tilde{\varrho}_{\varepsilon
  }(t),\varepsilon \in (0,\bar{\varepsilon} ) \bigr)$.
  If, in addition, $\tilde{\mu}_t$ satisfies the integral
  equation~\eqref{eq:114}, then the following statements are true:
  \begin{itemize}[label=\textcolor{myblue}*]
  \item For any $t\in \mathds{R}$, and for any
    $(t_i)_{i\in \mathds{R}}\subset \mathds{R}$ such that
    $\lim_{i\to \infty }t_i=t$,
    \begin{equation*}
      \tilde{\mu} _{t_i}\overset{n}{\rightharpoonup}\tilde{\mu}_t\; ;
    \end{equation*}
    i.e. $\tilde{\mu} _t$ is a weakly narrowly continuous map.
  \item The map $\tilde{\mu}_t $ solves the transport equation \footnote{Recall that
   $\mathbf{V}(t)(z)=-i\mathbf{E}_0(-t)\circ\partial_{\bar{z}}\bigl(\hat{\mathscr{E}}-\mathscr{E}_0
  \bigr)\circ\mathbf{E}_0(t)(z)$.}
    \begin{equation*}
      \partial _t\tilde{\mu}_t+ \nabla ^T\Bigl(\mathbf{V}(t)\tilde{\mu}_t \Bigr)=0
    \end{equation*}
    in the weak sense, i.e.
    \begin{equation}\label{eq:123}
      \forall f\in \mathcal{C}^{\infty }_{0,cyl}\Bigl(\mathds{R}\times \bigl(L^2\oplus L^2\bigr) \Bigr)\;,\; \int_{\mathds{R}}^{}\int_{L^2\oplus L^2}^{}\Bigl(\partial _tf +\Re\langle \nabla f  , \mathbf{V}(t) \rangle_{}\Bigr)  d\tilde{\mu}_t   dt=0\; .
    \end{equation}
  \end{itemize}
\end{lemma}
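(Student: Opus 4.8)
The plan is to deduce both assertions from the integral identity~\eqref{eq:114}, combining it with the uniform moment bounds of Lemma~\ref{lemma:14} and the growth estimate~\eqref{eq:112} of Lemma~\ref{lemma:17} on the velocity field. Write $\mathcal{G}_t(\xi):=\int_{L^2\oplus L^2}e^{i\sqrt2\Re\langle\xi,z\rangle}\,d\tilde\mu_t(z)$ for the characteristic function of $\tilde\mu_t$. By Lemma~\ref{lemma:14}, $\tilde\mu_t$ is carried by $B_u(0,\sqrt{\mathfrak{C}})\cap\bigl(Q(-\Delta+V)\oplus D(\omega^{1/2})\bigr)$ and $J:=\sup_{s\in\mathds{R}}\int\bigl(\lVert(-\Delta+V)^{1/2}u\rVert_2^2+\lVert\alpha\rVert_{\mathcal{F}H^{1/2}}^2\bigr)\,d\tilde\mu_s(u,\alpha)<\infty$. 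Since the free flow $\mathbf{E}_0(s)$ preserves the graph norm of $-\Delta+V$ and is unitary on $\mathcal{F}H^{1/2}$, and since $\mathbf{V}(s)(z)=-i\mathbf{E}_0(-s)\circ\partial_{\bar z}(\hat{\mathscr{E}}-\mathscr{E}_0)\circ\mathbf{E}_0(s)(z)$, the bound~\eqref{eq:112} yields a constant $C(\sigma_0,\mathfrak{C})>0$ with $\lvert\Re\langle\xi,\mathbf{V}(s)(z)\rangle\rvert\le C(\sigma_0,\mathfrak{C})\,\lVert\xi\rVert_{L^2\oplus L^2}\bigl(1+\lVert(-\Delta+V)^{1/2}u\rVert_2^2+\lVert\alpha\rVert_{\mathcal{F}H^{1/2}}^2\bigr)$ for $\tilde\mu_s$-a.e.\ $z=(u,\alpha)$ and every $s$. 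In particular the integrand on the right of~\eqref{eq:114} is bounded by $C(\sigma_0,\mathfrak{C})\lVert\xi\rVert(1+J)$ uniformly in $s$, so $t\mapsto\mathcal{G}_t(\xi)$ is Lipschitz, hence continuous.

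\emph{Weak narrow continuity.} Fix $t\in\mathds{R}$ and $(t_i)_{i}$ with $t_i\to t$. By the estimate just made, $\mathcal{G}_{t_i}(\xi)\to\mathcal{G}_t(\xi)$ for every $\xi\in L^2\oplus L^2$. Moreover $(\tilde\mu_{t_i})_i$ is tight on $(L^2\oplus L^2,d_w)$: the sets $K_R=\{(u,\alpha):\lVert(-\Delta+V)^{1/2}u\rVert_2^2+\lVert\alpha\rVert_{\mathcal{F}H^{1/2}}^2\le R\}$ are bounded in $L^2\oplus L^2$, hence relatively $d_w$-compact, and Markov's inequality together with Lemma~\ref{lemma:14} gives $\sup_i\tilde\mu_{t_i}\bigl(L^2\oplus L^2\setminus K_R\bigr)\le J/R\to0$. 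Pointwise convergence of the characteristic functionals together with this tightness forces the narrow convergence $\tilde\mu_{t_i}\overset{n}{\rightharpoonup}\tilde\mu_t$, by the L\'evy-type continuity theorem on the weak topology of the separable Hilbert space $L^2\oplus L^2$ (see \cite{ammari:nier:2008}).

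\emph{The transport equation.} Let $f\in\mathcal{C}^\infty_{0,cyl}\bigl(\mathds{R}\times(L^2\oplus L^2)\bigr)$, say $f(t,z)=g(t,\mathbf{p}z)$ with $\mathbf{p}$ a finite-rank orthogonal projection and $g\in\mathcal{C}^\infty_0(\mathds{R}\times\ran\mathbf{p})$. By Fourier inversion on $\ran\mathbf{p}$, writing $G_{\eta}(t):=\mathcal{F}[g(t,\cdot)](\eta)$, one has $f(t,z)=\int_{\ran\mathbf{p}}G_{\eta}(t)\,e^{2\pi i\Re\langle\eta,z\rangle}\,dL_{\mathbf{p}}(\eta)$, hence $\partial_tf(t,z)=\int_{\ran\mathbf{p}}\partial_tG_{\eta}(t)\,e^{2\pi i\Re\langle\eta,z\rangle}\,dL_{\mathbf{p}}(\eta)$ and $\nabla f(t,z)=\int_{\ran\mathbf{p}}2\pi i\,\eta\,G_{\eta}(t)\,e^{2\pi i\Re\langle\eta,z\rangle}\,dL_{\mathbf{p}}(\eta)$. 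Insert these into the left side of~\eqref{eq:123} and interchange the $(z,\eta)$-integrations; this is legitimate by Fubini, since $(t,\eta)\mapsto G_{\eta}(t)$ and its $t$-derivative are smooth, Schwartz in $\eta$ uniformly for $t$ in compact intervals and compactly supported in $t$, while the $z$-integral is dominated uniformly in $\eta$ by $(1+J)$ via the pointwise bound on $\mathbf{V}$ established above. One is thus reduced to showing that, for each fixed $\eta\in\ran\mathbf{p}$, $\int_{\mathds{R}}\bigl(\partial_tG_{\eta}(t)\,\mathcal{G}_t(\xi_{\eta})+2\pi i\,G_{\eta}(t)\,\mathcal{V}_{\eta}(t)\bigr)\,dt=0$, where $\xi_{\eta}:=\sqrt2\,\pi\,\eta$ (so that $\sqrt2\Re\langle\xi_{\eta},z\rangle=2\pi\Re\langle\eta,z\rangle$) and $\mathcal{V}_{\eta}(t):=\int_{L^2\oplus L^2}\Re\langle\eta,\mathbf{V}(t)(z)\rangle\,e^{2\pi i\Re\langle\eta,z\rangle}\,d\tilde\mu_t(z)$. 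Applying~\eqref{eq:114} with $\xi=\xi_{\eta}$ shows $t\mapsto\mathcal{G}_t(\xi_{\eta})$ is absolutely continuous with $\frac{d}{dt}\mathcal{G}_t(\xi_{\eta})=i\sqrt2\int\Re\langle\xi_{\eta},\mathbf{V}(t)(z)\rangle\,e^{2\pi i\Re\langle\eta,z\rangle}\,d\tilde\mu_t(z)=2\pi i\,\mathcal{V}_{\eta}(t)$ for a.e.\ $t$; hence the integrand equals $\partial_t\bigl(G_{\eta}(t)\,\mathcal{G}_t(\xi_{\eta})\bigr)$, whose integral over $t\in\mathds{R}$ vanishes because $t\mapsto G_{\eta}(t)$ has compact support. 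This establishes~\eqref{eq:123}.

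\emph{Main obstacle.} The only genuinely delicate point is the Fubini interchange, equivalently the uniform $\tilde\mu_s$-integrability of $z\mapsto\Re\langle\xi,\mathbf{V}(s)(z)\rangle$ needed to pass the $\varepsilon\to0$ limit of~\eqref{eq:92} all the way down to~\eqref{eq:123}; this is supplied precisely by the bound~\eqref{eq:112} of Lemma~\ref{lemma:17} combined with the concentration of $\tilde\mu_s$ on $B_u(0,\sqrt{\mathfrak{C}})$, the moment finiteness from Lemma~\ref{lemma:14}, and the invariance of the graph norm under $\mathbf{E}_0$. Everything else reduces to the routine L\'evy/Prokhorov argument on $(L^2\oplus L^2,d_w)$ and differentiation of the integral identity~\eqref{eq:114}.
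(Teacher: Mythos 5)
Your proof is correct and rests on the same two pillars as the paper's argument: the integral identity \eqref{eq:114} combined with the a priori information of Lemma \ref{lemma:14} and the bound \eqref{eq:112} of Lemma \ref{lemma:17}, and Fourier inversion over finite-dimensional cylindrical test functions. The packaging differs in both bullets, though. For the weak narrow continuity, the paper shows continuity of $t\mapsto\int f\,d\tilde\mu_t$ for cylindrical Schwartz $f$ and then invokes \cite[Lemma 5.1.12-f]{AGS} together with the uniform second-moment bound, whereas you run a Prokhorov/L\'evy argument: tightness via Markov plus identification of the limit through the characteristic functional. That route works, but note one small slip: the set $K_R=\{(u,\alpha):\lVert(-\Delta+V)^{1/2}u\rVert_2^2+\lVert\alpha\rVert_{\mathcal{F}H^{1/2}}^2\le R\}$ is not by itself bounded in $L^2\oplus L^2$, since Assumption \ref{ass:2} allows $V=0$ and then $\lVert(-\Delta)^{1/2}u\rVert_2$ does not control $\lVert u\rVert_2$; you must intersect $K_R$ with $B_u(0,\sqrt{\mathfrak{C}})$, which carries full $\tilde\mu_{t_i}$-measure by Lemma \ref{lemma:14}, to obtain a bounded, weakly closed (hence $d_w$-compact) set --- the needed concentration is already stated in your first paragraph, so this is a one-line repair. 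For the transport equation, the paper first derives the integrated identity for time-independent cylindrical $g$, differentiates in $t$ by Lebesgue's theorem, and then obtains \eqref{eq:123} for product test functions $\varphi(t)g(z)$ followed by a density argument; you instead handle time-dependent cylindrical $f$ directly, reducing by Fubini to a per-frequency integration by parts in $t$ that uses the a.e.\ derivative of the Lipschitz map $t\mapsto\mathcal{G}_t(\xi_\eta)$ supplied by \eqref{eq:114}. This is a clean shortcut that avoids the final tensor-product approximation step, and your justification of the Fubini interchange is exactly the right one: uniform $\tilde\mu_s$-integrability of $\Re\langle\eta,\mathbf{V}(s)(\cdot)\rangle$ coming from \eqref{eq:112}, the concentration on $B_u(0,\sqrt{\mathfrak{C}})$, the uniform moment bound of Lemma \ref{lemma:14}, and the invariance of the relevant norms under the free flow $\mathbf{E}_0$.
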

\begin{proof}
  Let $f\in \mathcal{S}_{cyl}\bigl(L^2\oplus L^2 \bigr)$. Fubini's
  theorem gives
  \begin{equation*}
    \int_{L^2\oplus L^2}^{}f(z)  d\tilde{\mu}_t(z)=\int_{\ran\mathbf{p}}^{}\mathcal{F}[f](\xi )\,\biggl(\int_{L^2\oplus L^2}^{}e^{2\pi i\Re\langle \xi  , z \rangle_{}}  d\tilde{\mu}_t(z) \biggr)\,  dL_{\ran\mathbf{p}}(\xi)\; ,
  \end{equation*}
  where $dL_{\ran\mathbf{p}}$ is the Lebesgue measure on ${\ran\mathbf{p}}$ and
  $   \mathcal{F}(f)(\xi)=\int_{\ran\mathbf{p}}^{}f(z)e^{-2\pi i \Re\langle \xi  , z \rangle}
  dL_{\ran\mathbf{p}}(z)$.
  Now we define
  $\tilde{G}_0(t,\xi ):=\int_{L^2\oplus L^2}^{}e^{2\pi i\Re\langle
    \xi , z \rangle_{}} d\tilde{\mu}_t(z)$. Hence Equation
  \eqref{eq:113} of Proposition \ref{prop:14} gives
\begin{equation}\label{eq:131}
    \tilde{G}_0(t,\xi )- \tilde{G}_0(s,\xi )=\int_{s}^{t}
 \biggl(\int_{L^2\oplus L^2}^{}\mathscr{B}_0(\tilde{\xi}_\tau )(z)e^{i
   \sqrt{2}\Re\langle \tilde{\xi}_\tau   ,
 z \rangle_{}}  d\mu_\tau(z)\biggr)d\tau\; ;
  \end{equation}
and this proves that $t\mapsto \tilde{G}_0(t,\xi )$ is continuous for
any $\xi\in L^2\oplus L^{2}$ since  the integrand in the right hand
side of \eqref{eq:131} is bounded with respect to $\tau$ by
Proposition \ref{prop:14}.  Remark that $\tilde{G}_0(t,\xi )$ is
bounded by one for any
$(t,\xi)\in\mathds{R}\times (L^{2}\oplus L^{2})$.
Therefore the map  $t\mapsto \int_{L^2\oplus L^2}^{}f(z) d\tilde{\mu}_t(z)$ is
  continuous for any
  $f\in \mathcal{S}_{cyl}\bigl(L^2\oplus L^2 \bigr)$. Finally, by an
  argument analogous to the one used at the beginning of the proof of
  Proposition~\ref{prop:14}, it is easy to prove that
  $\int_{L^2\oplus L^2}^{} \lVert z \rVert^2_{L^2\oplus
    L^2}d\tilde{\mu}_t(z) \in L^{\infty }_{t}(\mathds{R})$. In fact,
  we know that $\tilde{\mu}_t\Bigl(B_u(0,\sqrt{\mathfrak{C}})\cap
  Q(-\Delta +V)\oplus D(\omega ^{1/2})\Bigr)=1$ by
  Lemma~\ref{lemma:14}; and  if $z=(u,\alpha )$ then
  the functions $\alpha\mapsto\lVert \alpha \rVert_2^2\leq \lVert \alpha
  \rVert_{\mathcal{F}H^{1/2}}^2$,
  belong to $L_z^1\Bigl[L^2\oplus L^2,d\tilde{\mu}_t(z) \Bigr]$
  uniformly in $t$ by Lemmas~\ref{lemma:14} and~\ref{lemma:13}. Then
  it follows that $\tilde{\mu}_t $ is weakly narrowly continuous by
  \citep[][Lemma 5.1.12 -
  f]{AGS
  }, thus proving the first point.

  Now we prove the second point by a similar argument as in
  \cite{2011arXiv1111.5918A} which we reproduce here for completeness. Let
  $g\in \mathcal{C}_{0,cyl}^{\infty }\bigl(L^2\oplus L^2 \bigr)$; we
  integrate Equation~\eqref{eq:114} with respect to the measure
  $\mathcal{F}[g](\eta )dL_{\mathbf{p}}$ obtaining
  \begin{equation*}
    \begin{split}
          \int_{L^2\oplus L^2}^{}g(z)  d\tilde{\mu}_t(z)=\int_{L^2\oplus L^2}^{}g(z)  d\tilde{\mu}_0(z) +2\pi i\int_0^t\int_{\ran\mathbf{p}}^{}\biggl(\int_{L^2\oplus L^2}^{}\Re\langle \eta   , \mathbf{V}(s)(z) \rangle_{}  d\tilde{\mu}_s(z)\biggr)\\\mathcal{F}[g](\eta )  dL_{\mathbf{p}}(\eta )  ds\; .
    \end{split}
  \end{equation*}
  Let $\nabla g$ be the differential of
  $g:L^2\oplus L^2\to \mathds{R}$, where here $L^2\oplus L^2$ is
  considered as a real Hilbert space with scalar product
  $\Re\langle \cdot , \cdot \rangle_{L^2\oplus L^2}$. Then, by
  Fubini's theorem and the properties of the Fourier transform, we get
  \begin{equation*}
        \int_{L^2\oplus L^2}^{}g(z)  d\tilde{\mu}_t(z)=\int_{L^2\oplus L^2}^{}g(z)  d\tilde{\mu}_0(z) +\int_0^t\int_{L^2\oplus L^2}^{}\Re\langle \nabla g (z)   , \mathbf{V}(s)(z) \rangle_{}  d\tilde{\mu}_s(z)  ds\; .
  \end{equation*}
By Lebesgue's differentiation theorem (with respect to $t$), we obtain
  \begin{equation*}
    \partial _t \int_{L^2\oplus L^2}^{}g(z)  d\tilde{\mu}_t(z)- \int_{L^2\oplus L^2}^{}\Re\langle \nabla g (z)   , \mathbf{V}(t)(z) \rangle_{}  d\tilde{\mu}_t(z)=0\;.
  \end{equation*}
  Equation~\eqref{eq:123} is then obtained for
  $f(t,z)=\varphi (t)g(z)$, multiplying by
  $\varphi (t)\in \mathcal{C}_0^{\infty }(\mathds{R},\mathds{R})$,
  integrating with respect to $t$, and finally using integration by
  parts. The result for a generic
  $f\in \mathcal{C}_{0,cyl}^{\infty }\Bigl(\mathds{R}\times (L^2\oplus L^2)
  \Bigr)$
  follows immediately: $f(t,z)=g(t,\mathbf{p}z)$ for some
  $g\in \mathcal{C}_0^{\infty }\Bigl(\mathds{R}\times
  \ran\mathbf{p}\Bigr)$,
  and the latter can be approximated by a sequence
  $\Bigl(g_j(t ,\mathbf{p}z )\Bigr)_{j\in \mathds{N}}\subset
  \mathcal{C}_0^{\infty }(\mathds{R})\overset{alg}{\otimes
  }\mathcal{C}_0^{\infty }(\ran\mathbf{p})$.
\end{proof}
We need to check an hypothesis on the velocity vector field
$\mathbf{V}(t)$ (introduced in Corollary~\ref{cor:3}) to prove the
sought uniqueness result. This is done in the following lemma.
\begin{lemma}\label{lemma:20}
  Let
  $(\varrho _{\varepsilon })_{\varepsilon \in (0,\bar{\varepsilon}
    )}\subset \mathcal{L}^1(\mathcal{H})$
  be a family of normal states that satisfies
  Assumptions~\eqref{eq:84} and~\eqref{eq:arho};
  $\tilde{\mu}_t:\mathds{R}\to \mathfrak{P}\bigl(L^2\oplus L^2 \bigr)$ such that
  for any $t\in \mathds{R}$ ,
  $\tilde{\mu} _t\in \mathcal{M}\bigl(\tilde{\varrho}_{\varepsilon
  }(t),\varepsilon \in (0,\bar{\varepsilon} ) \bigr)$.
  Then
  $\lVert \mathbf{V}(t)(z) \rVert_{L^2\oplus L^2}^{} \in L^{\infty
  }_{t}\Bigl(\mathds{R}, L^1_{z}\bigl[L^2\oplus
  L^2,d\mu_t(z)\bigr]\Bigr)$,
  i.e. the norm of the velocity vector field is integrable with
  respect to $\tilde{\mu}_t $, uniformly in $t\in \mathds{R}$.
\end{lemma}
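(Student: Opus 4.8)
The plan is to reduce the claim to the pointwise estimate~\eqref{eq:112} of Lemma~\ref{lemma:17}, combined with the concentration and moment bounds of Lemma~\ref{lemma:14}. Since $\hat{\mathscr{E}}-\mathscr{E}_0=\hat{\mathscr{E}}_I$ (Definition~\ref{def:9}, see also~\eqref{eq:95}) and $\mathbf{E}_0(-t)$ is a unitary on $L^2\oplus L^2$, one has $\lVert \mathbf{V}(t)(z)\rVert_{L^2\oplus L^2}=\bigl\lVert \partial_{\bar z}\hat{\mathscr{E}}_I\bigl(\mathbf{E}_0(t)z\bigr)\bigr\rVert_{L^2\oplus L^2}$. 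The next step is to identify $\bigl\lVert \partial_{\bar z}\hat{\mathscr{E}}_I(w)\bigr\rVert_{L^2\oplus L^2}$, up to a universal constant, with $\sup_{\lVert \xi\rVert_{L^2\oplus L^2}\le 1}\lvert \mathscr{B}_0(\xi)(w)\rvert$: by the construction in the proof of Lemma~\ref{lemma:12}, $\mathscr{B}_0(\xi)(w)$ is the $\varepsilon^0$-coefficient of $\tfrac{i}{\varepsilon}\bigl(\hat{\mathscr{E}}_I(w+\tfrac{i\varepsilon}{\sqrt 2}\xi)-\hat{\mathscr{E}}_I(w)\bigr)$, hence a fixed nonzero multiple of a non-degenerate real-linear pairing of $\xi$ with $\partial_{\bar z}\hat{\mathscr{E}}_I(w)$ (real-linearity and non-degeneracy because $\hat{\mathscr{E}}_I$ is real-valued). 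Together with~\eqref{eq:112} — valid for any $\xi\in L^2\oplus L^2$ and any $w=(v,\beta)\in Q(-\Delta+V)\oplus D(\omega^{1/2})$ — this yields a constant $C(\sigma_0)>0$ with $\bigl\lVert \partial_{\bar z}\hat{\mathscr{E}}_I(w)\bigr\rVert_{L^2\oplus L^2}\le C(\sigma_0)\,\Phi(w)$, where $\Phi(v,\beta)$ is the polynomial in $\lVert v\rVert_2$, $\lVert(-\Delta+V)^{1/2}v\rVert_2$ and $\lVert\beta\rVert_{\mathcal{F}H^{1/2}}$ appearing on the right of~\eqref{eq:112}. (Alternatively this bound follows by differentiating each term of~\eqref{eq:95} and applying the classical analogues of Lemmas~\ref{lemma:3}, \ref{lemma:4}, \ref{lemma:5}.)

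Since $\mathbf{E}_0(t)=e^{-it(-\Delta+V)}\oplus e^{-it\omega}$ commutes with $-\Delta+V$ and with $\omega$, each norm entering $\Phi$ is invariant under $\mathbf{E}_0(t)$: $\lVert(-\Delta+V)^{1/2}(\mathbf{E}_0(t)z)_1\rVert_2=\lVert(-\Delta+V)^{1/2}u\rVert_2$, $\lVert(\mathbf{E}_0(t)z)_2\rVert_{\mathcal{F}H^{1/2}}=\lVert\alpha\rVert_{\mathcal{F}H^{1/2}}$ and $\lVert(\mathbf{E}_0(t)z)_1\rVert_2=\lVert u\rVert_2$ for $z=(u,\alpha)$. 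Hence $\Phi(\mathbf{E}_0(t)z)=\Phi(z)$ and therefore
\[
\bigl\lVert \mathbf{V}(t)(z)\bigr\rVert_{L^2\oplus L^2}\le C(\sigma_0)\,\Phi(z)\qquad\text{for all }t\in\mathds{R},\ z\in Q(-\Delta+V)\oplus D(\omega^{1/2}),
\]
with a $t$-independent constant.

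Finally I would integrate against $\tilde{\mu}_t$. By Lemma~\ref{lemma:14}, $\tilde{\mu}_t\bigl(B_u(0,\sqrt{\mathfrak{C}})\cap Q(-\Delta+V)\oplus D(\omega^{1/2})\bigr)=1$, so $\lVert u\rVert_2\le\sqrt{\mathfrak{C}}$ holds $\tilde{\mu}_t$-almost everywhere; using this to absorb all factors $\lVert u\rVert_2$ in $\Phi$ and Young's inequality on the mixed term $\lVert u\rVert_2\,\lVert(-\Delta+V)^{1/2}u\rVert_2\,\lVert\alpha\rVert_{\mathcal{F}H^{1/2}}$, one gets a constant $C(\sigma_0,\mathfrak{C})>0$ with
\[
\int_{L^2\oplus L^2}\bigl\lVert \mathbf{V}(t)(z)\bigr\rVert_{L^2\oplus L^2}\,d\tilde{\mu}_t(z)\le C(\sigma_0,\mathfrak{C})\Bigl(1+\int_{L^2\oplus L^2}\bigl(\lVert(-\Delta+V)^{1/2}u\rVert_2^2+\lVert\alpha\rVert_{\mathcal{F}H^{1/2}}^2\bigr)\,d\tilde{\mu}_t(z)\Bigr).
\]
By Lemma~\ref{lemma:14}(iii) the last integral is finite, and since the right-hand side of~\eqref{eq:106} in Lemma~\ref{lemma:13} does not depend on $t$, it is bounded uniformly in $t\in\mathds{R}$ (as already noted in the proof of Proposition~\ref{prop:14}). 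This yields the claimed $L^\infty_t\bigl(\mathds{R},L^1_z[\,d\tilde{\mu}_t\,]\bigr)$ bound, and the identical argument with $\mu_t\in\mathcal{M}\bigl(\varrho_\varepsilon(t),\varepsilon\in(0,\bar{\varepsilon})\bigr)$ in place of $\tilde{\mu}_t$ covers the formulation with $d\mu_t$, Lemma~\ref{lemma:14} applying equally to $\mu_t$. The only genuinely delicate point is the identification of $\lVert\partial_{\bar z}\hat{\mathscr{E}}_I\rVert$ with a quantity controlled by~\eqref{eq:112}; the rest is bookkeeping with the invariance of the graph norms under the free flow and with the already established uniform moment bounds.
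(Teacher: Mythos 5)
Your proposal is correct and follows essentially the same route as the paper's own proof: use the bound~\eqref{eq:112} of Lemma~\ref{lemma:17} to control the real-linear pairing $\mathscr{B}_0(\xi)(\cdot)=i\sqrt{2}\,\Re\langle\xi,\cdot\rangle$-type quantity, deduce a bound on the full complex pairing $\lvert\langle\xi,\mathbf{V}(t)(z)\rangle\rvert$ by applying the same estimate with $\xi$ replaced by $i\xi$, take the supremum over $\lVert\xi\rVert\le 1$ to get $\lVert\mathbf{V}(t)(z)\rVert\le C(\sigma_0)\Phi(z)$, and integrate using the concentration and moment bounds of Lemmas~\ref{lemma:14} and~\ref{lemma:13}. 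Your proof is a bit more explicit on two useful points that the paper's proof leaves implicit --- the identification of $\sup_{\lVert\xi\rVert\le1}\lvert\mathscr{B}_0(\xi)(w)\rvert$ with (a multiple of) $\lVert\partial_{\bar z}\hat{\mathscr{E}}_I(w)\rVert$, and the invariance $\Phi(\mathbf{E}_0(t)z)=\Phi(z)$ which makes the constant $t$-independent --- but these are elaborations, not a different argument.
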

\begin{proof}
  By Equation~\eqref{eq:112} of Lemma~\ref{lemma:17} and the
  definition of $\mathbf{V}(t)$ we have that for any
  $\xi \in L^2\oplus L^2$:
\begin{equation*}
    \begin{split}
          \Bigl\lvert \Re\langle \xi   , \mathbf{V}(t)(z) \rangle_{}  \Bigr\rvert_{}^{}\leq C(\sigma _0)\lVert \xi   \rVert_{L^2\oplus L^2}^{}\Bigl(\lVert u  \rVert_2^{2}+ \lVert (-\Delta +V)^{1/2}u  \rVert_2^2+\lVert \alpha   \rVert_{\mathcal{F}H^{1/2}}^{2}+\lVert u  \rVert_2\cdot \lVert (-\Delta +V)^{1/2}u  \rVert_2^2\\
      +\lVert u  \rVert_2\cdot \lVert \alpha   \rVert_{\mathcal{F}H^{1/2}}^2 +\lVert u  \rVert_2\cdot \lVert (-\Delta +V)^{1/2}u  \rVert_2^{}\cdot  \lVert \alpha   \rVert_{\mathcal{F}H^{1/2}}\Bigr)\; .
    \end{split}
  \end{equation*}
  It is easy to prove an equivalent bound for the imaginary part, and
  hence obtain for any $\xi \in L^2\oplus L^2$:
\begin{equation*}
    \begin{split}
          \Bigl\lvert \langle \xi   , \mathbf{V}(t)(z) \rangle_{}  \Bigr\rvert_{}^{}\leq C(\sigma _0)\lVert \xi   \rVert_{L^2\oplus L^2}^{}\Bigl(\lVert u  \rVert_2^{2}+ \lVert (-\Delta +V)^{1/2}u  \rVert_2^2+\lVert \alpha   \rVert_{\mathcal{F}H^{1/2}}^{2}+\lVert u  \rVert_2\cdot \lVert (-\Delta +V)^{1/2}u  \rVert_2^2\\
      +\lVert u  \rVert_2\cdot \lVert \alpha   \rVert_{\mathcal{F}H^{1/2}}^2 +\lVert u  \rVert_2\cdot \lVert (-\Delta +V)^{1/2}u  \rVert_2^{}\cdot  \lVert \alpha   \rVert_{\mathcal{F}H^{1/2}}\Bigr)\; .
    \end{split}
  \end{equation*}
  Therefore it follows immediately that
  \begin{equation*}
    \begin{split}
          \lVert \mathbf{V}(t)(z) \rVert_{L^2\oplus L^2}^{}\leq C(\sigma _0)\Bigl(\lVert u  \rVert_2^{2}+ \lVert (-\Delta +V)^{1/2}u  \rVert_2^2+\lVert \alpha   \rVert_{\mathcal{F}H^{1/2}}^{2}+\lVert u  \rVert_2\cdot \lVert (-\Delta +V)^{1/2}u  \rVert_2^2\\+\lVert u  \rVert_2\cdot \lVert \alpha   \rVert_{\mathcal{F}H^{1/2}}^2 +\lVert u  \rVert_2\cdot \lVert (-\Delta +V)^{1/2}u  \rVert_2^{}\cdot  \lVert \alpha   \rVert_{\mathcal{F}H^{1/2}}\Bigr)\; .
    \end{split}
  \end{equation*}
  The right hand side of the above equation is in
  $L^{\infty }_{t}\Bigl(\mathds{R}, L^1_{z}\bigl[L^2\oplus
  L^2,d\mu_t(z)\bigr]\Bigr)$,
  as shown at the beginning of the proof of Proposition~\ref{prop:14}.
\end{proof}
As showed in
\citep[Theorem 3.1.1]{liard:2015
}, Lemma~\ref{lemma:19}, Lemma~\ref{lemma:20} and Lemma~\ref{lemma:14} $(ii)$-$(iii)$  are sufficient to
prove uniqueness of the solution to the transport equation~\eqref{eq:123}, as
precisely formulated in the following proposition.
\begin{proposition}\label{prop:16}
  Let
  $(\varrho _{\varepsilon })_{\varepsilon \in (0,\bar{\varepsilon}
    )}\subset \mathcal{L}^1(\mathcal{H})$
  be a family of normal states that satisfies
  Assumptions~\eqref{eq:84},\eqref{eq:arho}  and~\eqref{eq:130}. In addition, let
  $\tilde{\mu}_t:\mathds{R}\to \mathfrak{P}\bigl(L^2\oplus L^2 \bigr)$ such that
  for any $t\in \mathds{R}$ ,
  $\tilde{\mu} _t\in \mathcal{M}\bigl(\tilde{\varrho}_{\varepsilon_k
  }(t),k \in \mathds{N} \bigr)$ for some sequence $(\varepsilon_k)_{k\in\mathds{N}}$ with
  $\varepsilon_k\to 0$\;   and $\tilde{\mu}_t$ satisfies the integral
  equation~\eqref{eq:114}. Then $\tilde{\mu}_t= \mathbf{E}_0(-t)_{\#} \hat{\mathbf{E}}(t)_{\#}\mu
  _0$ .
\end{proposition}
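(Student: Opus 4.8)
The plan is to show that both the given family $\tilde{\mu}_t$ and the candidate $\nu_t:=\mathbf{E}_0(-t)_{\#}\hat{\mathbf{E}}(t)_{\#}\mu_0$ are weakly narrowly continuous families of Borel probability measures solving the \emph{same} continuity equation~\eqref{eq:123} with the \emph{same} velocity field $\mathbf{V}(t)$ and the \emph{same} initial datum at $t=0$, and then to invoke the uniqueness result \citep[Theorem 3.1.1]{liard:2015}. Most of the work has already been packaged into the preceding lemmas, so the proof is mainly a matter of checking that all hypotheses of that uniqueness theorem are in place for each of the two solutions.

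First I would record the properties of $\tilde{\mu}_t$. Since $\tilde{\mu}_t\in\mathcal{M}\bigl(\tilde{\varrho}_{\varepsilon_k}(t),k\in\mathds{N}\bigr)$ and $\tilde{\mu}_t$ satisfies the integral equation~\eqref{eq:114} by assumption, Lemma~\ref{lemma:19} gives that $t\mapsto\tilde{\mu}_t$ is weakly narrowly continuous and solves~\eqref{eq:123} in the weak sense. Lemma~\ref{lemma:20} gives $\lVert\mathbf{V}(t)(z)\rVert_{L^2\oplus L^2}\in L^{\infty}_{t}\bigl(\mathds{R},L^1_{z}[L^2\oplus L^2,d\tilde{\mu}_t(z)]\bigr)$, and Lemma~\ref{lemma:14}~$(ii)$--$(iii)$ gives, for every $t$, that $\tilde{\mu}_t$ is carried by $B_u(0,\sqrt{\mathfrak{C}})\cap\bigl(Q(-\Delta+V)\oplus D(\omega^{1/2})\bigr)$ and that $\int\bigl(\lVert(-\Delta+V)^{1/2}u\rVert_2^2+\lVert\alpha\rVert_{\mathcal{F}H^{1/2}}^2\bigr)\,d\tilde{\mu}_t<\infty$ uniformly in $t$. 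These are precisely the structural assumptions required in \citep[Theorem 3.1.1]{liard:2015}.

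Next I would check that $\nu_t$ is an admissible solution of the same equation. By Corollary~\ref{cor:3}, $\nu_t=\mathbf{E}_0(-t)_{\#}\hat{\mathbf{E}}(t)_{\#}\mu_0$ solves the integral equation~\eqref{eq:114}; running the Fourier-inversion argument in the proof of Lemma~\ref{lemma:19} (which only uses~\eqref{eq:114} together with support and moment information, not the fact that the measures are Wigner measures) yields that $t\mapsto\nu_t$ is weakly narrowly continuous and solves~\eqref{eq:123} weakly. For the support and the uniform second moment, one applies Lemma~\ref{lemma:14} to $\mu_0$ and then propagates: the $U(1)$ invariance conserves $\lVert u\rVert_2$ along $\hat{\mathbf{E}}$, so $B_u(0,\sqrt{\mathfrak{C}})$ is invariant; conservation of $\hat{\mathscr{E}}$ along $\hat{\mathbf{E}}$ together with the lower-boundedness of $\mathscr{E}$ on bounded-mass cylinders (Subsection~\ref{sec:relat-betw-sigm}) and the relation $\hat{\mathscr{E}}=\mathscr{E}\circ\mathbf{D}_{g_{\infty}}(1)$ controls $\lVert(-\Delta+V)^{1/2}u\rVert_2^2+\lVert\alpha\rVert_{\mathcal{F}H^{1/2}}^2$ uniformly in $t$; and $\mathbf{E}_0(-t)=e^{it(-\Delta+V)}\oplus e^{it\omega}$ commutes with $(-\Delta+V)^{1/2}$ and with $\omega^{1/2}$, hence leaves these quantities unchanged. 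That $\hat{\mathbf{E}}(t)$ and $\mathbf{E}_0(t)$ keep $\nu_t$ supported on $Q(-\Delta+V)\oplus\mathcal{F}H^{1/2}$ is Theorem~\ref{prop:9}. The velocity-field integrability for $\nu_t$ follows from Lemma~\ref{lemma:20} exactly as for $\tilde{\mu}_t$, once the moment bound above is known. Finally $\nu_0=\hat{\mathbf{E}}(0)_{\#}\mu_0=\mu_0$, which coincides with $\tilde{\mu}_0$ since $\tilde{\mu}_t=\mathbf{E}_0(-t)_{\#}\mu_t$ (Proposition~\ref{prop:14}~$(i)$) gives $\tilde{\mu}_0=\mu_0$.

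With both families satisfying the hypotheses of \citep[Theorem 3.1.1]{liard:2015} and agreeing at $t=0$, uniqueness of the weak solution to~\eqref{eq:123} forces $\tilde{\mu}_t=\nu_t=\mathbf{E}_0(-t)_{\#}\hat{\mathbf{E}}(t)_{\#}\mu_0$ for every $t\in\mathds{R}$, which is the claim. The main obstacle, and the only part calling for genuine estimates rather than bookkeeping, is verifying that the candidate $\nu_t$ meets the uniform-in-time moment and velocity-integrability requirements of the uniqueness theorem; this is where conservation of $\hat{\mathscr{E}}$ along the dressed flow, together with the lower-boundedness of $\mathscr{E}$ on bounded-mass cylinders and the mapping properties of the classical dressing from Proposition~\ref{prop:6}, enter in an essential way.
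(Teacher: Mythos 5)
Your proposal is correct and follows essentially the same route as the paper: the paper's proof of Proposition~\ref{prop:16} consists precisely in observing that Lemma~\ref{lemma:19}, Lemma~\ref{lemma:20} and Lemma~\ref{lemma:14}~(ii)--(iii) place $\tilde{\mu}_t$ in the class covered by the uniqueness result of Liard (Theorem 3.1.1), with Corollary~\ref{cor:3} supplying the candidate solution $\mathbf{E}_0(-t)_{\#}\hat{\mathbf{E}}(t)_{\#}\mu_0$. Your additional verification that the candidate family itself is weakly narrowly continuous with uniform moment bounds is harmless but not required in the paper's use of the cited theorem, which directly represents any admissible solution of the transport equation as the push-forward of $\mu_0$ along the flow generated by $\mathbf{V}(t)$.
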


\subsection{The classical limit of the dressing transformation.}
\label{sec:class-limit-dress}

Let's consider now the dressing transformation
$U_{\infty }(\theta )=e^{-i\frac{\theta }{\varepsilon} T_{\infty }}$ on $\mathcal{H}$,
with self-adjoint generator:
\begin{gather*}
  T_{\infty }=\bigl(\mathscr{D}_{g_{\infty }}\bigr)^{Wick}=\int_{\mathds{R}^3}^{}\psi^{*}(x)  \Bigl(a^{*}(g_{\infty }e^{-ik\cdot x})+a(g_{\infty }e^{-ik\cdot x})\Bigr)\psi(x)  dx\; ;\\
  g_{\infty }(k)=-\frac{i}{(2\pi)^{3/2}}\frac{1}{\sqrt{2\omega(k)}}\frac{1-\chi_{\sigma_0}(k)}{\frac{k^2}{2M}+\omega(k)}\in L^2 (\mathds{R}^3  )\; .
\end{gather*}
The family
$\bigl(e^{-i\frac{\theta }{\varepsilon} T_{\infty }}\bigr)_{\theta \in
  \mathds{R}}\subset \mathcal{L}(\mathcal{H})$
is a strongly continuous unitary group, and therefore can be seen as a
dynamical system acting on quantum states. Therefore, given a family
$(\varrho_{\varepsilon } )_{\varepsilon \in (0,\bar{\varepsilon} )}$
of normal quantum states on $\mathcal{H}$, we could determine the
Wigner measures of
\begin{equation}
  \label{eq:115}
  \hat{\varrho} _{\varepsilon }(\theta )=e^{-i\frac{\theta }{\varepsilon} T_{\infty }}\,\varrho _{\varepsilon }\,e^{i\frac{\theta }{\varepsilon} T_{\infty }}\; .
\end{equation}
Since $T_{\infty }=\bigl(\mathscr{D}_{g_{\infty }}\bigr)^{Wick}$,
where $\mathscr{D}_{g_{\infty }}$ is the classical dressing generator
defined in Section~\ref{sec:classical-dressing}, we expect that under
suitable assumptions,
$\Bigl(\varrho _{\varepsilon _k}\rightarrow \mu \Rightarrow
\hat{\varrho }_{\varepsilon _k}(\theta )\rightarrow
\mathbf{D}_{g_{\infty }}(\theta )_{\#}\mu \Bigr)$,
where $\mathbf{D}_{g_{\infty }}(\theta )$ is the classical dressing
transformation.  The last assertion is indeed true, as explained in
the following. Observe that the dressing generator $T_{\infty }$ is
equal to the interaction part $H_I(\sigma )$ of the Nelson model with
cutoff, where $\tfrac{\chi _{\sigma } }{\sqrt{2\omega }}$ is replaced
by $g_{\infty }$, i.e.
$T_{\infty }=H_I(\sigma )\bigr\rvert_{\frac{\chi _{\sigma
    }}{\sqrt{2\omega }}=g_{\infty }}$.
The classical limit of the Nelson model with cutoff has been treated
by the authors in
\citep{Ammari:2014aa
}, thus the results below can be immediately deduced by the results in
\citep[][$d=3$, $H_0=0$ and
$\tfrac{\chi }{\sqrt{\omega }}=g_{\infty
}$]{Ammari:2014aa
}. We recall also that $g_{\infty }$, and therefore also $T_{\infty }$
and $\mathscr{D}_{g_{\infty }}$, depends on
$\sigma _0\in \mathds{R}_+$.
\begin{lemma}\label{lemma:18}
  Let
  $(\varrho_{\varepsilon } )_{\varepsilon \in (0,\bar{\varepsilon} )}$
  be a family of normal (quantum) states on $\mathcal{H}$ that satisfies
  Assumptions~\eqref{eq:84} and~\eqref{eq:86}. Then for any
  $\sigma _0\in \mathds{R}_+$ ,
  $(\hat{\varrho} _{\varepsilon }(-1 ))_{\varepsilon \in
    (0,\bar{\varepsilon} )}$
  satisfies Assumptions~\eqref{eq:84} and~\eqref{eq:arho}.
\end{lemma}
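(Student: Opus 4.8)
The plan is to observe that $\hat{\varrho}_{\varepsilon }(-1)=U_{\infty }\varrho _{\varepsilon }U_{\infty }^{*}$, where $U_{\infty }=U_{\infty }(-1)=e^{\frac{i}{\varepsilon }T_{\infty }}$ is the unitary dressing of Lemma~\ref{lemma:1} (well defined for every $\sigma _0\in \mathds{R}_+$ since $T_{\infty }$ is self-adjoint), and then to notice that both Assumptions~\eqref{eq:84} and~\eqref{eq:arho} are tailored to be stable under exactly this conjugation. Indeed, \eqref{eq:84} only involves $N_1$, with which $T_{\infty }$---hence $U_{\infty }$---commutes by Equation~\eqref{eq:6}; and the ``dressed free energy'' occurring in~\eqref{eq:86} is precisely $U_{\infty }^{*}H_0U_{\infty }$, which becomes the bare $H_0$ of~\eqref{eq:arho} after conjugating by $U_{\infty }$.

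Concretely, for Assumption~\eqref{eq:84} I would first use the resolvent commutation $[T_{\infty },N_1]=0$ from Equation~\eqref{eq:6} to deduce that $U_{\infty }$ commutes with every bounded Borel function of $N_1$, and, through the monotone limit $N_1^{k}(1+\delta N_1)^{-k}\uparrow N_1^{k}$ as $\delta\to 0^{+}$, with the powers $N_1^{k}$ as well; cyclicity of the trace then gives $\Tr[\hat{\varrho}_{\varepsilon }(-1)N_1^{k}]=\Tr[\varrho _{\varepsilon }U_{\infty }^{*}N_1^{k}U_{\infty }]=\Tr[\varrho _{\varepsilon }N_1^{k}]\leq \mathfrak{C}^{k}$. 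For Assumption~\eqref{eq:arho} I would treat the two summands separately: the $N_1$-part follows from the same commutation together with $N_1\leq N$ and~\eqref{eq:86}, while for the $H_0$-part I would expand $\varrho _{\varepsilon }=\sum_i\lambda _i\lvert \Psi _i\rangle\langle\Psi _i\rvert$ and use $(U_{\infty }^{*}H_0U_{\infty })^{1/2}=U_{\infty }^{*}H_0^{1/2}U_{\infty }$ to write $\Tr[\hat{\varrho}_{\varepsilon }(-1)H_0]=\sum_i\lambda _i\lVert H_0^{1/2}U_{\infty }\Psi _i\rVert^{2}=\sum_i\lambda _i\lVert (U_{\infty }^{*}H_0U_{\infty })^{1/2}\Psi _i\rVert^{2}=\Tr[\varrho _{\varepsilon }U_{\infty }^{*}H_0U_{\infty }]\leq C$ by~\eqref{eq:86}. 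Summing the two bounds yields~\eqref{eq:arho} with constant $2C$, uniformly in $\varepsilon\in(0,\bar{\varepsilon} )$ and in $\sigma _0\in \mathds{R}_+$.

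The argument involves no quantitative estimates and does not require revisiting Section~\ref{sec:renorm-hamilt}: it reduces entirely to unitarity of $U_{\infty }$ and cyclicity of the trace. The only point requiring a little care---and the reason~\eqref{eq:86} is phrased in terms of $U_{\infty }^{*}H_0U_{\infty }$ rather than $H_0$ in the first place---is domain-theoretic: $U_{\infty }$ does not preserve $Q(H_0)$, so all manipulations with the unbounded operators $H_0$ and $N_1^{k}$ must be understood through their positivity and the corresponding monotone limits ($H_0(1+\delta H_0)^{-1}\uparrow H_0$, and likewise for $N_1^{k}$), which also legitimizes interchanging sum and limit in the spectral decomposition; the possibly-infinite terms are then handled consistently. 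Alternatively, one may simply invoke the classical-limit analysis of the cutoff Nelson model in \citep{Ammari:2014aa} with $H_0=0$ and $\tfrac{\chi }{\sqrt{\omega }}=g_{\infty }$, of which this lemma is the propagation-of-moments statement.
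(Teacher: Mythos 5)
Your argument is correct, and for this particular statement it is more self-contained than what the paper itself offers: the paper gives no dedicated proof of Lemma~\ref{lemma:18}, but simply asserts, in the sentence preceding it, that the statements about the dressing dynamics follow from the cutoff analysis of \citep{Ammari:2014aa} with $H_0=0$ and $\tfrac{\chi}{\sqrt{\omega}}=g_{\infty}$ --- i.e.\ exactly the route you mention only as an alternative in your last sentence. Your direct verification is cleaner here: you correctly identify $\hat{\varrho}_{\varepsilon}(-1)=U_{\infty}\varrho_{\varepsilon}U_{\infty}^{*}$ with $U_{\infty}=U_{\infty}(-1)=e^{\frac{i}{\varepsilon}T_{\infty}}$ (the sign/direction matters, since it is what makes the conjugated energy coincide with the $U_{\infty}^{*}H_0U_{\infty}$ of \eqref{eq:86}); the resolvent commutation $[T_{\infty},N_1]=0$ of \eqref{eq:6} then preserves the $N_1$-moments exactly, giving \eqref{eq:84} with the same $\mathfrak{C}$, and because \eqref{eq:86} is phrased directly in terms of the dressed free energy, $\Tr[\hat{\varrho}_{\varepsilon}(-1)H_0]=\Tr[\varrho_{\varepsilon}U_{\infty}^{*}H_0U_{\infty}]\leq C$ requires no propagation estimate at all, whereas deducing the lemma from \citep{Ammari:2014aa} implicitly invokes Gronwall-type number/energy propagation bounds for the auxiliary dynamics generated by $T_{\infty}$. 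Your treatment of the unbounded operators (monotone regularizations, $(U_{\infty}^{*}H_0U_{\infty})^{1/2}=U_{\infty}^{*}H_0^{1/2}U_{\infty}$, extended-real-valued traces) is sound. Two minor caveats. First, the asserted uniformity in $\sigma_0\in\mathds{R}_+$ is only as good as Assumption \eqref{eq:86}: its constant $C$ is attached to the dressing $U_{\infty}$, hence to a given $\sigma_0$, so the conclusion should be read for the same $\sigma_0$ entering \eqref{eq:86} (or for each $\sigma_0$ for which \eqref{eq:86} holds, with a possibly $\sigma_0$-dependent constant); this matches the paper's own loose phrasing and is not a defect of your argument. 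Second, the side remark that $U_{\infty}$ fails to preserve $Q(H_0)$ is not needed and is not established anywhere in the paper; it is safer to motivate the appearance of $U_{\infty}^{*}H_0U_{\infty}$ in \eqref{eq:86} simply as the natural energy observable for the undressed states.
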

\begin{proposition}\label{prop:15}
  Let
  $\mathbf{D}_{g_{\infty }}:\mathds{R}\times Q(-\Delta +V)\oplus
  \mathcal{F}H^{1/2}\to Q(-\Delta +V)\oplus \mathcal{F}H^{1/2}$
  be the classical dressing transformation. Let
  $(\varrho_{\varepsilon } )_{\varepsilon \in (0,\bar{\varepsilon} )}$
  be a family of normal quantum states on $\mathcal{H}$ that satisfies
  Assumption~\eqref{eq:84} and Assumption~\eqref{eq:86}
  or~\eqref{eq:arho}. Then
  $\mathcal{M}\bigl(\varrho _{\varepsilon }, \varepsilon \in
  (0,\bar{\varepsilon} )\bigr)\neq \emptyset $;
  and for any $\sigma _0\in \mathds{R}_+$ and $\theta \in \mathds{R}$
  ,
  \begin{equation}
    \label{eq:116}
    \mathcal{M}\Bigl(\hat{\varrho} _{\varepsilon}(\theta ), \varepsilon \in(0,\bar{\varepsilon} )\Bigr)=\Bigl\{\mathbf{D}_{g_{\infty }}(\theta )_{\#}\mu \, ,\, \mu \in \mathcal{M}\bigl(\varrho _{\varepsilon },\varepsilon \in (0,\bar{\varepsilon} )\bigr)\Bigr\}\; .
  \end{equation}
  Furthermore, let
  $(\varepsilon _k)_{k\in \mathds{N}}\subset (0,\bar{\varepsilon} )$
  be a sequence such that
  $\lim_{k\to \infty }{\varepsilon _k}=0$. Then the following
  statement is true:
  \begin{equation}
    \label{eq:117}
    \varrho _{\varepsilon _k}\rightarrow \mu \Leftrightarrow\biggl(\forall \theta \in \mathds{R}\; ,\; \forall \sigma _0\in \mathds{R}_+\; ,\; \hat{\varrho }_{\varepsilon _k}(\theta )\rightarrow\mathbf{D}_{g_{\infty }}(\theta )_{\#}\mu \biggr)\; .
  \end{equation}
\end{proposition}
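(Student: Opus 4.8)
The plan is to reduce the whole statement to the classical limit of the Nelson model \emph{with} ultraviolet cutoff, already established by the authors in \citep{Ammari:2014aa}. The starting point is the structural identity noted above: $T_{\infty}=\bigl(\mathscr{D}_{g_{\infty}}\bigr)^{Wick}$ is precisely the cutoff interaction $H_I(\sigma)$ of Section~\ref{sec:nelson-hamiltonian} with $\tfrac{\chi_{\sigma}}{\sqrt{2\omega}}$ replaced by $g_{\infty}$, and $g_{\infty}\in L^2(\mathds{R}^3)\cap\mathcal{F}H^{1/2}(\mathds{R}^3)$ for any admissible $\chi_{\sigma_0}$ (as remarked after Lemma~\ref{lemma:8}). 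Hence $e^{-i\frac{\theta}{\varepsilon}T_{\infty}}$ is, up to the cosmetic replacements $d=3$, $H_0=0$ and $\tfrac{\chi}{\sqrt{\omega}}=g_{\infty}$, the quantum evolution of the cutoff model, whose classical counterpart is the Hamiltonian flow generated by its Wick symbol $\mathscr{D}_{g_{\infty}}$, that is exactly $\mathbf{D}_{g_{\infty}}(\theta)$ by Definition~\ref{def:4}.

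First I would dispose of the non-emptiness of $\mathcal{M}\bigl(\varrho_{\varepsilon},\varepsilon\in(0,\bar{\varepsilon})\bigr)$. Under Assumption~\eqref{eq:86} one has $\Tr[\varrho_{\varepsilon}N]\leq C$ directly, dropping the nonnegative term $U_{\infty}^{*}H_0U_{\infty}$; under Assumption~\eqref{eq:arho}, the bound $\Tr[\varrho_{\varepsilon}(N_1+H_0)]\leq C$ together with $N_2\leq m_0^{-1}d\Gamma(\omega)\leq m_0^{-1}H_0$ (valid since $m_0>0$) again gives $\Tr[\varrho_{\varepsilon}N]\leq C'$. A uniform bound on a positive power of the number operator is exactly what ensures the existence of Wigner measures (\citep[Theorem~6.2]{ammari:nier:2008}), so $\mathcal{M}\bigl(\varrho_{\varepsilon},\varepsilon\in(0,\bar{\varepsilon})\bigr)\neq\emptyset$. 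I would also record here that $\mathbf{D}_{g_{\infty}}(\theta)$ is continuous on $L^2\oplus L^2$ in the explicit form of Lemma~\ref{lemma:8}, since $\lVert A_{g_{\infty}}\rVert_{\infty}\leq 2\lVert g_{\infty}\rVert_2\lVert\alpha\rVert_2$ and $\lVert g_{\infty}F(\lvert u\rvert^2)\rVert_2\leq\lVert g_{\infty}\rVert_2\lVert u\rVert_2^2$; hence the push-forward $\mathbf{D}_{g_{\infty}}(\theta)_{\#}\mu$ is well defined for every $\mu\in\mathfrak{P}(L^2\oplus L^2)$.

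Next comes the core assertion~\eqref{eq:116}. Here I would invoke the Bohr correspondence theorem of \citep{Ammari:2014aa} for the cutoff Nelson dynamics with the substitutions $d=3$, $H_0=0$, $\tfrac{\chi}{\sqrt{\omega}}=g_{\infty}$: its hypotheses are met, since Assumption~\eqref{eq:84} supplies the required control on the nucleon number, while the $L^2$- and $\omega^{s}$-weighted norms of $g_{\infty}$ entering the estimates ($s\in\{-\tfrac12,-\tfrac14,0,\tfrac12\}$) are all finite because $g_{\infty}\in\mathcal{F}H^{1/2}\cap L^2$. This yields simultaneously $\mathcal{M}\bigl(\hat{\varrho}_{\varepsilon}(\theta),\varepsilon\in(0,\bar{\varepsilon})\bigr)=\bigl\{\mathbf{D}_{g_{\infty}}(\theta)_{\#}\mu:\mu\in\mathcal{M}(\varrho_{\varepsilon},\varepsilon\in(0,\bar{\varepsilon}))\bigr\}$ for every $\sigma_0$ and $\theta$. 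The equivalence~\eqref{eq:117} then follows by soft arguments: the implication $\Leftarrow$ is immediate on taking $\theta=0$ (any $\sigma_0$), since $\hat{\varrho}_{\varepsilon}(0)=\varrho_{\varepsilon}$ and $\mathbf{D}_{g_{\infty}}(0)=\mathrm{id}$; for $\Rightarrow$, fix $\theta$ and note that any subsequence of $\bigl(\hat{\varrho}_{\varepsilon_k}(\theta)\bigr)_k$ admits a further subsequence converging to some Wigner measure, which by~\eqref{eq:116} applied along that subsequence must be $\mathbf{D}_{g_{\infty}}(\theta)_{\#}\nu$ with $\nu$ a Wigner measure of the corresponding subsequence of $(\varrho_{\varepsilon_k})_k$; but $\varrho_{\varepsilon_k}\to\mu$ forces $\nu=\mu$, so the full sequence converges to $\mathbf{D}_{g_{\infty}}(\theta)_{\#}\mu$.

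The only genuinely delicate point is the legitimacy of transplanting the cutoff-model estimates of \citep{Ammari:2014aa} to $g_{\infty}$: unlike $\chi_{\sigma}/\sqrt{2\omega}$, the function $g_{\infty}$ is not compactly supported in $k$ and decays only polynomially. This is not a real obstacle, however, since those estimates depend on the coupling only through the weighted $L^2$-norms just mentioned, all finite by the regularity $g_{\infty}\in\mathcal{F}H^{1/2}\cap L^2$; the argument therefore either applies as stated or needs only the routine re-derivation of those bounds with $g_{\infty}$ in place of the cutoff coupling. One should also use Lemma~\ref{lemma:18} (and the preservation of the admissible class of states at intermediate times by the cutoff dynamics, as in \citep{Ammari:2014aa}) to guarantee that the Duhamel/integral expansion underlying~\eqref{eq:116} is justified throughout $[0,\theta]$.
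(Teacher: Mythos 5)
Your proposal is correct and follows essentially the same route as the paper: the paper proves Proposition~\ref{prop:15} precisely by observing that $T_{\infty}=\bigl(\mathscr{D}_{g_{\infty}}\bigr)^{Wick}$ is the cutoff interaction $H_I(\sigma)$ with $\tfrac{\chi_{\sigma}}{\sqrt{2\omega}}$ replaced by $g_{\infty}\in L^2\cap\mathcal{F}H^{1/2}$, and then invoking the Bohr correspondence result of \citep{Ammari:2014aa} with $d=3$, $H_0=0$, $\tfrac{\chi}{\sqrt{\omega}}=g_{\infty}$. Your additional details (non-emptiness of the Wigner measure set from the number bounds, well-definedness of the push-forward, and the subsequence argument for \eqref{eq:117}) are routine verifications the paper leaves implicit and are carried out correctly.
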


\subsection{Overview of the results: linking the dressed and undressed
  systems.}
\label{sec:link-dress-dynam}

Since as discussed in the previous subsection we can treat the
dressing as a dynamical transformation with its own ``time'' parameter
$\theta $; we are able to link the classical limit of the dressed and
undressed quantum dynamics via the classical dressing. In this way we
are able to recover the expected classical S-KG dynamics for the
undressed dynamics, and finally prove Theorem~\ref{thm:2}.

First of all, we  put together the results proved from
Section~\ref{sec:integral-formula} to
Section~\ref{sec:uniq-solut-transp} on the renormalized dressed
dynamics and remove the Assumption \eqref{eq:130} with the help of an
approximation argument worked out in \cite{MR2802894}.  This is done in the following theorem.
\begin{thm}\label{thm:3}
  Let
  $\hat{\mathbf{E}}:\mathds{R}\times Q(-\Delta +V)\oplus
  \mathcal{F}H^{1/2}\to Q(-\Delta +V)\oplus \mathcal{F}H^{1/2}$
  be the dressed S-KG flow associated to $\hat{\mathscr{E}}$. Let
  $(\varrho _{\varepsilon })_{\varepsilon \in (0,\bar{\varepsilon} )}$
  be a family of normal states in $\mathcal{H}$ that satisfies
  Assumptions~\eqref{eq:84} and~\eqref{eq:arho}. Then for any 
  $\sigma _0\geq2K(\mathfrak{C}+1+\bar\varepsilon)$ the dynamics
  $e^{- i\frac{t}{\varepsilon }\hat{H}_{ren}(\sigma _0)}$ is
  non-trivial  on every relevant sector with fixed nucleons of the state $\varrho _{\varepsilon }$;   $\mathcal{M}\bigl(\varrho _{\varepsilon }, \varepsilon \in
  (0,\bar{\varepsilon} )\bigr)\neq \emptyset $
  ; and for any $t\in \mathds{R}$
  \begin{equation}
    \label{eq:118}
    \mathcal{M}\Bigl(e^{- i\frac{t}{\varepsilon } \hat{H}_{ren}(\sigma _0)}\varrho _{\varepsilon}e^{i\frac{t}{\varepsilon }\hat{H}_{ren}(\sigma _0)}, \varepsilon \in(0,\bar{\varepsilon} )\Bigr)=\Bigl\{\hat{\mathbf{E}}(t)_{\#}\mu \, ,\, \mu \in \mathcal{M}\bigl(\varrho _{\varepsilon },\varepsilon \in (0,\bar{\varepsilon} )\bigr)\Bigr\}\; .
  \end{equation}
  Furthermore, let
  $(\varepsilon _k)_{k\in \mathds{N}}\subset (0,\bar{\varepsilon} )$
  be a sequence such that $\lim_{k\to \infty }\varepsilon _k= 0$.
  Then the following statement is true:
  \begin{equation}
\label{eq:119}
\varrho_{\varepsilon _k}\rightarrow \mu \Leftrightarrow\biggl(\forall t\in \mathds{R}\;,\; e^{- i\frac{t}{\varepsilon_k} \hat{H}_{ren}(\sigma _0)}\varrho_{\varepsilon_k}e^{i\frac{t}{\varepsilon_k } \hat{H}_{ren}(\sigma_0)}\rightarrow \hat{\mathbf{E}}(t)_{\#}\mu \biggr)\; .
  \end{equation}
\end{thm}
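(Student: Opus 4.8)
The plan is to first establish \eqref{eq:118}--\eqref{eq:119} under the extra regularity Assumption~\eqref{eq:130}, and then to dispose of that assumption by an approximation procedure in the spirit of \cite{MR2802894}. The non-triviality statement and the fact that $\mathcal{M}(\varrho_\varepsilon,\varepsilon\in(0,\bar\varepsilon))\neq\emptyset$ require no work: by Lemma~\ref{lemma:10} the states $\varrho_\varepsilon$ are supported in $\bigoplus_{n\leq[\mathfrak{C}/\varepsilon]}\mathcal{H}_n$, and the choice $\sigma_0\geq 2K(\mathfrak{C}+1+\bar\varepsilon)$ forces $[\mathfrak{C}/\varepsilon]\leq\mathfrak{N}(\varepsilon,\sigma_0)$ for every $\varepsilon\in(0,\bar\varepsilon)$ (with $\mathfrak{N}$ as in Definition~\ref{def:7} and $K$ as in Theorem~\ref{thm:1}), so $\hat H_{ren}(\sigma_0)$ agrees with $\hat H_\infty^{(n)}$ on every sector carrying mass of $\varrho_\varepsilon$; non-emptiness of $\mathcal{M}$ then follows from the uniform bound of Assumption~\eqref{eq:arho} exactly as in Lemma~\ref{lemma:14}(i).

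For the case with Assumption~\eqref{eq:130} I would simply combine Propositions~\ref{prop:14} and~\ref{prop:16}: given a sequence $(\varepsilon_k)_k\to 0$, Proposition~\ref{prop:14} yields a subsequence along which $\varrho_{\varepsilon_{k_\iota}}(t)\to\mu_t$ and $\tilde\varrho_{\varepsilon_{k_\iota}}(t)\to\tilde\mu_t=\mathbf{E}_0(-t)_\#\mu_t$ for all $t$, with $\tilde\mu_t$ solving the transport equation~\eqref{eq:114}, and Proposition~\ref{prop:16} then identifies $\tilde\mu_t=\mathbf{E}_0(-t)_\#\hat{\mathbf{E}}(t)_\#\mu_0$, i.e.\ $\mu_t=\hat{\mathbf{E}}(t)_\#\mu_0$. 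If in addition $\varrho_{\varepsilon_k}\to\mu$ then $\mu_0=\mu$, and since every subsequence admits a further subsequence producing the same limit $\hat{\mathbf{E}}(t)_\#\mu$ at each fixed $t$, the full sequence converges; together with Lemma~\ref{lemma:14}(i) this gives \eqref{eq:118}--\eqref{eq:119} under~\eqref{eq:130} (the implication ``$\Leftarrow$'' being trivial since $\hat{\mathbf{E}}(0)=\mathrm{Id}$).

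To remove Assumption~\eqref{eq:130} I would regularize. Fix $b>0$ with $\hat H_{ren}(\sigma_0)+b\geq 0$, put $P_R=\mathds{1}_{[0,R]}(\hat H_{ren}(\sigma_0)+b)$ and $\varrho_\varepsilon^R=P_R\varrho_\varepsilon P_R/\Tr[P_R\varrho_\varepsilon P_R]$. Since $P_R$ commutes with $N_1$ and with $\hat H_{ren}(\sigma_0)$, the family $(\varrho_\varepsilon^R)_\varepsilon$ still obeys~\eqref{eq:84}; the KLMN bound~\eqref{eq:31} — which for the chosen $\sigma_0$ holds with $\varepsilon$- and $n$-independent constants $a<1$, $b$ on $\bigoplus_{n\leq[\mathfrak{C}/\varepsilon]}\mathcal{H}_n$, whence $H_0\leq(1-a)^{-1}(\hat H_{ren}(\sigma_0)+b)$ there — yields $\Tr[\varrho_\varepsilon^R(N_1+H_0)]\leq C$ uniformly in $\varepsilon$ and in large $R$, so~\eqref{eq:arho} holds uniformly, while $(\hat H_{ren}(\sigma_0)+b)^2P_R\leq R^2P_R$ gives~\eqref{eq:130}. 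Furthermore $\Tr[\varrho_\varepsilon\,\mathds{1}_{(R,\infty)}(\hat H_{ren}(\sigma_0)+b)]\leq R^{-1}\Tr[\varrho_\varepsilon(\hat H_{ren}(\sigma_0)+b)]\leq CR^{-1}$ gives $\Tr[P_R\varrho_\varepsilon P_R]\to 1$ and $\lVert\varrho_\varepsilon^R-\varrho_\varepsilon\rVert_{\mathcal{L}^1}\leq CR^{-1/2}$, both uniformly in $\varepsilon$. Then, for $(\varepsilon_k)_k\to 0$ with $\varrho_{\varepsilon_k}\to\mu$, Proposition~\ref{prop:13} (which needs only~\eqref{eq:84} and~\eqref{eq:arho}) gives along any subsequence a further subsequence $(\varepsilon_{k_\iota})$ with $\varrho_{\varepsilon_{k_\iota}}(t)\to\nu_t$ for all $t$ and $\nu_0=\mu$; a diagonal extraction over $R\in\mathds{N}$ lets me also assume $\varrho_{\varepsilon_{k_\iota}}^R\to\mu^R$, and the previous step applied to $(\varrho_\varepsilon^R)_\varepsilon$ gives $\varrho_{\varepsilon_{k_\iota}}^R(t)\to\hat{\mathbf{E}}(t)_\#\mu^R$. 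Because $e^{-i\frac{t}{\varepsilon}\hat H_{ren}(\sigma_0)}$ is unitary the $\mathcal{L}^1$-estimate propagates in time, so $\lvert\Tr[\varrho_{\varepsilon_{k_\iota}}^R(t)W(\xi)]-\Tr[\varrho_{\varepsilon_{k_\iota}}(t)W(\xi)]\rvert\leq CR^{-1/2}$ uniformly; sending $\iota\to\infty$ and then $R\to\infty$ — using $\mu^R\to\mu$ weakly narrowly (convergence of Fourier transforms plus tightness from the uniform~\eqref{eq:arho} bound) and the weak narrow continuity of $z\mapsto e^{i\sqrt 2\Re\langle\xi,\hat{\mathbf{E}}(t)z\rangle}$ on the common bounded energy support of the $\mu^R$ — I would conclude $\widehat{\nu_t}(\xi)=\widehat{\hat{\mathbf{E}}(t)_\#\mu}(\xi)$ for all $\xi$, hence $\nu_t=\hat{\mathbf{E}}(t)_\#\mu$; the standard subsequence argument then promotes this to the full sequence, and \eqref{eq:118}--\eqref{eq:119} follow in general.

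The main obstacle I anticipate is precisely this last step: securing the bounds~\eqref{eq:84}--\eqref{eq:arho} for $(\varrho_\varepsilon^R)_\varepsilon$ uniformly in $\varepsilon$ \emph{and} in $R$, and passing to the limit in the nonlinear push-forward $\hat{\mathbf{E}}(t)_\#\mu^R$, for which the weak continuity of the dressed S-KG flow $\hat{\mathbf{E}}(t)$ on bounded energy sets — established, via Theorem~\ref{prop:9} and the optimal-transport machinery, in \cite{MR2802894,2011arXiv1111.5918A} — is indispensable. Everything else is either already in hand (Propositions~\ref{prop:13},~\ref{prop:14},~\ref{prop:16}, Lemmas~\ref{lemma:13},~\ref{lemma:14}) or bookkeeping.
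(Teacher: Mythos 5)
Your plan up to and including the case where Assumption~\eqref{eq:130} holds matches the paper's: combine Proposition~\ref{prop:14} (subsequence extraction, transport equation) with Proposition~\ref{prop:16} (uniqueness) and close with the usual sub-subsequence argument. Your regularization $\varrho_\varepsilon^R=P_R\varrho_\varepsilon P_R/\Tr[P_R\varrho_\varepsilon P_R]$ is also the right device (the paper uses a smooth cut-off $\chi_R(\hat H_{ren}(\sigma_0))$ rather than a sharp spectral projection, an inessential difference), and your verification that $(\varrho_\varepsilon^R)_\varepsilon$ inherits \eqref{eq:84}, \eqref{eq:arho} uniformly and gains \eqref{eq:130} is sound — $P_R$ commutes with $N_1$ by \eqref{eq:6}, $H_0\lesssim\hat H_{ren}(\sigma_0)+b$ on the relevant sectors by the KLMN bound, and $\Tr[\varrho_\varepsilon P_R(\hat H_{ren}(\sigma_0)+b)P_R]\leq\Tr[\varrho_\varepsilon(\hat H_{ren}(\sigma_0)+b)]$.

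The gap is exactly where you flag it: passing to the limit $R\to\infty$ in the nonlinear push-forward $\hat{\mathbf{E}}(t)_{\#}\mu^R\to\hat{\mathbf{E}}(t)_{\#}\mu$. You propose to use $\mu^R\to\mu$ weakly narrowly together with ``weak narrow continuity of $z\mapsto e^{i\sqrt 2\Re\langle\xi,\hat{\mathbf{E}}(t)z\rangle}$ on bounded energy sets.'' This weak sequential continuity of the nonlinear dressed S-KG flow is not established anywhere in the paper (Theorem~\ref{prop:9} gives strong well-posedness in $Q(-\Delta+V)\oplus\mathcal{F}H^{1/2}$, not continuity for the weak topology), and for semilinear dispersive equations such continuity is genuinely delicate and often false; your argument does not close without it.

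The paper circumvents this entirely. The trace-class estimate $\lVert\varrho_\varepsilon-\varrho_{\varepsilon,R}\rVert_{\mathcal{L}^1}\leq\nu(R)$, uniform in $\varepsilon$ and propagated in time by unitarity, is upgraded via \cite[Proposition~2.10]{MR2802894} to a \emph{total variation} bound on the corresponding Wigner measures: $\int|\mu_t-\hat{\mathbf{E}}(t)_{\#}\mu_{0,R}|\leq\nu(R)$ and $\int|\mu_0-\mu_{0,R}|\leq\nu(R)$. Since $\hat{\mathbf{E}}(t)$ is a bimeasurable bijection of the energy space carrying these measures, push-forward by $\hat{\mathbf{E}}(t)$ is a total-variation isometry, so $\int|\hat{\mathbf{E}}(t)_{\#}\mu_{0,R}-\hat{\mathbf{E}}(t)_{\#}\mu_0|=\int|\mu_{0,R}-\mu_0|\leq\nu(R)$; the triangle inequality then gives $\int|\mu_t-\hat{\mathbf{E}}(t)_{\#}\mu_0|\leq 2\nu(R)\to 0$. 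No continuity of the flow enters. Replace your weak-continuity step by this total-variation argument and the proof closes; as written, it does not.
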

\begin{proof}
Thanks to the argument briefly sketched below, we no longer need Assumption \eqref{eq:130}.
 Let $\chi\in C^{\infty}_{0}(\mathds{R})$ such that $0\leq
\chi\leq 1$\,, $\chi\equiv 1$ in a neighbourhood of $0$\, and
$\chi_R(x)=\chi(\frac x R)$. The approximation
$$
\varrho_{\varepsilon,R}=\frac{\chi_{R}(\hat H_{ren}(\sigma_{0}))\;
\varrho_{\varepsilon} \;\chi_{R}(\hat H_{ren}(\sigma_{0}))}{\Tr\left[\chi_{R}(\hat H_{ren}(\sigma_{0}))\;
\varrho_{\varepsilon}\; \chi_{R}(\hat H_{ren}(\sigma_{0}))
\right]}\;,
$$
satisfies the Assumptions~\eqref{eq:84}, \eqref{eq:arho}, \eqref{eq:130} and the property 
\begin{equation*}
||e^{-i\frac{t}{\varepsilon} \hat H_{ren}(\sigma_{0})} \left(
\varrho_{\varepsilon}-\varrho_{\varepsilon,R} \right) e^{i\frac{t}{\varepsilon}\hat H_{ren}(\sigma_{0})}
||_{\mathcal{L}^1(\mathscr{H})}^{}=||\varrho_{\varepsilon}-
\varrho_{\varepsilon,R}||_{\mathcal{L}^{1}(\mathcal{H})} \leq \nu(R)\,,
\end{equation*}
where  $\nu(R)$ is independent of $\varepsilon$ and $\lim_{R\to \infty}\nu(R)=0$\,. The last claim follows by Assumption \eqref{eq:arho}, Theorem \ref{thm:1} and
Definition \ref{def:7}.  Up to
extracting a sequence which a priori depends on $R$ and $t$, we can suppose
that $\mathscr{M}(\varrho_{\varepsilon_{n},R},
n\in\mathds{N})=\left\{\mu_{0,R}\right\}\,$, 
$\mathscr{M}(\varrho_{\varepsilon_{n}},
n\in\mathds{N})=\left\{\mu_{0}\right\}\,$ and $\mathcal{M}(\varrho_{\varepsilon_{n}}(t), n\in\mathds{N})=\left\{\mu_t\right\}$.  In particular, applying Proposition \ref{prop:16} we obtain
$$
\mathcal{M}(e^{-i\frac{t}{\varepsilon_{n}} \hat H_{ren}(\sigma_{0})}\varrho_{\varepsilon_{n},R} 
e^{i\frac{t}{\varepsilon_{n}} \hat H_{ren}(\sigma_{0})}, n\in\mathds{N})=\left\{\hat{\mathbf{E}}(t)_{\#}\mu_{0,R}\right\}\,.
$$
A general estimate proved in \cite[Proposition~2.10 ]{MR2802894} compares 
the total variation distance of Wigner (probability) measures with the trace distance of their
associated quantum states. In our case, it implies  
\begin{eqnarray*}
  &&\int_{L^2\oplus L^2}|\mu_{t}-\hat{\mathbf{E}}(t)_{\#}\mu_{0,R}|\leq
\liminf_{n\to\infty}
  ||e^{-i\frac{t}{\varepsilon_{n}} \hat H_{ren}(\sigma_{0})}
  \left(\varrho_{\varepsilon_n}-\varrho_{\varepsilon_n,R}\right)e^{i\frac{t}{\varepsilon_{n}} \hat H_{ren}(\sigma_{0})}
  ||_{\mathcal{L}^{1}(\mathcal{H})}\leq\nu(R)\,, \\
  &&\int_{L^2\oplus L^2}|\mu_{0}-\mu_{0,R}|\leq \liminf_{n\to\infty}
  ||\varrho_{\varepsilon_n}-\varrho_{\varepsilon_n,R}
  ||_{\mathcal{L}^{1}(\mathcal{H})}\leq\nu(R)\,,
\end{eqnarray*}
where the left hand side denotes the total variation of the signed measures
$\mu_t-\hat{\mathbf{E}}(t)_{\#}\mu_{0,R}$ and $\mu_0-\mu_{0,R}$ respectively. Hence by the triangle inequality, we obtain
$$
\int_{L^2\oplus L^2}|\mu_t-\hat{\mathbf{E}}(t)_{\#}\mu_{0}|\leq \int_{L^2\oplus L^2}|\mu_t-\hat{\mathbf{E}}(t)_{\#}\mu_{0,R}|+\int_{L^2\oplus L^2} |\mu_{0,R}-\mu_{0}| \leq 2\nu(R)\,.
$$
This proves that 
$$\left\{\hat{\mathbf{E}}(t)_{\#}\mu_{0}\right\} \subset \mathcal{M}(e^{-i\frac{t}{\varepsilon_{n}} \hat H_{ren}(\sigma_{0})}\varrho_{\varepsilon_{n}}
e^{i\frac{t}{\varepsilon_{n}} \hat H_{ren}(\sigma_{0})}, n\in\mathds{N})\,,$$\, 
By reversing time and utilizing the analogue inclusion above, we prove  \eqref{eq:119}.
\end{proof}

\bigskip
{\bf Proof of Theorem \ref{thm:2}}: Observe that using the definition of the renormalized dressed
evolution $\varrho_{\varepsilon } (t)$ (Definition~\ref{def:10}) and
the definition of the ``dressing dynamics''
$\hat{\varrho}_{\varepsilon } (\theta )$ (Equation~\eqref{eq:115}), we
obtain:
\begin{equation*}
  \label{eq:120}
  e^{-i \frac{t}{\varepsilon }H_{ren}(\sigma _0)}\,\varrho _{\varepsilon }\,e^{i \frac{t}{\varepsilon }H_{ren}(\sigma _0)}=e^{-\frac{i}{\varepsilon }T_{\infty }}e^{-i \frac{t}{\varepsilon }\hat{H}_{ren}(\sigma _0)}e^{\frac{i}{\varepsilon }T_{\infty }}\,\varrho _{\varepsilon }\,e^{-\frac{i}{\varepsilon }T_{\infty }}e^{i \frac{t}{\varepsilon }\hat{H}_{ren}(\sigma _0)}e^{\frac{i}{\varepsilon }T_{\infty }}=\Bigl(\bigl(\hat{\varrho}_{\varepsilon }(-1) \bigr)(t)\Bigr)\hat{\phantom{\bigr)}} (1)\; .
\end{equation*}
Let
$(\varrho _{\varepsilon })_{\varepsilon \in (0,\bar{\varepsilon} )}$
be a family of normal states in $\mathcal{H}$ that satisfies
Assumptions~\eqref{eq:84} and~\eqref{eq:86}. In addition, as usual, let
$(\varepsilon _k)_{k\in \mathds{N}}\subset (0,\bar{\varepsilon} )$ be
a sequence such that $\lim_{k\to \infty }\varepsilon _k= 0$. Then we
can use Lemma~\ref{lemma:18}, Proposition~\ref{prop:15} and
Theorem~\ref{thm:3} to prove the following statement:
\begin{equation*}
  \label{eq:121}
  \begin{split}
  \varrho_{\varepsilon _k}\rightarrow \mu \Leftrightarrow\biggl(\forall t\in \mathds{R}\;,\; e^{- i\frac{t}{\varepsilon_k} H_{ren}(\sigma _0)}\varrho_{\varepsilon_k}e^{i\frac{t}{\varepsilon_k } H_{ren}(\sigma_0)}\rightarrow \mathbf{D}_{g_{\infty }}(1)_{\#}\hat{\mathbf{E}}(t)_{\#}\mathbf{D}_{g_{\infty }}(-1)_{\#}\mu\\=[\mathbf{D}_{g_{\infty }}(1)\circ\hat{\mathbf{E}}(t)\circ\mathbf{D}_{g_{\infty }}(-1)]_{\#}\mu \biggr)\; .
  \end{split}
\end{equation*}
Therefore Theorem~\ref{thm:2} is proved, since by
Equation~\eqref{eq:69} of Theorem~\ref{prop:9},
$\mathbf{D}_{g_{\infty
  }}(1)\circ\hat{\mathbf{E}}(t)\circ\mathbf{D}_{g_{\infty
  }}(-1)=\mathbf{E}(t)$.
To be more precise, we use the following chain of inferences:
\begin{equation*}
  \begin{split}
    \Bigl(\varrho_{\varepsilon _k}\rightarrow \mu\Bigr)\overset{\substack{\text{Lem.~\ref{lemma:18}}\\\text{Prop.~\ref{prop:15}}}}{\Longrightarrow}\biggl(\forall \sigma _0\in \mathds{R}_+\;,\; \hat{\varrho}_{\varepsilon _k}(-1)\rightarrow \mathbf{D}_{g_{\infty }}(-1)_{\#}\mu \; \text{and $\bigl(\hat{\varrho}_{\varepsilon_k }(-1)\bigr)_{k\in \mathds{N}} $}\\\text{satisfies Ass.~\eqref{eq:84},~\eqref{eq:arho}}\biggr)\\\overset{\substack{\text{Thm.~\ref{thm:3}}\\\text{Lem.~\ref{lemma:13}}}}{\Longrightarrow}\biggl(\exists  \sigma _0\in \mathds{R}_+\;,\; \forall t\in \mathds{R}\; ,\; \bigl(\hat{\varrho}_{\varepsilon _k}(-1)\bigr)(t)\rightarrow \hat{\mathbf{E}}(t)_{\#}\mathbf{D}_{g_{\infty }}(-1)_{\#}\mu \; \\\text{and $\Bigl(\bigl(\hat{\varrho}_{\varepsilon_k }(-1)\bigr)(t)\Bigr)_{k\in \mathds{N}} $ satisfies Ass.~\eqref{eq:84},~\eqref{eq:arho}}\biggr)\\\overset{\text{Prop.~\ref{prop:15}}}{\Longrightarrow}\biggl(\forall t\in \mathds{R}\;,\; \Bigl(\bigl(\hat{\varrho}_{\varepsilon _k}(-1)\bigr)(t)\Bigr) \hat{\phantom{\bigr)}} (1)\rightarrow \mathbf{D}_{g_{\infty }}(1)_{\#}\hat{\mathbf{E}}(t)_{\#}\mathbf{D}_{g_{\infty }}(-1)_{\#}\mu \biggr)\\
\overset{\text{Thm.~\ref{prop:9}}}{\Longrightarrow}\biggl(\forall t\in \mathds{R}\;,\; e^{- i\frac{t}{\varepsilon} H_{ren}(\sigma _0)}\varrho_{\varepsilon_k}e^{i\frac{t}{\varepsilon } H_{ren}(\sigma_0)}\rightarrow \mathbf{E}(t)_{\#}\mu\biggr)\; .
  \end{split}
\end{equation*}
The inference in the opposite sense is trivial.

As it has become evident with the above discussion, we do not prove
Theorem~\ref{thm:2} directly; and it would be very difficult to do so,
due to the fact that we do not know the explicit form of the generator
$H_{ren}(\sigma _0)$ of the undressed dynamics. We know instead how
the dressed generator $\hat{H}_{ren}(\sigma _0)$ acts as a quadratic
form, and that is sufficient to characterize its dynamics in the
classical limit, and obtain the results of Theorem~\ref{thm:3}. The
properties of the dressing transformation and of its classical
counterpart are then crucial to translate the results on the dressed
dynamics to the corresponding results on the undressed one.

\appendix
\section{Uniform higher-order estimate}
\label{sec:higher order}
We prove in this section a higher-order estimate that bounds the meson
number operator $N_2$ by the dressed Hamiltonian $\hat H_\sigma^{(n)}$ uniformly
with respect to the effective (semiclassical) parameter $\varepsilon$ and the cut-off parameter
$\sigma$. Such type
of estimates rely on the pull-through formula and they are known for the $P(\varphi)_2$ model \cite{Ro2} and for the Nelson model \cite{MR1809881}. However, since the
dependence of the dressed Hamiltonian $\hat H_\sigma^{(n)}$ on  $\varepsilon$ is
somewhat nontrivial, we briefly indicate in this appendix how to obtain an uniform estimate.

\begin{lemma}\label{appA.1}
For any $\varepsilon\in(0,\bar\varepsilon)$ and any $\psi\in D(N_2)\subset\mathcal{H}$,
\begin{equation*}
\bigl\lVert N_2 \psi \bigr\rVert^2=\int_{\mathds{R}^3} \bigl\lVert(N_2+\varepsilon)^{\frac{1}{2}} a(k) \psi \bigr\rVert^2 \;
dk\,.
\end{equation*}
\end{lemma}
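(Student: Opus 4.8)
The plan is to deduce the identity from two standard structural facts about the meson Fock space in our $\varepsilon$-scaled conventions, together with a one-line functional-calculus step. Throughout, $N_2=d\Gamma(1)$ and the annihilation distributions $a(k)$ act trivially on the nucleonic Fock factor, so it suffices to reason on the meson Fock space $\Gamma_s(L^2(\mathds{R}^3))=\bigoplus_{p\geq 0}\bigl(L^2(\mathds{R}^3)\bigr)^{\otimes_s p}$, on whose $p$-th summand $N_2$ is multiplication by $\varepsilon p$ while $a(k)$ maps the $p$-meson sector into the $(p-1)$-meson sector.

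First I would record: (i) the pull-through relation $N_2\,a(k)=a(k)\,(N_2-\varepsilon)$, equivalently $[N_2,a^{*}(k)]=\varepsilon\,a^{*}(k)$, which follows at once from the explicit action of $a^{\#}$ on symmetric tensors and the $\varepsilon$-dependent definition of $d\Gamma$; and (ii) the number formula $\int_{\mathds{R}^3}a^{*}(k)a(k)\,dk=d\Gamma(1)=N_2$, i.e. $\int_{\mathds{R}^3}\lVert a(k)\phi\rVert^2\,dk=\langle\phi,N_2\phi\rangle$ for every $\phi\in D(N_2^{1/2})$, which is checked on the vectors $f^{\otimes p}$ using $a(k)f^{\otimes p}=\sqrt{\varepsilon p}\,f(k)\,f^{\otimes(p-1)}$ and $\lVert f^{\otimes(p-1)}\rVert=\lVert f\rVert^{p-1}$, and then extended by linearity and density over each sector.

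From (i), adding $\varepsilon\,a(k)$ to both sides gives the cleaner intertwining $(N_2+\varepsilon)\,a(k)=a(k)\,N_2$, whence $(N_2+\varepsilon)^{1/2}a(k)=a(k)\,N_2^{1/2}$ by the spectral theorem. Then for $\psi\in D(N_2)$ we have $N_2^{1/2}\psi\in D(N_2^{1/2})$, and applying (ii) with $\phi=N_2^{1/2}\psi$,
\begin{equation*}
\begin{split}
\int_{\mathds{R}^3}\bigl\lVert(N_2+\varepsilon)^{1/2}a(k)\psi\bigr\rVert^2\,dk&=\int_{\mathds{R}^3}\bigl\lVert a(k)\,N_2^{1/2}\psi\bigr\rVert^2\,dk\\
&=\bigl\langle N_2^{1/2}\psi\,,\,N_2\,N_2^{1/2}\psi\bigr\rangle=\langle\psi,N_2^2\psi\rangle=\lVert N_2\psi\rVert^2 .
\end{split}
\end{equation*}
Equivalently, and entirely elementarily, one may decompose $\psi=\sum_{p}\psi_p$ into meson sectors, note that the $a(k)\psi_p$ lie in pairwise orthogonal sectors on which $N_2+\varepsilon$ is multiplication by $\varepsilon p$, and compute $\int_{\mathds{R}^3}\sum_{p\geq1}\varepsilon p\,\lVert a(k)\psi_p\rVert^2\,dk=\sum_{p\geq1}(\varepsilon p)^2\lVert\psi_p\rVert^2=\lVert N_2\psi\rVert^2$, the interchange of $\sum_p$ and $\int dk$ being legitimate by Tonelli since all terms are nonnegative.

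There is no genuine obstacle: the computation is elementary. The only points that merely require attention are the consistent bookkeeping of the $\varepsilon$-scaling — it is what forces the weight $N_2+\varepsilon$ rather than $N_2+1$, and it enters both the intertwining $(N_2+\varepsilon)a(k)=a(k)N_2$ and the normalisation $\int a^{*}(k)a(k)\,dk=N_2$ — and the care needed to make $a(k)\psi$ and $(N_2+\varepsilon)^{1/2}a(k)\psi$ bona fide square-integrable Fock-space-valued functions of $k$, which is exactly what (ii) provides when applied to $\psi$ and to $N_2^{1/2}\psi$, the latter being where the hypothesis $\psi\in D(N_2)$ is used.
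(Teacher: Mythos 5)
Your proof is correct and follows essentially the same route as the paper: both rest on the identity $\int_{\mathds{R}^3}a^{*}(k)a(k)\,dk=N_2$ applied to the vector $N_2^{1/2}\psi$ together with the intertwining $(N_2+\varepsilon)^{1/2}a(k)=a(k)\,N_2^{1/2}$, the paper merely writing the chain of equalities in the opposite order and deferring the domain bookkeeping to a citation of \cite[Lemma 2.1]{MR1809881}. Your sector-by-sector computation with Tonelli is a self-contained substitute for that citation, but it is the same calculation.
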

\begin{proof}
 Recall that $N_2$ and $a(k)$ depends in the parameter $\varepsilon$ according to the
 notations of Subsection \ref{sec:notat-defin}. Taking care of domain issues as in
 \cite[Lemma 2.1]{MR1809881} one proves
 \begin{eqnarray*}
 \bigl\lVert N_2 \psi\bigr\rVert^2&=& \bigl\langle N_2^{\frac{1}{2}} \psi, \int_{\mathds{R}^3}
  a^*(k) a(k) \,dk\, N_2^{\frac{1}{2}} \psi\bigr\rangle
 = \int_{\mathds{R}^3} \bigl\lVert  a(k) N_2^{\frac{1}{2}} \psi \bigr\rVert^2\,dk
 = \int_{\mathds{R}^3} \bigl\lVert  (N_2+\varepsilon)^{\frac{1}{2}} a(k)  \psi \bigr\rVert^2\,dk\,.
\end{eqnarray*}
\end{proof}
Recall that the interaction term $\hat H_I(\sigma)^{(n)}$ is given by
\eqref{eq:9}. A simple computation yields
\begin{eqnarray*}
  [a(k), \hat H_I(\sigma)^{(n)}]
 &=& \varepsilon^2 \Biggl [ \sum_{j=1}^n  \frac{1}{2\sqrt{(2\pi)^3}} \frac{\chi_\sigma(k)}{\sqrt{\omega(k)}} e^{-i k.x_j}
+\frac{1}{M}  \sum_{j=1}^n  r_\sigma(k) e^{-ik\cdot x_j} a^*(r_\sigma
e^{-ik\cdot x_j}) \\&&\hspace{.5in}+ r_\sigma(k) e^{-ik\cdot x_j} a(r_\sigma
e^{-ik\cdot x_j})- r_\sigma(k) e^{-ik\cdot x_j} D_{x_j}\Biggr ]\,.
\end{eqnarray*}
\begin{lemma} \label{appA2}
For any $\mathfrak{C}>0$ and $\sigma_0\geq 2K(\mathfrak{C}+1+\bar\varepsilon)$ there exist $c,b>0$ such that for any $\varepsilon\in(0,\bar \varepsilon)$, $\sigma_0<\sigma\leq +\infty$ and
$n\in\mathds{N}$ such that $n\varepsilon\leq \mathfrak{C}$, we have
\begin{equation*}
\bigl\lVert (b+\varepsilon \omega(k)+\hat H_\sigma^{(n)})^{-\frac{1}{2}}
[a(k),\hat H_I(\sigma)^{(n)}] (b+ \hat H_\sigma^{(n)})^{-\frac{1}{2}} \bigr\rVert
\leq c \left( |\tfrac{\chi_\sigma(k)}{\sqrt{\omega(k)}}|+ |r_\sigma(k)| \omega(k)^{-1/4}\right)\,.
\end{equation*}
\end{lemma}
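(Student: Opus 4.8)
The plan is to combine the explicit commutator formula for $[a(k),\hat H_I(\sigma)^{(n)}]$ displayed just before the statement with the form bounds on $\hat H_I(\sigma)^{(n)}$ from Theorem~\ref{thm:1} (specifically the relative bound \eqref{eq:31}) and the uniform Fock-space estimates of Lemmas~\ref{lemma:3}, \ref{lemma:4} and~\ref{lemma:5}. First I would record that, since $\sigma_0\geq 2K(\mathfrak{C}+1+\bar\varepsilon)$ and $n\varepsilon\leq\mathfrak{C}$, the bound~\eqref{eq:31} holds with a fixed $a<1$ and $b>0$ \emph{uniformly} in $\varepsilon$, $\sigma$ and such $n$; hence there is $b>0$ so that $b+\hat H_\sigma^{(n)}\geq c_1(H_0^{(n)}+1)$ and also $b+\hat H_\sigma^{(n)}\leq c_2(H_0^{(n)}+1)$ as quadratic forms, with $c_1,c_2>0$ uniform. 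Consequently it suffices to prove the claimed bound with $(b+\varepsilon\omega(k)+\hat H_\sigma^{(n)})^{-1/2}$ replaced by $(\varepsilon\omega(k)+H_0^{(n)}+1)^{-1/2}$ and $(b+\hat H_\sigma^{(n)})^{-1/2}$ replaced by $(H_0^{(n)}+1)^{-1/2}$; the extra $\varepsilon\omega(k)$ on the left is what will absorb the $D_{x_j}$-free scalar term.

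Next I would treat the four types of terms in the commutator separately, each multiplied on the right by $(H_0^{(n)}+1)^{-1/2}$ and on the left by the appropriate resolvent. For the scalar term $\varepsilon^2\sum_j \tfrac{1}{2\sqrt{(2\pi)^3}}\tfrac{\chi_\sigma(k)}{\sqrt{\omega(k)}}e^{-ik\cdot x_j}$: its operator norm on $\mathcal{H}_n$ is $\leq c\,(n\varepsilon)\varepsilon\,|\tfrac{\chi_\sigma(k)}{\sqrt{\omega(k)}}|$, and I would gain the needed decay by noting that $\varepsilon\leq (\varepsilon\omega(k)+1)$ so that $(\varepsilon\omega(k)+H_0^{(n)}+1)^{-1/2}\varepsilon^{1/2}$ and $\varepsilon^{1/2}(H_0^{(n)}+1)^{-1/2}$ are each bounded by constants times... actually more directly: $\varepsilon^2(n\varepsilon\text{-bounded})$ times a scalar, sandwiched between resolvents that control $(H_0^{(n)}+1)^{1/2}$, already gives a bound $\leq c\,|\tfrac{\chi_\sigma(k)}{\sqrt{\omega(k)}}|$ using $n\varepsilon\leq\mathfrak{C}$ and $\varepsilon\leq\bar\varepsilon$. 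For the two terms $\tfrac{\varepsilon^2}{M}\sum_j r_\sigma(k)e^{-ik\cdot x_j}a^{\#}(r_\sigma e^{-ik\cdot x_j})$, I would move one resolvent power to meet the creation/annihilation operator and apply the standard bound $\lVert a^{\#}(r_\sigma e^{-ik\cdot x_j})(d\Gamma(\omega)+1)^{-1/2}\rVert\leq\lVert\omega^{-1/2}r_\sigma\rVert_2$ together with $\lVert (H_0^{(n)}+1)^{-1/2}(d\Gamma(\omega)+1)^{1/2}\rVert\leq c$ (from Lemma~\ref{lemma:4}'s proof), picking up $|r_\sigma(k)|\cdot\varepsilon^2\cdot\sqrt{n/(n\varepsilon)}$-type factors; a careful count using $d\Gamma(\omega)\leq H_0$ and $n\varepsilon\leq\mathfrak{C}$ yields $\leq c\,|r_\sigma(k)|\,\lVert\omega^{-1/2}r_\sigma\rVert_2$, and since $\lVert\omega^{-1/2}r_\sigma\rVert_2\leq c$ uniformly by~\eqref{eq:29}, this is $\leq c\,|r_\sigma(k)|\leq c\,|r_\sigma(k)|\omega(k)^{-1/4}$ on the support where $\omega$ is bounded...

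Here care is needed: $|r_\sigma(k)|\omega(k)^{-1/4}$ is what is claimed, and since $r_\sigma=-ikg_\sigma$ is supported in $|k|\leq 2\sigma$ but with $\sigma$ possibly $+\infty$, one cannot simply bound $\omega(k)^{-1/4}\leq 1$; instead I would keep $|r_\sigma(k)|$ and use $\lVert\omega^{-1/4}r_\sigma\rVert_2\leq c$ (again~\eqref{eq:29}) when estimating the $a^{\#}a^{\#}$-type cross structure that appears implicitly, and more simply observe $\omega(k)^{-1/4}\geq (2\bar\sigma)^{-1/4}$ is false — rather the correct reading is that the right-hand side is a \emph{weight} that is allowed to be large, so any bound by $c|r_\sigma(k)|$ is automatically a bound by $c'|r_\sigma(k)|\omega(k)^{-1/4}$ only if $\omega^{-1/4}$ is bounded below, which it is \emph{not}. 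The resolution, which I would implement, is to extract a factor $\omega(k)^{1/4}$ from one of the resolvents: write $r_\sigma(k)e^{-ik\cdot x_j}=\omega(k)^{-1/4}\cdot\omega(k)^{1/4}r_\sigma(k)e^{-ik\cdot x_j}$ is the wrong split; instead note $a^{\#}(r_\sigma e^{-ik\cdot x_j})$ paired with $(d\Gamma(\omega)+1)^{-1/2}$ can instead be paired with $(d\Gamma(\omega)+1)^{-1/4}(d\Gamma(\omega)+1)^{-1/4}$ and one $(d\Gamma(\omega)+1)^{-1/4}$ against the scalar $r_\sigma(k)$ produces the $\omega(k)^{-1/4}$ weight via $\lVert\omega^{-1/4}r_\sigma\rVert$-type bounds — this is the one genuinely delicate bookkeeping point. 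Finally, for the term $-\tfrac{\varepsilon^2}{M}\sum_j r_\sigma(k)e^{-ik\cdot x_j}D_{x_j}$, I would use~\eqref{eq:20}, i.e. $\lVert D_{x_j}(d\Gamma(-\Delta)+1)^{-1/2}\rVert_{\mathcal{L}(\mathcal{H}_n)}\leq (n\varepsilon)^{-1/2}$, together with $d\Gamma(-\Delta)\leq 2M\,d\Gamma(-\tfrac{\Delta}{2M}+V)\leq 2M H_0$, to bound $\lVert(H_0^{(n)}+1)^{-1/2}D_{x_j}\rVert\leq c(n\varepsilon)^{-1/2}$; then the whole term is $\leq c\,\varepsilon^2(n\varepsilon)^{-1/2}\sqrt{n}\,|r_\sigma(k)|=c\,\varepsilon^{3/2}\sqrt{n\varepsilon}\,(n\varepsilon)^{-1/2}\cdot(n\varepsilon)^{1/2}|r_\sigma(k)|$ — again with $n\varepsilon\leq\mathfrak{C}$ and $\varepsilon\leq\bar\varepsilon$ this collapses to $\leq c\,|r_\sigma(k)|$, and the $\omega^{-1/4}$ weight is supplied exactly as above by splitting one resolvent. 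Summing the four contributions and reinstating the genuine resolvents $b+\hat H_\sigma^{(n)}$ via the uniform comparison $b+\hat H_\sigma^{(n)}\asymp H_0^{(n)}+1$ gives the lemma. The main obstacle, as indicated, is the uniformity in $\varepsilon$: every estimate must be tracked with its exact power of $\varepsilon$ and $n$, using $n\varepsilon\leq\mathfrak{C}$ as the sole mechanism converting the $\sqrt{n}$ from creation operators and the $1/\sqrt{n\varepsilon}$ from $D_{x_j}$-bounds into $\varepsilon$-independent constants, and correctly producing the $\omega(k)^{-1/4}$ weight rather than a naive $\omega(k)^{-1/2}$ or a bare constant.
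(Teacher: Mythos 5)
Your overall plan — replacing $\hat H_\sigma^{(n)}$ by $H_0^{(n)}+1$ via the uniform form comparability from Theorem~\ref{thm:1}, then bounding each of the four commutator terms with the $\varepsilon$-power count driven by $n\varepsilon\leq\mathfrak{C}$, and handling $D_{x_j}$ via \eqref{eq:20} and $a^\#$ via the standard $\|\omega^{-1/2}g\|_2$ estimates — matches the paper's strategy and your bookkeeping of $\varepsilon$-powers is essentially right. You also correctly recognize that a naive bound $\leq c|r_\sigma(k)|$ does \emph{not} imply $\leq c|r_\sigma(k)|\omega(k)^{-1/4}$, since $\omega^{-1/4}$ is not bounded below. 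The gap is in the mechanism you then propose to produce the pointwise weight $\omega(k)^{-1/4}$.

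You claim to recover $\omega(k)^{-1/4}$ by pairing one factor $(d\Gamma(\omega)+1)^{-1/4}$ ``against the scalar $r_\sigma(k)$'' via a ``$\|\omega^{-1/4}r_\sigma\|$-type bound.'' This cannot work for two reasons. First, $\|\omega^{-1/4}r_\sigma\|_2$ is an $L^2$-norm, i.e.\ a $k$-independent constant bounded uniformly by \eqref{eq:29}; it carries no pointwise decay in the external variable $k$. Second, $(d\Gamma(\omega)+1)^{-1/4}$ is a functional of the meson field momenta living in the Fock space; it has nothing to say about the fixed external wave number $k$ in the commutator $[a(k),\hat H_I(\sigma)^{(n)}]$. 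You appear to be conflating the internal $\omega$ inside $d\Gamma(\omega)$ with the external $\omega(k)$. The only object in the estimate that carries $\omega(k)$ is the left resolvent $(b+\varepsilon\omega(k)+\hat H_\sigma^{(n)})^{-1/2}$, and that is where the weight must come from.

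The correct mechanism, which the paper uses and which completes your otherwise sound $\varepsilon$-bookkeeping, is the operator inequality
\[
\bigl\lVert\sqrt{\varepsilon\omega(k)}\,\bigl(b+\varepsilon\omega(k)+\hat H_\sigma^{(n)}\bigr)^{-1/2}\bigr\rVert\leq 1\,,
\]
which follows from $b+\varepsilon\omega(k)+\hat H_\sigma^{(n)}\geq\varepsilon\omega(k)$. In every term except the scalar one your count leaves one full spare power of $\varepsilon$ (you arrive at expressions of the shape $(n\varepsilon)\,\varepsilon\,|r_\sigma(k)|\cdot(\text{uniform constant})$). Trading $\sqrt{\varepsilon}$ of that spare power against the left resolvent yields a factor $\omega(k)^{-1/2}\leq m_0^{-1/4}\,\omega(k)^{-1/4}$, as required; the remaining $\sqrt\varepsilon$ is harmless. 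For the $a^*$ term the left resolvent must do double duty — one fractional piece $(b+\varepsilon\omega(k)+\hat H_\sigma^{(n)})^{-1/4}$ supplies $(\varepsilon\omega(k))^{-1/4}$ for the weight, and the other piece, together with the right resolvent, tames $a^*$ via a fractional creation bound of the type in [MR1809881, Lemma~3.3(ii)] with $s=1/2$. Without identifying the role of the $\varepsilon\omega(k)$ inside the left resolvent, your argument cannot produce the $\omega(k)^{-1/4}$ decay and stalls at a bound of the form $c\,|r_\sigma(k)|$, which is strictly weaker than the claim.
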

\begin{proof}
According to Proposition \ref{prop:2} and Theorem \ref{thm:1},  $\hat H_I(\sigma)^{(n)}$ is $H_0^{(n)}$-form bounded with small bound that is uniform with respect to $\varepsilon\in(0,\bar \varepsilon)$, $\sigma_0<\sigma\leq +\infty$ and $n\in\mathds{N}$ such that $n\varepsilon\leq \mathfrak{C}$. Hence  $(H_0^{(n)})^{\frac{1}{2}} (b+\hat H_\sigma^{(n)})^{-\frac{1}{2}}$ is uniformly bounded
 for some $b>0$. So it is enough to prove the claimed  bound  with
 $H_0^{(n)}$ instead of $\hat H_\sigma^{(n)}$. Now using similar estimates as in Lemma \ref{lemma:3} and the fact  that $\sqrt{\varepsilon \omega(k)} (b+\varepsilon \omega(k)+\hat H_\sigma^{(n)})^{-\frac{1}{2}} $ is uniformly bounded  one correctly bounds all the terms of the commutator except the one with  $a^*$. Remark that the commutator contains the power  $\varepsilon^2$ that controls the sum over $1\leq j\leq n$ and the factor $1/\sqrt{\varepsilon \omega(k)}$.  In order to bound the term with $a^*$, one uses the type of estimate in
 \cite[Lemma 3.3 (ii)]{MR1809881} with $s=1/2$. Remark that one gets an $\varepsilon$-dependent estimate from \cite[Lemma 3.3 (ii)]{MR1809881} by noticing that $\varepsilon^{1/4}(H_0^{(n)}+1)^{-1/4}  (d\Gamma_1(\omega)+1)^{1/4}$ and
 $\varepsilon^{1/4}(N_2+1)^{-1/4}  (d\Gamma_1(1)+1)^{1/4}$  are uniformly bounded\footnote{$d\Gamma_1(\cdot)$ is the $\varepsilon$-independent
  second quantization operator in \cite{MR1809881}} and that $a^*$ contains $\sqrt{\varepsilon}$ that cancels the latter $\varepsilon^{-1/4} \cdot \varepsilon^{-1/4}$.
\end{proof}

Let $\mathfrak{C}>0$ and $\sigma_0 \geq 2K(\mathfrak{C}+1+\bar\varepsilon)$ as in the above lemma. In particular
$\hat H_\sigma^{(n)}$ is a self-adjoint operator for any $\varepsilon\in(0,\bar \varepsilon)$, $\sigma_0<\sigma\leq +\infty$ and $n\in\mathds{N}$ such that $n\varepsilon\leq \mathfrak{C}$.

\begin{lemma}[The pull-through formula]
\label{appA3}
The following identity holds true  for some $b<0$, any $\phi\in D(N_2^{\frac{1}{2}})\cap\mathcal{H}_n$ and $k$ almost everywhere in $\mathds{R}^3$,
\begin{eqnarray*}
a(k) (b-\hat H_\sigma^{(n)})^{-1}\phi&=&(b-\varepsilon\omega(k)-\hat H_\sigma^{(n)})^{-1} a(k)\phi
\\&&+(b-\varepsilon\omega(k)-\hat H_\sigma^{(n)})^{-1} \,[a(k),\hat H_I(\sigma)^{(n)}] \,(b-\hat H_\sigma^{(n)})^{-1}\phi\,.
\end{eqnarray*}
\end{lemma}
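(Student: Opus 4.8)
The plan is to obtain the pull-through formula by the classical device: commute $a(k)$ past $b-\hat H_\sigma^{(n)}$ using the canonical relation $[a(k),H_0^{(n)}]=\varepsilon\omega(k)a(k)$ together with the explicit commutator $[a(k),\hat H_I(\sigma)^{(n)}]$ computed above, and then invert the two resolvents; the whole substance is in making these manipulations of the operator-valued distribution $a(k)$ rigorous. First I would fix $b<0$ with $|b|$ large enough that, uniformly over $\varepsilon\in(0,\bar\varepsilon)$, $\sigma_0<\sigma\le+\infty$ and $n\in\mathds{N}$ with $n\varepsilon\le\mathfrak{C}$, both $b-\hat H_\sigma^{(n)}$ and $b-\varepsilon\omega(k)-\hat H_\sigma^{(n)}$ are boundedly invertible; this is legitimate by the uniform lower bound of Theorem~\ref{thm:1} (available since $\sigma_0\geq 2K(\mathfrak{C}+1+\bar\varepsilon)$), and one may take the $b$ of Lemma~\ref{appA2}. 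Abbreviate $R=(b-\hat H_\sigma^{(n)})^{-1}$ and $R_k=(b-\varepsilon\omega(k)-\hat H_\sigma^{(n)})^{-1}$. Because $\omega\ge m_0>0$ one has $D(\hat H_\sigma^{(n)})\subset Q(H_0^{(n)})\subset D(d\Gamma(\omega)^{1/2})\subset D(N_2^{1/2})$, so for any $\psi\in D(N_2^{1/2})\cap\mathcal{H}_n$ the field $k\mapsto a(k)\psi$ lies in $L^2(\mathds{R}^3;\mathcal{H}_n)$ with $\int\|a(k)\psi\|^2\,dk=\|N_2^{1/2}\psi\|^2$ (the unweighted form of Lemma~\ref{appA.1}). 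In particular $k\mapsto a(k)\phi$ and $k\mapsto a(k)R\phi$ are in $L^2(\mathds{R}^3;\mathcal{H}_n)$, and by Lemma~\ref{appA2} so is $k\mapsto R_k\,[a(k),\hat H_I(\sigma)^{(n)}]\,R\phi$, since the function bounding the commutator in that lemma is square-integrable uniformly in $\sigma\le+\infty$. Hence all three vector fields in the statement are honest elements of $L^2(\mathds{R}^3;\mathcal{H}_n)$, and it suffices to prove the identity after multiplying by $\bar g(k)$ and integrating, for $g$ in the dense set $\mathcal{C}^{\infty}_0(\mathds{R}^3)\subset L^2(\mathds{R}^3)$.

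Next, on a core of $\hat H_\sigma^{(n)}$ (for $\sigma<\infty$, e.g.\ the $U_\sigma^{(n)}$-image of $D(H_0^{(n)})\cap\mathcal{C}^{\infty}_0(N)$, cf.\ Proposition~\ref{prop:1}), the operator $a(g)$ acts only on the meson factor, so $[a(g),H_0^{(n)}]=\varepsilon a(\omega g)$; adding the smeared version of the commutator with $\hat H_I(\sigma)^{(n)}$ recalled before Lemma~\ref{appA2} gives, for core vectors $\chi$, $a(g)(b-\hat H_\sigma^{(n)})\chi=\int\bar g(k)\,(b-\varepsilon\omega(k)-\hat H_\sigma^{(n)})\,a(k)\chi\,dk-\int\bar g(k)\,[a(k),\hat H_I(\sigma)^{(n)}]\,\chi\,dk$. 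A routine approximation argument in the appropriate graph norms — based on the $N_2^{1/2}$-bounds of the previous step and on $\phi,R\phi\in D(N_2^{1/2})$, exactly in the style of the domain bookkeeping of \cite[Lemma~2.1]{MR1809881} — propagates this relation to $\chi=R\phi$; using $(b-\hat H_\sigma^{(n)})R\phi=\phi$, applying the bounded operators $R_k$ under the integral sign and rearranging, one arrives at $\int\bar g(k)\,a(k)R\phi\,dk=\int\bar g(k)\,R_k\,a(k)\phi\,dk+\int\bar g(k)\,R_k\,[a(k),\hat H_I(\sigma)^{(n)}]\,R\phi\,dk$. Since $g\in\mathcal{C}^{\infty}_0(\mathds{R}^3)$ is arbitrary and all integrands are $L^2$ in $k$, the a.e.-$k$ identity of the lemma follows. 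For $\sigma=+\infty$, where $\hat H_\infty^{(n)}=H_0^{(n)}\dotplus\hat H_I^{(n)}(\infty)$ is only a form sum, I would pass to the limit $\sigma\to\infty$ in the identity for finite $\sigma$, using the norm-resolvent convergence $\hat H_\sigma^{(n)}\to\hat H_\infty^{(n)}$ of Theorem~\ref{thm:1} (so that $R$ and $R_k$ converge in operator norm) and the $\sigma$-uniform bound of Lemma~\ref{appA2}.

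The delicate point is precisely the rigorous content of the second step: making sense of, and establishing, the formal relation $a(k)(b-\hat H_\sigma^{(n)})=(b-\varepsilon\omega(k)-\hat H_\sigma^{(n)})\,a(k)-[a(k),\hat H_I(\sigma)^{(n)}]$ on the correct domains, i.e.\ controlling the operator-valued distribution $a(k)$ and, for $\sigma=\infty$, the merely form-defined commutator. As indicated, this is handled by smearing in $k$ together with the $L^2(\mathds{R}^3;\mathcal{H}_n)$-valued bounds from Lemmas~\ref{appA.1} and~\ref{appA2}; once that bookkeeping is in place the argument is entirely algebraic, so no further obstacle is expected.
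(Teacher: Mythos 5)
Your proof is correct, but it takes a genuinely different route from the paper's. The paper's proof is essentially a one-liner: invoking \cite[Lemma~4.4]{MR1809881} it picks $\psi\in(H_0^{(n)}+1)^{-1}D(N^{1/2})$ with $\phi=(b-\hat H_\sigma^{(n)})\psi$, observes that the pull-through formula is then \emph{equivalent} to the pointwise identity
\begin{equation*}
(b-\varepsilon\omega(k)-\hat H_\sigma^{(n)})\,a(k)\psi
= a(k)(b-\hat H_\sigma^{(n)})\psi + [a(k),\hat H_I(\sigma)^{(n)}]\psi\,,
\end{equation*}
and verifies it ``by a simple computation'' on this regular domain. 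In other words, the paper pulls back through the resolvent to a single vector $\psi$ lying in a domain rich enough that both sides make sense directly, and the whole argument becomes algebraic; the delicate regularity content is outsourced to the cited lemma. You instead work forward: you smear against $g\in\mathcal{C}^\infty_0(\mathds{R}^3)$, establish the commutation identity on a core of $\hat H_\sigma^{(n)}$ for finite $\sigma$, propagate it to $R\phi$ by graph-norm approximation, undo the smearing via density of $g$, and finally pass to $\sigma=+\infty$ by norm-resolvent convergence. This avoids any external regularity citation at the price of a longer domain-bookkeeping argument and a separate limiting step for $\sigma=\infty$. Both approaches are sound. One minor remark on your exposition: the phrase ``applying $R_k$ under the integral sign'' in the smeared identity inverts the natural logical order --- you should first use density of $g$ to upgrade the smeared identity to an a.e.-$k$ vector identity, \emph{then} apply the (bounded, $k$-dependent) operator $R_k$, and re-smear if desired; as written it reads as if $R_k$ is inserted into an already contracted integral, which is not a meaningful operation. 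The content is recoverable, but tighten the order of quantifiers there.
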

\begin{proof}
According to \cite[Lemma 4.4]{MR1809881} there exists
 $\psi \in (H_0^{(n)}+1)^{-1} D(N^{\frac{1}{2}})$ such that $\phi=(b-\hat H_\sigma^{(n)})
 \psi$ for some $b<0$. So the claimed formula is equivalent to
 \begin{equation*}
 (b-\varepsilon\omega(k)-\hat H_\sigma^{(n)}) a(k) \psi=a(k) (b-\hat H_\sigma^{(n)})
 \psi
 + \,[a(k),\hat H_I(\sigma)^{(n)}] \psi\,.
 \end{equation*}
The latter identity follows by a simple computation.
\end{proof}
\begin{proposition}
\label{prop:A1}
For any $\mathfrak{C}>0$ and $\sigma_0\geq 2 K (\mathfrak{C}+1+\bar\varepsilon)$ there exist $c,b>0$ such that the operator $\hat H_\sigma^{(n)}$ is  self-adjoint and the  following bound holds true:
\begin{equation*}
\bigl\lVert N_2 \psi\bigr\rVert\leq c \bigl\lVert (\hat H_\sigma^{(n)}+b) \psi\bigr\rVert \,, \quad \forall \psi\in D(\hat H_\sigma^{(n)})\,,
\end{equation*}
for any $\varepsilon\in(0,\bar\varepsilon),\sigma\in (\sigma_0,+\infty], n\in\mathds{N}$ such that $n\varepsilon\leq\mathfrak{C}$.
\end{proposition}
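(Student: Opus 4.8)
The plan is to combine the three preparatory lemmas of this appendix --- the identity of Lemma~\ref{appA.1}, the commutator bound of Lemma~\ref{appA2}, and the pull-through formula of Lemma~\ref{appA3} --- in the standard way, while keeping track that every constant produced is uniform in $\varepsilon\in(0,\bar\varepsilon)$, in $\sigma\in(\sigma_0,+\infty]$, and in $n$ with $n\varepsilon\leq\mathfrak{C}$. First I would collect the uniform ingredients: by Proposition~\ref{prop:2} and Theorem~\ref{thm:1} one may fix $b>0$ so that $\hat H_\sigma^{(n)}+b$ is self-adjoint, non-negative and satisfies $\hat H_\sigma^{(n)}+b\geq c_0(H_0^{(n)}+1)$ uniformly; since $N_2\leq m_0^{-1}d\Gamma(\omega)\leq m_0^{-1}H_0^{(n)}$ this gives the uniform form bound $N_2+\varepsilon\leq C_1(\hat H_\sigma^{(n)}+b)$, hence
\begin{equation*}
  \bigl\lVert(N_2+\varepsilon)^{1/2}(\hat H_\sigma^{(n)}+b+\varepsilon\omega(k))^{-1/2}\bigr\rVert\leq C_1^{1/2}
\end{equation*}
uniformly in $k$ and in the three parameters. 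I would also record that the weight $k\mapsto|\chi_{\sigma_0}(k)|\omega(k)^{-1/2}+|r_\sigma(k)|\omega(k)^{-1/4}$ of Lemma~\ref{appA2} lies in $L^2(\mathds{R}^3)$ uniformly in $\sigma$: the first summand is supported in the fixed ball $\{|k|\leq2\sigma_0\}$, and the $L^2$-norm of the second is $\lVert\omega^{-1/4}r_\sigma\rVert_2$, bounded uniformly in $\sigma$ by \eqref{eq:29}.

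Because $\hat H_I(\sigma)^{(n)}$ is $H_0^{(n)}$-operator bounded for $\sigma<+\infty$ but only form bounded for $\sigma=+\infty$, I would first prove the estimate for finite $\sigma$ with a constant independent of $\sigma$, and then pass to $\sigma=+\infty$ using the norm resolvent convergence $\hat H_\sigma^{(n)}\to\hat H_\infty^{(n)}$ of Theorem~\ref{thm:1} together with the uniform form bound. For finite $\sigma$ the operator $N_2(\hat H_\sigma^{(n)}+b)^{-1}$ is closed, so it suffices to bound $\lVert N_2\psi\rVert$ by $c\,\lVert(\hat H_\sigma^{(n)}+b)\psi\rVert$ for $\psi$ in the core $(H_0^{(n)}+1)^{-1}D(N^{1/2})$ furnished by \cite[Lemma~4.4]{MR1809881}. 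On this core Lemma~\ref{appA.1} turns the square of the left-hand side into $\int_{\mathds{R}^3}\lVert(N_2+\varepsilon)^{1/2}a(k)\psi\rVert^2\,dk$, and the pull-through formula of Lemma~\ref{appA3}, applied with $\phi=(\hat H_\sigma^{(n)}+b)\psi$, decomposes $a(k)\psi$ into a commutator term $(\hat H_\sigma^{(n)}+b+\varepsilon\omega(k))^{-1}[a(k),\hat H_I(\sigma)^{(n)}]\psi$ and a source term $(\hat H_\sigma^{(n)}+b+\varepsilon\omega(k))^{-1}a(k)\phi$. For the commutator term I would insert $(\hat H_\sigma^{(n)}+b)^{\pm1/2}$, apply Lemma~\ref{appA2} and the two uniform bounds above, and use $\lVert(\hat H_\sigma^{(n)}+b)^{1/2}\psi\rVert^2\leq\lVert\psi\rVert\,\lVert(\hat H_\sigma^{(n)}+b)\psi\rVert\leq b^{-1}\lVert(\hat H_\sigma^{(n)}+b)\psi\rVert^2$; the $k$-integral of the square of this term is then bounded by $C\lVert(\hat H_\sigma^{(n)}+b)\psi\rVert^2$, uniformly.

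The hard part will be the source term, since the crude bound only yields $\int\lVert(N_2+\varepsilon)^{1/2}(\hat H_\sigma^{(n)}+b+\varepsilon\omega(k))^{-1}a(k)\phi\rVert^2\,dk\lesssim\langle\phi,N_2\phi\rangle$, which is not directly controlled by $\lVert\phi\rVert^2$. To close the estimate one has to use that $\phi=(\hat H_\sigma^{(n)}+b)\psi$ and recommute $a(k)$ through $\hat H_\sigma^{(n)}+b$, exploiting that the commutator $[(N_2+\varepsilon)^{1/2},\hat H_I(\sigma)^{(n)}]$ gains a power of $\varepsilon$ from the canonical commutation relations --- the meson-number-changing pieces of $\hat H_I(\sigma)^{(n)}$ contribute factors $\sqrt{N_2+j\varepsilon}-\sqrt{N_2}=O(\sqrt\varepsilon)$ --- so that it is relatively form bounded with respect to $H_0^{(n)}$ with an arbitrarily small, $\varepsilon$-uniform bound. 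Together with the form bound $N_2\lesssim\hat H_\sigma^{(n)}+b$ this allows an absorption of the source contribution, exactly as in the pull-through proofs for the $P(\varphi)_2$ and Nelson models in \cite{Ro2,MR1809881}, the only genuinely new point being the bookkeeping of these $\varepsilon$-powers. One thereby obtains $\lVert N_2\psi\rVert\leq c\,\lVert(\hat H_\sigma^{(n)}+b)\psi\rVert$ with $c$ independent of $\varepsilon$, of $\sigma\in(\sigma_0,+\infty)$ and of $n$ with $n\varepsilon\leq\mathfrak{C}$, and the limit $\sigma\to+\infty$ finishes the argument.
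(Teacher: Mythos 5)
Your treatment of the commutator term and of the uniform bounds $(N_2+\varepsilon)^{1/2}(\hat H_\sigma^{(n)}+b+\varepsilon\omega(k))^{-1/2}\in\mathcal{L}(\mathcal{H}_n)$ matches the paper, but your handling of the source term --- which you yourself flag as ``the hard part'' --- contains a genuine gap, and it is precisely the step on which the proposition hinges. The recommutation you propose does not close: writing $\phi=(\hat H_\sigma^{(n)}+b)\psi$ and pushing $a(k)$ back through $\hat H_\sigma^{(n)}+b$ simply undoes the pull-through of Lemma~\ref{appA3}, giving $(N_2+\varepsilon)^{1/2}(\hat H_\sigma^{(n)}+b+\varepsilon\omega(k))^{-1}a(k)\phi=(N_2+\varepsilon)^{1/2}a(k)\psi-(N_2+\varepsilon)^{1/2}(\hat H_\sigma^{(n)}+b+\varepsilon\omega(k))^{-1}[a(k),\hat H_I(\sigma)^{(n)}]\psi$; squaring and integrating in $k$ reproduces $\lVert N_2\psi\rVert^2$ on the right with a constant at least $2$, which cannot be absorbed into the left-hand side. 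The alternative you hint at, via smallness of $[(N_2+\varepsilon)^{1/2},\hat H_I(\sigma)^{(n)}]$ (the $O(\sqrt\varepsilon)$ gain is plausible term by term), is never assembled into a scheme that controls $\int\lVert(N_2+\varepsilon)^{1/2}(\hat H_\sigma^{(n)}+b+\varepsilon\omega(k))^{-1}a(k)\phi\rVert^2dk$ by $\lVert\phi\rVert^2$, so the estimate is not proved.

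The missing idea is that no absorption or extra commutator analysis is needed: the $\varepsilon\omega(k)$ shift already present in the resolvent does the work. After using $N_2+\varepsilon\leq C(\hat H_\sigma^{(n)}+b)$ to remove half a power of the resolvent, you compare $\hat H_\sigma^{(n)}$ with $H_0^{(n)}$ in the form sense (uniform KLMN bound) and then use the \emph{free} pull-through identity $(b+\varepsilon\omega(k)+H_0^{(n)})^{-1/2}a(k)=a(k)(b+H_0^{(n)})^{-1/2}$, so that, by the computation of Lemma~\ref{appA.1}, the $k$-integral of the source term equals $\lVert N_2^{1/2}(b+H_0^{(n)})^{-1/2}\phi\rVert^2\leq c\lVert\phi\rVert^2$ (using $N_2\leq m_0^{-1}d\Gamma(\omega)\leq m_0^{-1}H_0^{(n)}$), uniformly in $\varepsilon$, $\sigma$ and $n\varepsilon\leq\mathfrak{C}$; this is exactly how the paper concludes, directly for all $\sigma\in(\sigma_0,+\infty]$, so your separate limiting argument $\sigma\to+\infty$ is also unnecessary (though harmless). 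Your ``crude bound $\lesssim\langle\phi,N_2\phi\rangle$'' only arises if one discards the whole resolvent; keeping half of it and commuting it through $a(k)$ is the point you missed.
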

\begin{proof}
The operator $\hat H_\sigma^{(n)}$ is uniformly bounded from below. So by choosing
$b>0$ large enough one can take $\psi=(-b-\hat H_\sigma^{(n)})^{-1} \phi$. Now it is enough to prove the estimate for $\phi\in (H_0^{(n)}+1)^{-1/2} D(N_2^{\frac{1}{2}})$. Using Lemma \ref{appA.1} and Lemma \ref{appA3},
\begin{eqnarray}
\nonumber \bigl\lVert N_2 \psi \bigr\rVert^2&=&\int_{\mathds{R}^3} \bigl\lVert(N_2+\varepsilon)^{\frac{1}{2}} a(k) (b+\hat H_\sigma^{(n)})^{-1} \phi \bigr\rVert^2 \;
dk\\
&\leq& 2 \int_{\mathds{R}^3} \bigl\lVert (N_2+\varepsilon)^{\frac{1}{2}}
(b+\varepsilon \omega(k)+\hat H_\sigma^{(n)})^{-1}a(k) \phi \bigr\rVert^2 \;
dk \label{appeq1}\\&&+ 2\int_{\mathds{R}^3} \bigl\lVert (N_2+\varepsilon)^{\frac{1}{2}} (b+\varepsilon \omega(k)+\hat H_\sigma^{(n)})^{-1}
[a(k),\hat H_I(\sigma)^{(n)}]\, (b+\hat H_\sigma^{(n)})^{-1}\phi \bigr\rVert^2 \;
dk\,.\label{appeq2}
\end{eqnarray}
Since  $(N_2+\varepsilon)^{\frac{1}{2}} (b+\varepsilon \omega(k)+\hat H_\sigma^{(n)})^{-1/2} $
is uniformly bounded, by Lemma \ref{appA2} one shows
\begin{equation*}
\eqref{appeq2}\leq c \int_{\mathds{R}^3}
|\tfrac{\chi_\sigma(k)}{\sqrt{\omega(k)}}|+ |r_\sigma(k)| \omega(k)^{-1/4} \; dk
\; \cdot \;\bigl\lVert (b+\hat H_\sigma^{(n)})^{-1/2} \phi \bigr\rVert^2
\end{equation*}
For simplicity we denote by $c$ any constant. In the same way, one also shows
\begin{eqnarray*}
\eqref{appeq1}&\leq& c \int_{\mathds{R}^3}
\bigl\lVert (b+\varepsilon \omega(k)+\hat H_\sigma^{(n)})^{-1/2}a(k) \phi \bigr\rVert^2 \;
dk \\
&\leq& c \int_{\mathds{R}^3}
\bigl\lVert (b+\varepsilon \omega(k)+ H_0^{(n)})^{-1/2}a(k) \phi \bigr\rVert^2 \;
dk =c \bigl\lVert N_2^{1/2} (b+ H_0^{(n)})^{-1/2}\phi \bigr\rVert^2\,.
\end{eqnarray*}
 The last equality follows by a similar argument as in the proof of Lemma \ref{appA.1}.  Hence, one obtains
\begin{eqnarray*}
\bigl\lVert N_2 \psi \bigr\rVert^2&\leq& c \left( \bigl\lVert \phi \bigr\rVert^2+ \bigl\lVert (b+H_0^{(n)})^{-1/2} \phi \bigr\rVert^2\right)
=c \left( \bigl\lVert (b+\hat H_\sigma^{(n)})\psi \bigr\rVert^2+ \bigl\lVert (b+H_\sigma^{(n)})^{1/2} \psi \bigr\rVert^2\right)\\
&\leq& c \bigl\lVert (b+\hat H_\sigma^{(n)})\psi \bigr\rVert^2 \,.
\end{eqnarray*}
The last inequality is a consequence of the uniform boundedness of the operator  $(b+H_0^{(n)})^{-1/2} (b+\hat H_\sigma^{(n)})^{-1/2}$ with respect to $\varepsilon, \sigma$ and
 $n\in\mathds{N}$ such that $n\varepsilon\leq\mathfrak{C}$.
\end{proof}

\bibliography{bib}
\end{document}